 \newcommand{\barL}{\underline{L}}
 \theoremstyle{plain}
 \newtheorem{thm}{Theorem}[section]
 \newtheorem*{thmunb}{Theorem}
 \newtheorem{lem}[thm]{Lemma}
 \newtheorem{prop}[thm]{Proposition}
 \newtheorem{cor}[thm]{Corollary}
 \newtheorem*{cor*}{Corollary}
 \newtheorem{conjecture}[thm]{Conjecture}
 \newtheorem{theo}{Theorem}
 \newtheorem{coreo}{Corollary}
 \newtheorem{theotwo}{Theorem}
 \newtheorem{theothree}{Theorem}
 \theoremstyle{remark}
 \newtheorem{rmk}{Remark}[section]
 \newcommand{\A}{\mathcal{A}}
  \newcommand{\Oo}{\tilde{O}_1}
    \newcommand{\Ot}{\tilde{O}_2}
      \newcommand{\Oth}{\tilde{O}_3}
 \newcommand{\R}{\mathcal{R}}
 \newcommand{\T}{\mathcal{T}}
 \theoremstyle{definition}
 \newtheorem{defn}{Definition}[section]
 \newtheorem{hyp}{Assumption}
 \newcommand{\RR}{\mathbb{R}}
 \newcommand{\CC}{\mathbb{C}}
 \theoremstyle{plain}
 \theoremstyle{remark}
 \theoremstyle{definition}
 \newcommand{\LB}{\mathcal{LB}}
 \newcommand{\uu}{u_{max}}
 \newcommand{\barM}{\overline{M}}
 \newcommand{\CH}{\mathcal{CH}_{i^+}}
 \newcommand{\CHone}{\mathcal{CH}_{i^+_1}}
 \newcommand{\CHtwo}{\mathcal{CH}_{i^+_2}}
 \numberwithin{equation}{section}
\begin{document}
 	\title{Mass inflation and the $C^2$-inextendibility of spherically symmetric charged scalar field dynamical black holes} 
 	\author{Maxime Van de Moortel}
 	\maketitle
 	
 	\abstract
 	
 	It has long been suggested that the Cauchy horizon of dynamical black holes is subject to a weak null singularity, under the mass inflation scenario. 
 	We study in spherical symmetry the Einstein--Maxwell--Klein--Gordon equations and \textit{while we do not directly show mass inflation}, we obtain a ``mass inflation/ridigity'' dichotomy. More precisely, \color{black} we prove assuming (sufficiently slow) decay of the charged scalar field  on the event horizon, that the Cauchy horizon emanating from time-like infinity $\CH$ can be partitioned as $\CH= \mathcal{D} \cup \mathcal{S}$ for two (possibly empty) disjoint connected sets $\mathcal{D}$ and $\mathcal{S}$ such that\begin{itemize}
 		
 		 		\item  $\mathcal{D}$ (the dynamical set) is a past set on which the Hawking mass blows up (mass inflation scenario).

 		\item  $\mathcal{S}$ (the static set) is a future set isometric to a Reissner--Nordstr\"{o}m Cauchy horizon i.e.\ the radiation is zero on $\mathcal{S}$. \color{black}
 	\end{itemize} As a consequence of this result, \color{black} we prove that  the entire Cauchy horizon $\CH$ is \textit{globally} \underline{$C^2$-inextendible}, extending a previous local result established by the author\color{black}. To this end, we establish a novel classification of Cauchy horizons into three types: dynamical ($\mathcal{S}=\emptyset$), static ($\mathcal{D}=\emptyset$) or mixed. As a side benefit, we prove that there exists a trapped neighborhood of the Cauchy horizon, thus the apparent horizon \textit{cannot} cross the Cauchy horizon, which is a result of independent interest. 
 	
 	Our main motivation is to prove the $C^2$ Strong Cosmic Censorship Conjecture for a realistic model of spherical collapse in which charged matter emulates the repulsive role of angular momentum. In our case, this model is the Einstein--Maxwell--Klein--Gordon system on space-times with  \textbf{one} asymptotically flat end. 
 	As a consequence of the $C^2$-inextendibility of the Cauchy horizon, we prove the following statements, in spherical symmetry:

 	\begin{enumerate}
 		\item Two-ended asymptotically flat space-times are $C^2$-future-inextendible i.e.\  $C^2$ Strong Cosmic Censorship is true for Einstein--Maxwell--Klein--Gordon, assuming the decay of the scalar field on the event horizon at the expected rate.
 		
 		\item  In the one-ended case, under the same assumptions, the Cauchy horizon emanating from time-like infinity is $C^2$-inextendible. This result suppresses the main obstruction to $C^2$ Strong Cosmic Censorship in spherical collapse.
 	\end{enumerate} The remaining obstruction in the one-ended case is associated to ``locally naked'' singularities emanating from the center of symmetry, a phenomenon which is also related to the Weak Cosmic Censorship Conjecture.
 	
 	

 	\section{Introduction}
 	
 	\paragraph{Context of the problem} We study in spherical symmetry the Einstein--Maxwell-Klein-Gordon system, featuring a charged scalar field of charge $q_0 \neq 0$ and mass $m^2$, which we allow to be either massive ($m^2 \neq 0$) or massless ($m^2 =0$):
 	\begin{equation} \label{1} Ric_{\mu \nu}(g)- \frac{1}{2}R(g)g_{\mu \nu}= \mathbb{T}^{EM}_{\mu \nu}+  \mathbb{T}^{KG}_{\mu \nu} ,    \end{equation} 
 	\begin{equation} \label{2} \mathbb{T}^{EM}_{\mu \nu}=2\left(g^{\alpha \beta}F _{\alpha \nu}F_{\beta \mu }-\frac{1}{4}F^{\alpha \beta}F_{\alpha \beta}g_{\mu \nu}\right),
 	\end{equation}
 	\begin{equation}\label{3}  \mathbb{T}^{KG}_{\mu \nu}= 2\left( \Re(D _{\mu}\phi \overline{D _{\nu}\phi}) -\frac{1}{2}(g^{\alpha \beta} D _{\alpha}\phi \overline{D _{\beta}\phi} + m ^{2}|\phi|^2  )g_{\mu \nu} \right), \end{equation} \begin{equation} \label{4}\nabla^{\mu} F_{\mu \nu}= q_{0}\frac{i (\phi \overline{D_{\nu}\phi} -\overline{\phi} D_{\nu}\phi) }{2} , \; F=dA ,
 	\end{equation} \begin{equation} \label{5} g^{\mu \nu} D_{\mu} D_{\nu}\phi = m ^{2} \phi,
 	\end{equation} where $D:=\nabla+iq_0 A$. This model has been extensively studied in the past c.f. \cite{extremeJonathan} \cite{HodPiran1}, \cite{HodPiran2}, \cite{Kommemi}, \cite{KonoplyaZhidenko}, \cite{OrenPiran}.
 	
 	We are interested in black hole solutions arising from regular spherically symmetric, asymptotically flat initial data with one or two ends. In the one-ended case (spherical collapse), charged matter ($q_0 \neq 0$) is indispensable, else the Maxwell field  $F_{\mu \nu}$ is zero\color{black}. In the two-ended case, if $\phi \equiv 0$, all non-trivial solutions coincide with a Reissner--Nordstr\"{o}m black hole (see section \ref{RNsolution}). The Reissner--Nordstr\"{o}m Cauchy horizon, which is also the boundary of the maximal globally hyperbolic development, is smoothly extendible; it is a well-known fact that this poses a threat to determinism. In the context of gravitational collapse, a resolution, later known as ``Strong Cosmic Censorship'', was proposed by Penrose in \cite{PenroseSCC}. The strongest version of Strong Cosmic Censorship was often conjectured in the past: we express it in modern terminology as \begin{conjecture}[$C^0$ version of the Strong Cosmic Censorship Conjecture] \label{C0SCC} The maximal globally hyperbolic development of generic regular data for \eqref{1}, \eqref{2}, \eqref{3}, \eqref{4}, \eqref{5} is inextendible as a continuous Lorentzian manifold.\end{conjecture}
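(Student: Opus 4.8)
The plan is to establish $C^0$-inextendibility by analyzing the full boundary $\partial$ of the maximal globally hyperbolic development and ruling out, component by component, the existence of a proper continuous Lorentzian extension. The first step is the genericity reduction: the conjecture concerns \emph{generic} data, so one retains only configurations with a non-trivial charged scalar field decaying at the expected rate on the event horizon, which is exactly the regime in which a non-degenerate black hole interior forms. The second step is to pin down the causal structure of $\partial$ using the a priori Penrose-diagram classification available for Einstein--Maxwell--Klein--Gordon in spherical symmetry (in the spirit of Kommemi \cite{Kommemi}): $\partial$ should decompose into (i) portions on which the area radius $r \to 0$ (a spacelike or null singularity), (ii) the terminal point of the center of symmetry in the one-ended case, and (iii) the Cauchy horizon $\CH$ emanating from time-like infinity, along which $r \to r_- > 0$.

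For the portions where $r \to 0$, I would follow the low-regularity inextendibility philosophy developed for spacelike singularities and argue by area-radius collapse: any causal geodesic terminating at such a boundary point in finite affine time has spherical area $4\pi r^2 \to 0$, and a continuous Lorentzian metric near an interior point of a putative extension cannot accommodate a vanishing area radius, since the induced spheres would have to degenerate in a way incompatible with continuity of the metric. Making this quantitative requires a uniform lower bound on the affine length of such geodesics together with control of $r$ along them, but these estimates are essentially of a type already available, so I expect this portion of the argument --- including the central singularity, where $r \to 0$ as well --- to be the comparatively routine part.

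The decisive obstacle is the Cauchy horizon $\CH$. There the area radius does \emph{not} collapse: it converges to the strictly positive inner-horizon value $r_-$, and under the mass inflation scenario it is the Hawking mass and the curvature scalars that diverge, while in double-null gauge the metric coefficients $\Omega^2$ and $r$ remain bounded and limit continuously. The area-radius mechanism is therefore simply unavailable across $\CH$, and any hoped-for $C^0$-inextendibility there would have to be driven by a genuinely metric-level degeneration rather than a curvature one. This is exactly where I expect the argument to stall: the blow-up produced by mass inflation is a \emph{weak null singularity}, which does not obstruct continuity of the metric --- and on the static portion $\mathcal{S}$, isometric to a Reissner--Nordstr\"{o}m Cauchy horizon, the geometry is even smooth. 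Overcoming this obstacle appears to require more than the mass inflation mechanism supplies, which is the fundamental reason the continuous formulation is so delicate, in sharp contrast to the $r \to 0$ boundary where continuity already fails on its own.
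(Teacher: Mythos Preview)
The statement you are attempting to prove is a \emph{conjecture}, not a theorem, and more importantly the paper does not prove it --- on the contrary, the paper explicitly records that Conjecture~\ref{C0SCC} is \underline{false} for this model. Under the decay assumptions on the event horizon that single out the ``generic'' regime you invoke, the author's earlier work \cite{Moi} (massless case) and \cite{Moi3Christoph} (massive case) show that the Cauchy horizon $\CH$ is non-empty and that the spacetime \emph{admits} a continuous Lorentzian extension across it. The weak null singularity produced by mass inflation is precisely the kind of singularity that is compatible with $C^0$-extendibility: $r$, $\Omega^2$, $\phi$ extend continuously, and only derivative-level quantities (curvature, Hawking mass) blow up. So the obstacle you identify at $\CH$ is not a technical difficulty to be overcome --- it is a genuine counterexample to the statement.

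Your own write-up essentially diagnoses this: you note that on $\CH$ the area radius stays bounded away from zero, the metric coefficients limit continuously, and the blow-up is purely at the curvature level, leaving no $C^0$ obstruction. That is exactly right, and it means the program cannot be completed. The correct takeaway is that the $C^0$ formulation of Strong Cosmic Censorship fails here, and one must retreat to the $C^2$ formulation (Conjecture~\ref{C2SCC}), which is what the paper actually establishes. There is no proof of Conjecture~\ref{C0SCC} in the paper to compare your proposal against.
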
 The belief associated to Conjecture \ref{C0SCC} was that the Reissner--Nordstr\"{o}m Cauchy horizon would ``disappear'' under the effect of any dynamical perturbation and would be replaced by a space-like singularity analogous to the Schwarzschild's.

 	In \cite{Moi}, the author studied dynamical black holes solutions of \eqref{1},  \eqref{2}, \eqref{3}, \eqref{4}, \eqref{5} with $q_0 \neq 0$, assuming decay of the scalar field on the black hole event horizon. It was proven that the Cauchy horizon $\CH$, now defined as the null boundary emanating from time-like infinity $i^+$, is non-empty (c.f. Figure \ref{Fig1} for the one-ended case, or Figure \ref{Fig2} for the two-ended case), thus the above belief was \underline{false}. Moreover, it was also shown in \cite{Moi} that space-time is extendible as continuous Lorentzian manifold  in the case $m^2=0$ and in \cite{Moi3Christoph} for the case $m^2 \neq 0$: thus, Conjecture \ref{C0SCC} is also \underline{false}. The approach adopted in \cite{Moi} and \cite{Moi3Christoph} (see also \cite{MoiThesis}) is to prove semi-local stability estimates in a neighborhood of $i^+$, to obtain a portion of Cauchy horizon which is $C^0$-extendible. While the $C^0$ version of Strong Cosmic Censorship is false, a modified version -- where $C^2$-inextendibility replaces $C^0$-inextendibility -- is conjectured to hold: \begin{conjecture}[$C^2$ version of the Strong Cosmic Censorship Conjecture] \label{C2SCC} The maximal globally hyperbolic development of generic regular data for \eqref{1}, \eqref{2}, \eqref{3}, \eqref{4}, \eqref{5} is inextendible as a $C^2$ Lorentzian manifold.\end{conjecture}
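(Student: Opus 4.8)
The plan is to establish the $C^2$ version of Strong Cosmic Censorship for this model, conditionally on the expected (Price-law type) decay of $\phi$ on the event horizon, which is known to hold for generic data. The key is to reduce $C^2$-inextendibility to a curvature blow-up statement along the Cauchy horizon $\CH$ emanating from $i^+$, and then to decide precisely \emph{when} such blow-up occurs. I would work in a double-null gauge $g = -\Omega^2\, du\, dv + r^2\, d\sigma_{\mathbb{S}^2}$ and take as given, from \cite{Moi} and \cite{Moi3Christoph}, that under the assumed decay $\CH$ is non-empty, that $r$ extends continuously to $\CH$ with a strictly positive lower bound (so $\CH$ is a genuine null boundary, not a spacelike singularity), and that the boundary of the maximal development consists of $i^+$, the Cauchy horizon(s), and — in the one-ended case — a possible additional component emanating from the center $\Gamma$. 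A preliminary step is to prove, as a result of independent interest, that $\CH$ admits a \emph{trapped neighbourhood}: propagating the sign built into the Raychaudhuri equations together with the decay assumption, one shows $\lambda=\partial_v r<0$ and $\nu=\partial_u r<0$ throughout a one-sided neighbourhood of $\CH$. This confines the apparent horizon strictly to the past of $\CH$ and supplies the uniform geometric control needed for the analysis along $\CH$ that follows.

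The heart of the argument is the dichotomy $\CH=\mathcal{D}\cup\mathcal{S}$. Restricting the Einstein--Maxwell--Klein--Gordon system to $\CH=\{u=u_\infty\}$, parametrised by $v$, one obtains a system of transport equations for $(r, \varpi, Q, \phi,\dots)$ — where $\varpi$ denotes the renormalised Hawking mass — driven by the ingoing scalar-field flux, with $\partial_v\varpi$ sign-definite and the limiting Raychaudhuri relation forcing $\lambda\to0$. I would argue by monotonicity: $\varpi$ is monotone in $v$ along $\CH$, so either it diverges — which defines the dynamical set $\mathcal{D}$ — or it stays bounded, in which case a careful analysis of the transport system should force the ingoing radiation $D_v\phi$ to vanish identically past some $v_0$; a charged Birkhoff-type rigidity then identifies this portion of $\CH$ with a Reissner--Nordstr\"{o}m Cauchy horizon, defining the static set $\mathcal{S}$. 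That $\mathcal{D}$ is a past set and $\mathcal{S}$ a future set is read off from the direction of these transport equations, yielding the three-type classification (dynamical, static, mixed).

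Given the dichotomy, the remaining steps are comparatively soft. On $\mathcal{D}$ the Hawking mass blows up while $r$ stays bounded below, so the Kretschmann scalar — which in spherical symmetry contains a Coulombian term of size $\varpi^2/r^6$, the analogue of Schwarzschild's $48M^2/r^6$ — diverges as one approaches $\mathcal{D}$ from within $\mathcal{M}$; hence no point of $\mathcal{D}$ admits a local $C^2$ extension, and $\CH$ is not, as a whole, a $C^2$ Cauchy horizon whenever $\mathcal{D}\neq\emptyset$. To obtain genuine $C^2$ SCC one must exclude the static and mixed cases for generic data: the point is that if $\mathcal{S}$ is a neighbourhood of $\CH$ then $\phi\equiv0$ there, and hence — by unique continuation back into $\mathcal{M}$, or directly at the level of the data — the solution is exactly Reissner--Nordstr\"{o}m, which violates the sharp two-sided decay rate of $\phi$ on the event horizon valid for generic data; the elegant feature here is that the dichotomy reduces the difficult ``propagate a lower bound to $\CH$'' problem to the trivial ``$\phi\not\equiv0$''. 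Thus for generic data $\mathcal{D}=\CH$, and a Sbierski-type argument — given a proper $C^2$ extension $\iota\colon\mathcal{M}\hookrightarrow\widetilde{\mathcal{M}}$, choose $p\in\partial\iota(\mathcal{M})$ and a past-directed timelike curve through $p$ entering $\iota(\mathcal{M})$; its image in $\mathcal{M}$ must accumulate on $\CH=\mathcal{D}$, contradicting boundedness of curvature near $p$ — upgrades this to global $C^2$-inextendibility of the maximal development. In the two-ended case, applied to $\CHone$, $\CHtwo$ and the bifurcation sphere, this gives full $C^2$ SCC; in the one-ended case it gives $C^2$-inextendibility of $\CH$, the surviving obstruction being the possible ``locally naked'' singularity from $\Gamma$.

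The step I expect to be the real obstacle is the \textbf{exhaustiveness of the dichotomy}: ruling out an intermediate regime in which the Hawking mass stays bounded yet the solution fails to be static. This demands a global, quantitative analysis of the nonlinear transport system along $\CH$ — exploiting monotonicity of $\varpi$, the uniform lower bound on $r$, and the competition between the exponential decay of $\Omega^2$ (governed by the Cauchy-horizon surface gravity) and the polynomial decay of $\phi$ inherited from the event horizon — together with a quantitative charged Birkhoff rigidity and delicate control of the behaviour at the interface between $\mathcal{D}$ and $\mathcal{S}$. A secondary difficulty is making the topological upgrade to global inextendibility rigorous in the one-ended setting, where the causal structure near $\Gamma$ is more intricate than in the two-ended case.
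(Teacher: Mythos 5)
Your plan founders at the step where you exclude the static and mixed cases for generic data. ``Static type'' only means that the \emph{tangential} data on $\CH$ vanish ($\phi$, $D_u\phi$, $Q-e$, $\varpi-M$ extend to zero, and $\CH$ is isometric to a Reissner--Nordstr\"{o}m Cauchy horizon); it does \emph{not} make the solution Reissner--Nordstr\"{o}m in any interior neighbourhood, because the transverse flux $D_v\phi$ at $\CH$ is generically non-trivial (Remark \ref{remarktransverse}; Hiscock's purely ingoing dust space-times are exactly such examples: static Cauchy horizon, non-static space-time, perfectly compatible with polynomial tails on $\mathcal{H}^+$). Hence a static or mixed Cauchy horizon does not contradict the sharp two-sided decay of $\phi$ on the event horizon, and the ``unique continuation back into $M$'' you invoke -- from a null boundary on which only tangential quantities vanish -- is not an available tool; ruling out the static/mixed types generically is expected to require a scattering theory in the black hole interior and is explicitly left open here. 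The paper's route is different and makes this reduction unnecessary: it proves $C^2$-inextendibility in \emph{all three} cases by establishing the blow-up of the transverse component $Ric(X,X)$ on the entire $\CH$ (estimate \eqref{trappedestimate10}), driven by the blue-shift of the assumed $L^2$ lower bound \eqref{lowerhyp} on the event horizon; this works even where the Hawking mass stays finite, so the Kretschmann/mass-inflation mechanism is only used as one ingredient, not as the sole source of inextendibility.

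The mechanism you propose for the dichotomy itself is also not viable as stated. For a charged and/or massive field neither $\rho$ nor $\varpi$ is monotone (cf.\ \eqref{massUEinstein}--\eqref{massVEinstein}), which is precisely why the uncharged arguments do not propagate here, and one cannot simply ``restrict the equations to $\CH$'': the limits of the unknowns on $\CH$ are exactly what must be controlled, and $\CH$ is the ingoing boundary $\{v=+\infty\}$ parametrised by $u$, not an outgoing hypersurface $\{u=u_\infty\}$ parametrised by $v$. The paper's dichotomy instead runs through the gauge-invariant staticity/Dafermos condition \eqref{MihalisPHDcondition} evaluated on outgoing cones transverse to $\CH$: a rigidity theorem proved by a bootstrap (Theorem \ref{classificationtheorem}), the Raychaudhuri monotonicity showing the static set is a past set, local mass blow-up in the dynamical and mixed cases, a propagation lemma (Lemma \ref{massblowuplemma}) replacing monotonicity by quantitative estimates, and $L^1$--$L^{\infty}$ estimates on the finite-volume trapped neighbourhood. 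Relatedly, your trapped neighbourhood cannot be obtained a priori by ``propagating signs'' in Raychaudhuri -- the sign of $\lambda$ near $\CH$ is exactly what may fail a priori (cf.\ Figure \ref{Fig5}) -- in the paper it is a \emph{consequence} of the rigidity/mass blow-up analysis, not an input to it.
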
 
 	Consistently with Conjecture \ref{C2SCC}, the author proved \cite{Moi} in the charged and massive case that a \underline{small} piece of the Cauchy horizon near time-like infinity \color{black} is $C^2$-inextendible, due to the blow up of a curvature component. Note, however, that this is insufficient to conclude the $C^2$-inextendibility of the entire Cauchy horizon $\CH$, as this would require global estimates, far from time-like infinity $i^+$, which were not available in \cite{Moi}. Thus, the proof of Conjecture \ref{C2SCC} required further developments (in fact, an intermediate formulation between Conjecture \ref{C0SCC} and Conjecture \ref{C2SCC} is conjectured to hold: the ``$H^1$ \textit{Strong Cosmic Censorship}'', which states that the maximal globally hyperbolic development of generic data is \textit{not} extendible as a continuous Lorentzian manifold with locally square-integrable Christoffel symbols. This formulation of Strong Cosmic Censorship is particularly interesting, as the weakest known solutions of the Einstein equations lie in this low-regularity class, c.f.\ the introduction of \cite{Christo5}. Nevertheless, it is notoriously difficult to prove this version of the conjecture, due to the absence of ``known'' geometric quantities at the $H^1$ level. Therefore, we will not discuss this issue further in the paper, and we refer the reader to \cite{KerrStab} and \cite{JonathanStab} for a detailed presentation of the different issues involved).

 	\paragraph{The main result and motivation} In the present article, we bridge this gap and provide a global approach to the properties of the Cauchy horizon emanating from time-like infinity $\CH$, assuming the decay of the scalar field on the event horizon. Our main result is that the mass inflation scenario holds, except in a degenerate situation where the radiation is trivial on $\CH$ i.e.\ $\CH$ is isometric to its Reissner--Nordstr\"{o}m counterpart. As a consequence, we prove that the entire Cauchy horizon $\CH$ is $C^2$-inextendible, \textit{even in the degenerate situation}, establishing the blow up of Ricci curvature. This is because the blue-shift effect, a common cause for both mass inflation and the Ricci blow up, is always effective under our decay assumptions, see the discussion below. Our motivation to study a charged matter model is to understand the properties of black holes arising from spherical collapse, mathematically modelled as solutions of the Einstein equations with one-ended asymptotically flat initial data (i.e.\ diffeomorphic to $\RR^3$). We are specifically interested in the formation and the characteristics of dynamical Cauchy horizons, as they constitute the most prominent obstruction to Strong Cosmic Censorship, see above. As a consequence of the $C^2$-inextendibility of $\CH$, we prove Conjecture \ref{C2SCC}  in the two-ended case, under our decay assumptions c.f.\ Theorem \ref{rough2}. We also prove Conjecture \ref{C2SCC} in the one-ended case, if we additionally assume the absence of \textit{locally naked singularities} emanating from the center c.f.\ Theorem \ref{rough1cond}.

 	\paragraph{Previous works on uncharged models} A restricted class of one-ended black holes was studied by Christodoulou in \cite{Christo1}, \cite{Christo2}, \cite{Christo3}, as spherically symmetric solutions of the Einstein-(uncharged)-scalar-field. However, the model studied by Christodoulou does not allow for the formation of Cauchy horizons, therefore the study of Strong Cosmic Censorship in this context is limited. A more suitable spherically symmetric model, the Einstein--Maxwell-(uncharged)-scalar-field, was first analyzed by Dafermos. In this new model, the Maxwell field plays the repulsive role of angular momentum and, as shown in \cite{MihalisPHD}, \cite{Mihalis1}, the Cauchy horizon is non-empty and $C^0$-extendible, under assumptions on the exterior that were later retrieved in \cite{PriceLaw}. The study of the Einstein--Maxwell-(uncharged)-scalar-field model culminated with the work of Luk and Oh \cite{JonathanStab}, \cite{JonathanStabExt}, who proved the $C^2$ version of Strong Cosmic Censorship in this spherically symmetric setting. However, in \cite{JonathanStab}, \cite{JonathanStabExt}, Strong Cosmic Censorship is proven for asymptotically two-ended space-time, which are ill-suited to study gravitational collapse, due to the absence of a center of symmetry in the Penrose diagram. This is because the Einstein--Maxwell-(uncharged)-scalar-field model is too restrictive: in fact, all regular solutions with a non-trivial Maxwell field are two-ended space-times, while gravitational collapse space-times are one-ended, with a regular center of symmetry.
 	\paragraph{Cauchy horizons and weak null singularities} In the present manuscript, we study the global properties of the black hole interior for the Einstein--Maxwell--Klein--Gordon model, focusing on the characteristics of the Cauchy horizon (see  also \cite{r=0} for a global study focusing on the structure of singularities) to establish Conjecture \ref{C2SCC}. As explained above, the Cauchy horizon of the static Reissner--Nordstr\"{o}m black hole is smoothly extendible, which represents a priori an obstruction to Strong Cosmic Censorship. Nevertheless, the ``mass inflation scenario'', first suggested in the pioneering works \cite{Ori}, \cite{Poisson}, \cite{PoissonIsrael}, dictates that the Cauchy horizon of generic \textit{dynamical} black holes features a so-called weak null singularity and is thus $C^2$-inextendible. $C^2$-inextendibility is roughly equivalent to a blow up of curvature, in turn caused by the blue-shift effect (discovered by Penrose in \cite{Penroseblue}) which amplifies ingoing radiation near the Cauchy horizon. In fact, the blue-shift is also responsible for mass inflation, if moreover the \textit{outgoing} radiation is non-trivial. This explains why under our decay assumptions, the Cauchy horizon is \textit{always} weakly singular, in the sense that the curvature blows up and $C^2$-inextendibility holds, but in some  \textit{degenerate situations} when outgoing radiation is trivial, mass inflation does not occur. In vacuum, we mention a breakthrough of Dafermos and Luk in \cite{KerrStab}, who proved that the Cauchy horizon of small perturbations of Kerr is always non-empty. Whether this Cauchy horizon is weakly singular or not is still open; however, we indicate the remarkable construction of a large class of weakly singular Cauchy horizons in vacuum by Luk in \cite{JonathanWeakNull}.

 	\paragraph{An approach to Strong Cosmic Censorship}	The first step in the proof of Conjecture \ref{C2SCC}, undertaken in \cite{Moi}, is to prove the generic existence of weak null singularities \textit{locally}, namely on a small portion of the Cauchy horizon near time-like infinity. In view of the weak nature of those singularities (which still make $C^2$ norms blow up) note however that quantitative \underline{stability} estimates are proven in \cite{Moi} at lower regularity i.e.\ in the $C^0$ norm and were crucial to the proof. The next step, which we accomplish in the present paper, is to prove that a weak null singularity is present \textit{globally} on the entire Cauchy horizon. The strategy differs radically from the local approach: it is impossible, a priori, to ``propagate the estimates'' of \cite{Moi}, as no ``smallness parameter'' is exploitable in this space-time region, far away from time-like infinity. Note that this problem can be entirely by-passed for uncharged matter models, see \cite{JonathanStab}:  for the Einstein--Maxwell-uncharged-scalar-field model, the propagation of weak null singularities on the Cauchy horizon is immediate, due to very special monotonicity properties which do not hold in more complex settings. In contrast, a comprehensive understanding of the global properties of the Cauchy horizon is useful to prove Conjecture \ref{C2SCC} for charged models or in more general contexts.
 	
 	\paragraph{Global properties of Cauchy horizons} In our approach, we establish a novel classification of Cauchy horizons into three categories: dynamical type, mixed type, or static type. Using this classification, we prove that in all three cases: \begin{itemize}
 		\item The Cauchy horizon is ``trapped'', thus the apparent horizon \underline{cannot} cross the Cauchy horizon.
 		\item The Cauchy horizon is (globally) $C^2$-inextendible.
 		\item The maximal development is $C^2$-future-inextendible, under assumptions \footnote{In the two-ended case, no additional assumption is required. In the one-ended case, we obtain the result assuming additionally the absence of ``locally naked singularity'' emanating from the center of symmetry, a slightly stronger statement than the Weak Cosmic Censorship Conjecture.} which are conjectured to be generic.
 	\end{itemize}
 	In fact, only Cauchy horizons of dynamical type are expected to be generic. Nevertheless, the Reissner--Nordstr\"{o}m Cauchy horizon is of static type, and it is also possible to construct Cauchy horizons of mixed type (see Appendix \ref{appendix} and \cite{Ori}). The main difference between those three types, is the presence (or not) of non-trivial radiation on the Cauchy horizon: \begin{enumerate}
 		\item On dynamical type Cauchy horizons the radiation is non-zero near time-like infinity\color{black}. The Hawking mass blows up.
 		\item On static type Cauchy horizons the radiation is everywhere zero: thus, \color{black} a static Cauchy horizon is an isometric copy of the Reissner--Nordstr\"{o}m Cauchy horizon. The Hawking mass is finite (in fact constant).
 		\item On mixed type Cauchy horizons the radiation is zero \color{black} up to a transition time $u_T$ and non-zero \color{black} at times between $u_T$ and $u_T+\epsilon$ for a small $\epsilon$. The Hawking mass blows up at times larger than $u_T$ but is finite at times smaller than $u_T$.
 	\end{enumerate}
 	As a result, we prove that the Hawking mass must eventually blow up on the Cauchy horizon under our assumptions, except if the Cauchy horizon is of static type, which is a \textit{degenerate} situation where all gauge invariants quantities coincide with their Reissner--Nordstr\"{o}m analogues: in particular the Hawking mass and the charge of the Maxwell field are constant. \begin{rmk} \label{remarktransverse}
 		Note however that in the static type case, the ``tangential'' radiation is zero but the \underline{transverse} radiation is generically non-trivial. This is why Cauchy horizons of static type are \textbf{still} subject to a weak null singularity (thus $C^2$-inextendible), as this transverse radiation is blue-shifted, like in the other two cases. There is no inconsistency: Cauchy horizons of static types are isometric to Reissner--Nordstr\"{o}m's, but they are embedded differently in the interior space-time.
 	\end{rmk}
 	
 	\paragraph{Strategy of the proof} The main challenge is to prove that Cauchy horizons which are neither of dynamical type, nor of static type obey the pattern of mixed type, namely that there exists only one transition from the static behavior (in the past) towards the dynamical behavior (in the future). The proof starts with data on the event horizon obeying decay estimates at the expected rates, from which we obtain local estimates on a outgoing cone close enough to time-like infinity, using the results of \cite{Moi}. Then, we resurrect a staticity condition \eqref{MihalisPHDcondition}, first discovered by Dafermos in \cite{MihalisPHD}. This condition propagates to the past, and with the help of additional quantitative estimates, one can establish the classification of Cauchy horizons. We must also prove, in the dynamical type and mixed type cases, that the Hawking mass blows up; we rely also on quantitative estimates, as no monotonicity property is available, in contrast with the previously considered uncharged models. Finally, we establish, both in the static type case, and at the early times of mixed type, that a weak null singularity, namely a blow up of a curvature component is present, despite the finiteness of the Hawking mass.
 	
 	\paragraph{Outline of the introduction} In section \ref{apriori}, we give a detailed description of the Einstein--Maxwell--Klein--Gordon matter model and we enumerate all the possible a priori Penrose diagrams, following \cite{Kommemi} in the two-ended case, and \cite{MihalisSStrapped} in the two-ended case. Then, we state our main result in section \ref{firstversion}. In section \ref{previouswork}, we mention the previous results in the case of uncharged matter models, in the two-ended case. In section \ref{connected}, we mention connected problems and great conjectures related to the black hole interior. Finally in section \ref{outline}, we give an outline of the proof and of the paper.

 	\subsection{The Einstein--Maxwell--Klein--Gordon system, and a priori Penrose diagrams} \label{apriori}

 	We consider the Einstein--Maxwell--Klein--Gordon equations, namely the Einstein equation in the presence of a charged scalar field (either massive, or massless) given by \eqref{1}, \eqref{2}, \eqref{3}, \eqref{4}, \eqref{5}, where $D:= \nabla+  iq_{0}A$ is the gauge derivative, $q_0 \neq 0$ is a coupling constant, also called the charge of the scalar field, $m^2 \in \RR$ is the mass of the scalar field, $\nabla$ is the Levi-Civita connection of $g$ and $A$ is the potential one-form.
 	
 	This matter model satisfies the dominant energy condition and the null condition; some general properties can be derived a priori from those two facts. Using ``soft estimates'', it is possible to give an inventory of the possibilities, a priori, for the interior structure of the black hole. However, such an argument cannot provide information on what is the ``generic behavior'', as a more thorough analysis is necessary (involving quantitative estimates) to obtain any more precise statement. We quote the result of the preliminary analysis, using a soft argument, in the one-ended case:
 	\begin{theotwo}[A priori boundary characterization of one-ended spherically symmetric black holes, Kommemi, \cite{Kommemi}] \label{oneendedapriori}
 		
 		We consider the maximal development $(M=\mathcal{Q}^+ \times_r \mathcal{S}^2,g_{\mu \nu}, \phi,F_{\mu \nu})$ of smooth, spherically symmetric, containing no anti-trapped surface, one-ended initial data satisfying the Einstein--Maxwell--Klein--Gordon system, where $r: \mathcal{Q}^+ \rightarrow [0,+\infty)$ is the area-radius function. Then the Penrose diagram of $\mathcal{Q}^+$ is given by Figure \ref{Fig1}, with boundary $\Sigma \cup \Gamma$ in the sense of manifold-with-boundary --- where $\Sigma$ is space-like, and $\Gamma$, the center of symmetry, is time-like with $r_{|\Gamma}=0$ --- and boundary $\mathcal{B}^+$ induced by the manifold ambient $\RR^{1+1}$: $$ \mathcal{B}^+ = b_{\Gamma} \cup \mathcal{S}^1_{\Gamma} \cup \mathcal{CH}_{\Gamma} \cup  \mathcal{S}^2_{\Gamma} \cup  \mathcal{S}  \cup \mathcal{S}_{i^+}  \cup \CH \cup i^{+} \cup \mathcal{I}^+ \cup i^0,$$  where $i^0$ is space-like infinity, $\mathcal{I}^+$ is null infinity, $i^{+}$ is time-like infinity, and \begin{enumerate}
 			\item $\CH$ is a connected (possibly empty) half-open null ingoing segment emanating from $i^{+}$. The area-radius function $r$ extends as a strictly positive function on $\CH$, except maybe at its future endpoint.
 			\item $\mathcal{S}_{i^+}$ is a connected (possibly empty) half-open null ingoing segment emanating (but not including) from the end-point of $\CH \cup i^{+}$. $r$ extends continuously to zero on $\mathcal{S}_{i^+}$.
 			\item $b_{\Gamma}$ is the center end-point i.e.\ the unique future limit point of $\Gamma$ in $\overline{\mathcal{Q}^+}-\mathcal{Q}^+$.
 			\item $ \mathcal{S}^1_{\Gamma} $ is a connected (possibly empty) half-open null outgoing segment emanating from $b_{\Gamma}$. $r$ extends continuously to zero on $ \mathcal{S}^1_{\Gamma} $.
 			\item $\mathcal{CH}_{\Gamma}$ is a connected (possibly empty) half-open null outgoing segment emanating from the future end-point of $b_{\Gamma} \cup  \mathcal{S}^1_{\Gamma} $. $r$ extends as a strictly positive function on $\mathcal{CH}_{\Gamma}$, except maybe at its future endpoint.
 			\item $\mathcal{S}^2_{\Gamma}$ is a connected (possibly empty) half-open null outgoing segment emanating from the future end-point of $\mathcal{CH}_{\Gamma}$. $r$ extends continuously to zero on $\mathcal{S}^2_{\Gamma}$. 	\item $\mathcal{S} $ is a connected (possibly empty) achronal curve that does not intersect null rays emanating from $b_{\Gamma}$ or $i^+$. $r$ extends continuously to zero on $\mathcal{S}$.
 		\end{enumerate} We also define the black hole region $ \mathcal{BH}:= \mathcal{Q}^+ \backslash J^{-}(\mathcal{I}^+) \neq \emptyset$, and the event horizon $\mathcal{H}^+ = \overline{J^{-}(\mathcal{I}^+)} \backslash J^{-}(\mathcal{I}^+) \subset  \mathcal{Q}^+$. \begin{figure} [H]
 		
 		\begin{center}
 			
 			\includegraphics[width= 97mm, height=70 mm]{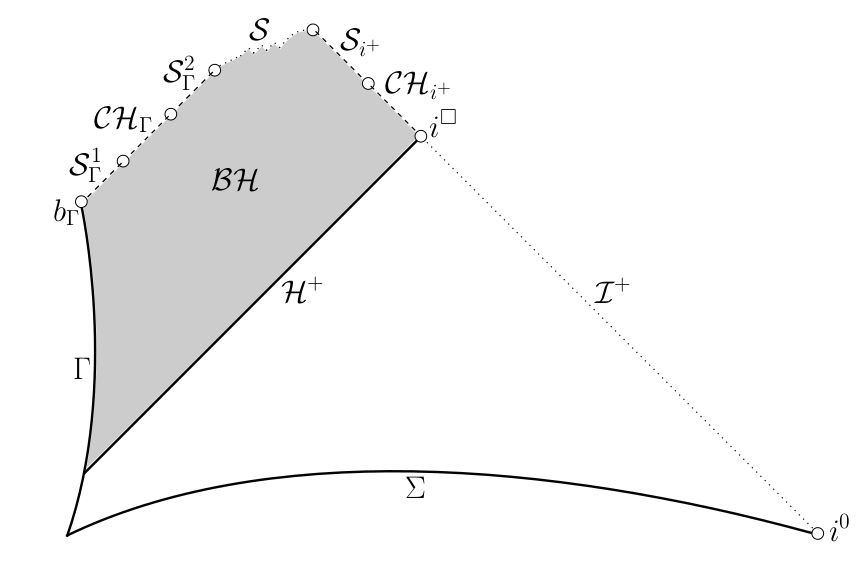}
 			
 		\end{center}
 		
 		\caption{General Penrose diagram of a one-ended charged spherically symmetric black hole, Figure from \cite{Kommemi}.}
 		\label{Fig1}
 	\end{figure}	
 \end{theotwo}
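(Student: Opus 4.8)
The plan is to carry out a ``soft'' analysis on the $1+1$-dimensional Lorentzian quotient $\mathcal{Q}^+ = M/SO(3)$, in the spirit of Christodoulou, Dafermos and Kommemi, using only the dominant energy condition and the null condition satisfied by \eqref{1}--\eqref{5}. First I would fix a global double-null coordinate system $(u,v)$ on $\mathcal{Q}^+$, with $u$ ingoing, and record the reduced system: the two Raychaudhuri equations $\partial_u\!\left(\Omega^{-2}\partial_u r\right) = -r\,\Omega^{-2}\,|D_u\phi|^2 \le 0$ and $\partial_v\!\left(\Omega^{-2}\partial_v r\right) = -r\,\Omega^{-2}\,|D_v\phi|^2 \le 0$, the wave equation for $r$, the Maxwell equation \eqref{4} for the charge aspect $Q$, and the evolution equations for the Hawking mass (renormalised to absorb $Q$), whose geometric sources are sign-definite by the dominant energy condition. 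The decisive structural input is that, since the one-ended initial slice $\Sigma$ carries no anti-trapped sphere (so $\partial_u r < 0$ on $\Sigma$ off the center), the first Raychaudhuri equation propagates $\partial_u r \le 0$ throughout $\mathcal{Q}^+$, strictly away from $\Gamma$; thus $r$ is strictly monotone along every ingoing cone and the Hawking mass is monotone along null cones. This monotonicity is what renders a purely qualitative classification of the boundary possible.

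Next I would construct the ideal boundary. Using the asymptotically flat end I identify null infinity $\mathcal{I}^+$ with $\{r=+\infty\}$ (reached along outgoing cones with a definite renormalised limit of $\partial_v r$), show it is a connected outgoing null ray, and let $i^0$ and $i^+$ be its past and future ideal endpoints. Then $J^-(\mathcal{I}^+)$ is a past set whose boundary is the event horizon $\mathcal{H}^+ = \overline{J^-(\mathcal{I}^+)}\setminus J^-(\mathcal{I}^+)$, an achronal outgoing null curve limiting to $i^+$; the black hole $\mathcal{BH} = \mathcal{Q}^+\setminus J^-(\mathcal{I}^+)$ --- nonempty under the theorem's hypotheses, e.g.\ once the data contains a trapped surface --- lies to its future. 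The center $\Gamma = \{r=0\}$ I would show to be a connected timelike curve issuing from the center of $\Sigma$, along which $r\equiv 0$, with a single future limit point $b_\Gamma \in \overline{\mathcal{Q}^+}\setminus\mathcal{Q}^+$.

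The heart of the matter is the terminal boundary $\mathcal{B}^+$, i.e.\ the part of $\partial\mathcal{Q}^+$ inherited from the ambient $\RR^{1+1}$ and disjoint from $\Sigma\cup\Gamma\cup\overline{\mathcal{I}^+}$. I would classify $p\in\mathcal{B}^+$ by $\ell(p) := \liminf r$ over sequences in $\mathcal{Q}^+$ tending to $p$. If $\ell(p)>0$ and the Hawking mass is bounded near $p$, then Dafermos's (higher-order) extension principle for this system forces the solution to extend as a \emph{regular} spacetime point past $p$, contradicting $p\in\mathcal{B}^+$; hence such $p$ lie on null segments of $\mathcal{B}^+$ on which $r$ is continuous and positive --- precisely $\CH$ (ingoing, from $i^+$) and $\mathcal{CH}_\Gamma$ (outgoing, from the $\Gamma$-side). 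The monotonicity of $r$ and of the mass along $\mathcal{B}^+$, the achronality of $\mathcal{B}^+$, and the uniqueness of the ingoing (resp.\ outgoing) null ray through $i^+$ (resp.\ $b_\Gamma$) then force each of $\CH$, $\mathcal{CH}_\Gamma$ to be connected, half-open, and attached at the claimed endpoint, with $r$ extending positively except possibly at the future endpoint. The remaining points, with $\ell(p)=0$, I would organise --- following Kommemi --- into the connected achronal singular segments $\mathcal{S}$, $\mathcal{S}_{i^+}$, $\mathcal{S}^1_\Gamma$, $\mathcal{S}^2_\Gamma$, using that $\{r=0\}\cap\mathcal{B}^+$ is achronal (because $r$ decreases in the future-ingoing direction on one side of $\Gamma$ and in the future-outgoing direction on the other) together with continuity of $r$ to string them in the causal order of Figure \ref{Fig1}: $\mathcal{S}_{i^+}$ issues from the future endpoint of $\CH\cup i^+$, $\mathcal{S}^2_\Gamma$ from that of $\mathcal{CH}_\Gamma$, $\mathcal{S}^1_\Gamma$ from $b_\Gamma$, and $\mathcal{S}$ is the central achronal piece meeting no null ray from $b_\Gamma$ or $i^+$.

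I expect the main obstacle to be the extension principle: one needs not merely a $C^0$ statement but enough regularity to conclude that a boundary point with $r$ bounded below is a genuine interior point, which requires a local existence/continuation theory for the quasilinear system \eqref{1}--\eqref{5} with a charged, possibly massive scalar field, controlled purely by bounds on $r$ and the mass on a coordinate rectangle minus its top corner. Closely related is the need to exclude pathological boundary behaviour --- $r$ oscillating, or $\liminf r = 0 < \limsup r$ along a single cone --- which demands quantitative control of $\Omega$, $r$ and the mass near $\mathcal{B}^+$ from the Raychaudhuri and mass equations, and to handle the interface between $\Gamma$ and $\mathcal{B}^+$, i.e.\ to prove that the center terminates at one point $b_\Gamma$ out of which $\mathcal{S}^1_\Gamma$, $\mathcal{CH}_\Gamma$, $\mathcal{S}^2_\Gamma$ emanate in that order. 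Once the local structure near each kind of boundary point is understood, the connectedness and adjacency assertions reduce to elementary causal arguments in $\RR^{1+1}$.
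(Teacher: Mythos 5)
This theorem is not proved in the paper at all: it is quoted verbatim from Kommemi \cite{Kommemi}, so your sketch has to be measured against that argument. In spirit you are on the right track --- a soft causal analysis on the $1+1$ quotient, the Raychaudhuri monotonicities, propagation of $\nu<0$ from the no-anti-trapped-surface assumption, and an extension principle to exclude ``regular'' terminal boundary points --- but two steps, as you have written them, would fail.

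First, your extension principle is mis-stated in a way that breaks the central classification step. You assume boundedness of the Hawking mass near $p$ as a hypothesis and then conclude that every boundary point with $\liminf r>0$ lies on $\CH$ or $\mathcal{CH}_{\Gamma}$. That inference does not follow: a hypothesis involving the mass cannot exclude boundary points at which $r$ is bounded below but the mass is unbounded, and this behaviour genuinely occurs (mass inflation at the Cauchy horizon is the subject of the present paper). Kommemi's extension principle (``strong tameness'' of the Einstein--Maxwell--Klein--Gordon model) assumes \emph{only} that $r$ is bounded above and below away from zero near a \emph{first singularity}, i.e.\ a boundary point whose causal past, apart from the point itself, lies entirely in the development; deriving the bounds on $\varpi$, $Q$, $\phi$ and $\Omega^2$ from the $r$-bounds alone is precisely the hard technical content of \cite{Kommemi}, not something one may hypothesize. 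Moreover, the reason $\CH$ and $\mathcal{CH}_{\Gamma}$ evade the extension principle is not the behaviour of the mass but the causal structure: their points are not first singularities, since their pasts already meet the boundary (along the null segment itself and at $i^+$, respectively $b_{\Gamma}$). This ``first singularity'' notion, which organizes the whole decomposition of $\mathcal{B}^+$ and forces every genuinely singular component to have $r\to 0$, is absent from your outline.

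Second, your assertion that the Hawking mass (or its renormalisation) is monotone along null cones, ``with sign-definite sources by the dominant energy condition,'' is false for a charged, possibly massive scalar field: by \eqref{massUEinstein}, \eqref{massVEinstein} and \eqref{chargeUEinstein} the source terms are not sign-definite (in particular in the trapped region), and the present paper stresses repeatedly that no such monotonicity survives charging the field --- this is exactly why its interior analysis is quantitative rather than monotonicity-based. The soft boundary characterization fortunately does not need mass monotonicity, only the Raychaudhuri monotonicity of $\kappa^{-1}$, $\iota^{-1}$ and of $r$ in the anti-trapped-free region, but as written you lean on a property the model does not have. Two minor points in the same vein: $\mathcal{BH}\neq\emptyset$ is not a consequence of the stated hypotheses but an additional assumption (cf.\ Assumption 1 later in the paper), and the exclusion of pathological behaviour such as $\liminf r=0<\limsup r$ along a single cone is handled in \cite{Kommemi} by the same first-singularity/extension-principle machinery together with the $r$-monotonicities, not by separate quantitative control of $\Omega^2$ near $\mathcal{B}^+$.
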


 We briefly discuss the global geometry of trapped surfaces. Each sphere corresponds to a point in the Penrose diagram. At any point, we define the outgoing null derivative of the area-radius function $r$. Then, we call the regular region the set of points for which the outgoing null derivative of $r$ is strictly positive, denoted $\R$ , the trapped region  the set of points for which the outgoing null derivative of $r$ is strictly negative, denoted $\T$, and the apparent horizon the set of points for which the outgoing null derivative of $r$ is zero, denoted $\A$. The structure of the trapped region can be very complex in general, see Figure \ref{Fig5}, if we just use the preliminary result of \cite{Kommemi}. To establish any non-trivial qualitative property on the apparent horizon $\A$ requires quantitative estimates. While the global properties of  $\A$ differ in the one or two-ended case, the properties of $\A$ in the vicinity of the Cauchy horizon $\CH$ are similar in both cases, as we will show.
 \begin{figure}[H]
 	
 	\begin{center}
 		
 		\includegraphics[width=97 mm, height=70 mm]{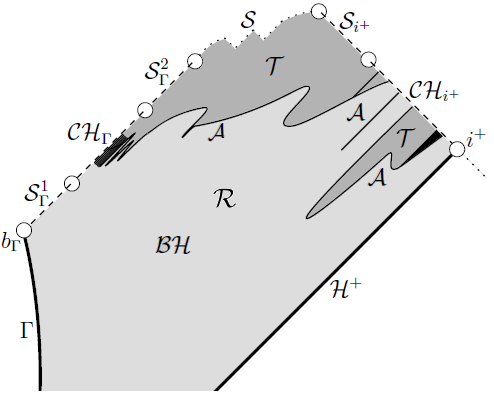}
 		
 	\end{center}
 	
 	\caption{General structure of the trapped region, Figure from \cite{Kommemi}.}
 	\label{Fig5}
 \end{figure}
 In the two-ended case, the analogue of the ``no anti-trapped surface'' assumption is the admissibility condition (see Definition \ref{admissibilitydef}), satisfied on $\Sigma$ if the outgoing derivative of the area radius is negative near one end, and its ingoing derivative is negative near the other end. Now we present the analogue of Theorem \ref{oneendedapriori} for two-ended admissible space-times:
 \begin{theothree}[A priori boundary characterization of two-ended spherically symmetric black holes, Dafermos \cite{Mihalisnospacelike}, Kommemi \cite{Kommemi}]  \label{twoendedapriori}
 	We consider the maximal development $(M=\mathcal{Q}^+ \times_r \mathcal{S}^2,g_{\mu \nu}, \phi,F_{\mu \nu})$ of smooth, spherically symmetric, two-ended admissible initial data satisfying the Einstein--Maxwell--Klein--Gordon system. Then the Penrose diagram of $\mathcal{Q}^+$ is given by Figure \ref{Fig2}, with boundary $\Sigma$ space-like and boundary $\mathcal{B}^+$ induced by the manifold ambient $\RR^{1+1}$: $$ \mathcal{B}^+ =  \mathcal{S}  \cup \mathcal{S}_{i^+}  \cup \CH \cup i^{+} \cup \mathcal{I}^+ \cup i^0,$$ where the definition of the boundary components are analogous \color{black} to those of Theorem \ref{oneendedapriori}, and moreover, see Figure \ref{Fig2} : 
 	$$ i^{+}= i^+_1 \cup  i^+_2,$$
 	$$ \mathcal{S}_{i^+}= \mathcal{S}_{i^+_1} \cup  \mathcal{S}_{i^+_2},$$
 	$$ \mathcal{CH}_{i^+}= \mathcal{CH}_{i^+_1} \cup  \mathcal{CH}_{i^+_2},$$
 	$$ \mathcal{I}^+=\mathcal{I}^+_1 \cup  \mathcal{I}^+_2,$$
 	$$ i^{0}= i^0_1 \cup  i^0_2,$$
 	We  define $ \mathcal{BH}:= \mathcal{Q}^+ \backslash (J^{-}(\mathcal{I}^+_1)  \cap \mathcal{Q}^+ \backslash (J^{-}(\mathcal{I}^+_2)\neq \emptyset$, and $\mathcal{H}^+_i = \overline{J^{-}(\mathcal{I}^+_i)} \backslash J^{-}(\mathcal{I}^+_i) \subset  \mathcal{Q}^+$, for $i=1,2$.
 	\begin{figure}[H]
 		
 		\begin{center}
 			
 			\includegraphics[width=129 mm, height=60 mm]{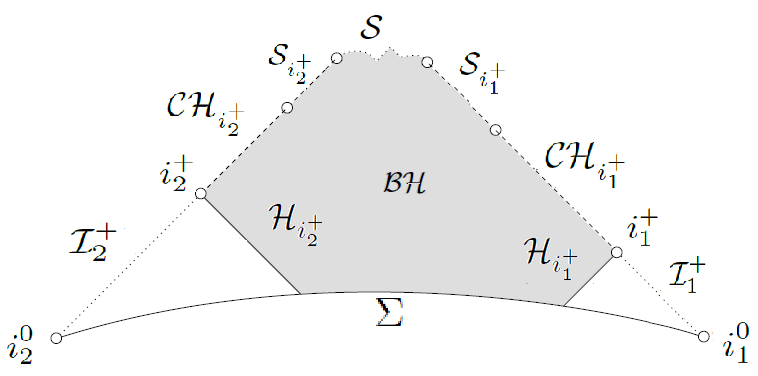}
 			
 		\end{center}
 		
 		\caption{General Penrose diagram of a two-ended charged spherically symmetric black hole, Figure from \cite{MihalisSStrapped}.}
 		\label{Fig2}
 	\end{figure}
 \end{theothree}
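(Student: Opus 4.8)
The argument I have in mind is the two-ended analogue of the analysis behind Theorem \ref{oneendedapriori}, and it is entirely ``soft'': it uses only that the system \eqref{1}--\eqref{5} obeys the dominant energy condition and the null condition, together with a characteristic local existence/extension statement. The starting point is the conformal picture: the quotient $\mathcal{Q}^+ = M/SO(3)$ is a $1+1$-dimensional globally hyperbolic Lorentzian manifold with metric $g_{\mathcal{Q}^+} = -\Omega^2\,du\,dv$ in double-null coordinates, and \eqref{1}--\eqref{5} become a first-order system for $(r,\Omega,\varpi,\phi,A_u,A_v)$ with $\varpi$ the renormalised Hawking mass; since $\mathcal{Q}^+$ is globally hyperbolic with two-ended Cauchy surface $\Sigma$, it embeds conformally onto a domain of $\mathbb{R}^{1+1}$ bounded there by $\Sigma$ (past, space-like) and an achronal terminal boundary $\mathcal{B}^+$, whose description is the whole content of the theorem. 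The structural input is monotonicity: with $\nu = \partial_u r$, $\lambda = \partial_v r$, the Raychaudhuri equations give $\partial_u(\Omega^{-2}\nu)\le 0$ and its $v$-analogue (non-positivity from the energy condition), while $\varpi$ is non-decreasing towards the future along any null ray on which the transverse derivative of $r$ has the collapsing sign; the admissibility condition (Definition \ref{admissibilitydef}) fixes the signs of $\nu,\lambda$ near the two ends and these signs then propagate, so that $\mathcal{T}$ is a future set, $\mathcal{R}$ the regular region, $\mathcal{A}$ the apparent horizon, and along every incomplete null ray $r$ is eventually monotone --- hence $r$ extends continuously to $\mathcal{B}^+$ with values in $[0,+\infty]$ and $\varpi$ with values in $(0,+\infty]$.

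Next I would locate null and time-like infinity. Near each end the data are asymptotically flat, so $r\to\infty$ there and one is in $\mathcal{R}$; a monotonicity argument (the level sets $\{r=R\}$ for $R$ large being connected time-like curves capped above by a single null ray) identifies, for each end $j=1,2$, the locus $\{r=+\infty\}\cap\mathcal{B}^+$ with a connected null ray $\mathcal{I}^+_j$ of past endpoint $i^0_j$ and future endpoint $i^+_j$. A separate point, specific to the two-ended setting, is that the two exteriors $J^-(\mathcal{I}^+_j)\cap\mathcal{Q}^+$ cannot meet --- admissibility forces a trapped sphere between them --- so $\mathcal{BH}\ne\emptyset$, the horizons $\mathcal{H}^+_1$ and $\mathcal{H}^+_2$ are distinct, and the two ends are causally separated; this is what ultimately makes $\mathcal{S}$ a single connected curve rather than two.

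It then remains to analyse $\mathcal{B}^+\setminus(\mathcal{I}^+_1\cup\mathcal{I}^+_2\cup\{i^+_1,i^+_2\})$ by means of the extension (``first singularity'') principle. Given a point $p$ there and a truncated characteristic rectangle with $p$ as its upper corner: if $r$ is bounded below and $\varpi$ bounded on that rectangle, characteristic local existence extends the solution past $p$, and maximality of the development forces this extension to fail global hyperbolicity, so $\mathcal{B}^+$ must be ingoing null at $p$ with $r$ extending strictly positively --- this is the Cauchy horizon $\mathcal{CH}_{i^+_j}$. Otherwise $r\to 0$ or $\varpi\to+\infty$, and one shows $\varpi\to+\infty$ can occur only at the future endpoint of an otherwise regular null segment, not on an open portion of $\mathcal{B}^+$ along which $r$ stays positive (the ``no space-like singularity'' phenomenon of \cite{Mihalisnospacelike}). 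Ordering these alternatives along the ingoing ray emanating from each $i^+_j$ yields a (possibly empty) ingoing null Cauchy horizon $\mathcal{CH}_{i^+_j}$ with $r>0$, then past its endpoint a (possibly empty) ingoing null segment $\mathcal{S}_{i^+_j}$ with $r\to0$, and finally the two sides meet along one connected achronal curve $\mathcal{S}$ with $r\to0$, disjoint from the null rays issued by $i^+_1$ and $i^+_2$ --- which is precisely Figure \ref{Fig2}.

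The real work, both parts of which are available in the literature, would be concentrated in: (i) the quantitative characteristic local existence/extension theorem --- that finiteness of $r^{-1}$ and of $\varpi$ on a truncated rectangle is enough to extend the solution, proved in a low-regularity norm; and (ii) the exclusion of a space-like portion of $\mathcal{B}^+$ along which $r$ stays positive, i.e.\ the impossibility of mass inflation on a ``fat'' set, which is Dafermos's theorem \cite{Mihalisnospacelike} and relies on the monotonicity of $\varpi$ together with the fine structure of the trapped region $\mathcal{T}$. Everything else is bookkeeping with the monotonicity of $r$ and $\varpi$.
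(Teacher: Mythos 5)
First, a point of orientation: the paper does not prove Theorem \ref{twoendedapriori} at all — it is quoted as background, attributed to \cite{Mihalisnospacelike}, \cite{Kommemi} (with the figure from \cite{MihalisSStrapped}), exactly like its one-ended counterpart Theorem \ref{oneendedapriori}. So there is no in-paper proof to compare against; what you have written is a reconstruction of the Dafermos--Kommemi argument, and in its overall architecture (conformal embedding of $\mathcal{Q}^+$, sign propagation of $\nu,\lambda$ from admissibility and the Raychaudhuri equations, monotone extension of $r$ and $\varpi$ to $\mathcal{B}^+$, location of $\mathcal{I}^+_j$, $i^0_j$, $i^+_j$, non-emptiness of $\mathcal{BH}$ from the trapped sphere guaranteed by Definition \ref{admissibilitydef}, and then a first-singularity/extension-principle analysis of the remaining boundary) it is the right route.

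The genuine soft spot is your step (i)--(ii) split. You state the extension principle with the hypothesis ``$r$ bounded below \emph{and} $\varpi$ bounded on the truncated rectangle,'' and then you outsource the exclusion of achronal boundary portions with $r$ bounded away from zero to a purported ``no mass inflation on a fat set'' theorem of \cite{Mihalisnospacelike}. That is not how the characterization is actually closed, and as stated it leaves a gap: with your weaker extension principle, a first singularity with $r$ bounded below but $\varpi\to+\infty$ is not excluded, and the result you invoke to kill it is not what \cite{Mihalisnospacelike} proves — the main theorem there is a small-data statement (absence of spacelike singularities for perturbations of Reissner--Nordstr\"{o}m), not an a priori exclusion; its role for Theorem \ref{twoendedapriori} is as a source (with \cite{MihalisSStrapped}) for the two-ended a priori framework. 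The ingredient that actually does the work, and the only genuinely model-dependent analytic content, is Kommemi's extension principle for the charged (and possibly massive) scalar field formulated with \emph{$r$-bounds alone} (bounded above and away from zero): proving that pointwise control of $r$ on the rectangle controls $\Omega^2$, $\phi$, $Q$ and hence permits characteristic extension is where the estimates are. Once that sharp version is in hand, any boundary point off the null rays emanating from $i^+_1$, $i^+_2$ at which $r$ extends positively has a compact truncated past rectangle on which $r$ is bounded above and below (by the monotonicities you already set up), so it cannot be a boundary point of the maximal development; thus $r\to 0$ on $\mathcal{S}_{i^+}\cup\mathcal{S}$ automatically, mass inflation is permitted only on $\mathcal{CH}_{i^+_1}\cup\mathcal{CH}_{i^+_2}$ (where the rectangles are not compact, reaching $v=+\infty$), and no separate exclusion of ``$\varpi=\infty$ with $r>0$'' boundary pieces is needed. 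With that correction your sketch matches the literature; without it, the decisive step rests on a theorem that does not exist in the form you cite.
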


 \subsection{First version of the main results} \label{firstversion}
 In this section, we give a first account of our results. More precise statements can be found in section \ref{results}. We start with the $C^2$-inextendibility results, in relation with Conjecture \ref{C2SCC} and we differentiate between the two-ended case -- for which the situation is more straightforward -- and the one-ended case, which is our main interest, as we are motivated by Strong Cosmic Censorship in spherical collapse.
 
 All our results assume that the black hole exterior settles down towards a sub-extremal Reissner--Nordstr\"{o}m black hole, at quantitative rates precisely stated in Theorem \ref{previous}. The sub-extremality condition is conjectured to be generic, c.f. \cite{Kommemi} and the discussion in section \ref{otherext}. The quantitative rates that we assume are also conjectured to be generic, see the discussion in section \ref{decayconj}.
 \subsubsection{Inextendibility in the two-ended case}
 
 \begin{theo}  \label{rough2}
 	Given a two-ended solution $(M,g,F,\phi)$ as in Theorem \ref{twoendedapriori} , we assume that both black hole exteriors settle down quantitatively towards a sub-extremal Reissner--Nordstr\"{o}m metric. Then $(M,g)$ is $C^2$-future-inextendible.
 \end{theo}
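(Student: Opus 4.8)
The plan is to deduce Theorem \ref{rough2} from the global $C^2$-inextendibility of the Cauchy horizon $\CH = \CHone \cup \CHtwo$ --- the principal output of this paper --- by controlling all the \emph{other} terminal boundary components permitted by the a priori Penrose diagram of Theorem \ref{twoendedapriori}, and then invoking a localisation argument that reduces $C^2$-inextendibility to curvature blow-up along incomplete timelike geodesics. Suppose, for contradiction, that $(M,g)$ admits a $C^2$ extension across a future boundary: there is a $C^2$ Lorentzian manifold $(\tilde M, \tilde g)$, an isometric embedding $\iota : M \hookrightarrow \tilde M$ onto a proper subset, and a future-directed, affinely parametrised timelike geodesic $\gamma : [0,1) \to M$ such that $\iota \circ \gamma$ extends continuously to $s=1$ with $p := \iota(\gamma(1)) \in \tilde M \setminus \iota(M)$. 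Since $\tilde g$ is $C^2$, its curvature is bounded on a neighbourhood of $p$, so $\sup_{s<1} |\mathrm{Rm}|(\gamma(s)) < \infty$ in a parallel-propagated frame along $\gamma$. The entire proof then reduces to showing that this supremum is in fact infinite for \emph{every} such $\gamma$.

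First I would project $\gamma$ to the quotient $\mathcal{Q}^+$. By global hyperbolicity, $\pi\circ\gamma$ leaves every compact subset of $\mathcal{Q}^+$ (its fibres being compact spheres, a bounded region of $\mathcal{Q}^+$ has compact preimage, in which $\gamma$ would converge), so it limits to a point $q$ of the future boundary $\mathcal{B}^+$ of Theorem \ref{twoendedapriori}. The components $i^0$ and $\mathcal{I}^+$ are excluded because a future-directed causal curve that enters the black hole region stays there, while one remaining in $J^-(\mathcal{I}^+)$ either escapes to $\mathcal{I}^+$ with infinite affine length or has no finite-affine boundary to accumulate on; and $i^+ = i^+_1 \cup i^+_2$ is excluded because, by the quantitative convergence to a sub-extremal Reissner--Nordström exterior assumed here (Theorem \ref{previous}), the geometry near $i^+$ is $C^0$-close to the sub-extremal Reissner--Nordström interior, in which timelike geodesics approaching time-like infinity have infinite affine length. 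Hence $q$ lies in the interior of $\CH$, or else $q \in \mathcal{S} \cup \mathcal{S}_{i^+} \cup \{\text{future endpoint of } \CH\}$, where $r\to 0$.

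If $q$ is interior to $\CH$, this is precisely the situation resolved by the main theorem: by the mass-inflation/rigidity dichotomy, near $q$ the Cauchy horizon is either in its dynamical (or post-transition mixed) regime, where the Hawking mass blows up --- forcing a curvature invariant to blow up, as $r$ there is bounded away from $0$ and $\infty$ --- or in its static regime, where the tangential radiation vanishes but, as in Remark \ref{remarktransverse}, the transverse radiation is non-trivial and blue-shifted, so that a Ricci component still blows up; the trapped neighbourhood of $\CH$ established here guarantees that the apparent horizon $\A$ does not re-enter a neighbourhood of $\CH$, so the double-null geometry is regular up to $\CH$ and the blow-up is uniform transversally, whence $|\mathrm{Rm}|(\gamma(s)) \to \infty$, a contradiction. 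If instead $r(\gamma(s))\to 0$, then $q$ lies in the trapped region $\T$, along which the renormalised Hawking mass $\varpi = m + \tfrac{e^2}{2r}$ is positive and monotone in both null directions (dominant energy condition); combining this monotonicity with the positive lower bound for $\varpi$ on the trapped neighbourhood of $\CH$ (close to the sub-extremal Reissner--Nordström value) gives $\varpi \geq c > 0$ as $q$ is approached, and since in spherical symmetry a curvature invariant is bounded below by a positive multiple of $\varpi^2 r^{-6}$ as $r\to 0$, again $|\mathrm{Rm}|(\gamma(s))\to\infty$, a contradiction. (One may expect $\mathcal{S}_{i^+}=\emptyset$ under these hypotheses, since $r$ is bounded below near $\CH$, but the argument does not need this.)

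The hard part will be, in the case $q\in\CH$, the passage from the blow-up of a \emph{specific} curvature component in the distinguished double-null frame --- which is how the main theorem produces the singularity, especially on static portions where the Hawking mass stays finite and one genuinely needs the sharper blue-shifted transverse radiation statement rather than mass inflation --- to a frame-independent obstruction valid along an arbitrary, not necessarily symmetric, timelike geodesic $\gamma$. This requires controlling the relation between the canonical double-null gauge and a parallel-propagated frame along $\gamma$ all the way up to $\CH$, and it is exactly here that the trapped-neighbourhood structure and the quantitative estimates near $i^+$ do the work; the $r\to 0$ cases, by contrast, are comparatively soft, being governed by the monotonicity of $\varpi$ alone.
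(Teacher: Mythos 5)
Your overall architecture (exclude the $r=0$ boundary pieces by Kretschmann blow-up, reduce to a point of $\CHone\cup\CHtwo$, then contradict boundedness of curvature in the extension) matches the paper's, and your treatment of the $r\to 0$ case is essentially the Kommemi/Luk--Oh argument the paper cites. But the step you flag as ``the hard part'' is not a technical detail to be discharged by the trapped neighbourhood and the estimates near $i^+$ --- it is the essential content of the proof, and your proposal does not supply it. The singular quantity produced by the paper is $Ric(V,V)$ with $V=\Omega^{-2}\partial_v$, a direction-dependent component; on static (and early mixed) portions of $\CH$ the Hawking mass, $\varpi$, $Q$ and hence every curvature \emph{scalar} remain bounded (the horizon is isometric to a Reissner--Nordstr\"om one), so your assertion that $|\mathrm{Rm}|(\gamma(s))\to\infty$ in a parallel-propagated frame along an \emph{arbitrary} timelike geodesic cannot be extracted from invariants, and relating the parallel frame along such a $\gamma$ to the double-null frame up to $\CH$ is exactly what you would have to prove; nothing in the trapped-neighbourhood statement or Theorem \ref{previous} does this for you. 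Even in the dynamical case your route differs from the paper's (which deliberately does not use mass inflation for inextendibility), but there the gap is less severe since the Kretschmann scalar does blow up.

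The paper sidesteps the frame-comparison problem entirely by adapting Luk--Oh's geometric construction (Proposition \ref{geomextendJonathan}): from the mere existence of a $C^2$ extension, using that $\partial M$ is a Lipschitz achronal hypersurface differentiable a.e.\ and that the Killing rotations extend continuously to $\partial M$, one constructs a \emph{radial null} geodesic of the extension crossing $\CH$ whose tangent inside $M$ is exactly $c\,\Omega^{-2}\partial_v$; boundedness of $Ric(\dot\gamma,\dot\gamma)=c^2\Omega^{-4}|\partial_v\phi|^2$ along it then contradicts the integrated lower bound \eqref{trappedestimate10} (via a dyadic/pigeonhole extraction in Proposition \ref{Ricciblowupprop}), with no need to compare frames or to distinguish the static, mixed and dynamical cases. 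To repair your proposal you would either have to carry out this null-geodesic construction (i.e.\ prove the analogue of Luk--Oh's Lemmas 11.1--11.6 in your setting) or genuinely prove the frame comparison along arbitrary timelike geodesics, which is a substantially harder and unnecessary task; as written, the proposal assumes the conclusion at its critical point.
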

 
 If we accept that the quantitative decay of the scalar field is generic (see Theorem \ref{previous} for the precise assumptions), then Theorem \ref{rough2} implies directly Conjecture \ref{C2SCC}, i.e.\ the $C^2$ version of Strong Cosmic Censorship for two-ended black holes.
 
 \subsubsection{Inextendibility in the one-ended case}
 In the one-ended case, the situation is more complicated, due to new boundaries emanating from the center of symmetry $\Gamma$, c.f.\ Figure \ref{Fig1}. Nevertheless, one can still prove that the Cauchy horizon is $C^2$ inextendible in the one-ended setting:
 \begin{theo} \label{rough1}
 	Given a one-ended solution $(M,g,F,\phi)$ as in Theorem \ref{oneendedapriori}, we assume that the exterior of the black hole settles down quantitatively towards a sub-extremal Reissner--Nordstr\"{o}m metric. Then $\CH$ is $C^2$ inextendible.
 \end{theo}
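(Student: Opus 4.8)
\emph{Strategy.} The plan is to reduce the global $C^2$-inextendibility of $\CH$ to a curvature blow-up statement at \emph{every} point of $\CH$, and to obtain the latter by splitting $\CH$ according to whether mass inflation occurs --- mass inflation driving the blow-up on one part, the pure blue-shift of the transverse radiation on the other. First I would invoke the semi-local estimates of \cite{Moi}: from data on $\mathcal{H}^+$ decaying at the rates of Theorem~\ref{previous}, they give $\CH\neq\emptyset$, quantitative control of $(r,\phi,\partial_u\phi,\partial_v\phi)$, of the charge and of the Hawking mass on an outgoing cone $\{u=u_s\}$ emanating sufficiently close to $i^+$, together with the $C^2$-inextendibility of the portion $\CH\cap\{u\le u_s\}$. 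Propagating towards the future via the Raychaudhuri equations and the reduced form of \eqref{5}, and using that $r$ cannot reach $0$ before $\CH$ (Theorem~\ref{oneendedapriori}), one shows that a whole neighbourhood of $\CH$ in $\mathcal{Q}^+$ lies in the trapped region $\T$ (so $\partial_u r<0$ and $\partial_v r<0$ there, and $r$ is bounded below); in particular $\A$ cannot cross $\CH$, and the sign structure on which the blue-shift rests is available near all of $\CH$.

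\emph{The trichotomy.} Let $\mathcal{S}\subset\CH$ be the maximal future subset of $\CH$ on which the staticity condition of \cite{MihalisPHD} --- vanishing of the tangential (outgoing) radiation together with finiteness of the Hawking mass --- holds identically, and let $\mathcal{D}=\CH\setminus\mathcal{S}$. The core of the proof is to show, from the evolution equations and the estimates above --- and crucially \emph{without} any monotonicity of the renormalised mass $\varpi$, which is unavailable since the charge is not constant --- that the staticity condition propagates in one direction along $\CH$, whereas a single instant of non-trivial outgoing radiation propagates, through the mass-inflation mechanism below, in the opposite direction. This forces $\mathcal{D}$ to be a connected past set, $\mathcal{S}$ a connected future set, and the transition between them to occur at a single parameter $u_T$ (possibly $\pm\infty$): the Cauchy horizon is then dynamical ($\mathcal{S}=\emptyset$), static ($\mathcal{D}=\emptyset$) or mixed.

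\emph{Blow-up on $\mathcal{D}$ and on $\mathcal{S}$.} On $\mathcal{D}$ I would prove mass inflation, $\varpi\to+\infty$ along every ingoing cone meeting $\mathcal{D}$: the non-trivial outgoing radiation crossing $\CH$ bounds the outgoing flux from below, the ingoing flux is amplified by the blue-shift factor $\Omega^{-2}$, which grows at least exponentially in $v$ because the exterior settles to a sub-extremal Reissner--Nordstr\"om solution with surface gravity $\kappa_->0$, and inserting these bounds into the transport equation for $\varpi$ forces $\varpi\to\infty$ (in the spirit of \cite{Ori,Poisson}), whence a curvature scalar is unbounded at the corresponding points of $\CH$ (as in \cite{JonathanStab}), so $g$ is not $C^2$ there. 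On $\mathcal{S}$ the induced geometry on $\CH$ is exactly that of a sub-extremal Reissner--Nordstr\"om Cauchy horizon ($r\to r_->0$, constant charge and mass), but --- this is the content of Remark~\ref{remarktransverse} --- the \emph{transverse} radiation is not trivial: the reduced wave equation $\partial_u\partial_v(r\phi)=O(\Omega^2)$ together with the non-degenerate $\mathcal{H}^+$-asymptotics of Theorem~\ref{previous} (say $|\partial_v(r\phi)|\sim v^{-s-1}$ with $s>0$ and non-vanishing coefficient) give $\partial_v(r\phi)(u,v)=\partial_v(r\phi)|_{\mathcal{H}^+}(v)\,(1+o(1))+O(\Omega^2)$, which is non-zero for $v$ large, so $\Omega^{-2}D_v\phi$ has a non-zero limit along outgoing cones reaching $\mathcal{S}$. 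Along an outgoing null geodesic terminating on $\CH$ at finite affine parameter $\lambda_{\CH}$ (so $v\to\infty$, $dv/d\lambda\propto\Omega^{-2}\sim e^{\kappa_- v}$) one has, in a parallelly propagated frame, $\mathrm{Ric}\!\left(\tfrac{d}{d\lambda},\tfrac{d}{d\lambda}\right)=2\,(dv/d\lambda)^2|D_v\phi|^2\sim\Omega^{-4}|D_v\phi|^2\sim e^{2\kappa_- v}v^{-2s-2}$, so that $\int^{\lambda_{\CH}}\mathrm{Ric}\!\left(\tfrac{d}{d\lambda},\tfrac{d}{d\lambda}\right)d\lambda=+\infty$ --- the exponential blue-shift of the transverse radiation beating the polynomial decay. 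Hence no $C^2$ extension exists across any point of $\mathcal{S}$ either. Since $\CH=\mathcal{D}\cup\mathcal{S}$ and, along suitable causal geodesics, a parallel-frame curvature component fails to be locally integrable at every point of $\CH$, the by-now standard argument linking such blow-up to the impossibility of a $C^2$ extension across a null boundary (cf.\ \cite{Moi,JonathanStab}) yields that $\CH$ is globally $C^2$-inextendible.

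\emph{Main obstacle.} The decisive difficulty --- and the reason this does not follow from the local result of \cite{Moi} --- is the trichotomy together with the quantitative mass inflation away from $i^+$: in that region there is no smallness parameter and, unlike the uncharged models of \cite{JonathanStab}, no sign or monotonicity of $\varpi$ along $\CH$ making the propagation of staticity and the blow-up of $\varpi$ automatic; both must be extracted from delicate integrated estimates for the fully coupled Einstein--Maxwell--Klein--Gordon system in a neighbourhood of $\CH$.
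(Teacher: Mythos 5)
Your high-level architecture (seed with the semi-local estimates of \cite{Moi}, split $\CH$ into a static and a dynamical piece with a single transition, obtain curvature blow-up on each piece, conclude by a Luk--Oh-type geodesic argument) is recognisably the paper's, except for one deliberate divergence: the paper does \emph{not} use the hybrid ``mass inflation on $\mathcal{D}$ / transverse blue-shift on $\mathcal{S}$'' blow-up, but proves the blow-up of $Ric(V,V)$ on the \emph{entire} $\CH$ in all three cases with one method (trapped neighbourhood, finite space-time volume, and the $L^1$--$L^{\infty}$ estimates of Lemma~\ref{finitevolumeestlemma2} leading to \eqref{trappedestimate10}); your route is precisely the alternative the paper mentions and declines. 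That choice is legitimate, but several steps you treat as soft or standard are where the real content lies, and as written they fail. First, you claim a trapped neighbourhood of $\CH$ at the outset ``via the Raychaudhuri equations and $r>0$'': no soft argument gives this. Raychaudhuri only propagates $\lambda<0$ to larger $v$ once it holds; producing, for every $u$, some $v(u)$ with $(u,v(u))\in\mathcal{T}$ is exactly Theorem~\ref{trappedrough}, which in the paper is obtained only \emph{after} the rigidity theorem (Theorem~\ref{classificationtheorem}), the local mass blow-up and its propagation (Lemma~\ref{massblowuplemma}); a priori the apparent horizon may have limit points on $\CH$ (the scenario of Figure~\ref{Fig5}), and ruling this out is one of the quantitative results, not an input. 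Second, your orientation of the dichotomy is reversed: staticity (the condition $\int\kappa\,dv=+\infty$ of \cite{MihalisPHD}) propagates to the \emph{past} by Raychaudhuri, while the mass blow-up propagates to the \emph{future}, so the static set is the initial segment containing the part near $i^+$ and the dynamical set is the future segment (Definition~\ref{mixeddef}, Corollary~\ref{threetypes}); this matters because the estimates of \cite{Moi} near $i^+$ are what seed either the rigidity estimates or the local mass blow-up that is then pushed forward.

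Two further gaps. On $\mathcal{S}$ you derive the transverse blow-up from a pointwise non-vanishing limit of $\Omega^{-2}D_v\phi$, assuming $|\partial_v(r\phi)|\sim v^{-s-1}$ with non-vanishing coefficient on $\mathcal{H}^+$; the hypotheses of Theorem~\ref{previous} only provide the time-averaged lower bound \eqref{lowerhyp}, and pointwise lower bounds are expected to fail for oscillating massive tails (Conjecture~\ref{conjecturemassive}). The paper instead propagates integrated lower bounds on $\int|\partial_v\phi|^2$ and on $\int|\lambda|$ and concludes $\limsup_{v\to\infty}Ric(V,V)=+\infty$ by a dyadic argument; your conclusion survives only after being recast in this integrated form. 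Finally, the closing ``standard argument'' is not off-the-shelf in the one-ended case: one must formulate inextendibility across $\CH$ (Definition~\ref{CHinext}), extend the projection $\Pi$ continuously to the extension boundary (Corollary~\ref{Piextension}), and redo the construction of the radial null geodesic whose projection terminates on $\CH$ (Proposition~\ref{geomextendoneened}); this is also what lets one identify points of the abstract extension boundary with points of $\CH$ where the mass or $Ric(V,V)$ blows up. And the core of the theorem --- the rigidity/classification and the propagation of mass blow-up with no monotonicity of $\varpi$ --- is asserted rather than argued: you correctly name it as the main obstacle, but the mechanism (a bootstrap on trapped rectangles of finite space-time volume combined with $L^1$--$L^{\infty}$ estimates, as in Lemmas~\ref{propagation}--\ref{finitevolumeestlemma} and \ref{finitevolumeestlemma2}) is absent from the proposal.
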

 While the $C^2$-inextendibility of the Cauchy horizon is valid both in the one-ended and in the two-ended case, it is not sufficient to obtain the $C^2$ version of Strong Cosmic Censorship in the one-ended case. This is because there exists an additional obstruction, coming from the hypothetical extendibility of an outgoing Cauchy horizon $\mathcal{CH}_{\Gamma}$ emanating from the center $\Gamma$. Nevertheless, $\mathcal{CH}_{\Gamma}$ is conjectured to be empty for generic solutions, see section \ref{connected}. If this additional obstruction is not present, we can prove the $C^2$-future-inextendibility of the space-time, as in the two-ended case:\begin{theo} \label{rough1cond}
 	Given a one-ended solution $(M,g,F,\phi)$ as in Theorem \ref{oneendedapriori} satisfying the assumptions of Theorem \ref{rough1}, suppose additionally that $\mathcal{CH}_{\Gamma}=\emptyset$. Then $(M,g)$ is $C^2$-future-inextendible.
 \end{theo}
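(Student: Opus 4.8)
\emph{Proof proposal.} The plan is to reduce the $C^2$-future-inextendibility of $(M,g)$ to a case analysis over the future boundary components of the Penrose diagram $\mathcal{Q}^+$ listed in Theorem~\ref{oneendedapriori}, showing that across each one either no $C^2$ extension can be attached, or the component is ``at infinity'' and cannot be the terminal point of a future-directed causal curve of finite affine length. As is standard in spherical symmetry (compare the argument underlying the two-ended Theorem~\ref{rough2}), a hypothetical $C^2$ extension $\tilde M$ of $M$ is detected by a future-inextendible timelike geodesic $\gamma$ of $M$ that enters $\tilde M$; its projection to $\mathcal{Q}^+$ is a future-directed causal curve terminating at a point of one of
$$\mathcal{I}^+,\quad i^0,\quad i^+,\quad \CH,\quad \mathcal{S}_{i^+},\quad \mathcal{S},\quad \mathcal{S}^1_{\Gamma},\quad \mathcal{CH}_{\Gamma},\quad \mathcal{S}^2_{\Gamma},\quad b_{\Gamma}.$$
By hypothesis $\mathcal{CH}_{\Gamma}=\emptyset$, so this component simply does not occur; this is exactly the step that suppresses the ``locally naked singularity'' obstruction described in the introduction.

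First I would dispatch the components at infinity. On $\mathcal{I}^+\cup i^0$ the area radius $r\to\infty$, and since the exterior settles to a sub-extremal Reissner--Nordstr\"{o}m solution the domain of outer communication is future-complete towards null infinity in the relevant sense; likewise the event horizon $\mathcal{H}^+$ has infinite affine length, so $i^+$ is an ideal point at infinity. Hence no future-directed causal curve of finite affine length accumulates at $\mathcal{I}^+\cup i^0\cup i^+$, and these components carry no $C^2$ future extension. Next, the Cauchy horizon $\CH$ emanating from time-like infinity is $C^2$-inextendible by Theorem~\ref{rough1}: a component of the Ricci curvature blows up identically along $\CH$ (including in the static case, through the blue-shifted transverse radiation, cf.\ Remark~\ref{remarktransverse}), so no $C^2$ extension can be attached along $\CH$.

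It then remains to treat the ``first singularity'' components $\mathcal{S}$, $\mathcal{S}_{i^+}$, $\mathcal{S}^1_{\Gamma}$, $\mathcal{S}^2_{\Gamma}$ and the center end-point $b_{\Gamma}$, on each of which $r$ extends continuously to $0$ by Theorem~\ref{oneendedapriori}. The point is to upgrade this to a genuine curvature singularity. The mechanism I would use is that each of these components lies in the closure of the trapped region $\mathcal{T}$, where $g(\nabla r,\nabla r)=\tfrac{2m}{r}-\tfrac{e^2}{r^2}-1<0$; combining the quantitative lower bound on the charge $e$ near the future end-point of $\CH$ (hence on $\mathcal{S}_{i^+}$) with $r\to0$ forces $g(\nabla r,\nabla r)\to-\infty$, which is incompatible with $\nabla r$ remaining continuous in a $C^2$ extension — equivalently the Hawking mass, and with it the Kretschmann scalar $\sim m^2/r^6$, blows up. For the components $\mathcal{S}$, $\mathcal{S}^1_{\Gamma}$, $\mathcal{S}^2_{\Gamma}$ adjacent to the center one argues the same way, using the a priori monotonicity of $\mathcal{T}$ and the unboundedness of $2m/r$ there (available once $\mathcal{CH}_{\Gamma}=\emptyset$); $b_{\Gamma}$ is then handled as a limit point of $\Gamma$, either inheriting the same blow-up or being an ideal point at infinity along a complete central geodesic. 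With every future boundary component either carrying a curvature blow-up or lying at infinity, no future-directed causal curve of $M$ can enter a $C^2$ extension, which is the assertion.

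The hard part will be this last step: proving a curvature blow-up on the $r\to0$ boundaries other than $\CH$, in particular $\mathcal{S}^1_{\Gamma}$ and $\mathcal{S}^2_{\Gamma}$ near the center, which are not controlled by the semi-local estimates near $i^+$ behind Theorem~\ref{rough1}; one must instead extract the needed blow-up from the soft a priori structure (dominant energy condition, the Raychaudhuri-type equations, monotonicity of $\mathcal{T}$) together with the global bounds on $e$ and $m$ from the earlier sections, the delicate subcase being a degenerating charge $e\to0$, where the argument must rely on the behaviour of $2m/r$ and the second-derivative structure of the metric rather than a one-line estimate. A secondary technical point is to make the reduction to boundary components fully rigorous without assuming the extension is spherically symmetric: either one invokes that a $C^2$ extension may be taken spherically symmetric in a neighbourhood of the attached boundary point, or one runs the detection argument directly with timelike geodesics and their projections, as in the two-ended case.
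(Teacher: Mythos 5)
Your overall skeleton (reduce to the a priori boundary components of Theorem \ref{oneendedapriori}, use the Ricci blow-up across $\CH$ from Theorem \ref{rough1}, and a curvature blow-up at the $r=0$ components) is the same strategy as the paper's, but the place where you locate the difficulty is exactly where your argument has a genuine gap. In the paper, the contradiction at $r=0$ boundary points is a \emph{soft} fact: it is Lemma \ref{lastlemma}, borrowed verbatim from Lemma 11.6 of \cite{JonathanStab}, which in turn only uses the a priori blow-up of the Kretschmann scalar at boundary points with $r=0$ already established in \cite{Kommemi}; no quantitative input from the present paper (and in particular no lower bound on the charge, no discussion of $e\to 0$, and no estimate on $\rho$) is needed there. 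Your proposed replacement derivation is both unnecessary and, as sketched, not correct: the claim that $\mathcal{S}^1_{\Gamma}$, $\mathcal{S}^2_{\Gamma}$ and $b_{\Gamma}$ ``lie in the closure of the trapped region, available once $\mathcal{CH}_{\Gamma}=\emptyset$'' is not justified (these are precisely the components near the center that the semi-local and global estimates of the paper do not reach, and the paper never argues this way), your heuristic ``Kretschmann $\sim m^2/r^6$'' with a charge lower bound is not the mechanism (the relevant soft bound comes from the sphere curvature term, which needs no control on $e$ or on the mass), and your treatment of $b_{\Gamma}$ (``either inheriting the same blow-up or being an ideal point at infinity'') is left open, whereas the paper's route through Proposition \ref{oneendedCHgammaprop} handles every $r(p)=0$ boundary point uniformly without a component-by-component analysis.

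The second gap is the step you call a ``secondary technical point'': making the detection of the extension by a causal curve, and the identification of where its projection lands in $\overline{\mathcal{Q}^+}$, rigorous. This is where the paper does real work (Corollary \ref{Piextension} and Proposition \ref{geomextendoneened}), using that $\partial M$ is Lipschitz and achronal, Rademacher differentiability, the continuous extension of the Killing rotations (Lemma \ref{DafermosRendallLemma}, from \cite{DafermosRendall}) and hence of $r$ and of the projection $\Pi$, and then the construction of Lemma 11.5 of \cite{JonathanStab} to produce either a radial null geodesic transverse to $\CH$ (when $r(p)>0$, which under $\mathcal{CH}_{\Gamma}=\emptyset$ forces $\tilde{\Pi}(p)\in\CH$ and contradicts the blow-up \eqref{trappedestimate10} via Proposition \ref{inextproponeended}) or a timelike geodesic through an $r=0$ point. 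Your first suggested shortcut --- that a $C^2$ extension ``may be taken spherically symmetric in a neighbourhood of the attached boundary point'' --- is not available and cannot be invoked; your second suggestion (run the argument directly with geodesics and projections) is what the paper does, but it is not a footnote: without the extension of the rotations and the continuity of $\tilde{\Pi}$ on the differentiability points of $\partial M$, the case split on $r(p)$ and the assertion that the projected geodesic terminates on a specific boundary component do not follow. So the proposal is not wrong in outline, but as written it replaces the two steps that actually need proof by an incorrect hard estimate and by an unavailable symmetry assumption.
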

 If we accept that both the quantitative decay of the scalar field and the property $\mathcal{CH}_{\Gamma}=\emptyset$ are generic, then Theorem \ref{rough1} implies directly Conjecture \ref{C2SCC}, i.e.\ the $C^2$ version of Strong Cosmic Censorship for \textbf{one}-ended black holes.
 \subsubsection{Classification of Cauchy horizons, quantitative estimates and strength of the singularity}
 As an important step in our $C^2$-inextendibility proof, we introduce a new classification of the Cauchy horizon into three types. Our main result states that the Cauchy horizon can be divided into one ``static'' connected component which is isometric to Reissner--Nordstr\"{o}m and one ``dynamical'' component -- always to the future of the static one -- which is weakly singular, in the sense that the Hawking mass blows up. A Cauchy horizon is called of dynamical type if its static component is empty, of static type if its dynamical component is empty, and of mixed type otherwise:
 
 \begin{theo} \label{classificationrough}
 	Given a one-ended solution $(M,g,F,\phi)$ as in Theorem \ref{oneendedapriori} satisfying the assumptions of Theorem \ref{rough1}, we can classify $\CH$ into three types: \begin{enumerate}
 		\item Dynamical type: the Hawking mass blows up everywhere on $\CH$.
 		\item Static type: $\CH$ is isometric to a Reissner--Nordstr\"{o}m Cauchy horizon and the Hawking mass is constant.
 		\item Mixed type: $\CH$ is the union of two connected components: a ``static component'' including $i^+$, which is isometric to a portion of a Reissner--Nordstr\"{o}m Cauchy horizon, and a ``dynamical'' one on which the Hawking mass blows up.
 	\end{enumerate}
 \end{theo}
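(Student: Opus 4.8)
The plan is to work in a double-null gauge $(u,v)$ on a one-sided neighbourhood of $\CH$ in $\mathcal{Q}^+$, with $\CH=\{v=+\infty\}$ parametrised by $u\in[u_{i^+},U)$, the endpoint $u=u_{i^+}$ abutting time-like infinity $i^+$. By the a priori structure of Theorem \ref{oneendedapriori}, $r$ extends to a positive continuous function $r_{\CH}$ on $\CH$ (away from $U$); one also establishes en route that $\CH$ admits a trapped neighbourhood, which supplies the monotonicity of $r$ used repeatedly. Near $i^+$, Theorem \ref{previous} together with the semi-local estimates of \cite{Moi} gives quantitative stability on a rectangle $[u_{i^+},u_{i^+}+\delta]\times[v_0,+\infty)$: the fields $r,\Omega^2,\phi,Q$ converge as $v\to+\infty$ at definite rates, with control of their $u$-derivatives, and $\phi_{\CH}(u_{i^+})=0$ because $\phi$ decays on $\mathcal{H}^+$. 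Call \emph{radiation} on $\CH$ the tangential quantity $\lambda_{\CH}(u):=\lim_{v\to+\infty}(r\,D_u\phi)(u,v)$ --- the transverse derivative $D_v\phi$, generically non-zero even in the static case (cf.\ Remark~\ref{remarktransverse}), does not enter here. Let $\mathcal S\subseteq\CH$ be the union of all relatively open subsegments on which the data induced on $\CH$ coincide with those of a sub-extremal Reissner--Nordstr\"{o}m Cauchy horizon, and set $\mathcal D:=\CH\setminus\mathcal S$. Then the three items reduce to: (a) $\mathcal S,\mathcal D$ are connected and $\mathcal S$ is the subsegment of $\CH$ abutting $i^+$; (b) $\mathcal S$, if non-empty, is isometric to a Reissner--Nordstr\"{o}m Cauchy horizon with constant charge and Hawking mass; (c) the Hawking mass (equivalently, since $r_{\CH}>0$, its renormalised version $\varpi$) equals $+\infty$ everywhere on $\mathcal D$.

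For (a) and (b) the main device is the staticity condition of Dafermos \cite{MihalisPHD}: a pointwise identity $\mathfrak C=0$ among the field variables near $\CH$, equivalent near a point both to the local vanishing of $\lambda_{\CH}$ and to local isometry with Reissner--Nordstr\"{o}m. From \eqref{1}--\eqref{5} one derives that $\mathfrak C$ satisfies a homogeneous transport equation, so that once it holds at a point of $\CH$ it holds on the whole subsegment between that point and $i^+$; crucially this propagation \emph{towards the past} uses only the evolution equations and the soft a priori structure --- no smallness --- and so remains valid arbitrarily far from $i^+$. Hence every static subsegment extends back to $u_{i^+}$: $\mathcal S$ is a single interval $[u_{i^+},u_T)$ or $[u_{i^+},u_T]$ (possibly empty, possibly all of $\CH$) and $\mathcal D$ is the complementary interval, which is (a) together with the trichotomy dynamical / static / mixed. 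On $\mathcal S$, $\lambda_{\CH}\equiv0$ forces $\partial_u\phi_{\CH}=0$, hence $\phi_{\CH}$ constant; since $u_{i^+}\in\mathcal S$ and $\phi_{\CH}(u_{i^+})=0$ we get $\phi_{\CH}\equiv0$ on $\mathcal S$, and the now source-free transport equations for $Q$, $\varpi$ and $r$ along $\CH$ then give $Q_{\CH}$, $\varpi_{\CH}$ constant and $r_{\CH}$ equal to the Reissner--Nordstr\"{o}m profile --- this is (b), including the static-type case $\mathcal S=\CH$.

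It remains to prove (c), the mass inflation statement. On $\mathcal D$ the radiation is by construction not identically zero on any subinterval, and this has to be promoted to a genuine blow-up of $\varpi_{\CH}$ through the blue-shift. Under the decay hypotheses of Theorem \ref{previous}, along each outgoing cone $\{u=u_*\}$ with $u_*\in\mathcal D$ the weighted ingoing flux $\int_{v_0}^{+\infty}\frac{r^2\,|D_v\phi|^2}{-\partial_v r}(u_*,v)\,\di v$ diverges --- the polynomially decaying event-horizon data are amplified exponentially at the rate fixed by the surface gravity of the relevant sub-extremal Reissner--Nordstr\"{o}m Cauchy horizon, and ``sufficiently slow decay'' is precisely the threshold at which the amplification wins --- and inserting this into the Raychaudhuri-type equation for $\partial_v\varpi$ along $\{u=u_*\}$ forces $\varpi_{\CH}(u_*)=+\infty$. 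Adequate quantitative input is available in two regimes: near $i^+$, from \cite{Moi}; and, in the mixed case, near the transition value $u_T=\sup\mathcal S$, because there the solution is \emph{exactly} Reissner--Nordstr\"{o}m up to $u_T$, so $u_*-u_T$ supplies the missing small parameter. To reach the rest of $\mathcal D$ one runs a continuity argument showing that $\{\,u_*\in\mathcal D:\varpi_{\CH}(u_*)=+\infty\,\}$ is both relatively open and relatively closed in $\mathcal D$, hence all of it.

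\textbf{The main obstacle} is exactly this last step --- propagating the staticity identity, and then the mass blow-up, across the whole of $\mathcal D$ in the region far from both $i^+$ and $u_T$, where \cite{Moi} gives nothing, where there is no small parameter, and where --- in sharp contrast with the uncharged Einstein--Maxwell-scalar-field model of \cite{JonathanStab} --- the renormalised Hawking mass is not monotone, so no soft comparison argument closes the loop. One is thus forced to replace ``proximity to $i^+$'' by ``lying in the trapped neighbourhood of $\CH$'' as the working hypothesis and to prove genuinely global quantitative estimates there; all remaining steps are either soft propagation of identities or perturbations of an explicit Reissner--Nordstr\"{o}m solution.
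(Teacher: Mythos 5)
Your outline reproduces the correct skeleton (a static set abutting $i^+$, the trichotomy, Reissner--Nordstr\"{o}m rigidity on the static part, mass inflation on the complement, and the need for a trapped neighbourhood far from $i^+$), but the two steps that constitute the actual content of the theorem are not supplied, and one of them is mischaracterized. First, you assert that the Dafermos staticity condition is a pointwise identity $\mathfrak{C}=0$ obeying a homogeneous transport equation, so that staticity at one point propagates softly, with no smallness, all the way back to $i^+$ and yields local isometry with Reissner--Nordstr\"{o}m. No such identity exists in this setting. What is soft is only the monotonicity coming from the Raychaudhuri equation \eqref{RaychU}: the integral condition \eqref{MihalisPHDcondition} is inherited by all earlier $u$, but this alone is ``useless on its own'' (cf.\ Remark \ref{monotonicityremark}) --- it does not tell you that the radiation vanishes, that $\nu\to 0$, or that the data on $\CH\cap\{u\le u_0\}$ are those of Reissner--Nordstr\"{o}m. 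That implication is precisely the rigidity Theorem \ref{classificationtheorem}, and its proof away from the region covered by \cite{Moi} is a genuine open--closed bootstrap in $u$ (Lemmas \ref{propagation}, \ref{trappedlemma}, \ref{finitevolumeestlemma}): one must simultaneously show that the rectangle up to $u_0$ is trapped (using the lower bound \eqref{lambdaus} on $\int|\lambda|$), that trapped rectangles have finite space-time volume (Lemma \ref{trappedvolume}), and then close $L^1$--$L^\infty$ estimates on that finite-volume region to recover the bootstrap assumptions. Declaring the equivalence ``staticity condition $\Leftrightarrow$ vanishing radiation $\Leftrightarrow$ isometric to RN'' at the level of definitions begs exactly this question; in the paper the non-triviality of the radiation on Dafermos cones is an \emph{a posteriori} consequence (Remark \ref{rmkblowuprough}), not an input.

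Second, for mass inflation on the dynamical part you leave the propagation as an unexecuted ``continuity argument showing the blow-up set is open and closed in $\mathcal{D}$'' and you explicitly label it the main obstacle; but this is the heart of the theorem, since the renormalised mass is not monotone for charged/massive fields and the flux lower bound \eqref{phiVLBlowerbound} is only available near $i^+$ (from \cite{Moi}) or, in the mixed case, just past $u_T$ (from the rigidity estimates \eqref{rigidity14}--\eqref{rigidity17}). The paper does not run an open--closed argument on the blow-up set: it first proves \emph{local} blow-up (Lemmas \ref{localmassblowupdynamical} and \ref{localmassblowupmixed}) and then propagates it \emph{to the future} by the quantitative Lemma \ref{massblowuplemma}, a bootstrap in which $|\phi|^2+|Q|$ is controlled by $\log\rho$ and the equation \eqref{massUEinstein} gives $\partial_u\rho\gtrsim-\rho^{2\alpha}|\nu|$ with $\alpha<\tfrac12$, hence a lower bound on $\rho^{1-2\alpha}$ along ingoing directions; it is not clear how your open/closed sets would be shown open or closed without essentially this estimate. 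So as it stands the proposal identifies the difficulty correctly but does not close it, and the one mechanism it offers in its place (soft transport of a staticity identity) is not available.
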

 \begin{rmk}
 	The same statement is true for two-ended solutions as in Theorem \ref{rough2}, if we replace $\CH$ by $\CHone$ or $\CHtwo$.   
 \end{rmk}
 \begin{rmk}
 	There exists examples of Cauchy horizons of static type and of mixed type, but it is conjectured that only Cauchy horizons of dynamical type are generic. Proving this result would seemingly necessitate a fully developed scattering theory in the black hole interior, for the Einstein--Maxwell--Klein--Gordon system, which is yet to be discovered.
 \end{rmk}
 
 Note that the main difficulty in Theorem \ref{classificationrough} is to prove that for \textit{any} non-static portions -- i.e.\ for \textit{any} non-trivial ingoing radiation -- the Hawking mass blows up. Since these portions can be quite far from time-like infinity $i^+$, we rely on tailored quantitative estimates and a new continuation criterion to establish the classification of Theorem \ref{classificationrough}.
 
 This classification helps to prove the blow up of curvature, the key ingredient to the $C^2$-inextendibility theorems:
 
 \begin{cor*}
 	Given a one-ended solution $(M,g,F,\phi)$ satisfying the assumptions of Theorem \ref{rough1}, quantitative estimates hold in a neighborhood on $\CH$ and $Ric(X,X)$ blows up on $\CH$, for a null radial geodesic vector field $X$ transverse to $\CH$.
 \end{cor*}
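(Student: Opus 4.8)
The plan is to reduce the statement to a blow-up statement for the charged scalar field and then to exploit the blue-shift, distinguishing the dynamical and static portions of $\CH$ furnished by Theorem \ref{classificationrough}. \textbf{Step 1 (reduction).} Work in the double-null gauge $g = -\Omega^2\,du\,dv + r^2 d\sigma_{\mathbb{S}^2}$ in which $\CH$ is the ingoing segment $\{v = +\infty\}$, so that for each $u_1$ in the range parametrising $\CH$ the radial null geodesic $\gamma_{u_1} = \{u = u_1\}$ has its future endpoint on $\CH$ and is transverse to it; affinely parametrised, its generator is $X = c\,\Omega^{-2}\partial_v$ for some constant $c \neq 0$ (the standard blue-shift normalisation). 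Since $X$ is null and radial, $g(X,X) = 0$ annihilates the trace terms of the stress-energy tensor and, by spherical symmetry of the purely electric Maxwell field, $\mathbb{T}^{EM}(X,X) = 0$ as well; Einstein's equation then yields $Ric(X,X) = \mathbb{T}^{KG}(X,X) = 2\,|D_X\phi|^2 = 2c^2\,\Omega^{-4}\,|D_v\phi|^2$. As $Ric(X,X)$ changes only by the harmless factor $a^2$ under the affine rescaling ambiguity $X \mapsto aX$, it suffices to show $\Omega^{-4}|D_v\phi|^2 \to +\infty$ as $v \to +\infty$ along $\gamma_{u_1}$; throughout I use $\Omega^2 = 4\,|\partial_u r|\,|\partial_v r|\,|1-\mu|^{-1}$, and I note that $\gamma_{u_1}$ reaches $\CH$ at finite affine length since $\int^{+\infty}\Omega^2\,dv \lesssim \int^{+\infty}|\partial_v r|\,dv < \infty$ — which is also what is needed for the $C^2$-inextendibility of Theorem \ref{rough1}.

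\textbf{Step 2 (the dynamical set).} On $\CH \cap \mathcal{D}$ — all of $\CH$ for dynamical type, its future component for mixed type — Theorem \ref{classificationrough} gives $m \to +\infty$, hence $|1-\mu| = 2m/r - 1 - e^2/r^2 \to +\infty$ because $r \to r_- > 0$ and the charge $e$ is bounded; in particular $\Omega^2 \to 0$. I would then use the mass-evolution identity $\partial_v m = \tfrac12(1-\mu)\,r^2\,(\partial_v r)^{-1}\,|D_v\phi|^2 + (\text{lower order})$ to trade $|D_v\phi|^2$ for $\partial_v m$, and combine it with the quantitative control of $\partial_u r$, $\partial_v r$ and $D_v\phi$ near $\CH$ established in \cite{Moi} and propagated along $\CH$ in the body of the paper, to extract a lower bound schematically comparable to $\Omega^{-4}|D_v\phi|^2 \gtrsim |1-\mu|\,|\partial_v r|^{-1} \simeq \Omega^{-2} \to +\infty$. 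The point is that no monotonicity is available on $\CH \cap \mathcal{D}$, so the argument must rest on the pointwise estimates and the continuation criterion of the present paper rather than on integrating a sign-definite quantity.

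\textbf{Step 3 (the static set, and mixed type).} On $\CH \cap \mathcal{S}$ the Hawking mass is constant, so Step 2 is unavailable, and the blow-up must come from the blue-shift acting on the transverse radiation, exactly as anticipated in Remark \ref{remarktransverse}. Since $\CH \cap \mathcal{S}$ is isometric to a Reissner--Nordstr\"{o}m Cauchy horizon, $\Omega^2$, $\partial_v r$ and $1-\mu$ match their Reissner--Nordstr\"{o}m asymptotics along $\gamma_{u_1}$; in particular $\Omega^2 \lesssim e^{-c_- v}$ for a $c_- > 0$ fixed by the surface gravity $\kappa_-$, whence $\Omega^{-4} \gtrsim e^{2 c_- v}$. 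It remains to see that $D_v\phi$, the transverse radiation — generically non-trivial on $\mathcal{S}$ even though the tangential radiation vanishes there — decays no faster than polynomially: this is the quantitative form of ``the blue-shift is always effective'', obtained by propagating the assumed slow (and crucially non-vanishing) decay of $\phi$ along $\mathcal{H}^+$ at the rate of Theorem \ref{previous} into the interior via the transport equation for $D_v\phi$ and the estimates of \cite{Moi}, giving $|D_v\phi|(u_1,v) \gtrsim v^{-N}$ for some $N$ and all large $v$; hence $\Omega^{-4}|D_v\phi|^2 \gtrsim e^{2 c_- v} v^{-2N} \to +\infty$. For mixed type one applies Step 2 and Step 3 to the two components of $\CH$.

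\textbf{Main obstacle.} Once the classification and the mass blow-up of Theorem \ref{classificationrough} are available, the Corollary is comparatively short, and its only substantive difficulty is Step 2: turning ``$m \to +\infty$'' into the \emph{pointwise} divergence of $\Omega^{-4}|D_v\phi|^2$ uniformly along $\CH$, far from $i^+$ and with no monotonicity — precisely what the tailored quantitative estimates of the body are designed to provide. The auxiliary claim that quantitative estimates hold in a neighbourhood of $\CH$ is, likewise, nothing more than a restatement of the estimates produced en route to Theorems \ref{rough1} and \ref{classificationrough}.
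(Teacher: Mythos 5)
Your Step 1 reduction ($Ric(X,X)=2c^{2}\Omega^{-4}|\partial_v\phi|^{2}$ in the gauge $A_v\equiv 0$) agrees with the paper, but both of your blow-up steps rest on pointwise lower bounds that are not available. In Step 2, blow-up of the Hawking mass alone does not imply $\limsup_{v\rightarrow+\infty} Ric(X,X)=+\infty$: the identity \eqref{massVEinstein} only converts $\rho(u_1,v)\rightarrow+\infty$ into divergence of the integral $\int \frac{2r^{2}|\nu|}{\Omega^{2}}|\partial_v\phi|^{2}\,dv'$ (after absorbing the $\left(\frac{m^{2}}{2}r^{2}|\phi|^{2}+\frac{Q^{2}}{r^{2}}\right)\lambda$ term, which is integrable), and an integral over an infinite $v$-interval can diverge while the integrand stays bounded; your schematic pointwise inequality $\Omega^{-4}|D_v\phi|^{2}\gtrsim |1-\mu|\,|\partial_v r|^{-1}$ would require a pointwise lower bound on $\partial_v\rho$, which nothing provides. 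The blow-up needs a \emph{rate}: the paper proves the exponentially growing integrated bound \eqref{trappedestimate10}, and then extracts $\limsup Ric(V,V)=+\infty$ by a pigeonhole on an (almost) dyadic sequence (Proposition \ref{Ricciblowupprop}); if you insist on going through the mass, you must invoke the exponential rates \eqref{masslowerbound}, \eqref{massblowupest}, not mere divergence. In Step 3 the asserted pointwise bound $|D_v\phi|(u_1,v)\gtrsim v^{-N}$ for all large $v$ is a genuine gap: the hypothesis on the event horizon is the \emph{integrated} lower bound \eqref{lowerhyp}, and only integrated lower bounds are propagated into the interior (\eqref{phiVLBlowerbound}, \eqref{rigidity4}); a pointwise lower bound cannot survive oscillations and zeros of the field, and no such estimate appears (or could be proved) from the assumptions.

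The paper's actual mechanism is uniform over all three types, makes no static/dynamical case split for this corollary, and never passes through mass inflation: the flux lower bound on $\mathcal{H}^+$ is transferred by the outgoing Raychaudhuri equation \eqref{RaychV} into $\int_v^{+\infty}|\lambda|\gtrsim v^{-p}$ (\eqref{lambdaus}, \eqref{rigidity12}), hence $|\lambda|(u,v_n)\gtrsim v_n^{-p-1}$ on an $\alpha$-adic sequence; dividing by the exponential upper bound on $\Omega^{2}$ --- whose validity arbitrarily far from $i^+$ is exactly what the trapped neighborhood of Theorem \ref{trappedtheorem} together with the finite-volume $L^{1}$--$L^{\infty}$ estimates of Lemma \ref{finitevolumeestlemma2} deliver --- gives $|\lambda|/\Omega^{2}\gtrsim e^{1.98|K_-|v_n}$, which the monotonicity of \eqref{RaychV} extends to all $v$, and a second use of \eqref{RaychV} turns this into \eqref{trappedestimate10}. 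So the substantive difficulty is not ``turning $\rho\rightarrow\infty$ into pointwise divergence'' (your stated main obstacle) but establishing the quantitative estimates \eqref{trappedestimate1}--\eqref{trappedestimate11} globally along $\CH$; once they are in hand, the blue-shift argument is identical on the static, mixed and dynamical portions, which is precisely why the corollary also covers the degenerate static case anticipated in Remark \ref{remarktransverse}. (Minor slips: $1-\mu=1-2\rho/r$, so your formula $2m/r-1-e^{2}/r^{2}$ conflates $\rho$ with $\varpi$ and $Q$ with its limit $e$; and on the dynamical portion $r_{CH}$ is strictly decreasing rather than tending to $r_-$.)
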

 \subsubsection{The trapped region near the Cauchy horizon}
 In addition to $C^2$-inextendibility, we also prove another property of independent interest: the Cauchy horizon is surrounded by the trapped region, see Figure \ref{Figtrapped1}. In particular, the Penrose diagram does \underline{not} contain a ``secondary event horizon'', i.e.\ an outgoing null affine complete hyper-surface reaching the Cauchy horizon. The existence of a trapped neighborhood also implies that the scenario where $\A$ crosses the Cauchy horizon, as depicted in Figure \ref{Fig5}, is \underline{ruled out} under our assumptions.
 \begin{figure}[H]
 	
 	\begin{center}
 		
 		\includegraphics[width=97 mm, height=85 mm]{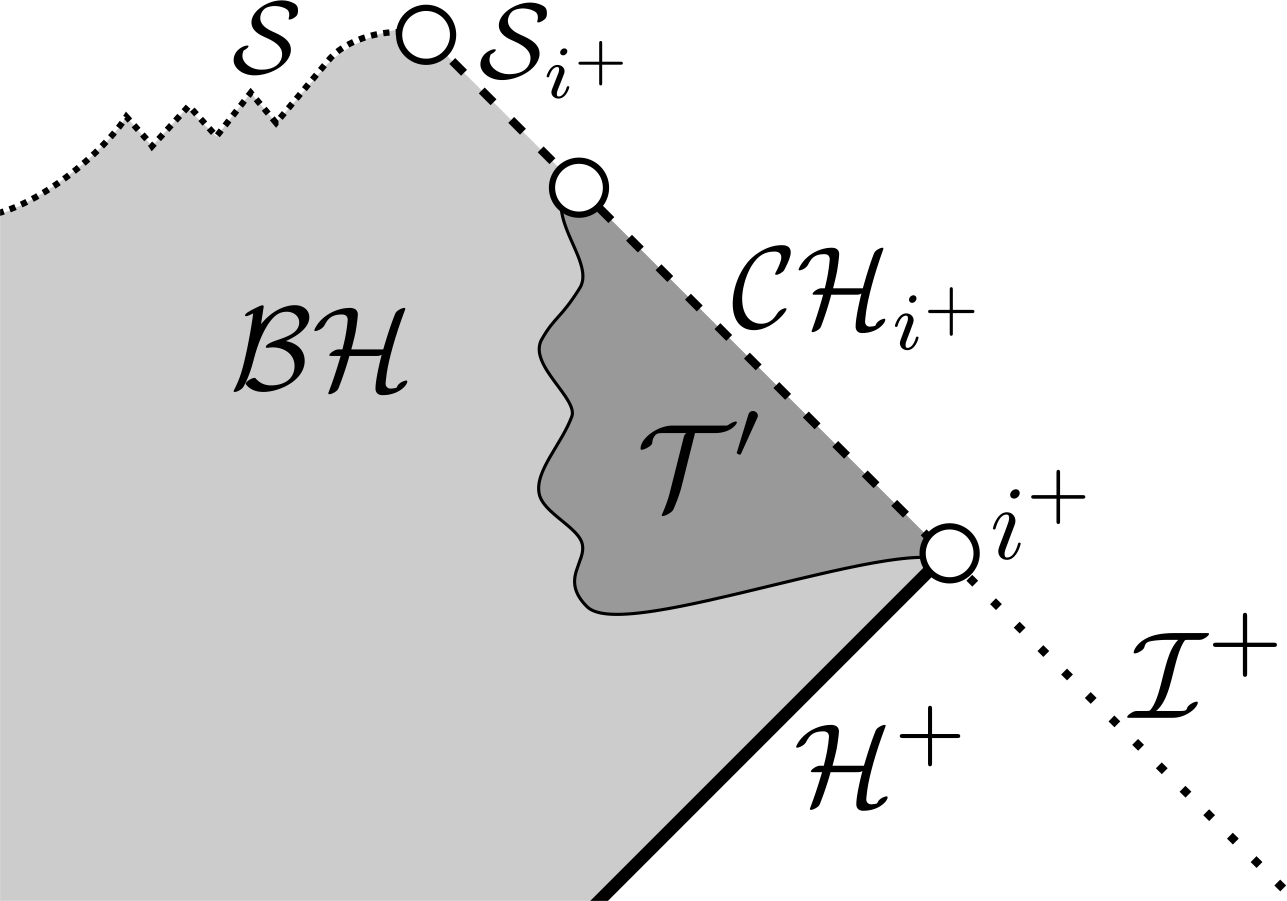}
 		
 	\end{center}
 	
 	\caption{Existence of a trapped neighborhood $\mathcal{T}'$ surrounding the Cauchy horizon $\CH$ given by Theorem \ref{trappedrough}.}
 	\label{Figtrapped1}
 \end{figure}
 \begin{theo} \label{trappedrough}
 	Given a one-ended solution $(M,g,F,\phi)$ as in Theorem \ref{oneendedapriori} satisfying the assumptions of Theorem \ref{rough1}, there exists a neighborhood $\mathcal{T}'$ of $\mathcal{CH}_{i^+}$ inside the trapped region, as in Figure \ref{Figtrapped1}. Therefore, $\A$ has no limit point on $\mathcal{CH}_{i^+}$. \end{theo}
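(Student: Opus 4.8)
The plan is to work in a double null gauge $g = -\Omega^2\, du\, dv + r^2\, d\sigma_{\mathbb{S}^2}$ on $\mathcal{Q}^+$, normalised so that $\CH = \mathcal{CH}_{i^+}$ lies in $\{v = v_{\CH}\}$ (approached as $v$ increases, possibly $v_{\CH} = +\infty$) and is parametrised by $u$ over its lifespan. Writing $\nu = \partial_u r$, $\lambda = \partial_v r$ — so $\T = \{\lambda < 0\}$, $\A = \{\lambda = 0\}$ — and recalling $1-\mu := 1 - \tfrac{2m}{r} + \tfrac{Q^2}{r^2} = -\tfrac{4\nu\lambda}{\Omega^2}$ ($m$ the Hawking mass, $Q$ the charge), I would first record that $\nu < 0$ propagates from $\mathcal{H}^+$ into the whole interior via the ingoing Raychaudhuri equation $\partial_u(\nu\Omega^{-2}) = -r\,\Omega^{-2}|D_u\phi|^2 \le 0$, so that $\mathrm{sgn}(\lambda) = \mathrm{sgn}(1-\mu)$ near $\CH$ and it suffices to exhibit a one-sided neighbourhood of $\CH$ on which $\lambda < 0$. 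The decisive structural input is the outgoing Raychaudhuri equation
\begin{equation*}
\partial_v\!\left(\frac{\lambda}{\Omega^2}\right) = -\,\frac{r\,|D_v\phi|^2}{\Omega^2}\ \le\ 0 ,
\end{equation*}
which forces, along every outgoing cone $\{u = \mathrm{const}\}$, the section $\{v : \lambda(u,v) < 0\}$ to be a future interval $(V(u), v_{\CH})$ (with $V(u) := v_{\CH}$ when it is empty). Hence $\T' := \{(u,v) \in \mathcal{Q}^+ : V(u) < v < v_{\CH}\} \subseteq \T$, and the theorem reduces to the \emph{pointwise} bound $V(u) < v_{\CH}$ for every $u$ in the range of $\CH$.

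The pointwise bound would be proven using the trichotomy $\CH = \mathcal{S} \cup \mathcal{D}$ of Theorem \ref{classificationrough}. Near $i^+$: the semi-local stability estimates of \cite{Moi}, at the decay rates of Theorem \ref{previous}, already place a characteristic rectangle $[u_{s}, u_1] \times [v_1, v_{\CH})$ inside $\T$, since there $r$ is close to $r_-$ and hence, by the identity above, $1-\mu < 0$. On the static component $\mathcal{S}$: the vanishing of the radiation together with the staticity relation \eqref{MihalisPHDcondition} underlying Theorem \ref{classificationrough} should identify a past neighbourhood of $\mathcal{S}$ in $\mathcal{Q}^+$ with a domain of the sub-extremal Reissner--Nordstr\"om interior $\{r_- < r < r_+\}$, where $\lambda < 0$, giving $V(u) < v_{\CH}$. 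On the dynamical component $\mathcal{D}$: Theorem \ref{classificationrough} gives $m(u,v) \to +\infty$ as $v \to v_{\CH}$, while $r(u,v) \to r_{\CH}(u) > 0$ and $Q$ remains bounded, so $1-\mu(u,v) = 1 - \tfrac{2m}{r} + \tfrac{Q^2}{r^2} \to -\infty$ and thus $\lambda(u,v) < 0$ for $v$ close to $v_{\CH}$, i.e.\ $V(u) < v_{\CH}$.

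To pass from the pointwise bound to an open set, I would use that $\lambda$ is smooth on $\mathcal{Q}^+$, so $\T$ is open, whence the future-interval structure makes $V$ upper semicontinuous: if $V(u_0) < v_0 < v_{\CH}$ then $(u_0,v_0) \in \T$, openness gives $(u,v_0) \in \T$ for $u$ near $u_0$, hence $V(u) \le v_0$. Then $\T'$ is open, contained in $\T$, and has all of $\CH$ in its closure — the desired trapped neighbourhood (so, in particular, no secondary event horizon reaches $\CH$). Finally, if $(p_n) \subset \A$ accumulated at $p = (u_p, v_{\CH}) \in \CH$, then $u_n \to u_p$, $v_n \to v_{\CH}$, and upper semicontinuity would give $\limsup_n V(u_n) \le V(u_p) < v_{\CH}$, so $p_n \in \T'$ for large $n$, contradicting $p_n \in \A$ and $\A \cap \T' = \emptyset$; hence $\A$ has no limit point on $\mathcal{CH}_{i^+}$.

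The \textbf{hardest part} is the pointwise trapping on the dynamical component far from $i^+$: it relies entirely on the blow-up of the Hawking mass on $\mathcal{D}$, which — unlike in the uncharged models — carries no monotonicity and is accessible only via the tailored quantitative estimates and continuation criterion of the paper; once that one-dimensional fact on $\CH$ is secured, the outgoing Raychaudhuri equation converts it cheaply into a genuine two-dimensional trapped neighbourhood. Two further points will need care: matching the near-$i^+$ rectangle of \cite{Moi} to the bulk region so the semicontinuity argument covers the entire lifespan of $\CH$ (including a neighbourhood of $i^+$), and the future endpoint of $\CH$, where $r_{\CH}$ may degenerate to $0$ and the sign of $1-\mu = 1 - \tfrac{2m}{r} + \tfrac{Q^2}{r^2}$ must be controlled by hand — or simply removed from the statement, since only $\mathcal{CH}_{i^+}$ is at issue.
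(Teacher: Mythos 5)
Your overall skeleton matches the paper's: split $\CH$ according to its type, use the criterion $(u,v)\in\T \Leftrightarrow 2\rho>r$ together with the blow-up of the Hawking mass to trap the outgoing cones under the dynamical part, use the monotonicity of $\lambda/\Omega^2$ from the outgoing Raychaudhuri equation to propagate trapping in $v$, and conclude that $\A$ cannot accumulate on $\CH$. However, your treatment of the static component contains a genuine gap. You claim that the staticity relation \eqref{MihalisPHDcondition} ``should identify a past neighbourhood of $\mathcal{S}$ with a domain of the sub-extremal Reissner--Nordstr\"om interior, where $\lambda<0$.'' This is false as stated: only the induced data on the Cauchy horizon itself is Reissner--Nordstr\"om; a spacetime neighbourhood of the static portion is \emph{not} isometric to Reissner--Nordstr\"om, because the transverse (ingoing) radiation is generically non-trivial there (cf.\ Remark \ref{remarktransverse}). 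Consequently $\lambda$ near the static portion is a small quantity of indeterminate sign, and the scenario in which $\A$ crosses or accumulates on $\CH$ (Figure \ref{Fig5}) is exactly what must be excluded; it does not follow from staticity alone. In the paper this is the content of the rigidity bootstrap (Theorem \ref{classificationtheorem}, via Lemmas \ref{propagation}--\ref{finitevolumeestlemma}): one needs the quantitative almost-conservation of $r\lambda$ in $u$, estimate \eqref{lambda}, played against the integrated \emph{lower} bound \eqref{lambdaus} on $|\lambda|$ along outgoing cones --- which itself comes from the lower-bound assumption \eqref{lowerhyp} on the event horizon propagated through the Raychaudhuri equation --- to force $\lambda<0$ on a whole rectangle $[u_s,u_0+\epsilon]\times[v_0,\infty)$. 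Your proposal never invokes the lower-bound assumption at this step, and without it the static-part trapping simply cannot be obtained.

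Two smaller points. First, quoting Theorem \ref{classificationrough} wholesale risks circularity: in the paper the full classification with its quantitative estimates is proven \emph{using} the trapped neighbourhood; what is legitimately available beforehand is the preliminary classification (Corollary \ref{threetypes}) and the mass blow-up on the non-static part (Proposition \ref{propprop}), so your dynamical-component argument survives if you cite those instead. Second, on the dynamical part you do not need (and do not yet have) boundedness of $Q$ or convergence of $r$: the paper's relation is $1-\mu=1-\frac{2\rho}{r}$ with $\rho$ the Hawking mass (your formula $1-\frac{2m}{r}+\frac{Q^2}{r^2}$ is the one for the \emph{modified} mass $\varpi$), so blow-up of $\rho$ together with $r\le r_+$ already gives $\lambda<0$ for large $v$.
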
 \begin{rmk}
 	The analogous statement is of course true for two-ended solutions  satisfying the assumptions of Theorem \ref{rough2}.
 \end{rmk}

 \subsubsection{A blow-up criterion which propagates the weak null singularity}
 We now present this continuation criterion: as long as it is satisfied, the Cauchy horizon is static, but when if it fails, then the Hawking mass blows up -- and this blow up is propagated to the future as we shall see. Instead of formulating a continuation criterion, as is traditional in non-linear PDEs, we state a breakdown criterion:
 
 \begin{theo} \label{blowuprough} Given a one-ended solution $(M,g,F,\phi)$ as in Theorem \ref{oneendedapriori} satisfying the assumptions of Theorem \ref{rough1}, assume that the following estimate is true over one outgoing cone $C_{u_0}$ reaching $\CH$: \begin{equation} \label{staticrough}
 	\int_{C_{u_0}} \frac{dr}{ \frac{2\rho}{r}-1} <+\infty,
 	\end{equation} where $r$ is the area-radius function and $\rho$ is the Hawking mass.
 	
 	Then on \textbf{all} outgoing cones to the future of $C_{u_0}$ reaching $\CH$, the Hawking mass blows up \underline{point-wise}  towards $\CH$. \end{theo}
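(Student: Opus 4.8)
The plan is to reformulate \eqref{staticrough} in terms of the one geometric quantity the field equations keep under control even without any monotonicity for the mass --- namely $\kappa := \lambda/(1-\mu) = -\Omega^2/(4\nu)$, in a double-null gauge $(u,v)$ adapted to $\CH$ so that $\CH = \{v=+\infty\}$, with $\Omega^2$ the conformal factor, $\nu = \partial_u r < 0$ (no anti-trapped surfaces plus Raychaudhuri), $\lambda = \partial_v r$, $1-\mu := 1-\tfrac{2\rho}{r} = -4\nu\lambda/\Omega^2$, and $\varpi := \rho + \tfrac{Q^2}{2r}$ the renormalized Hawking mass ($Q$ the charge). Along the (trapped) part of an outgoing cone one has $(\tfrac{2\rho}{r}-1)^{-1}\,dr = -\kappa\,dv$ (since $\tfrac{2\rho}{r}-1 = -(1-\mu) = 4\nu\lambda/\Omega^2$ and $dr = \lambda\,dv$), so that \eqref{staticrough}, read along the cone oriented by decreasing $r$, is equivalent to the gauge-invariant statement $\int_{C_{u_0}}\kappa\,dv < +\infty$. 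The two facts I would use about $\kappa$ are the $u$-Raychaudhuri equation $\partial_u\log\kappa = -r|D_u\phi|^2/|\nu| \leq 0$ (so $\kappa$ is non-increasing in $u$) and the mass equation
\begin{equation}\label{planmass}
\partial_v\varpi = \tfrac12(1-\mu)\frac{r^2|D_v\phi|^2}{\lambda} + \mathcal{E}_v = \tfrac12\,\kappa^{-1}\,r^2|D_v\phi|^2 + \mathcal{E}_v ,
\end{equation}
whose leading term is non-negative (in the trapped region $\kappa^{-1} = 4|\nu|/\Omega^2 > 0$) and whose remainder $\mathcal{E}_v$ --- the mass potential $\tfrac12 m^2 r^2|\phi|^2\lambda$ and, in the charged case, the charge-flux term, a multiple of $\tfrac{Q\,\partial_v Q}{r}$ --- satisfies $\sup_v\big|\int_{v_1}^v\mathcal{E}_v\,dv'\big| < \infty$ along $C_u$ near $\CH$ (the charge term by integration by parts, $\int_{v_1}^v\tfrac{Q\partial_v Q}{r} = \big[\tfrac{Q^2}{2r}\big]_{v_1}^v + \int_{v_1}^v\tfrac{Q^2\lambda}{2r^2}$, using that $Q$ and $r^{-1}$ are bounded near $\CH$ and $\int_{v_1}^\infty|\lambda|\,dv = r(u,v_1)-r_{\CH}(u) < \infty$; the mass term since $|\phi|$ is bounded there).

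The argument then has two parts. First, \eqref{staticrough} \emph{propagates to the future}: since $\kappa$ is non-increasing in $u$, and since the rectangle bounded by $C_{u_0}$, $C_u$ and $\CH$ lies in $\mathcal{Q}^+$ whenever both cones reach $\CH$, one has $\kappa(u,v)\leq\kappa(u_0,v)$ for $v$ near $+\infty$, hence $\int_{C_u}\kappa\,dv \leq \int_{C_{u_0}}\kappa\,dv + (\text{a compact-region term}) < +\infty$ for every $u>u_0$ with $C_u$ reaching $\CH$. Second, on each such cone I would integrate \eqref{planmass}: $\varpi(u,v) \geq \varpi(u,v_1) + \tfrac12\int_{v_1}^v\kappa^{-1}r^2|D_v\phi|^2\,dv' - C$, so that the pointwise blow-up $\rho = \varpi - \tfrac{Q^2}{2r}\to+\infty$ (using $Q$, $r^{-1}$ bounded near $\CH$) reduces to proving
\begin{equation*}
\int_{v_1}^{+\infty}\kappa^{-1}(u,v)\,r^2|D_v\phi|^2(u,v)\,dv = +\infty \qquad\text{on every }C_u,\ u > u_0 .
\end{equation*}

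This last divergence is the whole point, and is where the blue-shift enters together with the slow-decay hypothesis. Heuristically: $\int_{C_u}\kappa\,dv<+\infty$ with $\CH$ at $v=+\infty$ forces $\kappa(u,\cdot)$ to collapse as $v\to\infty$, and the $v$-evolution equations near $\CH$ --- this is the $C^0$-type stability mechanism of \cite{Moi}, which however must now be run globally, far from $i^+$, without a smallness parameter --- should upgrade this to collapse at (essentially) the blue-shift rate of the limiting Reissner--Nordstr\"{o}m Cauchy horizon, so that $\kappa^{-1}$ grows super-polynomially; meanwhile the assumed (non-fast) decay of $\phi$ on the event horizon, transported into the interior, keeps $r^2|D_v\phi|^2$ decaying only polynomially along $C_u$, too slowly to be beaten by $\kappa^{-1}$, and the product is non-integrable. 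Making this rigorous --- quantitatively controlling $\kappa$ from above and $|D_v\phi|$ from below near $\CH$ on cones that may be arbitrarily far from $i^+$ --- is exactly the r\^{o}le of the new continuation criterion and the tailored quantitative estimates of the paper, and I expect it to carry essentially all of the technical weight; the remaining ingredients (the clean form \eqref{planmass} for the charged massive system, and the a priori bounds $r_{\CH}(u)>0$, $|\nu|$ and $\phi$, $Q$ bounded near $\CH$, the latter tied to Theorem \ref{trappedrough}) are comparatively routine.
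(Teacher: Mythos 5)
Your setup is fine as far as it goes --- rewriting \eqref{staticrough} as $\int_{C_{u_0}}\kappa\,dv<+\infty$, noting that $\kappa$ is non-increasing in $u$ by the Raychaudhuri equation \eqref{RaychU} so the condition propagates to the future, and reducing the blow-up of $\varpi$ on a cone $C_u$ to the divergence of $\int \kappa^{-1}r^2|D_v\phi|^2\,dv$ via \eqref{massUEinstein}--\eqref{massVEinstein}. But this is exactly the ``soft'' part (cf.\ Remark \ref{monotonicityremark} of the paper), and the step you leave as a heuristic --- that on \emph{every} cone $C_u$ to the future of $C_{u_0}$ the collapse of $\kappa$ is exponential while $|D_v\phi|$ decays only polynomially, so the flux integral diverges --- is a genuine gap, not a routine estimate. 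The decay hypothesis on the event horizon gives only upper bounds plus the integrated lower bound \eqref{lowerhyp}, and the results of \cite{Moi} transport the corresponding lower bound \eqref{phiVLBlowerbound} only into the region $\{u\le u_s\}$ near $i^+$. There is no mechanism in your proposal for obtaining a lower bound on the outgoing radiation $|D_v\phi|$ (or on $\int|\lambda|$) along cones arbitrarily far from $i^+$; indeed such lower bounds on late cones amount to controlling the interior reflection/transmission of the field, which is precisely the scattering-theoretic information that is \emph{not} available (the paper stresses that even genericity of the dynamical type would require such a theory). So the route ``prove the $v$-flux divergence directly on each later cone'' cannot be closed with the ingredients you list, and it is not the route the paper takes.

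The paper avoids this by a different architecture, which your proposal has no substitute for. First, the rigidity theorem (Theorem \ref{classificationtheorem}, proved by a bootstrap extending the \cite{Moi} estimates from $\{u\le u_s\}$ to $[u_s,u_0]$ under the staticity condition \eqref{MihalisPHDcondition}) and the resulting classification (Corollary \ref{threetypes}) show that a Dafermos point $u_0$ forces $\CH$ to be of dynamical type, or of mixed type with $u_T<u_0$; this is what guarantees $|\nu|$ is bounded away from zero on some region where quantitative estimates hold. Second, the mass blow-up is established only \emph{locally} --- on $\{u\le u_s\}$ in the dynamical case (Lemma \ref{localmassblowupdynamical}) or on $(u_T,u_T+\epsilon]$ in the mixed case (Lemma \ref{localmassblowupmixed}) --- using $\rho\gtrsim |\nu|\,|\lambda|/\Omega^2$, the integrated lower bound \eqref{lambdaus} (respectively \eqref{rigidity16}) on $|\lambda|$, and the exponential upper bound on $\Omega^2$; these lower bounds are available there precisely because one is in the regime covered by \cite{Moi} or by Theorem \ref{classificationtheorem}. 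Third, the blow-up is then propagated to \emph{all later} $u$ not by integrating in $v$ on each cone, but in the $u$-direction (Lemma \ref{massblowuplemma}): from \eqref{massUEinstein} one gets $\partial_u\rho\ge -C\rho^{2\alpha}|\nu|$ under a bootstrap controlling $|\phi|^2+|Q|$ by $\log\rho$, and a Gr\"onwall-type argument shows $\rho(u_2,v)\to+\infty$ once $\rho(u_1,v)\to+\infty$, with no lower bound on the radiation along $C_{u_2}$ required. Your proposal is missing both the classification step and this $u$-direction propagation mechanism, and without them the claimed divergence on future cones does not follow.
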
  \begin{rmk}
 	The analogous statement is of course true for two-ended solutions  satisfying the assumptions of Theorem \ref{rough2}.
 \end{rmk} \begin{rmk} \label{rmkblowuprough}
As we will show, Assumption\eqref{staticrough} implies \textit{a posteriori} \color{black} the non-triviality of ``ingoing radiation'' i.e.\ the field on the Cauchy horizon cannot be identically zero, a fact conjectured to be generic. This is crucial for the mass to blow up.
\end{rmk}\begin{rmk} \label{monotonicityremark}
By the Raychaudhuri equation and the null energy condition,  \eqref{staticrough} is propagated to the future. Nonetheless, this ```soft fact'' is useless on its own, and quantitative estimates are necessary to obtain the blow up of the Hawking mass.
\end{rmk}
Note that if the Hawking mass blows up  towards $\CH$ on an outgoing cone $C_{u}$ then, in view of the finiteness of $r$, \eqref{staticrough} is satisfied on $C_{u}$. Theorem \ref{blowuprough} shows in particular that the converse is true: \eqref{staticrough} on $C_u$ \textit{implies, under our assumptions}, that the Hawking mass blows up  towards $\CH$ on $C_{u}$ and in fact on all the outgoing cones $C_{u'}$ to the future of $C_u$. \color{black}

  This result is the corner stone of  the classification of the Cauchy horizon from  Theorem \ref{classificationrough}.

\begin{coreo} Given a one-ended solution  $(M,g,F,\phi)$ as in Theorem \ref{oneendedapriori} satisfying the assumptions of Theorem \ref{rough1}, assume the Hawking mass blows up on one outgoing cone $C_{u_0}$ reaching $\CH$.
	
	Then on all outgoing cones reaching $\CH$ to the future of $C_{u_0}$, the Hawking mass blows up point-wise towards $\CH$. \end{coreo}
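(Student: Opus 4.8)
The plan is to deduce the Corollary directly from Theorem \ref{blowuprough} by verifying that its hypothesis is in fact \emph{stronger} than the integral criterion \eqref{staticrough}. Concretely, suppose the Hawking mass $\rho$ blows up point-wise towards $\CH$ along the outgoing cone $C_{u_0}$. By Theorem \ref{oneendedapriori} (respectively Theorem \ref{twoendedapriori} in the two-ended case) the area-radius $r$ extends to a strictly positive and finite value on $\CH$, so along $C_{u_0}$ one has $r \to r_{\CH} \in (0,+\infty)$ while $\rho \to +\infty$; hence $\tfrac{2\rho}{r}-1 \to +\infty$ and the integrand $\bigl(\tfrac{2\rho}{r}-1\bigr)^{-1}$ tends to $0$ as one approaches $\CH$. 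Since $r$ is monotone along $C_{u_0}$ with values in a compact interval, the contribution to \eqref{staticrough} of a final null segment of $C_{u_0}$ abutting $\CH$ is finite. This is precisely the elementary observation already flagged in the note following Theorem \ref{blowuprough}.

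The only point that requires a word is the behaviour of the integrand on the remainder of $C_{u_0}$: the integral in \eqref{staticrough} could a priori diverge only if $\tfrac{2\rho}{r}-1$ vanished, i.e.\ if $C_{u_0}$ met the apparent horizon $\A$. But $C_{u_0}$ reaches $\CH$ through the trapped region --- this is part of the setup, and in any case follows from the trapped-neighbourhood statement of Theorem \ref{trappedrough} --- so $\tfrac{2\rho}{r}>1$ holds along $C_{u_0}$, and moreover $\rho \to +\infty$, $r \to r_{\CH}>0$ force $\tfrac{2\rho}{r}-1$ to be bounded away from $0$ on the portion of $C_{u_0}$ near $\CH$. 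Thus $\bigl(\tfrac{2\rho}{r}-1\bigr)^{-1}$ is continuous, nonnegative and bounded along $C_{u_0}$, and tends to $0$ at $\CH$; integrating it against $dr$ over the compact $r$-range of $C_{u_0}$ yields a finite value, i.e.\ \eqref{staticrough} holds on $C_{u_0}$.

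Applying Theorem \ref{blowuprough} now closes the argument: from \eqref{staticrough} on $C_{u_0}$ one obtains that on every outgoing cone $C_{u'}$ with $u' > u_0$ reaching $\CH$ the Hawking mass blows up point-wise towards $\CH$, which is exactly the assertion of the Corollary. All the substance lies in Theorem \ref{blowuprough} (the quantitative propagation of the mass blow-up); the only --- minor --- obstacle in the present reduction is the verification above that the denominator $\tfrac{2\rho}{r}-1$ stays strictly positive along $C_{u_0}$, which, as explained, is handled either directly from $\tfrac{2\rho}{r}\to+\infty$ near $\CH$ or via the trapped neighbourhood of Theorem \ref{trappedrough}.
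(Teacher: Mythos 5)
Your reduction is correct and is essentially the paper's own argument: the remark immediately preceding the corollary observes that blow-up of the Hawking mass on $C_{u_0}$, together with the finiteness and monotonicity of $r$, forces \eqref{staticrough} on $C_{u_0}$, after which Theorem \ref{blowuprough} yields the conclusion; indeed, in the precise formulation (Theorem \ref{propagationtheorem}) the condition $\limsup_{v\to+\infty}\rho(u_0,v)=+\infty$ is even listed verbatim among the sufficient hypotheses. Your detour through the apparent horizon and Theorem \ref{trappedrough} is unnecessary (the criterion is really the tail condition $\int^{+\infty}\kappa(u_0,v)\,dv<+\infty$, and $\kappa$ is finite on any compact $v$-range since $\nu<0$), but this does not affect the correctness of the argument.
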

\subsection{Previous results for Einstein--Maxwell--Klein--Gordon black holes}	\label{mypreviouswork}

The present paper is preceded by the work of the author \cite{Moi}, \cite{MoiThesis} on the black holes solutions of Einstein--Maxwell--Klein--Gordon. In \cite{Moi}, the non-emptiness of the Cauchy horizon was proven, together with a stability result and quantitative estimates, which laid the groundwork for our present results, and for the study of one-ended solutions in general: \begin{thm}[Stability of the Reissner--Nordstr\"{o}m Cauchy horizon, \cite{Moi}] \label{C0stab} Given a one-ended solution $(M,g,F,\phi)$ as in Theorem \ref{oneendedapriori}, we assume that the exterior of the black hole settles down quantitatively towards a sub-extremal Reissner--Nordstr\"{o}m metric. Then $$\CH \neq \emptyset,$$ and stability estimates are true.
	Moreover, in the case $m^2=0$, $\CH$ is $C^0$-extendible. \end{thm}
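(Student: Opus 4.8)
The plan is to realise the black hole interior near time-like infinity as a characteristic initial value problem in double-null gauge and to close a bootstrap (continuity) argument in a small characteristic rectangle with corner at $i^+$. First fix double-null coordinates $(u,v)$ with $\mathcal{H}^+ = \{u=0\}$ parametrised so that $v$ is asymptotically the Eddington--Finkelstein advanced time, $v\to+\infty$ corresponding to $i^+$; write the Einstein--Maxwell--Klein--Gordon system (with $g = -\Omega^2\,du\,dv + r^2\,d\sigma^2$, $D = \partial + iq_0 A$) as a first-order system for $r$, $\Omega^2$, the charge $Q$, $\phi$, and the null derivatives $\nu=\partial_u r$, $\lambda=\partial_v r$, $D_u\phi$, $D_v\phi$: this comprises the two Raychaudhuri equations $\partial_u(\Omega^{-2}\nu) = -r\Omega^{-2}|D_u\phi|^2\le 0$ and $\partial_v(\Omega^{-2}\lambda) = -r\Omega^{-2}|D_v\phi|^2\le 0$, the cross equation $\partial_u\lambda = \partial_v\nu$ expressing $\partial_u\partial_v r$ through $\Omega^2, r, Q$ and $m^2|\phi|^2$, the equation for $\partial_u\partial_v\log\Omega^2$, the charged wave equation $D_uD_v(r\phi) = \big(-\tfrac{\Omega^2}{4}m^2+\ldots\big)\phi+\ldots$, and the Maxwell equations $\partial_u Q = -q_0 r^2\,\mathrm{Im}(\bar\phi\,D_u\phi)$, $\partial_v Q = q_0 r^2\,\mathrm{Im}(\bar\phi\,D_v\phi)$. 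The reference solution is the sub-extremal Reissner--Nordstr\"{o}m metric whose parameters $(M,e)$ are fixed by the exterior asymptotics; its Cauchy horizon sits at $r = r_-(M,e) > 0$ with surface gravity $K_- = \tfrac12\big|\tfrac{d}{dr}(1-\mu_{\mathrm{RN}})\big|_{r=r_-} > 0$ governing the blue-shift.

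The core is a bootstrap in the rectangle $\mathcal{R} = \{0\le u\le u_s,\ v\ge v_1\}$ with $v_1$ large --- so the event-horizon data, of size $\lesssim v^{-s}$ at the rate assumed in Theorem \ref{previous}, is small there --- and $u_s$ small, so one penetrates only a short way into the interior. The bootstrap assumptions are, schematically: (i) $r$ stays near $r_-$, in particular $r\ge r_{\mathrm{min}} > 0$, and $\nu,\lambda < 0$; (ii) $|Q-e|$ is small; (iii) $|\phi|$ and the weighted fluxes $\int r|D_u\phi|^2$, $\int r|D_v\phi|^2$ along null cones are small; (iv) $\Omega^2$ is comparable to a controlled reference. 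To improve these: the Raychaudhuri equations give the monotonicity of $\Omega^{-2}\nu$, $\Omega^{-2}\lambda$ and hence the signs in (i), together with (integrated from $\mathcal{H}^+$) smallness of the energy fluxes; integrating the charged wave equation over $u\in[0,u_s]$ with Gr\"{o}nwall, using the smallness of $u_s$ and of the fluxes, controls $r\phi$ and its null derivatives; the Maxwell equations then control $Q-e$ by the same fluxes; and $\Omega^2$ is recovered from its $\partial_u\partial_v\log\Omega^2$ equation, where the decay rate $s$ together with sub-extremality ($r_- > 0$, $K_-$ finite) is precisely what makes the relevant weighted integrals converge --- this being the point where, unlike the uncharged model of \cite{JonathanStab}, \cite{MihalisPHD}, no monotonicity of the Hawking mass can be invoked and the argument must be purely quantitative. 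Closing the bootstrap and combining with local well-posedness of the characteristic IVP, the solution extends to all of $\mathcal{R}$; by monotonicity ($\lambda < 0$) and the lower bound, $r$ has a limit $r_{\CH}(u)\in[r_{\mathrm{min}},r_+)$ as $v\to+\infty$, so $\{v = +\infty\}$ is a non-empty ingoing null segment emanating from $i^+$ on which $r > 0$: this is $\CH\neq\emptyset$, and the bootstrap bounds are the asserted stability estimates (with rates in $u$ inherited from the data).

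For the $C^0$-extendibility when $m^2 = 0$: the wave equation then reads $\partial_u\partial_v(r\phi) = (\text{lower order})\cdot(r\phi) + \ldots$ \emph{without} the $-\tfrac{\Omega^2}{4}m^2\phi$ forcing, so the bootstrap additionally shows that $r\phi$, hence $\phi$, has a limit as $v\to+\infty$ and extends continuously to $\{v=+\infty\}$. Replacing $v$ by an affine-type parameter $V = V(v)$ with $V(+\infty) < +\infty$ (using the bounds on $\lambda/\Omega^2$, respectively a normalisation such as $dV/dv\propto\Omega^2$), the rescaled lapse $\tilde\Omega^2 = \Omega^2\,dv/dV$, together with $r$ and $r^2 d\sigma^2$, extends continuously across the image of $\CH$; attaching a collar then exhibits $(M,g)$ as $C^0$-extendible across $\CH$, as in \cite{MihalisPHD}.

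The \textbf{main obstacle} is closing step (iv), i.e.\ the quantitative control of $\Omega^2$ and of the accumulated blue-shift, which --- unlike in the uncharged models --- must be carried out \emph{simultaneously} with the control of the dynamical charge $Q$ (which feeds back through $1-\mu$ into the $\Omega^2$-equation) and of the gauge potential $A$ entering $D = \partial + iq_0 A$ in the wave equation, the latter requiring a gauge choice and transport estimates for a term which is not sign-definite. A secondary difficulty, specific to the massless $C^0$-extendibility, is to upgrade the $L^\infty$ bound on $r\phi$ to the existence of a continuous limit at $\{v=+\infty\}$, and to check that $\tilde\Omega^2$ does the same --- this last point being what fails, or requires substantially more work, when $m^2\neq 0$ (treated in \cite{Moi3Christoph}).
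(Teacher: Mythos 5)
Note first that the paper does not prove Theorem \ref{C0stab}: it is imported from \cite{Moi}, and the present paper only records its precise hypotheses (Theorem \ref{previous}) and the output estimates (Proposition \ref{LBprop}). Your overall strategy --- a semi-local characteristic problem with data on $\mathcal{H}^+$ and an ingoing cone near $i^+$, a bootstrap closed with the Raychaudhuri equations, simultaneous control of $Q$, the gauge potential and $\Omega^2$, and the conclusion that $r$ has a strictly positive limit on $\{v=+\infty\}$, hence $\CH\neq\emptyset$ --- is indeed the approach of \cite{Moi} as this paper describes it. Two points of your sketch, however, conflict with what the paper records about that proof.

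First, your bootstrap hypothesis (i) posits $\nu,\lambda<0$ on the whole rectangle $\{0\le u\le u_s,\ v\ge v_1\}$, which contains $\mathcal{H}^+$; but $\lambda_{|\mathcal{H}^+}\ge 0$ and the apparent horizon terminates at $i^+$, so trapping cannot be assumed up to the event horizon. Consistently, Proposition \ref{LBprop} states the stability estimates only on $\mathcal{LB}=J^+(\gamma)\cap\{u\le u_s\}$ for a \emph{spacelike} curve $\gamma$ terminating at $i^+$, with bounds such as \eqref{varpigamma}--\eqref{phigamma} serving as ``data'' on $\gamma$ itself: the proof of \cite{Moi} proceeds by a decomposition of the interior near $i^+$ into successive regions (a red-shift/no-shift part before $\gamma$, the blue-shift region $\mathcal{LB}$ after it), with different smallness mechanisms in each, rather than a single Gr\"{o}nwall/bootstrap run from $U=0$; as written, your assumption (i) cannot be retrieved near the event horizon.

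Second, the mechanism you give for $C^0$-extendibility when $m^2=0$ is misattributed. In the interior the mass term enters as $\tfrac{m^2\Omega^2}{4}\phi$ with $\Omega^2\lesssim e^{1.99K_-v}$ (see \eqref{Omegaexp}), so it is exponentially damped toward $\CH$ and is not the obstruction to continuity of $\phi$ or of the metric; the obstruction is the polynomial rate $s$ of the event-horizon data. As the footnote to Theorem \ref{previous} records, the continuous extension in \cite{Moi} is proved under $s>1$: the role of $m^2=0$ is only that the expected (and, in \cite{Moi2}, partially verified) rate is then $s=2-\delta(q_0e)>1$ (Conjecture \ref{chargedconj}), whereas for $m^2\neq0$ the expected rate is $s=5/6<1$ and oscillatory (Conjecture \ref{conjecturemassive}), which is precisely why the massive case required the separate argument of \cite{Moi3Christoph}. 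Under only the rate assumed in Theorem \ref{previous} ($s>\tfrac34$, or $s>\tfrac12$ in \cite{Moi}), dropping the mass term does not by itself give convergence of $\phi$ (equivalently, integrability of $|D_v\phi|$ in $v$) at $\CH$; you need the quantitatively faster decay.
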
 \begin{rmk}
	Theorem \ref{C0stab} is a semi-\textit{local} result, in a neighborhood of time-like infinity $i^+$, hence it can be formulated in terms on a characteristic initial value problem, with data on the event horizon and an ingoing null cone. In particular, the topology of the manifold is irrelevant, which is why Theorem \ref{C0stab} also applies for two-ended solutions as in Theorem \ref{twoendedapriori}.
\end{rmk}

Note however that those stability estimates are proven in a weak $L^{\infty}$ norm, consistent with a hypothetical blow up of higher order norms. Indeed, the author proved also in \cite{Moi} the $C^2$ \underline{instability} of $\CH$, using the stability estimates of Theorem \ref{C0stab} in a crucial way. The main estimates of \cite{Moi} show the blow up of some curvature component on a portion of $\CH$ near time-like infinity, which forms a local obstruction to $C^2$-inextendibility:

\begin{thm}[Instability of the Reissner--Nordstr\"{o}m Cauchy horizon, \cite{Moi}] \label{C2instab}  Given a one-ended solution $(M,g,F,\phi)$ as in Theorem \ref{oneendedapriori}, we assume that the exterior of the black hole settles down quantitatively towards a sub-extremal Reissner--Nordstr\"{o}m metric. Then $Ric(X,X)$ blows up on $\CH \cap \mathcal{V}$, where $\mathcal{V}$ is a neighborhood of $i^+$ and $X$ is an outgoing radial null geodesic vector field.	
	
	Moreover, $\phi$ blows up in $H^1_{loc}$ i.e.\ the (non-degenerate) energy of the scalar field on any outgoing trapped cone is infinite. \end{thm}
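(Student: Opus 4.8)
The plan is to express the transverse curvature component directly in terms of the scalar field and then to show that the blue shift amplifies it without bound. Fix a double null gauge $(u,v)$ on the interior, with $\mathcal{H}^+=\{u=0\}$, $\CH=\{v=+\infty\}$ and $g=-\Omega^2\,du\,dv+r^2\,d\sigma^2$, and let $X=h\,\partial_v$ be an outgoing radial null vector field, transverse to the ingoing hypersurface $\CH$. Since $g(\partial_v,\partial_v)=0$, the Einstein equations \eqref{1}--\eqref{5} give $Ric(X,X)=h^2\,\mathbb{T}_{vv}=h^2\big(\mathbb{T}^{EM}_{vv}+\mathbb{T}^{KG}_{vv}\big)$. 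In spherical symmetry the electromagnetic field is radial (only a $du\wedge dv$ component), so $F_{vv}=F_{\theta v}=0$, and since $g^{uu}=g^{vv}=g_{vv}=0$ one finds $\mathbb{T}^{EM}_{vv}=0$: the Maxwell field drops out and the effect is driven entirely by the scalar radiation transverse to $\CH$. On the other hand $\mathbb{T}^{KG}_{vv}=2|D_v\phi|^2$, and imposing that $X$ be an affinely parametrized geodesic forces $h\,\Omega^2$ to depend on $u$ alone, so that, up to a positive $u$-dependent factor,
\begin{equation}\label{RicXX}
	Ric(X,X)\;\simeq\;\frac{|D_v\phi|^2}{\Omega^4}.
\end{equation}
It therefore suffices to prove that $\Omega^{-4}|D_v\phi|^2\to+\infty$ as $v\to+\infty$ along every outgoing cone $C_u$ in a neighborhood $\mathcal V$ of $i^+$.

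The second step records the blue shift, using the stability estimates of Theorem \ref{C0stab} (with the precise rates of Theorem \ref{previous}). In $\mathcal V$ the area radius $r$ and the Hawking mass $\rho$ converge along $\CH$, with $r$ bounded away from $0$, while $\nu=\partial_u r$ and the mass aspect $1-\mu:=1-\tfrac{2\rho}{r}+\tfrac{Q^2}{r^2}$ (here $Q$ is the Maxwell charge) are bounded away from $0$ and $-\infty$; since $\Omega^2=\tfrac{-4\lambda\nu}{1-\mu}$ with $\lambda=\partial_v r$, this gives $\Omega^2\simeq|\lambda|$, and the stability estimates moreover furnish the exponential (blue-shift) bound $\Omega^2(u,v)\simeq|\lambda|(u,v)\lesssim e^{-cv}$ for some $c>0$ related to the surface gravity of the Cauchy horizon, consistently with $\int_{v_0}^{+\infty}|\lambda|(u,v')\,dv'=r(u,v_0)-r(u,+\infty)<+\infty$. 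Inserting $\Omega^2\lesssim e^{-cv}$ into \eqref{RicXX}, the theorem reduces to a polynomial \emph{lower} bound $|D_v\phi|(u,v)\gtrsim v^{-s}$: indeed then $Ric(X,X)\gtrsim v^{-2s}\,e^{2cv}\to+\infty$.

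To produce this lower bound I would propagate the event-horizon data inwards. Setting $\zeta:=rD_v\phi$, the wave equation \eqref{5} reads, schematically,
\begin{equation}\label{transport}
	\partial_u\zeta\;=\;-\,iq_0A_u\,\zeta\;-\;\lambda\,D_u\phi\;+\;O(\Omega^2),
\end{equation}
the $O(\Omega^2)$ collecting the Maxwell ($\sim\Omega^2 Q\phi/r$) and mass ($\sim\Omega^2 m^2 r\phi$) terms. By the stability estimates $A_u$ is bounded and, since $|\lambda|,\Omega^2\lesssim e^{-cv}$ while $\phi$ and $D_u\phi$ are bounded, the last two terms of \eqref{transport} are $O(e^{-cv})$ in $\mathcal V$. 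Integrating \eqref{transport} from $\mathcal H^+$ and absorbing the bounded real gauge term into an integrating factor of modulus one, one obtains $|\zeta|(u,v)\ge r(0,v)\,|D_v\phi|(0,v)-C\,e^{-cv}$. Now $|D_v\phi|(0,v)$ is the restriction of the scalar-field flux to the event horizon, part of the characteristic data, and the assumed rates (Theorem \ref{previous}) fix its leading Price-law behavior; in particular $|D_v\phi|(0,v)\gtrsim v^{-s}$ for $v$ large, i.e.\ the scalar field carries a non-trivial tail on $\mathcal H^+$. Since $e^{-cv}=o(v^{-s})$ and $r$ is bounded above and below in $\mathcal V$, this gives $|D_v\phi|(u,v)\gtrsim v^{-s}$ for $v$ large, hence, by \eqref{RicXX}, $Ric(X,X)(u,v)\to+\infty$ as $v\to+\infty$ on $\CH\cap\mathcal V$, which is the first assertion.

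The $H^1_{loc}$ statement follows from the same two bounds. Measuring the (non-degenerate) energy of $\phi$ on an outgoing trapped cone $C_u\subset\mathcal V$ with respect to a coordinate $\tilde v$ that extends regularly to $\CH$ (so $d\tilde v\simeq|\lambda|\,dv$ and $\tilde v$ ranges over a bounded interval),
\[
\int_{C_u}r^2\,|D_{\tilde v}\phi|^2\,d\tilde v\;\simeq\;\int_{v_0}^{+\infty}\frac{r^2\,|D_v\phi|^2}{|\lambda|}\,dv\;\gtrsim\;\int_{v_0}^{+\infty}v^{-2s}\,e^{cv}\,dv\;=\;+\infty,
\]
so that $\phi\notin H^1_{loc}$ up to $\CH$ and the non-degenerate energy on every outgoing trapped cone reaching $\CH\cap\mathcal V$ is infinite, which is the second assertion. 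The main obstacle, I expect, is confined to the two inputs imported from \cite{Moi}: first, the sharp stability estimates near $i^+$ --- in particular the exponential decay $\Omega^2\simeq|\lambda|\lesssim e^{-cv}$ together with the uniform, non-degenerate bounds on $\nu$, $1-\mu$, $\phi$ and $D_u\phi$, which require a careful bootstrap on the coupled Raychaudhuri--wave system as one approaches the Cauchy horizon; and second, upgrading the event-horizon decay hypothesis to a genuine \emph{lower} bound $|D_v\phi|_{\mathcal H^+}|\gtrsim v^{-s}$ (equivalently, non-triviality of the scalar field's tail). The latter is not a mere technicality: a faster-decaying tail on $\mathcal H^+$ would leave only the interior blue-shifted error $-\lambda D_u\phi=O(e^{-cv})$, which decays at the \emph{same} rate as $\Omega^2$, and \eqref{RicXX} would fail to blow up.
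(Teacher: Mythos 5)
Your reduction of the problem to the blow up of $\Omega^{-4}|D_v\phi|^2$ for $X=\Omega^{-2}\partial_v$ is the right starting point (this is exactly the quantity used here and in \cite{Moi}, cf.\ Proposition \ref{Ricciblowupprop}), but the argument you build on it has two genuine gaps. First, the hypothesis you need is not available: Theorem \ref{previous} assumes only the \emph{upper} bound $|D_v\phi|_{|\mathcal{H}^+}\lesssim v^{-s}$ together with the \emph{time-integrated} lower bound \eqref{lowerhyp}, $\int_v^{+\infty}|D_v\phi|^2_{|\mathcal{H}^+}\,dv'\gtrsim v^{-p}$. There is no pointwise lower bound $|D_v\phi|_{|\mathcal{H}^+}(v)\gtrsim v^{-s}$, and none can be extracted from the assumptions (nor is it expected: in the massive case the tail oscillates, cf.\ Conjecture \ref{conjecturemassive}), which is precisely why the conclusion of Theorem \ref{C2instab} is a $\limsup$ blow up rather than a pointwise limit. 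Your transport argument for $\zeta=rD_v\phi$, which propagates a pointwise lower bound inward and concludes $Ric(X,X)\to+\infty$ pointwise, therefore proves something stronger than what the hypotheses permit; the actual mechanism propagates the $L^2$-averaged lower bound instead, via the Raychaudhuri equation \eqref{RaychV}: this yields \eqref{phiVLBlowerbound} and then the integrated bound $\int_v^{+\infty}|\lambda|\gtrsim v^{-p}$ of \eqref{lambdaus}, after which one gets $\int Ric(V,V)\gtrsim e^{1.98|K_-|v}$ (cf.\ \eqref{trappedestimate10}) and a dyadic pigeonhole argument gives the $\limsup$ blow up.

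Second, the quantitative input you invoke is partly false. You claim $\Omega^2\simeq|\lambda|\lesssim e^{-cv}$ on the grounds that $\nu$ and $1-\mu$ are bounded away from $0$ and $-\infty$; but $1-\mu$ is \emph{not} bounded near $\CH$ (its blow up is mass inflation), and $|\lambda|$ is only polynomially small: the estimates of Proposition \ref{LBprop} give $\Omega^2\lesssim e^{1.99K_-v}$ but only $|\lambda|\lesssim v^{-2s}$, and indeed $|\lambda|$ \emph{cannot} be exponentially small, since that would contradict the integrated lower bound \eqref{lambdaus} which the real proof relies on. Consequently the error term $-\lambda\,D_u\phi$ in your equation for $\partial_u(rD_v\phi)$ is not $O(e^{-cv})$ but only polynomially small, of a size comparable to the tail you are trying to propagate, so your integration from $\mathcal{H}^+$ does not close as written; the same misestimate enters your $H^1_{loc}$ computation, where the correct divergence comes from the exponential lower bound on $|\lambda|/\Omega^2$ (equivalently $\kappa^{-1}$ or $\int\Omega^{-2}|\partial_v\phi|^2$) obtained from the Raychaudhuri identity, not from $|\lambda|\lesssim e^{-cv}$. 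Your final paragraph correctly senses that the lower bound on the horizon data is the crux, but resolving it requires working with the averaged bound \eqref{lowerhyp} throughout, as in \cite{Moi}, rather than upgrading it to a pointwise one.
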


\begin{rmk}
	The assumptions of Theorem \ref{C2instab} are, in fact, slightly more demanding than those of Theorem \ref{C0stab} in that they require the convergence to Reissner--Nordstr\"{o}m is ``not too fast'', c.f.\ Theorem \ref{previous} for precise assumptions.
\end{rmk}

While the estimates in \cite{Moi} are local, in a sense that they are valid only on a portion of $\CH$, the result of the present paper is concerned with the \underline{entire} Cauchy horizon $\CH$. While we use local results of \cite{Moi} as a starting point towards global considerations, our proof requires new ideas that go beyond the local aspects near time-like infinity. 

The instability of Theorem \ref{C2instab} relies on the blue-shift of ingoing radiation. Originally, the blue-shift instability was first discovered as a linear mechanism and a consequence of the application of geometric optics in the black hole interior \cite{McNamara}, \cite{Penroseblue}, \cite{JanGaussian}. However, to prove  Conjecture \ref{C2SCC}, it is crucial to work with a \textit{local} version of the blue-shift effect, which is harder to establish but subsists in the non-linear setting, and is then responsible for the blow up of $Ric(X,X)$, see \cite{Moi}. 

The assumptions on the quantitative stability of the black hole exterior were retrieved by the author \cite{Moi2} in the massless charged case $m^2=0$ and in the weakly charged case. While the proof is carried out for the (non-linear) Maxwell-charged-scalar field system \eqref{4}, \eqref{5} on a fixed Reissner--Nordstr\"{o}m background, it should not be difficult to combine the techniques of \cite{Moi2} with those of Luk--Oh \cite{JonathanStabExt} to address \color{black} the full spherically symmetric system \eqref{1}, \eqref{2}, \eqref{3}, \eqref{4}, \eqref{5}, as most of the new difficulties reside in the interaction between the Maxwell field and the charged scalar field: \begin{thm}[Quantitative decay estimates for charged scalar fields with small data, \cite{Moi2}] \label{decaychargedext}
	For regular, spherically symmetric, and small Cauchy data for \eqref{4}, \eqref{5} on a fixed Reissner--Nordstr\"{o}m background, the scalar field decays on the event horizon $\mathcal{H}^+$ at an inverse polynomial rate, in the standard advanced time coordinate $v$ defined by \eqref{gauge2}: $$ \int_{v}^{\infty}|\phi|_{\mathcal{H}^+}^2(v') dv' \lesssim v^{-3+2\delta(q_0e)},$$
	$$ |\phi|_{\mathcal{H}^+}(v_n)  \lesssim v_n^{-2+\delta(q_0e)},$$ where $(v_n)_{n\in \mathbb{N}}$ is a dyadic sequence and  \color{black} $\delta(q_0e)=1-\sqrt{1-4(q_0e)^2}+O( |q_0e|^{\frac{1}{2}})$ as $q_0e \rightarrow 0$ for $e$ the asymptotic charge of the Maxwell field.
\end{thm}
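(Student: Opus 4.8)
The plan is to prove Theorem~\ref{decaychargedext} by a physical-space vector-field argument, treating the coupled (non-linear) Maxwell--charged-scalar-field system \eqref{4}, \eqref{5} on the fixed sub-extremal Reissner--Nordstr\"{o}m exterior as a semilinear perturbation of the linear charged wave equation and running a bootstrap on the decay rates that exploits the smallness of the Cauchy data. First I would fix a gauge --- say $A_u \equiv 0$ in double-null coordinates $(u,v)$ with $A_v$ normalised to vanish at null infinity $\mathcal{I}^+$ --- and reduce \eqref{5} to a $1+1$-dimensional equation for $\psi := r\phi$ of the schematic form
\begin{equation*}
\partial_u \partial_v \psi + iq_0 A_v\, \partial_u \psi = V(r)\, \psi + (\text{semilinear lower-order terms}),
\end{equation*}
where $V(r) = \Omega^2\bigl(\tfrac{2M}{r^3} - \tfrac{2e^2}{r^4}\bigr)$ is, up to $O(\epsilon)$ corrections, the Reissner--Nordstr\"{o}m potential of the spherically symmetric mode, and the charge aspect $Q=Q(u,v)$ (entering $\Omega^2$ and $V$) is reconstructed from $\phi$ by integrating \eqref{4}. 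The structurally decisive feature is that $A_v$ is \emph{long-range}: in this gauge $r A_v$ tends to a nonzero constant proportional to the asymptotic charge $e$ along outgoing cones, so the first-order term $iq_0 A_v\, \partial_u\psi$ sits exactly at the borderline of the $r^p$-weighted hierarchy; removing it by a phase generates an effective inverse-square term of size $(q_0 e)^2/r^2$ --- a charge-shifted centrifugal barrier --- so the charge genuinely alters the late-time asymptotics rather than acting as a harmless perturbation.

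Second, I would establish uniform boundedness and integrated decay: the non-degenerate, redshift-commuted energy of $\phi$ on the exterior together with a Morawetz / integrated-local-energy-decay estimate carrying the usual loss at the photon sphere, using that sub-extremality makes the horizon redshift strictly positive and that the smallness of the data lets all genuinely semilinear contributions --- those of $Q-e$ and of $A_v$ to the wave operator, and the back-reaction of the current $q_0\,\Im(\phi\,\overline{D\phi})$ on $Q$ through \eqref{4} --- be absorbed. The bootstrap would carry assumptions of the schematic form $|Q-e| \lesssim \epsilon$ and $\int_v^\infty |\phi|_{\mathcal{H}^+}^2(v')\,dv' \lesssim \epsilon^2\, v^{-3+2\delta+\eta}$ (with a small loss $\eta$, plus a matching pointwise assumption), to be propagated and then improved.

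Third --- the heart of the matter --- I would run the $r^p$-weighted hierarchy of Dafermos--Rodnianski in a neighbourhood of $\mathcal{I}^+$, \emph{modified} to accommodate the long-range term. Because of $iq_0 A_v\,\partial_u\psi$, the quantity $\partial_v\psi$ is no longer even approximately transported along $\mathcal{I}^+$: the relevant transport relation acquires a coefficient of size $q_0 e / r$, and the associated indicial equation is the quadratic $x^2 - x + (q_0 e)^2 = 0$, with roots $\tfrac{1 \pm \sqrt{1-4(q_0e)^2}}{2}$. This is precisely the origin of $\delta(q_0e) = 1 - \sqrt{1-4(q_0e)^2} + O(|q_0e|^{1/2})$ (and of the implicit requirement $|q_0 e| < \tfrac12$, consistent with the ``weakly charged'' hypothesis): the top of the hierarchy closes only down to $p = 2-\delta$ instead of $p=2$, costing the field a factor $v^{\delta}$ of decay. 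Concretely I would introduce an integrating factor and a diagonalising change of variables at the level of the pair $(\psi,\partial_v\psi)$ that makes this cancellation manifest, push the hierarchy as far as the indicial structure permits, and pay a loss $\eta \to 0$ only at a dyadic pigeonhole step --- which is why the sharp pointwise bound $|\phi|_{\mathcal{H}^+}(v_n)\lesssim v_n^{-2+\delta}$ is asserted merely along a dyadic sequence $(v_n)_{n\in\mathbb{N}}$.

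Fourth, I would transport this near-$\mathcal{I}^+$ decay back to the event horizon using the redshift vector field, converting $r^p$-decay into pointwise-in-$v$ decay of $\psi$ and $\partial_v\psi$ on $\mathcal{H}^+$; this yields the integrated bound $\int_v^\infty |\phi|_{\mathcal{H}^+}^2(v')\,dv' \lesssim v^{-3+2\delta}$ and the dyadic pointwise bound of Theorem~\ref{decaychargedext}. Finally I would close the bootstrap: the improved decay of $\phi$ fed into \eqref{4} reproves $|Q-e|\lesssim \epsilon$ --- indeed with polynomial decay --- with a better constant, and fed into the energy and Morawetz estimates removes the $\eta$ loss away from the pigeonhole step. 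The main obstacle, and the point where the argument departs essentially from the uncharged analysis underlying \cite{JonathanStabExt}, is exactly this long-range term: it is critical rather than perturbative, it forces the modified-variable bookkeeping to be carried through the whole $r^p$-hierarchy, and obtaining the \emph{sharp} exponent $\delta$ --- rather than merely some polynomial rate --- requires tracking the indicial equation $x^2-x+(q_0e)^2=0$ at every step, including its coupling to the semilinear interaction, which only the smallness hypothesis keeps subcritical.
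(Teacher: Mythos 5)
Note first that Theorem \ref{decaychargedext} is not proved in the present paper at all: it is imported verbatim from \cite{Moi2}, so the only meaningful comparison is with that work. Your sketch matches its strategy in all essentials: a gauge choice and reduction to a $1+1$-dimensional equation for $r\phi$, boundedness/Morawetz/redshift estimates on the sub-extremal exterior, a small-data bootstrap controlling the dynamical charge $Q-e$ through \eqref{4}, an $r^p$-weighted hierarchy whose top is lowered by the scale-critical long-range charge term, and a dyadic pigeonhole step which is exactly why the pointwise bound is asserted only along the sequence $(v_n)$; this is the mechanism the paper itself alludes to when it attributes the charge-dependent rate to an ``inverse square (scale critical) potential''. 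The one place you overshoot is the claim that the proof tracks the indicial equation $x^2-x+(q_0e)^2=0$ exactly and closes the hierarchy precisely at $p=2-\delta$ with $\delta=1-\sqrt{1-4(q_0e)^2}$: the theorem does not assert the sharp exponent. Its $\delta(q_0e)$ carries an $O(|q_0e|^{1/2})$ loss, which dominates the $O((q_0e)^2)$ indicial contribution, and the paper records explicitly (in the remark following the theorem and in Conjecture \ref{chargedconj}) that the conjecturally sharp Hod--Piran rate is only matched at first order as $q_0e\to 0$. That slack is precisely what the weighted-vector-field/absorption method leaves behind, so your plan suffices for the statement as written provided you concede the loss; conversely, if you insist on extracting the exact indicial rate you are claiming something stronger than \cite{Moi2} proves, and the $r^p$ machinery you describe will not deliver it without genuinely finer late-time-asymptotics input.
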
 \begin{rmk}
The upper bound of Theorem \cite{Moi2} corresponds, at the first order, to the decay which is conjectured to be sharp in the literature, i.e.\ $|\phi|_{\mathcal{H}^+}(v) \sim  v^{-1-\sqrt{1-4(q_0e)^2}}$, see \cite{HodPiran1} and the discussion in section \ref{decayconj}.
\end{rmk}
The decay mechanism for a charged scalar field is more complex than for its uncharged counterpart. Indeed, in the case of the (uncharged) wave equation, the dynamics are governed by Price's law $|\phi|_{\mathcal{H}^+}(v) \sim  v^{-3}$, see \cite{PriceLaw}, \cite{Pricepaper}. In contrast, in the charged case, the decay rate depends on $q_0e$, i.e.\ the product of the asymptotic Maxwell charge $e$ (a quantity determined in evolution) with the coupling constant $q_0$. This is due to the presence of an inverse square (or ``scale critical'') potential in the charged equation.  Very little is known for such a model in general; to the best of the author's knowledge, decay rates in time depending on parameters or dynamical quantities had never been exhibited before, even for the simplest of such systems i.e.\ the wave equation on Minkowski in the presence of an inverse square potential. See however the series of work \cite{scalecrit1}, \cite{scalecrit2}, \cite{scalecrit2.5}, \cite{scalecrit3}, \cite{scalecrit4} for relatively recent progress on the latter equation, including global well-posedness results.

\subsection{Previous inextendibility results in the two-ended uncharged case} \label{previouswork}

The Einstein--Maxwell equations in the presence of uncharged matter allow for the existence of Cauchy horizons, but the Maxwell field is static. Therefore, the solutions of these equations are not directly relevant to the dynamics of gravitational collapse; yet they have been studied in the past for the insights they provide on the local behavior of space-time near time-like infinity $i^+$. Here, we present results on two models: the Einstein--Maxwell-null-dust 
and the Einstein--Maxwell-(uncharged)-scalar-fiel model. 
The existence of weak null singularities was first revealed for the dust model \cite{Hiscock}, as was the blow-up of the Hawking mass \cite{Ori}, \cite{Poisson}, \cite{PoissonIsrael} -- the famous ``mass inflation scenario''. Nevertheless, the dynamics of dust are governed by a trivial transport equation so it is desirable to study a more sophisticated model.

The wave equation, which governs scalar fields, obeys more complex dynamics, and is more similar to the Einstein equations. Consequently, the non-emptiness of $\CH$, first proven by Dafermos \cite{MihalisPHD}, is non trivial for the Einstein--Maxwell-(uncharged)-scalar-fiel model and constitutes a first essential step. In the same work \cite{MihalisPHD}, \cite{Mihalis1}, Dafermos proves the instability of $\CH$, due to the blow up of the Hawking mass, using the special monotonicity properties of the uncharged model. Note that for his model, the Hawking mass is monotonic so, once a weak null singularity is proved to occur, its propagation is immediate. Finally, the full proof of the $C^2$ version of Strong Cosmic Censorship for two-ended space-times was achieved by Luk and Oh \cite{JonathanStab}, \cite{JonathanStabExt}, who also brought new important insights on the behavior of uncharged scalar fields on the black hole exterior, including inverse polynomial lower bounds on the decay of the scalar field. We now give a detailed account of these different results.

\subsubsection{Weak null singularities and classification of the Cauchy horizon for the dust model } \label{dustweaknull}

In this section, we discuss spherically symmetric solutions of the Einstein equations in the presence of dust. This will be the opportunity to discuss the classification of Theorem \ref{classificationrough} in a very simplified context (see also Appendix \ref{appendix}) where explicit computations are possible. The Einstein--Maxwell-(uncharged)-null-dust equations are as follows:
\begin{equation} \label{dust1} Ric_{\mu \nu}(g)- \frac{1}{2}R(g)g_{\mu \nu}= \mathbb{T}_{\mu \nu}^{dust}+\mathbb{T}_{\mu \nu}^{EM},     \end{equation} 	\begin{equation}  \label{dust1.5} \mathbb{T}^{EM}_{\mu \nu}=2\left(g^{\alpha \beta}F _{\alpha \nu}F_{\beta \mu }-\frac{1}{4}F^{\alpha \beta}F_{\alpha \beta}g_{\mu \nu}\right),
\end{equation}
\begin{equation} \label{Maxwelldust} \nabla^{\mu} F_{\mu \nu}=0,
\end{equation} 
\begin{equation} \label{dust2}
\mathbb{T}_{\mu \nu}^{dust}= f^2_R \partial_{\mu}u \partial_{\nu}u+
f^2_L \partial_{\mu}v \partial_{\nu}v,
\end{equation}
\begin{equation} \label{dust3}
g^{\mu \nu}\partial_{\mu}u \partial_{\nu}u=0, \hskip 4 mm g^{\mu \nu}\partial_{\mu}v \partial_{\nu}v=0, 
\end{equation}
\begin{equation} \label{dust5}
g^{\mu \nu}\partial_{\mu}u \partial_{\nu}f_R +\frac{1}{2} (\Box_{g}u)f_R=0,
\end{equation}
\begin{equation} \label{dust6}
g^{\mu \nu}\partial_{\mu}v \partial_{\nu}f_L +\frac{1}{2} (\Box_{g}v)f_L=0.
\end{equation}As we discussed before, these solutions are necessarily two-ended, a global restriction which nonetheless does not affect the behavior near time-like infinity $i^+$. As written \eqref{dust1}, \eqref{dust1.5}, \eqref{Maxwelldust}, \eqref{dust2}, \eqref{dust3}, \eqref{dust5}, \eqref{dust6} feature a cloud of ingoing null dust of density $f_L$ and a cloud of outgoing null dust of density $f_R$, i.e.\ $f_L$ is transported in the $u$ direction and $f_R$ is transported in the $v$ direction where $u$ and $v$ are eikonal functions (as prescribed by \eqref{dust3}). 

Using $(u,v)$ as a double null coordinate system in the Penrose diagram, it is interesting to work with the null lapse $\Omega^2=-g(\partial_u, \partial_v)$, and $\partial_u r$, $\partial_v r$, where $r$ is the area-radius function. In this gauge, the metric takes the form $$ g=-\Omega^2 du dv+r^2 (d\theta^2+\sin(\theta)^2 d\varphi^2).$$   \begin{rmk} \label{staticMaxwell}
	As the dust is uncharged, \eqref{Maxwelldust} is a \textit{homogeneous} Maxwell equation. In spherical symmetry, this implies that the Maxwell field is ``static'' i.e.\ that $F_{\mu \nu}=  \frac{e}{2r^2}  \cdot \Omega^2 du \wedge dv $, where $e \in \RR$ is the \textit{constant} charge of the black hole.
\end{rmk} 
In \cite{Hiscock}, Hiscock studied  \eqref{dust1}, \eqref{dust1.5}, \eqref{Maxwelldust}, \eqref{dust2}, \eqref{dust3}, \eqref{dust5}, \eqref{dust6} in the case of purely ingoing  dust i.e.\ $f_R=0$ and $f_L$ decays at a polynomial rate $v^{-p}$ ($v$ is defined by \eqref{gauge2}) on the event horizon. In Hiscock's model, \color{black}  the Cauchy horizon is already $C^2$-\underline{inextendible}, due to the blow up of one curvature component (see the comments below).  Moreover, certain Christoffel symbols blow up for Hiscock's solution i.e.\ there exists a ``reasonable'' coordinate system which is \footnote{This statement does \textbf{not} prove that the metric is $C^1$-inextendible but does give the insight that a breakdown occurs already at the $C^1$ level.} not $C^1$. Nevertheless, in the absence of outgoing \color{black} radiation, the Hawking mass and the Kretschmann scalar are finite. In fact,  the non-staticity condition \eqref{staticrough} is \textit{violated} everywhere and the Cauchy horizon is isometric to a Reissner--Nordstr\"{o}m Cauchy horizon. This situation corresponds to what we called a Cauchy horizon of \underline{static} type, in the language of Theorem \ref{classificationrough}.\color{black}

We now come back to the general case. \color{black} The relations between the mass $\rho$ and the gradient of $r$ (see section \ref{geometricframework}) allow us to formulate the non-staticity condition \eqref{staticrough} as  $\frac{|\partial_u r|}{\Omega^2} \in L^1(C_{u_0},dv)$, for $C_{u_0}$ an outgoing cone reaching $\CH$: \begin{equation} \label{staticityrough2}
\int_{C_{u_0}} \frac{4|\partial_u r|}{\Omega^2} dv =	\int_{C_{u_0}} \frac{dr}{ \frac{2\rho}{r}-1} <+\infty.
\end{equation}

Now, there are three possibilities, according to the behavior of the cloud of outgoing dust $f_R$, entirely and trivially determined by its initial data $f_R^0$ on an ingoing cone $\underline{C}_{v_0}$ (the behavior of the ingoing dust $f_L$ is irrelevant to this discussion): \begin{enumerate}[I]
	\item \label{I} $f_R^0 \equiv 0$ on $\underline{C}_{v_0} $: then \eqref{staticityrough2} is violated everywhere on the Cauchy horizon. This situation corresponds to a Cauchy horizon of \underline{static} type, see Definition \ref{rigiddef} (Hiscock's space-times are such examples).
	\item \label{II}For all $u_s \in \RR$, $\{f_R^0 \neq 0\} \cap \{ u \leq u_s \} \neq \emptyset$: then \eqref{staticityrough2} is satisfied \footnote{This is consequence of the Raychaudhuri equation and the dominant energy condition, as the quantity $\frac{4|\partial_u r|}{\Omega^2}$ is monotonic c.f. \eqref{RaychUdust}. } everywhere on the Cauchy horizon. This situation corresponds to a Cauchy horizon of \underline{dynamical} type, see Definition \ref{dynamicaldef}.
	\item \label{III}There exists $u_T \in \RR $ such that $f_R^0 \equiv 0$ on $\underline{C}_{v_0} \cap \{ u\leq u_T\}$, but $\{f_R^0 \neq 0\} \cap \{u_T< u \leq u_{T}+\epsilon\} \neq \emptyset$ for all $\epsilon>0$. Then \eqref{staticityrough2} is satisfied only on outgoing cones $C_{u'}$ with $u'>u_T$. This corresponds to a Cauchy horizon of \underline{mixed} type, see Definition \ref{mixeddef}.
\end{enumerate} \begin{rmk}
The correspondence between respectively statements \ref{I}, \ref{II}, \ref{III} and Definitions \ref{rigiddef}, \ref{dynamicaldef}, \ref{mixeddef} is not a priori obvious, but it follows from the (comparatively easier) Proposition \ref{propappendix}. The main mechanism is provided by the Raychaudhuri equation \eqref{RaychUdust}, which essentially dictates that $r_{\CH}$ is constant on $(-\infty,u]$ if and only if $\int_{-\infty}^{u} (f^0_R)^2(u') du' =0$.
\end{rmk}
Note that the propagation of $f_R$ is a trivial translation by \eqref{dust5}, thus zero data corresponds to zero radiation at the Cauchy horizon. These three types of Cauchy horizons are easy to construct for the dust model, see Appendix \ref{appendix}.
\color{black}
\begin{rmk}
	Note that the classification of the Cauchy horizon in the case of dust is immediate. However, in the presence of a scalar field, that has non-trivial reflectivity, this classification requires a machinery of quantitative estimates, to finally reach the result of Theorem \ref{classificationrough} and the continuation criterion of Theorem \ref{blowuprough}, in turn responsible for $C^2$-inextendibility.
\end{rmk}

Still under Hiscock's assumption that $f_L$ decays at an polynomial\footnote{In fact, the space-time is $C^2$ extendible and the Hawking mass finite if $f_R$ decays exponentially at a sufficiently fast rate. This phenomenon explains why in the cosmological setting, mass inflation is not expected for a certain range of parameters  c.f.\  \cite{Costa}.} rate $v^{-p}$ on the event horizon\color{black}, it is important to notice that in the three cases \ref{I}, \ref{II} and \ref{III}, the Cauchy horizon is \textbf{$C^2$-inextendible} due to the blow up \footnote{We emphasize however that this blow up was not formulated in either \cite{Poisson}, \cite{PoissonIsrael} or \cite{Ori}. This modern formulation is due to Luk and Oh \cite{JonathanStab}.} of the transverse curvature component $Ric(X,X)$, for a null outgoing radial geodesic vector field $X$. This is because the ingoing radiation $f_L$ is blue-shifted by the Cauchy horizon, a phenomenon which is present even in the static case \ref{I} of Hiscock; a similar logic governs the charged scalar field model, see  Remark \ref{remarktransverse}.

 The next natural question is: ``what happens to the mass in either of the cases \ref{II} or \ref{III} ?'' (for case \ref{I} we already saw that the Hawking mass is finite). \color{black}Poisson and Israel in \cite{Poisson}, \cite{PoissonIsrael} and Ori in \cite{Ori} discovered that in case \ref{II} and case \ref{III}, still under Hiscock's assumption that $f_L$ decays at an polynomial rate\color{black}, the Hawking mass $\rho$ blows up on $\CH$, in contrast with the Hiscock model. In Appendix \ref{appendix}, we revisit their computation and establish a connection with our new classification.

\subsubsection{Global $C^2$-inextendibility and Strong Cosmic Censorship in the two-ended case}   \label{unchargedsection}
In this section, we mention previous results in spherical symmetry for the Einstein--Maxwell-(uncharged)-scalar-field:  \begin{equation} \label{EMSF1} Ric_{\mu \nu}(g)- \frac{1}{2}R(g)g_{\mu \nu}= \mathbb{T}^{EM}_{\mu \nu}+  \mathbb{T}^{SF}_{\mu \nu} ,    \end{equation} 
\begin{equation}\label{EMSF2} \mathbb{T}^{EM}_{\mu \nu}=2\left(g^{\alpha \beta}F _{\alpha \nu}F_{\beta \mu }-\frac{1}{4}F^{\alpha \beta}F_{\alpha \beta}g_{\mu \nu}\right),
\end{equation} \begin{equation} \label{EMSF3} \mathbb{T}^{SF}_{\mu \nu}= 2\left( \Re(\partial_{\mu}\phi \partial_{\nu}\phi) -\frac{1}{2}(g^{\alpha \beta} \partial_{\alpha}\phi \partial_{\beta}\phi  )g_{\mu \nu} \right), \end{equation} \begin{equation} \label{EMSF4} \nabla^{\mu} F_{\mu \nu}=0, \; F=dA , \end{equation} \begin{equation} \label{EMSF5}  g^{\mu \nu} \partial_{\mu} \partial_{\nu}\phi =0.
\end{equation} \begin{rmk} \label{staticMaxwell2}
	The scalar field is uncharged, hence $F_{\mu \nu}=  \frac{e}{2r^2}  \cdot \Omega^2 du \wedge dv $, $e\in \RR$ as in the dust case, c.f.\ Remark \ref{staticMaxwell}.
\end{rmk}

Generalizing the results on null dust to a scalar field is, needless to say, a complex task. This is because scalar fields obey more sophisticated dynamics, involving a mechanism of transmission-reflection. A non-linear scattering theory of the system \eqref{EMSF1}, \eqref{EMSF2}, \eqref{EMSF3}, \eqref{EMSF4}, \eqref{EMSF5} in the interior black hole -- even in spherical symmetry -- is not currently available (see however \cite{ChristophYakov} for results on the \textit{linear} theory for the wave equation on a Reissner--Nordstr\"{o}m interior).

Nevertheless, it is still possible to study the equations \eqref{EMSF1}, \eqref{EMSF2}, \eqref{EMSF3}, \eqref{EMSF4}, \eqref{EMSF5} as a system of coupled non-linear PDEs and employ stability methods to establish the decay of the scalar field, from which we show that the metric converges to Reissner--Nordstr\"{o}m towards time-like infinity $i^+$. 

The first result in this direction is due to Dafermos \cite{MihalisPHD}, \cite{Mihalis1}, who proved the stability of the Reissner--Nordstr\"{o}m Cauchy horizon in spherical symmetry under decay assumptions on the scalar field on the event horizon:
\begin{thmunb}[Dafermos \cite{MihalisPHD}, \cite{Mihalis1}]
	Assume that for $p>1$ \color{black}, the asymptotic behavior of the event horizon is given by: \begin{equation} \label{Dafermosbound}
	D^{-1} \cdot v^{-p} \leq 	\partial_v \phi \leq  D \cdot v^{-p},
	\end{equation} for some $D>0$, in the advanced time coordinate $v$ defined by gauge \eqref{gauge1}. Then $$\CH \neq \emptyset,$$ and the space-time is $C^0$-extendible\color{black}. Moreover, on all outgoing cones reaching $\CH$, the Hawking mass blows up point-wise towards $\CH$ and the space-time is $C^2$-inextendible\color{black}.
\end{thmunb}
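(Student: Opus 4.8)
The plan is to treat the statement as a characteristic initial value problem for the system \eqref{EMSF1}--\eqref{EMSF5} in a double null gauge $g = -\Omega^2\, du\, dv + r^2\, d\sigma_{\mathcal{S}^2}^2$, following the strategy of Dafermos. Data are prescribed on the event horizon $\mathcal{H}^+$ (along which $r\to r_+$ and which we parametrize so that $v\to\infty$), where $\phi=\phi_{\mathcal{H}^+}(v)$ satisfies $D^{-1}v^{-p}\le\partial_v\phi\le D v^{-p}$, together with transversal data on an ingoing cone coming from the near-$i^+$ exterior geometry. Since $p>1$, the upper bound gives $\partial_v\phi\in L^1\cap L^2(dv)$ on $\mathcal{H}^+$, so $\phi_{\mathcal{H}^+}$ has a limit, which we normalize to $0$. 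By \eqref{EMSF4} and Remark \ref{staticMaxwell2} the Maxwell field only contributes the fixed charge $e$, so the evolution reduces to a closed system for $(r,\Omega^2,\phi)$ governed by the wave equation $\partial_u(r\partial_v\phi)=-(\partial_v r)(\partial_u\phi)$ (and its symmetric counterpart), the Raychaudhuri equations $\partial_u(\nu\Omega^{-2})=-r\Omega^{-2}(\partial_u\phi)^2$ and $\partial_v(\lambda\Omega^{-2})=-r\Omega^{-2}(\partial_v\phi)^2$ with $\nu=\partial_u r$, $\lambda=\partial_v r$, and the Hawking mass identity, which in the trapped region reads $\partial_v m = 2|\nu|\,\Omega^{-2}\,r^2(\partial_v\phi)^2\ge 0$.

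The first task is $\CH\neq\emptyset$ and $C^0$-extendibility. I would split the interior into a neighborhood of $\mathcal{H}^+$ where $r\ge r_\ast$ for some $r_-<r_\ast<r_+$, and a remaining region $\{r_{\CH}-\epsilon< r\le r_\ast\}$ reaching $v=\infty$. In the first region the red-shift near $\mathcal{H}^+$ supplies stabilizing estimates: the solution exists, $m$ stays bounded, and $\partial_v\phi$, $\partial_u\phi$ inherit $v^{-p}$-type decay. In the second region there is no red-shift, and the danger is that $r$ collapses to $0$ (a spacelike singularity); the assumption $p>1$ is exactly what prevents this. One runs a bootstrap on the statements that $\Omega^2$ remains comparable to its (exponentially decaying) Reissner--Nordstr\"om value, that $r\ge r_{\CH}-\epsilon$ with $r_{\CH}>0$, and that $r\partial_v\phi$ stays bounded: the wave equation shows $r\partial_v\phi$ is nearly conserved along ingoing directions (the ``reflection'' term $\int(\partial_v r)(\partial_u\phi)\, du$ being small), hence $\int_{C_u}(\partial_v\phi)^2\, dv<\infty$; the $v$-Raychaudhuri equation then keeps $\lambda\Omega^{-2}$ bounded, so $\lambda$ does not blow up; since $r$ is monotone in $v$ and $\int_{C_u}|\lambda|\, dv = r(u,v_0)-r_{\CH}(u)$, taking the region thin forces $r_{\CH}(u)>0$. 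Continuous extendibility of $r$, $\phi$ and the renormalized metric coefficients to $\{v=\infty\}$ follows, and after rescaling $v$ so that the lapse is non-degenerate one obtains a $C^0$ extension across $\CH$.

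The heart of the proof is mass inflation. Here the lower bound $\partial_v\phi\ge D^{-1}v^{-p}$ is indispensable (the upper bound alone is consistent with $\phi\equiv 0$, i.e.\ Reissner--Nordstr\"om with no mass inflation). From the wave equation and the already-established control on $\partial_v r$ and $\partial_u\phi$, near-conservation of $r\partial_v\phi$ along ingoing cones propagates the lower bound, giving $\partial_v\phi(u,v)\gtrsim v^{-p}$ throughout a neighborhood of $\CH$ --- this requires checking that the reflected contribution is genuinely lower order and cannot cancel the main term. Simultaneously, the Raychaudhuri equations and comparison with Reissner--Nordstr\"om near the past end of each outgoing cone force the blue-shift: $|\nu|\,\Omega^{-2}$ is bounded below by a quantity tending to $+\infty$ as $v\to\infty$. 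Hence $\partial_v m = 2|\nu|\,\Omega^{-2}\,r^2(\partial_v\phi)^2$ fails to be integrable in $v$ along every outgoing cone $C_u$ reaching $\CH$, so $m(u,v)\to+\infty$ while $r(u,v)\to r_{\CH}(u)>0$. A curvature scalar (the Kretschmann scalar, which contains a term $\sim m^2/r^6$ with no competing cancellation since $e$ and $r_{\CH}(u)$ are bounded) then diverges along each $C_u$; any $C^2$ extension would render this scalar continuous hence locally bounded near the corresponding boundary point, a contradiction, so $(M,g)$ is $C^2$-inextendible.

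I expect the main obstacle to be the last step executed as a genuine bootstrap: one must simultaneously keep $\Omega^2$ and $\nu$ close enough to Reissner--Nordstr\"om --- so that the blue-shift lower bound survives and $r$ does not degenerate --- while the $\phi$-driven deviation is allowed to be large enough to drive $m\to+\infty$, a self-consistency argument separating the large quantity ($m$, equivalently $1-\mu$) from the small ones ($\phi$, and the deviations of $r$, $\Omega^2$ from Reissner--Nordstr\"om). The feature that makes this tractable --- and whose failure for the charged scalar field is precisely the difficulty the present paper must overcome --- is the sign-definiteness of $\partial_v m$ and the monotonicity of $|\nu|\Omega^{-2}$, $|\lambda|\Omega^{-2}$ coming from Raychaudhuri and the null energy condition, which make the propagation of the weak null singularity to the future of any cone $C_u$ automatic once it is known on $C_u$ itself.
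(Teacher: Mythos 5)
This theorem is quoted background: the paper cites it from Dafermos \cite{MihalisPHD}, \cite{Mihalis1} and gives no proof of its own, so your sketch can only be measured against Dafermos's argument and the paper's surrounding discussion. At the level of architecture you reproduce it correctly: characteristic data on $\mathcal{H}^+$ plus an ingoing cone, a red-shift/stability region, a bootstrap near $v=\infty$ keeping $r$ bounded below (hence $\CH\neq\emptyset$, and $C^0$-extendibility for $p>1$; note the paper's remark that $p>\frac12$ already suffices for $\CH\neq\emptyset$ and mass inflation, so $p>1$ is not what prevents $r\to 0$), the monotone mass identity (your ``Hawking mass'' identity $\partial_v m=2|\nu|\Omega^{-2}r^2(\partial_v\phi)^2$ is really the identity for the renormalized mass $\varpi=\rho+e^2/(2r)$, which is harmless since $r$ is bounded below), and curvature blow-up.

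The genuine gap is the step you call the blue-shift: you assert that ``Raychaudhuri and comparison with Reissner--Nordstr\"om near the past end of each outgoing cone'' force $|\nu|\Omega^{-2}\to+\infty$ along interior cones. That is not an argument: in Reissner--Nordstr\"om (with the normalization \eqref{gauge2}) $\kappa^{-1}=4|\nu|\Omega^{-2}\equiv 1$, and \eqref{RaychU} only gives monotonicity in $u$, not growth in $v$. The divergence of $\kappa^{-1}$ along a fixed cone, given that $|\lambda|$ stays small, is essentially \emph{equivalent} to the mass blow-up you are trying to prove -- equivalently, to the failure of the staticity condition \eqref{MihalisPHDcondition}, i.e.\ to $\nu$ having a non-zero limit on $\CH$ while $\Omega^2\to 0$ -- so feeding it back into $\partial_v\varpi=\frac12 r^2\kappa^{-1}(\partial_v\phi)^2$ is circular. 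This is exactly where Dafermos consumes the pointwise lower bound: the transported outgoing radiation scatters via $\partial_v(rD_u\phi)=-\nu D_v\phi$ into non-trivial ingoing radiation, and the special monotonicity of the uncharged, massless model (the feature you correctly name at the end, and which the present paper emphasizes is unavailable in the charged case) is what rules out the static alternative and establishes the Dafermos condition \eqref{Dafermoscondition} for all sufficiently late $u$ before the blow-up computation can be run. A secondary, smaller gap: deducing $C^2$-inextendibility of the spacetime from Kretschmann blow-up along cones requires a geometric boundary argument in the style of Dafermos--Rendall and Luk--Oh (compare Lemma \ref{DafermosRendallLemma} and Propositions \ref{geomextendJonathan}, \ref{Ricciblowupprop} in this paper), identifying boundary points of a putative extension with ideal points of the Penrose diagram; ``the scalar would be continuous at the boundary'' is not by itself a proof, although the paper's own one-line remark after the theorem is equally casual on this point.
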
 \begin{rmk}
In fact, the assumption $p>\frac{1}{2}$ is sufficient to prove that $\CH \neq \emptyset$ and the mass inflation, but not to obtain $C^0$-extendibility of the metric (even though the area-radius $r$ extends as a continuous scalar under this weaker assumption). Note that this discussion is purely academic, since for Dafermos' model we have $p=3$, see \cite{Newrp}, \cite{PriceLaw}. \color{black}
\end{rmk}In reality, the work of Dafermos consists in two distinct results: the Reissner--Nordstr\"{o}m Cauchy horizon is $C^0$ stable but is $C^1$ \textbf{un}stable, in the sense that the Hawking mass blows up on $\CH$. Both results were a priori surprising. A posteriori, the stability result is due to the repulsive effect of the charge of the Maxwell field (which back-reacts by the Einstein equations), and the instability is due to the (linear) amplification of ingoing radiation near $\CH$ -- the (already mentioned) blue-shift effect. It is remarkable that the linear $C^1$ instability persists in the non-linear setting, in part thanks to the strength of the $C^0$ stability estimates. In turn, the blow up of the Hawking mass implies the blow up of the Kretschmann scalar, thus the space-time is $C^2$-future-inextendible. However, the blow up of the Hawking mass relies on a monotonicity argument, which is not robust and also requires the lower bound of \eqref{Dafermosbound}, which has been conjectured but not verified for any non-linear solution in the black hole exterior. Nonetheless, upper bounds consistent with \eqref{Dafermosbound}, the so-called Price's law, were established by Dafermos and Rodnianski \cite{PriceLaw}. These bounds are sufficient to prove that $\CH$ is $C^0$-extendible and thus falsify the $C^0$ version of Strong Cosmic Censorship in spherical symmetry:
\begin{thmunb}[Dafermos \cite{MihalisPHD}, \cite{Mihalis1}, Dafermos--Rodnianski \cite{PriceLaw}]
	Conjecture \ref{C0SCC} is \underline{false} for the Einstein--Maxwell-(uncharged)-scalar-field model $(q_0=0)$ in spherical symmetry.
\end{thmunb}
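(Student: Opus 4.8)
The plan is to derive the statement as an immediate consequence of the two results quoted just above: the $C^0$-stability theorem of Dafermos \cite{MihalisPHD}, \cite{Mihalis1} for the Reissner--Nordstr\"{o}m Cauchy horizon, and the exterior decay (Price's law) of Dafermos--Rodnianski \cite{PriceLaw}. The conceptual point is that $C^0$-extendibility across the Cauchy horizon $\CH$ is not a fine-tuned phenomenon but a \emph{stable} one: it will hold on a non-empty open family of regular initial data, and this alone contradicts the assertion of Conjecture \ref{C0SCC} that $C^0$-inextendibility is generic.

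First I would pin down the data and control the exterior. Because the scalar field is uncharged, the Maxwell field is static (Remark \ref{staticMaxwell2}) and the charge $e$ is conserved; every solution with $e \neq 0$ is two-ended. I would therefore fix, for definiteness, spherically symmetric asymptotically flat two-ended admissible data for \eqref{EMSF1}--\eqref{EMSF5} lying in a small weighted neighborhood of data for a two-ended sub-extremal Reissner--Nordstr\"{o}m black hole; sub-extremality then persists by continuity of the evolution, and the exterior decay analysis of Dafermos--Rodnianski \cite{PriceLaw} (built on the global existence framework for this system) applies. It yields: the maximal development possesses complete null infinities $\mathcal{I}^+_i$ and event horizons $\mathcal{H}^+_i$, the geometry converges along each $\mathcal{H}^+_i$ to sub-extremal Reissner--Nordstr\"{o}m, and the scalar field satisfies the inverse-polynomial Price-law upper bound $|\partial_v \phi|_{\mathcal{H}^+_i} \lesssim v^{-3}$ in the advanced-time gauge \eqref{gauge1}. (More generally the same conclusion holds for any regular data whose exterior settles down to sub-extremal Reissner--Nordstr\"{o}m; the neighborhood of Reissner--Nordstr\"{o}m is chosen only to keep this step unconditional.)

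Next I would invoke the interior. By the $C^0$-stability argument of Dafermos \cite{MihalisPHD}, \cite{Mihalis1} --- for which, as recalled in the surrounding text, the $C^0$-extendibility conclusion requires only the upper bound on $\partial_v\phi$, and is moreover available precisely because the Price-law exponent $p=3$ exceeds $1$ (compare the Remark after \eqref{Dafermosbound}) --- the Cauchy horizon emanating from $i^+$ is non-empty and the maximal globally hyperbolic development extends as a continuous spherically symmetric Lorentzian manifold across $\CH$. Hence every member of the open family of data fixed in Step 1 has a $C^0$-extendible maximal development.

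The conclusion is then immediate: the set of regular data for \eqref{EMSF1}--\eqref{EMSF5} whose maximal development is $C^0$-extendible contains a non-empty open set, so it cannot be true that $C^0$-inextendibility holds for \emph{generic} data; thus Conjecture \ref{C0SCC} is false when $q_0 = 0$. The main obstacle in this scheme is Step 1, the exterior analysis: proving the inverse-polynomial decay of $\phi$ on $\mathcal{H}^+$ for a genuinely nonlinear evolution is itself a substantial theorem --- the content of \cite{PriceLaw}, resting on global existence and $r^p$-weighted-energy estimates --- whereas the interior step, though delicate near $i^+$ (it relies on the $C^0$ red-shift and area-radius estimates of \cite{MihalisPHD}, \cite{Mihalis1}), is comparatively soft, and the final genericity bookkeeping reduces to the elementary observation that openness of the extendible class defeats any reasonable notion of genericity for its complement.
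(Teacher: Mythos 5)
Your argument is correct and is essentially the paper's own reasoning: this unnumbered theorem is quoted from prior work, and the surrounding text justifies it exactly as you do, by combining Dafermos's interior stability theorem (whose $C^0$-extendibility conclusion needs only the upper bound in \eqref{Dafermosbound}, with the Price-law exponent $p=3>1$) with the Price-law upper bounds on the event horizon established by Dafermos--Rodnianski in \cite{PriceLaw}. The only cosmetic difference is your restriction to a small neighborhood of sub-extremal Reissner--Nordstr\"{o}m data---\cite{PriceLaw} requires no smallness, and in the uncharged case sub-extremality of the limiting event horizon is automatic---but this is harmless, since a non-empty open family of data with $C^0$-extendible maximal development already defeats any reasonable notion of genericity.
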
 The full proof of $C^2$-future-inextendibility for generic spherically symmetric two-ended Cauchy data was ultimately achieved by Luk and Oh \cite{JonathanStab}, \cite{JonathanStabExt}. Remarkably, they do not prove directly the blow up of the Hawking mass: instead, they rely on the blow up of the geometric quantity $Ric(X,X)$, for $X$ a null radial geodesic vector field transverse to $\CH$, which is sufficient to guarantee $C^2$-inextendibility:

\begin{thmunb}[Luk--Oh \cite{JonathanStab}, \cite{JonathanStabExt}]
	Conjecture \ref{C2SCC} is \underline{true} for the Einstein--Maxwell-(uncharged)-scalar-field model in spherical symmetry.
\end{thmunb}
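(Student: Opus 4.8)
The plan is to reduce Conjecture~\ref{C2SCC} for this model to two essentially independent statements -- a statement about the black-hole exterior and a statement about the interior -- and then establish each. By Theorem~\ref{twoendedapriori} (Kommemi's a priori characterization, valid here since the uncharged scalar-field model is necessarily two-ended), the future boundary of the maximal development of generic two-ended admissible data decomposes into $i^0$, $\mathcal{I}^+$, $i^+=i^+_1\cup i^+_2$, the Cauchy horizon $\CH=\CHone\cup\CHtwo$, null segments $\mathcal{S}_{i^+}$ and a possibly spacelike piece $\mathcal{S}$ on which $r\to 0$. On the portions of the boundary where $r\to 0$, standard extension criteria applied to the Kretschmann scalar rule out a $C^2$ -- indeed a $C^0$ -- extension, so the only delicate case, and the historical threat to determinism, is the Cauchy horizon $\CH$, which is known to be $C^0$-extendible. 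Hence it suffices to prove: (i) for generic two-ended data the black-hole exterior settles down to a sub-extremal Reissner--Nordstr\"{o}m exterior with the scalar field obeying matching \emph{two-sided} inverse-polynomial bounds on the event horizon; and (ii) these bounds force $\CH$ to be $C^2$-inextendible.

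For step (i) -- the ``exterior'' half, Luk--Oh \cite{JonathanStabExt} -- I would first establish upper bounds: from the decay of $\partial_v\phi$ along $\mathcal{I}^+$ and the red-shift at the event horizon one propagates a Price-law bound $|\partial_v\phi|_{\mathcal{H}^+}(v)\lesssim v^{-3}$ together with convergence of $r$, the Hawking mass $\rho$ and the charge $e$ to their sub-extremal Reissner--Nordstr\"{o}m values; this is the Dafermos--Rodnianski mechanism \cite{PriceLaw}, to be adapted to the relevant gauge and to generic rather than small data. The new and harder input is the matching polynomial lower bound, $|\partial_v\phi|_{\mathcal{H}^+}(v)\gtrsim v^{-3}$ generically: I would obtain this by exhibiting a Newman--Penrose-type (almost-)conserved quantity built from $r^2\partial_v\phi$ near $\mathcal{I}^+$, tracking it through the interaction region, and showing that the leading coefficient of the late-time expansion of $\phi$ on $\mathcal{H}^+$ is a non-degenerate functional of the initial data, hence non-zero off a non-generic set. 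Sub-extremality of the final state is likewise generic by \cite{Kommemi}.

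For step (ii) -- the ``interior'' half, Dafermos \cite{MihalisPHD},~\cite{Mihalis1} and Luk--Oh \cite{JonathanStab} -- I would first invoke Dafermos' $C^0$-stability of the Reissner--Nordstr\"{o}m Cauchy horizon: the two-sided upper bounds give $\CH\neq\emptyset$, with $r$ extending to a positive constant $r_{CH}$ and the lapse $\Omega^2$, $r$ and $\phi$ controlled in $L^\infty$ up to $\CH$, so that the metric extends continuously. Then, using the lower bound $|\partial_v\phi|_{\mathcal{H}^+}\gtrsim v^{-3}$ as ``initial data'' and the wave equation $\Box_g\phi=0$ in null coordinates, I would propagate a quantitative lower bound for $|\partial_v\phi|$ through the blue-shift region, so that on every outgoing cone $C_u$ reaching $\CH$ one has $\int_{C_u}|\partial_v\phi|^2\,dv=+\infty$ and, more precisely, $Ric(X,X)$ -- proportional to $|\partial_v\phi|^2$ divided by geometric factors which stay bounded -- blows up pointwise as $\CH$ is approached along $C_u$. (In this model the Hawking mass is monotone along ingoing cones by Raychaudhuri and the null energy condition, giving the alternative ``mass inflation'' route $\rho\to\infty$; but the $Ric(X,X)$ statement is what $C^2$-inextendibility actually requires, so I would push that one.)

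Finally I would convert curvature blow-up into $C^2$-inextendibility by the standard argument: were the maximal development $C^2$-extendible through a point of $\CH$, there would be a future-incomplete radial null geodesic with affinely parametrized tangent $X$ reaching that point, along which the Ricci tensor of the extension $\widetilde{Ric}(X,X)$ would extend continuously, hence remain bounded -- contradicting the divergence just established. Combined with the disposal of the $r\to 0$ boundary pieces, this shows the maximal development of generic two-ended data is $C^2$-future-inextendible, i.e.\ Conjecture~\ref{C2SCC} holds for the uncharged model. The principal obstacle is step (i)'s generic lower bound on $\mathcal{H}^+$ -- resolving the late-time tail precisely enough to see that its leading coefficient is generically non-vanishing -- with a close second being the interior quantitative blue-shift estimate that upgrades a one-sided decay rate on $\mathcal{H}^+$ into a pointwise divergence of $Ric(X,X)$ uniformly up to $\CH$.
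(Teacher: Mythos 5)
This statement is quoted from Luk--Oh \cite{JonathanStab}, \cite{JonathanStabExt}; the paper does not reprove it, only summarizes its structure, and your sketch follows essentially that same two-part strategy: an exterior theorem giving convergence to a sub-extremal Reissner--Nordstr\"{o}m exterior with two-sided decay of $\phi$ on $\mathcal{H}^+$ (\cite{JonathanStabExt}, building on Dafermos--Rodnianski \cite{PriceLaw}), an interior theorem upgrading the lower bound through the blue-shift to blow-up of $Ric(X,X)$ rather than of the Hawking mass (\cite{MihalisPHD}, \cite{Mihalis1}, \cite{JonathanStab}), and the geometric step ruling out $C^2$ extensions both across $\CH$ (via a radial null geodesic entering the extension) and across the $r=0$ pieces (via the Kretschmann scalar) --- exactly the route the paper attributes to Luk--Oh and itself adapts in its own inextendibility section.

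One caveat: your step (i) asks for a \emph{pointwise} generic lower bound $|\partial_v\phi|_{|\mathcal{H}^+}(v)\gtrsim v^{-3}$ and, correspondingly, pointwise divergence of $Ric(X,X)$ up to $\CH$. That is stronger than what Luk--Oh prove and stronger than what is needed: as the paper notes, they establish that Price's law is sharp only ``in the $L^2$ sense'', i.e.\ an integrated lower bound of the form $\int_v^{+\infty}|\partial_v\phi|^2_{|\mathcal{H}^+}\,dv'\gtrsim v^{-p}$ for generic data (precise pointwise tails were obtained only later, on a fixed Reissner--Nordstr\"{o}m background, by Angelopoulos--Aretakis--Gajic \cite{Newrp}, \cite{Latetime}), and in the interior the conclusion is a $\limsup$ blow-up of $Ric(X,X)$ along outgoing cones, which already contradicts boundedness of $Ric(\dot\gamma,\dot\gamma)$ along a geodesic entering a $C^2$ extension. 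Your Newman--Penrose-type argument is indeed the right mechanism, but it should be run at the level of $L^2$-averaged fluxes; insisting on the pointwise tail would make step (i) substantially harder (one must exclude oscillations and sign changes of the late-time expansion) without buying anything for the $C^2$-inextendibility conclusion. With that adjustment your outline coincides with the cited proof.
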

One of the key elements of Luk and Oh's proof is to establish that Price's law is sharp, at least in the $L^2$ sense. To reach this conclusion, they established the first lower bounds for the wave equation on a black hole, and in the non-linear setting. Note that lower bounds and even precise tails were later obtained, on a fixed Reissner--Nordstr\"{o}m background by Angelopoulous, Aretakis and Gajic \cite{Newrp}, \cite{Latetime}.

\subsection{Connected problems, conjectures and additional results} \label{connected}
\subsubsection{Asymptotic decay on the black hole exterior} \label{decayconj}

In this sub-section, we discuss the conjectured decay rate at which a black hole is expected to settle down towards a sub-extremal Reissner--Nordstr\"{o}m space-time for large times, and we present some related heuristic or numerical works.

The decay of charged scalar fields on spherically symmetric black holes was first considered in \cite{HodPiran1}, where the authors provided a heuristic argument to conjecture the correct late time tail. They argued that the main difference with uncharged fields is that the decay rate depends on the black hole charge, as opposed to the universal rate prescribed by Price's law in the uncharged case. The results of \cite{HodPiran1} were also later backed up by the numerics of Oren and Piran \cite{OrenPiran}:
\begin{conjecture} [Decay of charged scalar fields, Hod and Piran \cite{HodPiran1}, Oren and Piran \cite{OrenPiran}] \label{chargedconj} For smooth, regular, generic admissible data for which the black hole is non-empty , we have, in the charged massless case $m^2=0$:
	$$ |\phi|_{|\mathcal{H}^+}(v) \sim v^{-2+\delta(q_0e)}, \hskip 5 mm |D_v \phi|_{|\mathcal{H}^+}(v) \sim v^{-2+\delta(q_0e)},$$ where $e$ is asymptotic charge of the black hole at time-like infinity, $\delta(q_0e):= 1-\Re(\sqrt{1-4(q_0e)^2}) \in [0,1)$ and $v$ is the standard advanced time null coordinate defined by the gauge condition \eqref{gauge2}.
\end{conjecture}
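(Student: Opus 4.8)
The plan is to reduce the full Einstein--Maxwell--Klein--Gordon problem to a linear model problem on a fixed sub-extremal Reissner--Nordstr\"om exterior, establish the precise late-time tail there, and then transfer back to the coupled system by a bootstrap. First I would combine the quantitative convergence of the exterior to a sub-extremal Reissner--Nordstr\"om metric with the linear-theory estimates behind Theorem \ref{decaychargedext} (and, to handle the coupling, the stability machinery of \cite{JonathanStabExt}) to replace the dynamical metric and Maxwell potential by their Reissner--Nordstr\"om counterparts up to errors that decay strictly faster than the conjectured rate; this reduces matters to the charged wave equation $g^{\mu\nu}D_\mu D_\nu\phi = 0$ on a fixed background, with $D_v = \partial_v + iq_0 A_v$ and $A_v$ approaching a Coulomb profile $\sim e/r$ at $\mathcal{I}^+$. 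In spherical symmetry only the $\ell = 0$ mode survives, so the equation becomes a $1{+}1$ wave equation for $r\phi$ carrying two competing long-range features: a Coulomb phase $\exp\!\big(iq_0\!\int A_v\,dv\big) \sim \exp\!\big(i(q_0 e)\,c\log v\big)$ and, after gauge-invariant commutation, an inverse-square (``scale-critical'') potential $\sim (q_0 e)^2/r^2$.

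The heart of the argument is to run a modified $r^p$-weighted energy hierarchy \`a la Dafermos--Rodnianski adapted to this equation. The admissible range of weights $p$ is dictated by a Hardy inequality with sharp constant $1/4$, so the potential $(q_0 e)^2/r^2$ is controllable exactly in the sub-extremal regime $(q_0 e)^2 < 1/4$, and the effective weight is shifted by $\tfrac12\sqrt{1-4(q_0 e)^2}$ -- this is precisely where the exponent $\delta(q_0 e) = 1 - \Re\sqrt{1-4(q_0 e)^2}$ originates, the indicial roots near $i^0$ being $s_\pm = \tfrac12(1 \pm \sqrt{1-4(q_0 e)^2})$. Pushing the hierarchy to the critical weight should yield the sharp upper bound $|\phi|_{\mathcal{H}^+}(v) \lesssim v^{-2+\delta(q_0 e)}$ with the \emph{exact} exponent (note that Theorem \ref{decaychargedext} currently only gives this at leading order, with an $O(|q_0 e|^{1/2})$ error, so even the upper bound with the precise $\delta$ is part of the task). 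For the matching lower bound I would identify the correct ``renormalized'' conserved charge near spacelike infinity $i^0$ -- a charged, Coulomb-phase-twisted analogue of the Newman--Penrose constant -- show it is non-vanishing for generic data (this is the precise meaning I would attach to ``generic'': openness and density, or full measure, of this non-degeneracy condition on the restriction of the data near $i^0$), and propagate this non-degeneracy first along $\mathcal{I}^+$ and then inward to $\mathcal{H}^+$ using the late-time/time-inversion iteration of Angelopoulos--Aretakis--Gajic \cite{Newrp}, \cite{Latetime}, carefully tracking the logarithmically growing phase at each step. A final bootstrap returns to the nonlinear coupled system, the quadratic self-interaction, backreaction, and the deviation of $A_v$ from exact Coulomb all being designed to decay strictly below $v^{-2+\delta(q_0 e)}$.

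The main obstacle will be exactly the combination of the scale-critical potential with the long-range Coulomb phase. Scale-critical potentials sit at the threshold where the standard vector-field method degenerates -- one typically loses a derivative or a logarithm, the indicial roots coalesce as $q_0 e \to 1/2$, and, as the paper emphasizes, even the model problem of the wave equation on Minkowski with an inverse-square potential has no available precise-tail theory. On top of this, the Coulomb phase means the asymptotics near $i^0$ are governed not by an ordinary conserved constant but by a renormalized one, so the usual transfer of asymptotic information from $i^0$ to $\mathcal{H}^+$ must be entirely reworked in a long-range scattering setting, with the logarithmic phase interfering with the polynomial tail. Closing the nonlinear bootstrap is correspondingly delicate near the upper end of the allowed charge range, where $\delta(q_0 e) \to 1$ and the claimed tail $v^{-2+\delta(q_0 e)}$ approaches the borderline non-integrable rate $v^{-1}$; quantifying the error hierarchy uniformly up to that threshold, and matching it against the known exterior decay for charged fields, is where I expect the real work to lie.
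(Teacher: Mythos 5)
There is a genuine gap here, and it starts with the status of the statement itself: Conjecture \ref{chargedconj} is not proven in the paper at all. The paper records it as a conjecture supported by the heuristics of Hod--Piran \cite{HodPiran1} and the numerics of Oren--Piran \cite{OrenPiran}; the only rigorous input cited is Theorem \ref{decaychargedext} from \cite{Moi2}, which gives an \emph{upper} bound only, on a \emph{fixed} Reissner--Nordstr\"om background, for \emph{small} $q_0e$, and with an exponent matching $2-\delta(q_0e)$ only to leading order (up to $O(|q_0e|^{1/2})$). So there is no ``paper proof'' to compare against, and your text should be judged as a research program. As such, it is a reasonable one --- the identification of $\delta(q_0e)$ with the indicial roots of the scale-critical potential, the reduction to a charged wave equation with a Coulomb phase, and the intended use of an $r^p$-hierarchy all match the accepted heuristic picture --- but it is not a proof: every step that would constitute the actual content is asserted with ``should yield'' or ``I would identify,'' and those are precisely the open problems.

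Concretely, three steps do not currently go through. First, ``pushing the hierarchy to the critical weight'' to get the exact exponent $2-\delta(q_0e)$ is not a routine sharpening of \cite{Moi2}: the inverse-square potential sits exactly at the threshold where the weighted-energy method degenerates, and the paper itself stresses that precise tails are unknown even for the model problem of the wave equation on Minkowski with an inverse-square potential; no argument is offered for how the endpoint loss is avoided. Second, the lower bound (the $\sim$ in the conjecture) rests on a ``Coulomb-phase-twisted Newman--Penrose constant'' whose existence, conservation/renormalization properties, and propagation from $i^0$ to $\mathcal{H}^+$ are all conjectural; the time-inversion machinery of \cite{Newrp}, \cite{Latetime} is built for \emph{uncharged} fields on a fixed background, and no lower bound of any kind is known for charged fields, even linearly. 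Moreover ``generic'' in the conjecture refers to data for the full nonlinear one-ended Einstein--Maxwell--Klein--Gordon system, so genericity of a non-degeneracy condition for your linear model quantity would still have to be transferred to the nonlinear data space. Third, the reduction to a fixed background plus final bootstrap presupposes exterior decay for the coupled system, which is itself open (the paper only remarks that combining \cite{Moi2} with \cite{JonathanStabExt} ``should not be difficult''), and the claim that backreaction, the quadratic self-interaction, and the deviation of $A_v$ from Coulomb all decay strictly faster than $v^{-2+\delta(q_0e)}$ is asserted rather than shown --- delicately so, since the exponent depends on the asymptotic charge $e$, a quantity determined only in evolution, and the bootstrap must close uniformly as $\delta(q_0e)\to 1$ where the tail approaches the non-integrable rate $v^{-1}$. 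In short, the proposal is a sensible roadmap consistent with how the community expects Conjecture \ref{chargedconj} to be attacked, but it does not close any of the gaps that make it a conjecture.
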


The upper bound corresponding to conjecture \ref{chargedconj} was proven mathematically in \cite{Moi2}, on a fixed Reissner--Nordstr\"{o}m background, for small charge $q_0e$ and for a rate $p= 2-\delta(q_0e)+o( \sqrt{|q_0e|})$ as $q_0e \rightarrow 0$, see Theorem \ref{decaychargedext}.

Now we turn to the case of a massive uncharged scalar field, studied in \cite{KoyamaTomimatsu} heuristically, and backed up by the numerics of Burko and Khanna \cite{BurkoKhanna}. It was also argued in \cite{KonoplyaZhidenko} that the same tail holds for a massive charged scalar field:

\begin{conjecture} [Decay of uncharged massive scalar fields \cite{BurkoKhanna}, \cite{KoyamaTomimatsu} or charged massive scalar fields \cite{KonoplyaZhidenko}] \label{conjecturemassive} For smooth, regular, generic admissible data for which the black hole is non-empty, we have, in the massive  case $m^2 \neq 0$, $q_0 \in \RR$: 
	$$ |\phi|_{|\mathcal{H}^+}(v) \sim |\sin|( mv + o(v) )\cdot v^{-\frac{5}{6}}, \hskip 5 mm |D_v \phi|_{|\mathcal{H}^+}(v) \sim |\sin|( mv + o(v) )\cdot v^{-\frac{5}{6}},$$ where $v$ is the standard advanced time null coordinate defined by the gauge condition \eqref{gauge2}.
\end{conjecture}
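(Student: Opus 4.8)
} The plan is to mimic the strategy used in the massless charged case \cite{Moi2}: first reduce to a \emph{linear} analysis of the Klein--Gordon equation, coupled to the Maxwell field, on a fixed sub-extremal Reissner--Nordstr\"{o}m exterior, and then recover the full nonlinear statement by a bootstrap. One starts from a nonlinear exterior stability theory --- the analogue of Theorem \ref{decaychargedext}, upgraded by combining the methods of \cite{Moi2} with those of Luk--Oh \cite{JonathanStabExt} --- showing that along and near $\mathcal{H}^+$ the metric and the asymptotic charge converge quantitatively to their Reissner--Nordstr\"{o}m values. To leading order $\phi$ then solves $\Box_{g_{RN}}\phi - 2iq_0 A^{RN}_\mu \nabla^\mu\phi - (q_0^2 |A^{RN}|^2 + m^2)\phi = N[\phi]$, with $N[\phi]$ a quadratic-and-higher error bounded in terms of the decay of $\phi$ itself; in spherical symmetry this is a single radial wave equation $\partial_t^2\psi - \partial_{r_*}^2\psi + V(r_*)\psi = (\text{error})$ in which, crucially, $V(r_*) \to m^2 > 0$ as $r_* \to +\infty$, while $V(r_*) \to 0$ exponentially as $r_* \to -\infty$.

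The analytic heart is the threshold behaviour of the resolvent. Taking the Fourier transform in the static time $t$ and writing $R(\omega) = (-\partial_{r_*}^2 + V - \omega^2)^{-1}$, one represents $\psi$ on $\mathcal{H}^+$ by a contour integral $\int e^{-i\omega t}[R(\omega)\hat{F}](r_*)\,d\omega$. Because $V$ tends to the positive constant $m^2$ (rather than to $0$, as in the massless case), $R(\omega)$ is regular for $\mathrm{Im}\,\omega > 0$ but has branch points at $\omega = \pm m$, with the branch cut running along $\{\omega^2 \le m^2\}$; deforming the contour onto the cut, the late-time tail is governed by the local structure of the transmission coefficient $\mathfrak{T}(\omega)$ near $\omega = \pm m$. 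The decisive feature is the long-range (attractive, Coulomb-type) tail $V(r_*) = m^2 - 2Mm^2/r_* + O(r_*^{-2})$ forced by the mass term on a curved background --- to which the electromagnetic coupling $q_0 e/r_*$ only contributes an additional $1/r_*$ correction, merely shifting the effective Sommerfeld parameter, which is why the same $v^{-5/6}$ law is predicted with or without charge. Setting $k = \sqrt{\omega^2 - m^2}$ and matching the near-horizon region to the intermediate ($r \sim M$) region and then to the oscillatory far region $kr_* \gtrsim 1$, one finds $\mathfrak{T}(\omega) \propto k$ at threshold (the generic linear vanishing), accompanied by a Coulomb phase $\sim (m^2 M/k)\log k$. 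In the contour integral the competition between the kinetic phase $\omega t \approx mt + (k^2/2m)t$ and this Coulomb phase places the stationary point at $k_{\mathrm{sp}} \sim t^{-1/3}$, where a stationary-phase evaluation contributes $t^{-1/2}$; together with the amplitude factor $\mathfrak{T} \propto k_{\mathrm{sp}} \sim t^{-1/3}$ this produces the rate $t^{-1/2-1/3} = t^{-5/6}$, while the factor $e^{\mp imt}$ survives as the oscillation $\sin(mt + o(t))$ (the $o(t)$ being the logarithmic Coulomb phase shift). Passing from $t$ to the advanced time $v$ on $\mathcal{H}^+$ --- where they differ by a bounded amount --- gives the claimed $v^{-5/6}$ law for $|\phi|$; and because passing to $D_v\phi$ only brings down a bounded factor $\omega \approx m$ (equivalently, the oscillation prevents the customary gain of one power upon integrating $D_v\phi$ along $\mathcal{H}^+$), $|D_v\phi|$ obeys the same rate.

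I expect three main obstacles. First, the \emph{sharp lower bound}: the $\sim$ is two-sided, and the lower bound forces a genericity statement --- non-vanishing of the coefficient of the $v^{-5/6}$ term, a quantity determined by the Cauchy data through the full nonlinear exterior evolution --- which is precisely what is still missing even in the massless charged case, where only the upper bound of Theorem \ref{decaychargedext} is known. Second, the attractive sign of the Coulomb tail turns $\omega = \pm m$ into an accumulation point of quasi-bound states, so $\mathfrak{T}(\omega)$ carries an essential (Gamow-type) singularity there rather than a purely algebraic one; handling this while extracting the tail is delicate, and it is the reason the massive analysis is genuinely harder than the massless one. Third, \emph{closing the nonlinearity}: once one leaves the blue-shift-free region, the error $N[\phi]$ and the Maxwell back-reaction must be shown to decay strictly faster than the very slow rate $v^{-5/6}$, so that the linear tail is not overwhelmed --- marginal, since the relevant source terms are themselves only borderline integrable. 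It is this combination --- sharp two-sided control of a slowly decaying, oscillating tail through a nonlinear, long-range system --- that keeps the statement a conjecture rather than a theorem.
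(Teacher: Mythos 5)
The statement you were asked about is not proved anywhere in the paper: it is stated as Conjecture \ref{conjecturemassive}, attributed to the heuristic analyses of \cite{KoyamaTomimatsu}, \cite{KonoplyaZhidenko} and the numerics of \cite{BurkoKhanna}, and the paper uses it only as motivation for the decay assumptions (Assumption \ref{fieldevent} and the lower bound \eqref{lowerhyp}) in Theorem \ref{previous}. So there is no proof of the paper to measure your argument against, and your text is correctly framed as a roadmap rather than a proof.

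As a roadmap it is faithful to the cited literature: the $v^{-5/6}$ oscillatory tail is indeed traced, in \cite{KoyamaTomimatsu} and related heuristics, to the branch points of the resolvent at $\omega=\pm m$ together with the long-range $\sim 2Mm^2/r_*$ correction to the effective potential forced by the mass term, and your bookkeeping (threshold amplitude $\propto k$, Coulomb phase competing with $k^2t/2m$, stationary point $k_{\mathrm{sp}}\sim t^{-1/3}$, yielding $t^{-1/2-1/3}=t^{-5/6}$ modulated by $\sin(mt+o(t))$, with $|D_v\phi|$ losing no power because differentiation only brings down $\omega\approx m$) reproduces that computation, including the observation that the charge coupling only shifts the effective Sommerfeld parameter and so does not change the exponent. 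Your list of obstructions is also the right one and matches the paper's own caveats: even in the massless charged case only the upper bound of Theorem \ref{decaychargedext} is known (on a fixed Reissner--Nordstr\"{o}m background, for small charge), sharp lower bounds and genericity of the leading coefficient are open, the accumulation of quasi-bound states at $\omega=\pm m$ makes the threshold analysis genuinely harder than in the massless case, and closing a nonlinear bootstrap around a tail as slow and oscillatory as $v^{-5/6}$ is borderline. The one thing to be careful about is not to let the resolvent/stationary-phase sketch read as more than a heuristic: none of the steps (uniform resolvent bounds up to the cut in the presence of the essential singularity, the two-sided nature of $\sim$, the nonlinear error estimates) is carried out, which is exactly why the statement remains a conjecture both in the paper and after your proposal.
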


\subsubsection{Weak Cosmic Censorship and the spherical trapped surface conjecture}

In addition to the Strong Cosmic Censorship, one of the most discussed open problems in General Relativity is the Weak Cosmic Censorship Conjecture. Its statement is that ``naked'' singularities are non generic. A ``naked singularity'' can be defined in modern terms as a space-time for which null infinity $\mathcal{I}^+$ is incomplete: we can then formulate the conjecture: \begin{conjecture}[Weak Cosmic Censorship Conjecture for the Einstein--Maxwell--Klein--Gordon model] \label{WCC}
	Among all the data admissible from Theorem \ref{oneendedapriori}, there exists a generic sub-class for which $\mathcal{I}^+$ is complete.
\end{conjecture}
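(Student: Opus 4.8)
The plan is to adapt Christodoulou's strategy for the Einstein-scalar-field system to the charged model and to the a priori boundary structure of Theorem~\ref{oneendedapriori}. The first step is to translate the analytic statement ``$\mathcal{I}^+$ is complete'' into a geometric statement about the Penrose diagram of Figure~\ref{Fig1}. One shows that if $\mathcal{I}^+$ is incomplete then there is a \emph{first singularity at the center}: either the center end-point $b_{\Gamma}$ is a singular limit point of $\Gamma$ from which a null ray escapes to $\mathcal{I}^+$, or one of the null segments $\mathcal{S}^1_{\Gamma}$, $\mathcal{CH}_{\Gamma}$ emanating from $b_{\Gamma}$ is non-empty and visible from null infinity. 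Conversely --- a soft argument combining an extension principle of Kommemi type with the shielding effect of a trapped surface (Dafermos) --- if the center stays regular until a trapped surface forms on each sufficiently late ingoing cone, then $\mathcal{I}^+$ is complete. So the conjecture reduces to showing that the set of admissible data whose maximal development contains such a naked first singularity at $\Gamma$ is non-generic.

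The second step is the analytic core: a \emph{trapped-surface-formation criterion near the center}. One wants to show that if the gauge-invariant incoming scalar-field energy flux through a short ingoing cone close to $\Gamma$ is large compared with the area radius there (a mass-to-radius condition), then a trapped surface forms on that cone, precluding a naked singularity. Because the Maxwell charge vanishes at a regular center and is bounded by the scalar energy, the Maxwell stress-energy is a lower-order correction to the uncharged problem in a neighborhood of $\Gamma$; likewise the Klein--Gordon mass term $m^2 r^2 \phi$ is negligible as $r \to 0$. Thus Christodoulou's short-pulse hierarchy and his BV/characteristic theory for the renormalized scalar field should extend, with the charge and mass terms absorbed as error terms.

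The third and decisive step is the genericity argument. Given admissible data whose development contains a naked first singularity at $\Gamma$, one constructs a one-parameter family of perturbations --- for instance rescaling the amplitude of $\phi$, or inserting a compactly supported incoming pulse just inside the ``last ray'' from the singularity --- whose developments all satisfy the criterion of Step two and hence form trapped surfaces, so they are censored. Formalized via a Baire-category argument in a suitable topology on the admissibility class of Theorem~\ref{oneendedapriori}, this exhibits the censored data as containing an open dense (indeed, co-infinite-dimensional in the appropriate sense) subset, which is Conjecture~\ref{WCC}. Turning on $q_0 \neq 0$ should not stabilize the naked singularity, since the repulsive charge effect only manifests at large $r$, away from the center.

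I expect Step three to be the main obstacle, particularly in the massive case $m^2 \neq 0$: Christodoulou's machinery rests on the exact scaling symmetry of the massless uncharged wave equation, which is broken both by $m^2$ and by the Maxwell coupling. One would need an approximate self-similarity valid in the near-center regime $r \to 0$, where both the mass term and the charge become negligible and the scaling is asymptotically restored, or else a new compactness argument for the moduli space of near-center solutions replacing the explicit self-similar profiles. A secondary difficulty is choosing the function space on initial data in which both the trapped-surface-formation map and its complement are well-behaved, and making the resulting notion of ``generic'' compatible with the admissibility conditions of Theorem~\ref{oneendedapriori}.
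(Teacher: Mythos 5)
There is a fundamental mismatch here: the statement you are trying to prove is stated in the paper as an \emph{open conjecture}, not a theorem. The paper records that Conjecture~\ref{WCC} was settled by Christodoulou only in the special case $F\equiv 0$, $m^2=0$, that it ``is still an open problem in general,'' and indeed that a completely different approach from anything in the paper would be required, since the only assumption used in the paper (decay of the scalar field on the event horizon) is causally disconnected from the region near $b_{\Gamma}$ where a naked singularity would originate. The paper's relation to Conjecture~\ref{WCC} is purely conditional: it formulates the stronger spherical trapped surface conjecture (Conjecture~\ref{trappedsurfaceconj}), assumes $\mathcal{CH}_{\Gamma}=\emptyset$ as a hypothesis in Theorem~\ref{rough1cond}, and never attempts a proof of either. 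So there is no paper proof to compare against, and what you have written cannot be graded as a correct alternative proof of the statement.

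Judged on its own terms, your text is a research program rather than a proof, and you essentially concede this: the decisive genericity step (your Step three) is missing, and you correctly identify that Christodoulou's machinery rests on the exact scaling symmetry and the monotonicity structure of the massless, uncharged system, both of which are broken by $q_0\neq 0$ and $m^2\neq 0$. Two further concrete gaps: (i) your Step two treats the Maxwell and mass terms as ``lower-order corrections'' near $r=0$, but the trapped-surface-formation and instability theorems of Christodoulou are not perturbative statements --- they require sharp structure (BV theory, self-similar comparison solutions, and monotonicity of the mass ratio) whose survival under these couplings is precisely the open problem; asserting absorbability of the error terms is not an argument. (ii) Christodoulou's genericity is not a Baire-category/open-dense statement in a data topology of the kind you invoke; it is a codimension-type exceptionality statement along specific incoming rays in BV, and transplanting even the formulation of ``generic'' to the admissibility class of Theorem~\ref{oneendedapriori} is itself unresolved (the paper deliberately leaves the notion of genericity in Conjectures~\ref{WCC} and \ref{trappedsurfaceconj} unspecified). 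In short, the proposal identifies the right landmarks but does not close, or even substantially narrow, the gap that makes this a conjecture.
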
 
Conjecture \ref{WCC} was solved in the special case $F \equiv0$, $m^2=0$ in the monumental series of Christodoulou \cite{Christo1}, \cite{Christo2}, \cite{Christo3}, but is still an open problem in general. His proof of Weak Cosmic Censorship relies on a local approach near a singular $b_{\Gamma}$. Christodoulou proves in the special case $F \equiv0$, $m^2=0$ the general statement that a sequence of trapped surfaces must asymptote to $b_{\Gamma}$. We formulate the analogous result in the charged case as a conjecture, directly implying Conjecture \ref{WCC}: 
\begin{conjecture}[Spherical trapped surface conjecture, as formulated in \cite{Kommemi}] \label{trappedsurfaceconj}
	Among all the data admissible from Theorem \ref{oneendedapriori}, there exists a generic sub-class for which if the maximal future development has $\mathcal{Q}^+ \cap J^{-}(\mathcal{I}^+) \neq \emptyset$, then the apparent horizon $\A$ has a limit point on $b_{\Gamma}$. If that is the case, we further conjecture that $\mathcal{S}_{\Gamma}^{1}= \mathcal{CH}_{\Gamma} =\mathcal{S}_{\Gamma}^{2}=\emptyset$.\color{black}
\end{conjecture}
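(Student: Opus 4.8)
Since the present paper already shows that the component $\CH$ of the boundary emanating from time-like infinity is always trapped (Theorem~\ref{trappedrough}) and $C^{2}$-inextendible (Theorem~\ref{rough1}), the content of Conjecture~\ref{trappedsurfaceconj} is exactly the one-ended obstruction remaining at the center endpoint $b_{\Gamma}$: the possible ``locally naked'' outgoing null boundary $\mathcal{S}^{1}_{\Gamma}\cup\mathcal{CH}_{\Gamma}\cup\mathcal{S}^{2}_{\Gamma}$. I would split the argument into (a) an unconditional \emph{trapped surface formation} statement near a center singularity, yielding that $\A$ has a limit point on $b_{\Gamma}$ whenever the center terminates at a genuine singular point, and (b) an \emph{instability/genericity} statement showing that the data for which $\mathcal{S}^{1}_{\Gamma}\cup\mathcal{CH}_{\Gamma}\cup\mathcal{S}^{2}_{\Gamma}\neq\emptyset$ (equivalently, $\mathcal{I}^{+}$ incomplete) form a non-generic subclass; the second assertion of the conjecture is then part of (b).

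For (a), the model to follow is Christodoulou's trapped surface theorem for the Einstein--(uncharged)-scalar-field system \cite{Christo1,Christo2,Christo3}, adapted to Einstein--Maxwell--Klein--Gordon. The new point is to control the Maxwell charge $Q$ near the center: the transport equation for $Q$ together with the regularity condition $r_{|\Gamma}=0$ forces $Q\to 0$ along $\Gamma$ as one approaches $b_{\Gamma}$, so that near the center the Maxwell energy-momentum is subdominant and, through the charge-renormalised Hawking mass $\rho$, actually contributes with a favourable sign. Concretely, working with $\frac{2\rho}{r}-1$ as in \eqref{staticrough} (or with $\frac{2\rho}{r}-1+\frac{Q^{2}}{r^{2}}$), I would show that a sufficiently concentrated ingoing scalar pulse arriving near $b_{\Gamma}$ drives $\frac{2\rho}{r}$ above $1$ on an outgoing cone, producing trapped spheres arbitrarily close to $b_{\Gamma}$ and hence a limit point of $\A$ there. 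The mass term $m^{2}\neq 0$ in \eqref{5} contributes only lower-order zeroth-order terms, harmless at the scale of the center.

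Step (b) is the genuine obstacle. Following Christodoulou's blueprint, one would have to classify the admissible local behaviours at $b_{\Gamma}$ --- in particular the self-similar/scale-invariant profiles that could support a locally naked singularity --- and then show, by a perturbation argument in a suitable function space and with a Baire-category (or finite-codimension) notion of genericity, that any such datum can be perturbed so that a trapped surface forms strictly before the would-be singular outgoing ray, which destroys $\mathcal{CH}_{\Gamma}$ and forces $\mathcal{I}^{+}$ to be complete. The difficulty is structural: the gauge coupling $q_{0}A$ in \eqref{5} produces a scale-critical (inverse-square) potential which changes the self-similar structure as $r\to 0$, there is no known charged analogue of Christodoulou's explicit critical-collapse solutions, and --- as recalled in Section~\ref{mypreviouswork} --- even the linear theory for such scale-critical potentials is only partially understood \cite{scalecrit1,scalecrit2,scalecrit3,scalecrit4}. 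Thus the hard part is to build the local theory of Einstein--Maxwell--Klein--Gordon near the center --- essentially a partial BV/scattering theory at $b_{\Gamma}$ --- sharp enough to run the instability argument; this is precisely the analytic gap that keeps Weak Cosmic Censorship open for this model.

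Finally, on the generic subclass produced by (b) one has $\mathcal{S}^{1}_{\Gamma}=\mathcal{CH}_{\Gamma}=\mathcal{S}^{2}_{\Gamma}=\emptyset$ and, by (a), $\A$ has a limit point on $b_{\Gamma}$; combining with the trapped neighbourhood of $\CH$ from Theorem~\ref{trappedrough} pins down the global structure of $\A$, and by Theorem~\ref{rough1cond} the maximal development is then $C^{2}$-future-inextendible, i.e.\ Conjecture~\ref{C2SCC} holds in the one-ended case.
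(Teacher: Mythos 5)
The statement you are addressing is not a theorem of this paper: Conjecture~\ref{trappedsurfaceconj} is stated precisely as a \emph{conjecture} (Kommemi's spherical trapped surface conjecture), and the paper offers no proof of it. On the contrary, the paper explicitly points out that its only hypothesis is a decay assumption on the event horizon, and that, by causality, this cannot constrain the behaviour of the solution near $b_{\Gamma}$; it states that ``a completely different approach would be required'' to resolve the conjecture. So there is no proof in the paper to compare yours against, and your text should not be presented as a proof: what you have written is a research outline, and by your own admission its central step is open.

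Concretely, the gaps are the following. In step (a), the adaptation of Christodoulou's trapped surface formation argument to Einstein--Maxwell--Klein--Gordon is not a routine modification: Christodoulou's mechanism relies on BV estimates and sharp structure specific to the massless, uncharged field, while here the gauge coupling produces a scale-critical (inverse-square type) term whose behaviour as $r \to 0$ is not controlled by any existing theory, and the massive term, although formally lower order, obstructs the scaling arguments. Moreover your claim that $Q \to 0$ towards $b_{\Gamma}$ with a ``favourable sign'' is unsubstantiated: in the identity $1-\mu = 1 - \frac{2\varpi}{r} + \frac{Q^{2}}{r^{2}}$ the charge term \emph{opposes} trapped surface formation unless one proves $Q = o(r)$ along the relevant cones, which near a (possibly singular) $b_{\Gamma}$ is exactly the kind of quantitative local statement that is missing; near a \emph{regular} centre it holds, but the conjecture concerns the terminal point of $\Gamma$, where regularity is precisely what fails. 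Step (b) --- the genericity/instability statement, including a classification of self-similar or critical behaviours at $b_{\Gamma}$ for the charged model and a perturbation argument in a suitable topology --- is, as you yourself say, entirely open; since the second assertion of the conjecture ($\mathcal{S}^{1}_{\Gamma} = \mathcal{CH}_{\Gamma} = \mathcal{S}^{2}_{\Gamma} = \emptyset$) rests on it, no part of the conjecture is actually established by your argument. Your final paragraph, combining the conjectured conclusion with Theorem~\ref{trappedrough} and Theorem~\ref{rough1cond}, is consistent with how the paper uses the conjecture conditionally, but it is a consequence of assuming the conjecture, not evidence for it.
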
\begin{rmk}
The statement $\mathcal{S}_{\Gamma}^{1}= \mathcal{CH}_{\Gamma} =\mathcal{S}_{\Gamma}^{2}=\emptyset$ corresponds to the absence of a ``locally naked singularity'' emanating from $b_{\Gamma}$, the end-point of the center of symmetry. This statement is slightly stronger than Conjecture \ref{WCC}.
\end{rmk}

This conjecture is important for the present manuscript, as the main assumption of our result in Theorem \ref{rough1cond} is that $\mathcal{CH}_{\Gamma}=\emptyset$. However, Conjecture \ref{trappedsurfaceconj} is related to the behavior of space-time in the vicinity of $b_{\Gamma}$, therefore, by causality, that behavior cannot be influenced by the late time tail on the event horizon, which is our only assumption. Therefore, a completely different approach would be required to solve Conjecture \ref{trappedsurfaceconj} -- together with Conjecture \ref{WCC} -- and show that the assumption of Theorem \ref{rough1} is indeed satisfied generically.  

\subsubsection{The breakdown of weak null singularities and the r=0 singularity conjecture}

Another interesting problem is to characterize the singularities in the black hole interior during gravitational collapse. In the present paper, we focus on the Cauchy horizon and proved the presence of a global weak null singularity under assumptions conjectured to be generic. With a different focus, the author has also proven in \cite{r=0} that, during gravitational collapse -- i.e.\ for one-ended solutions as in Theorem \ref{oneendedapriori} -- the weakly singular Cauchy horizon \underline{necessarily} breaks down:
\begin{thm}[Breakdown of weak null singularities, \cite{r=0}] \label{breakdown} For initial data as in Theorem \ref{oneendedapriori}, assume there exists \textbf{one} trapped cone reaching $\CH$ on which the Hawking mass $\rho$ blows up, while the matter fields are bounded.
	Then $$\mathcal{S}^1_{\Gamma} \cup \mathcal{CH}_{\Gamma} \cup  \mathcal{S}^2_{\Gamma} \cup  \mathcal{S}  \neq \emptyset,$$ i.e.\ $\CH \cup \mathcal{S}_{i^+}$ cannot close off the space-time at $b_{\Gamma}$, in particular the Penrose diagram of Figure \ref{Fig3} is impossible.
\end{thm}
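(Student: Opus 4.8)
The plan is to argue by contradiction. Suppose the conclusion fails, i.e.\ $\mathcal{S}^1_\Gamma = \mathcal{CH}_\Gamma = \mathcal{S}^2_\Gamma = \mathcal{S} = \emptyset$. Then by Theorem \ref{oneendedapriori} the future boundary of $\mathcal{Q}^+$ (away from $\mathcal{I}^+$, $i^0$) reduces to $\Gamma \cup \{b_\Gamma\} \cup \CH \cup \{i^+\} \cup \mathcal{S}_{i^+}$, the area-radius $r$ extends continuously and equals $0$ at $b_\Gamma$ — approached both along $\Gamma$ and along $\CH \cup \mathcal{S}_{i^+}$ — and the charge $Q$, being continuous and identically $0$ on $\Gamma$ (no point charge sits at the regular center), also vanishes at $b_\Gamma$; this is the configuration of Figure \ref{Fig3}. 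Fix a double null gauge in which $\CH = \{v = v_\infty\}\cap\{u\in[u_{i^+},u_{b_\Gamma})\}$, let $C_{u_1}$ be the given trapped cone with $\rho \to +\infty$ as $v\to v_\infty$, and write $\varpi = \rho + \tfrac{Q^2}{2r}$ for the renormalised Hawking mass.

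\emph{Step 1: propagation of the blow-up along $\CH$ up to $b_\Gamma$.} By Theorem \ref{trappedrough}, $\CH$ admits a trapped neighbourhood, so for every $u\in[u_1,u_{b_\Gamma})$ there is $v_*(u)<v_\infty$ with $\{u\}\times(v_*(u),v_\infty)$ trapped. In the trapped region ($\nu<0$, $\lambda<0$) the dominant energy condition gives the monotonicity $\partial_u\varpi\ge 0$ and $\partial_v\varpi\ge 0$ (as in \cite{Kommemi}). Since $r_{\CH}(u_1)>0$ and $Q$ is bounded, $\rho\to+\infty$ along $C_{u_1}$ is equivalent to $\varpi\to+\infty$; monotonicity in $u$ propagates this to $\lim_{v\to v_\infty}\varpi(u,v)=+\infty$, hence — as $r_{\CH}(u)>0$ for $u<u_{b_\Gamma}$ — to $\lim_{v\to v_\infty}\rho(u,v)=+\infty$, for every $u\in[u_1,u_{b_\Gamma})$. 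Thus the weak null singularity occupies the whole of $\CH$ abutting $b_\Gamma$.

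\emph{Step 2: contradiction in the corner at $b_\Gamma$.} I would work in $\mathcal{N} := J^-(b_\Gamma)\cap J^+(C_{u_1})\cap\{v\le v_\infty\}$, which pinches off to the single point $b_\Gamma$: the outgoing cones $C_u$ with $u$ close to $u_{b_\Gamma}$ run from $q_u := C_u\cap\Gamma$ (where $r=0=Q$) to $\CH$, both endpoints converging to $b_\Gamma$, so they are short ($v$-length $\to 0$) and carry vanishing area-radius ($\sup_{C_u} r\to 0$). The idea is to estimate the variation of $\varpi$ along such a cone by integrating $\partial_v\varpi$, using the Raychaudhuri equations $\partial_u(\nu\Omega^{-2})\le 0$, $\partial_v(\lambda\Omega^{-2})\le 0$, the constraint $1-\tfrac{2\varpi}{r}+\tfrac{Q^2}{r^2} = -\tfrac{4\nu\lambda}{\Omega^2}$, the wave equation $\partial_u\partial_v(r^2) = \tfrac{\Omega^2}{2}\bigl(\tfrac{Q^2}{r^2}-1\bigr)-c\,\Omega^2 r^2 m^2|\phi|^2$, and the Maxwell equation: the boundedness of $\phi$ and $D\phi$, the smallness of $r$ and $Q$ near $b_\Gamma$ (with $Q$ vanishing to high order at the regular center, so the charge terms are integrable), and the shrinking length of the cones should force the total variation of $\varpi$ along $C_u$ to remain uniformly bounded as $u\to u_{b_\Gamma}$. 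Since $\mathcal{S}^1_\Gamma=\emptyset$, the center stays regular up to $b_\Gamma$, so $\varpi|_\Gamma$ is bounded near $b_\Gamma$; combined with the bound on the variation this produces a uniform bound for $\varpi$ on $\CH$ near $b_\Gamma$, contradicting Step 1. As $\mathcal{S}^1_\Gamma=\mathcal{CH}_\Gamma=\mathcal{S}^2_\Gamma=\mathcal{S}=\emptyset$ was the only hypothesis used, one of these four sets is non-empty, and in particular Figure \ref{Fig3} is impossible.

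\emph{Main obstacle.} The heart of the matter is the variation estimate in Step 2: this far from $i^+$ there is no small parameter, so the bound on $\mathrm{Var}_{C_u}\varpi$ must be extracted purely from the pinching geometry of $\mathcal{N}$, the monotonicity built into the trapped region, the dominant energy condition, and the boundedness of the matter — one must in particular track the coupled behaviour of $r$, $\Omega^2$, $Q$ and $\varpi$ near $b_\Gamma$ and control the Maxwell-source contributions to $\partial_v\varpi$ (e.g.\ $\tfrac{-\nu}{\Omega^2}$ is only known to be monotone in $u$, not bounded) without any a priori decay. Additional bookkeeping is needed to handle the degenerate possibility $\mathcal{S}_{i^+}\neq\emptyset$ (along which $r\equiv 0$) and to make the regularity of $\Gamma$ near $b_\Gamma$ precise under the assumption $\mathcal{S}^1_\Gamma=\emptyset$.
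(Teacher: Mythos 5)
Note first that this paper does not prove Theorem \ref{breakdown}: it is imported from \cite{r=0}, and the only ingredient of that proof contained here is the propagation Lemma \ref{massblowuplemma} (see the remark following it). Your Step 1 rests on a claim that is false for the Einstein--Maxwell--Klein--Gordon model: in the trapped region one does \emph{not} have $\partial_u\varpi\geq 0$, $\partial_v\varpi\geq 0$. By \eqref{massUEinstein}--\eqref{massVEinstein}, $\partial_u\rho$ and $\partial_v\rho$ contain the terms $\left(\frac{m^2}{2}r^2|\phi|^2+\frac{Q^2}{r^2}\right)\nu$ and $\left(\frac{m^2}{2}r^2|\phi|^2+\frac{Q^2}{r^2}\right)\lambda$, which are negative in $\mathcal{T}$, and passing to $\varpi=\rho+\frac{Q^2}{2r}$ adds the unsigned interaction terms $\frac{Q\,\partial_u Q}{r}$, $\frac{Q\,\partial_v Q}{r}$ coming from \eqref{chargeUEinstein}--\eqref{ChargeVEinstein}; the monotonicity you invoke is special to the uncharged, massless model ($Q\equiv e$, $m^2=0$), and its failure here is precisely why this paper develops the quantitative Lemma \ref{massblowuplemma}. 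That lemma is the correct substitute (its hypothesis $|\phi|^2+|Q|\lesssim\log\rho$ on the initial cone is automatic since the matter fields are assumed bounded), and it also yields the trappedness of $[u_1,u_2]\times[v_1',+\infty)$ to the future of the given cone; by contrast your appeal to Theorem \ref{trappedrough} is not licensed, since that theorem assumes the event-horizon decay hypotheses of Theorem \ref{previous}, which are not part of the hypotheses of Theorem \ref{breakdown}.

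More seriously, Step 2 — which is the actual content of the theorem — is not carried out, and the mechanism you propose cannot work as stated. Mass inflation along an outgoing cone occurs \emph{with bounded matter fields}: the dominant contribution to $\partial_v\varpi$ is $\frac{r^2}{2\kappa}|D_v\phi|^2$ with $\kappa^{-1}=\frac{1-\frac{2\rho}{r}}{\lambda}$ blowing up (the blue-shift degeneration of $\Omega^2$ relative to $\lambda$), not a large integrand coming from the matter; consequently an arbitrarily short $v$-interval with small $r$, small $Q$ and bounded $\phi$, $D\phi$ can still carry infinite variation of $\varpi$ — indeed it does on the very cone $C_{u_1}$ you start from. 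So ``pinching geometry + bounded matter + dominant energy condition'' cannot produce the uniform bound on $\mathrm{Var}_{C_u}\varpi$ near $b_{\Gamma}$, and the contradiction with the regular center requires a genuinely different idea (together with a treatment of the degenerate piece $\mathcal{S}_{i^+}$, where $r\to 0$ while the propagated blow-up has only been established on $\CH$). This missing argument is exactly what \cite{r=0} supplies; as written, your proposal reduces to the (repairable) propagation step plus an unproved — and, by the above, unworkable — heuristic for the crucial corner estimate.
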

\begin{figure}[H]
	
	\begin{center}
		
		\includegraphics[width=86 mm, height=75 mm]{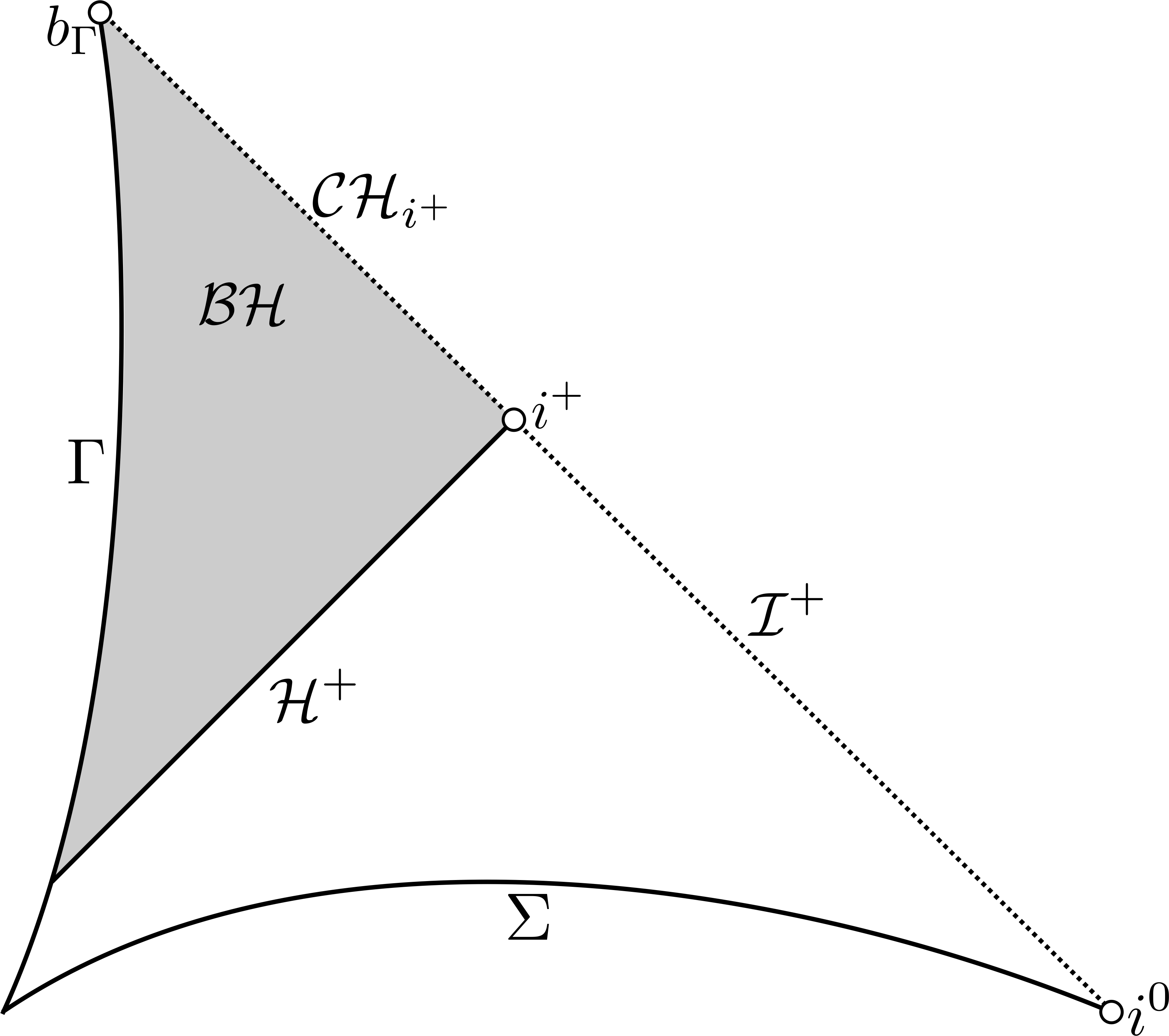}
		
	\end{center}
	
	\caption{Penrose diagram whose existence is disproved in \cite{r=0} if $\CH$ is weakly singular.}
	\label{Fig3}
\end{figure}
This systematic break-down is a global phenomenon and involves the centre of symmetry $\Gamma$: for instance, weak null singularities \underline{do not} systematically break-down for two-ended solutions \cite{Mihalisnospacelike}. Note however that the global structure of two-ended solutions is of little significance to the study of gravitational collapse. Since the weakly singular Cauchy horizon breaks down, what does the rest of the interior boundary look like ? It is often conjectured in the literature that the other part of the boundary is a singularity $\mathcal{S}$ on which $r=0$. We state a version of this conjecture present in \cite{Kommemi}:	\begin{conjecture}[$r=0$ singularity conjecture, as formulated in \cite{Kommemi}] \label{spacelikeconj}
	Among all the data admissible from Theorem \ref{oneendedapriori}, there exists a generic sub-class for which if the maximal future development has $\mathcal{Q}^+ \cap J^{-}(\mathcal{I}^+)\neq \emptyset$, then the Penrose diagram is given by Figure \ref{Figconj} i.e.\ $\mathcal{S} \neq \emptyset$, $\CH \neq \emptyset$ and $\mathcal{S}_{\Gamma}^{1}= \mathcal{CH}_{\Gamma} =\mathcal{S}_{\Gamma}^{2} =\emptyset$.
\end{conjecture}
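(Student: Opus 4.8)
The plan is to split Conjecture \ref{spacelikeconj} into three assertions about the interior boundary of a one-ended collapse spacetime and to isolate which of them are currently accessible. Writing $\mathcal{B}^+$ for the boundary of Kommemi's Theorem \ref{oneendedapriori}, I would aim to show, for data in a suitable generic sub-class with $\mathcal{Q}^+ \cap J^-(\mathcal{I}^+) \neq \emptyset$: (i) $\CH \neq \emptyset$; (ii) the Hawking mass blows up on \emph{some} trapped outgoing cone reaching $\CH$, along which the matter fields remain bounded; and (iii) $\mathcal{S}^1_{\Gamma} = \mathcal{CH}_{\Gamma} = \mathcal{S}^2_{\Gamma} = \emptyset$. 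Granting (i)--(iii) the conjecture follows: (i)--(ii) are exactly the hypotheses of the breakdown theorem \ref{breakdown}, so $\mathcal{S}^1_{\Gamma} \cup \mathcal{CH}_{\Gamma} \cup \mathcal{S}^2_{\Gamma} \cup \mathcal{S} \neq \emptyset$; by (iii) this forces $\mathcal{S} \neq \emptyset$; and since the only boundary components of $\mathcal{B}^+$ that can then be non-empty are $\CH$, $\mathcal{S}_{i^+}$ and $\mathcal{S}$, Theorem \ref{oneendedapriori} constrains $r$ to extend continuously to $0$ on the achronal curve $\mathcal{S}$, after which a short causal argument as in \cite{r=0} yields the pictured ``spacelike'' configuration of Figure \ref{Figconj} closing off the diagram at $b_{\Gamma}$.

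Steps (i) and (ii) are, modulo a quantitative exterior theory, consequences of results already in hand. For (i): by Theorem \ref{C0stab}, $\CH \neq \emptyset$ as soon as $\phi$ decays on $\mathcal{H}^+$ at the rate of Theorem \ref{previous}; Theorem \ref{decaychargedext} supplies this for small $q_0e$ on a Reissner--Nordstr\"{o}m background, and in the non-linear setting the expected rates are Conjecture \ref{chargedconj} (massless) and Conjecture \ref{conjecturemassive} (massive), so one needs a non-linear decay theorem in the exterior producing an upper bound at the conjectured rate. For (ii): the classification Theorem \ref{classificationrough} together with the continuation criterion Theorem \ref{blowuprough} reduce mass blow up to excluding the purely static-type Cauchy horizon, i.e.\ to showing that the ingoing radiation on $\CH$ is not identically zero; this is a \emph{lower}-bound statement, of the type pioneered by Luk--Oh \cite{JonathanStab} for the uncharged model, and it is precisely here (together with generic sub-extremality) that the genericity restriction enters. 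Boundedness of the fields along the cone in (ii) is part of the $C^0$ stability package of \cite{Moi}. Thus a matching pair of exterior decay bounds --- an upper bound for (i), a lower bound for (ii) --- would settle both.

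The main obstacle is (iii), which is essentially the spherical trapped surface conjecture \ref{trappedsurfaceconj}. By finite speed of propagation, the geometry near the center end-point $b_{\Gamma}$ is \emph{not} determined by the late-time tail on $\mathcal{H}^+$ --- the sole hypothesis feeding Steps (i)--(ii) --- so no refinement of the exterior analysis can establish it; a genuinely local study near $b_{\Gamma}$ in the Penrose diagram is unavoidable, in the spirit of Christodoulou's resolution \cite{Christo1, Christo2, Christo3} of the corresponding statement for $F \equiv 0$, $m^2 = 0$. One would want to prove that, generically, any sequence of trapped spheres accumulating at the center accumulates at $b_{\Gamma}$ itself, which excludes a regular outgoing cone $\mathcal{S}^1_{\Gamma}$ and hence the ensuing components $\mathcal{CH}_{\Gamma}$, $\mathcal{S}^2_{\Gamma}$. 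Transplanting Christodoulou's bounded-variation scaling arguments to Einstein--Maxwell--Klein--Gordon --- where the Maxwell field is dynamical rather than static and the charged wave equation carries a scale-critical inverse-square potential (cf.\ the discussion around Theorem \ref{decaychargedext}) --- is the crux, and I expect this step to require substantially new ideas; to the author's knowledge it is entirely open. Once (iii) holds, the final identification of the diagram with Figure \ref{Figconj}, including the status of a (possibly non-empty) $\mathcal{S}_{i^+}$, is soft, using Theorem \ref{oneendedapriori} and the fact from Theorem \ref{breakdown} that $\CH \cup \mathcal{S}_{i^+}$ alone cannot close off the spacetime at $b_{\Gamma}$.
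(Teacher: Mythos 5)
This statement is a conjecture rather than a theorem of the paper, and your treatment of it as such is exactly right: your conditional reduction --- (i) $\CH \neq \emptyset$ and (ii) mass blow-up on a trapped cone feeding Theorem \ref{breakdown}, with (iii) the emptiness of $\mathcal{S}^1_{\Gamma}$, $\mathcal{CH}_{\Gamma}$, $\mathcal{S}^2_{\Gamma}$ isolated as the genuinely open local problem near $b_{\Gamma}$ (Conjecture \ref{trappedsurfaceconj}) --- is precisely the route the paper records, namely Theorem \ref{spaceliketheorem} from \cite{r=0}, which establishes Figure \ref{Figconj} under the exterior decay assumptions plus $\mathcal{S}^1_{\Gamma}=\mathcal{CH}_{\Gamma}=\mathcal{S}^2_{\Gamma}=\emptyset$. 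So your proposal is correct and takes essentially the same approach as the paper.
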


Assuming Conjecture \ref{trappedsurfaceconj} -- a slightly stronger result than Weak Cosmic Censorship -- the author has given a proof of this conjecture in \cite{r=0}. This result comes a consequence of break-down of weak null singularities of Theorem \ref{breakdown}:
\begin{thm}[Generic existence of $r=0$ singularities, \cite{r=0}] \label{spaceliketheorem} Given a one-ended solution $(M,g,F,\phi)$ as in Theorem \ref{oneendedapriori}, we assume that the exterior of the black hole settles down quantitatively towards a sub-extremal Reissner--Nordstr\"{o}m metric and that $\mathcal{S}_{\Gamma}^1=\mathcal{CH}_{\Gamma}=\mathcal{S}_{\Gamma}^2=\emptyset$. Then the Penrose diagram is given by Figure \ref{Figconj}, i.e.\ $ \mathcal{S}  \neq \emptyset$, $\CH \neq \emptyset$, $\mathcal{S}^1_{\Gamma} = \mathcal{CH}_{\Gamma} =  \mathcal{S}^2_{\Gamma}=\emptyset$. \begin{figure}
		
		\begin{center}
			
			\includegraphics[width=107 mm, height=65 mm]{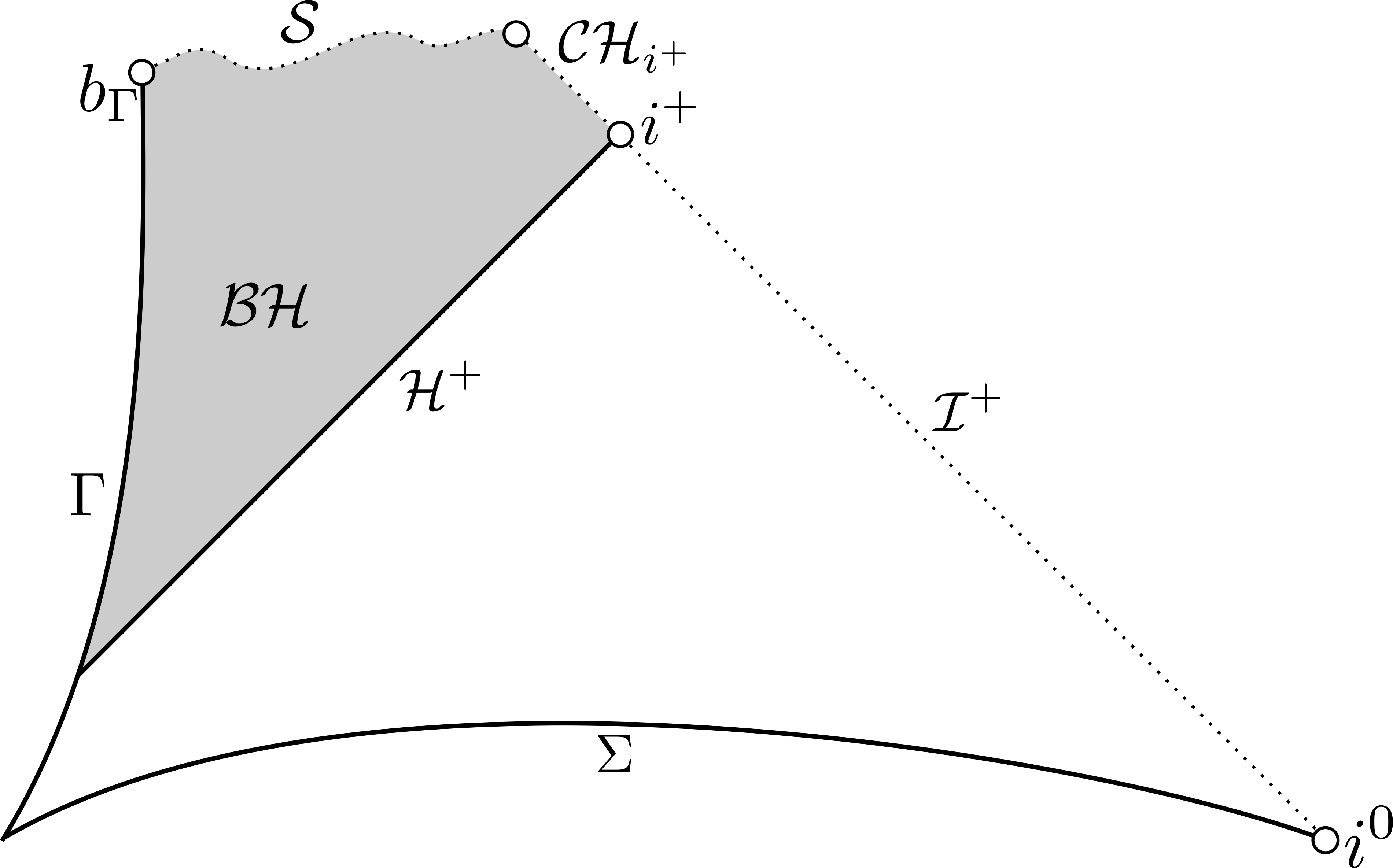}
			
		\end{center}
		
		\caption{\textbf{Generic} Penrose diagram of a one-ended charged black hole under the assumptions of Theorem \ref{spaceliketheorem}, \cite{r=0}.}
		\label{Figconj}
	\end{figure}
\end{thm}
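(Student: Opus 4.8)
The plan is to reduce the statement to the single assertion $\mathcal{S}\neq\emptyset$, to establish this in the non-degenerate cases via the breakdown mechanism of Theorem~\ref{breakdown} fed by the classification of Theorem~\ref{classificationrough}, and to treat the degenerate ``static'' branch separately. First, by Theorem~\ref{C0stab} we have $\CH\neq\emptyset$, and under the hypothesis $\mathcal{S}^1_\Gamma=\mathcal{CH}_\Gamma=\mathcal{S}^2_\Gamma=\emptyset$ the a priori boundary of Theorem~\ref{oneendedapriori} collapses to $\mathcal{B}^+ = i^0\cup\mathcal{I}^+\cup i^+\cup\CH\cup\mathcal{S}_{i^+}\cup\mathcal{S}\cup b_\Gamma$. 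Hence the claim ``the Penrose diagram is Figure~\ref{Figconj}'' is exactly the assertion that $\mathcal{S}\neq\emptyset$, together with the two facts already recorded. I would argue by contradiction: if $\mathcal{S}=\emptyset$, then $\CH\cup\mathcal{S}_{i^+}$ must close off the spacetime at the center endpoint $b_\Gamma$, i.e.\ the Penrose diagram is precisely the one of Figure~\ref{Fig3}, which Theorem~\ref{breakdown} forbids \emph{provided} there exists a trapped cone reaching $\CH$ carrying mass inflation with bounded matter.

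So the task becomes to produce such a cone. By Theorem~\ref{classificationrough}, $\CH$ is of dynamical, mixed, or static type. If it is of dynamical or mixed type, its dynamical component is non-empty, so there is an outgoing cone $C_{u_0}$ reaching $\CH$ on which the Hawking mass blows up; by the trapped-neighborhood result of Theorem~\ref{trappedrough} a final portion of $C_{u_0}$ lies in $\T$, so $C_{u_0}$ is a \emph{trapped} cone reaching $\CH$ on which $\rho$ blows up. On such a cone $\phi$ and $F$ remain pointwise bounded up to $\CH$: this follows from $r$ extending as a strictly positive function to $\CH$ (Theorem~\ref{oneendedapriori}(1)) together with the quantitative estimates near $\CH$ developed in this paper for Theorems~\ref{classificationrough} and~\ref{blowuprough}; and there is no tension with the $H^1_{loc}$ blow-up of Theorem~\ref{C2instab}, which concerns the integrated energy, not pointwise size. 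Thus Theorem~\ref{breakdown} applies and gives $\mathcal{S}^1_\Gamma\cup\mathcal{CH}_\Gamma\cup\mathcal{S}^2_\Gamma\cup\mathcal{S}\neq\emptyset$; since the first three sets are empty by assumption, $\mathcal{S}\neq\emptyset$, a contradiction. This disposes of the dynamical and mixed cases.

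The genuine obstacle is the static case: $\CH$ isometric to a Reissner--Nordstr\"om Cauchy horizon, with $r\equiv r_- > 0$ and constant Hawking mass along all of $\CH$, so that no cone carries mass inflation and Theorem~\ref{breakdown} is unavailable. Two complementary routes seem possible, and the real work sits here. One is to show that a globally static $\CH$ is directly incompatible with the closing-off scenario: propagating the solution off a late ingoing cone close to $\CH$ down toward $b_\Gamma$, and using the Raychaudhuri equations together with the null/dominant energy condition to control $r$ and the renormalized Hawking mass, one should reach a contradiction with the fact that $r\to r_- > 0$ everywhere on $\CH$ whereas $r\to 0$ both on $\mathcal{S}_{i^+}$ and at $b_\Gamma$ — morally, once the outgoing $\Gamma$-boundaries $\mathcal{S}^1_\Gamma,\mathcal{CH}_\Gamma,\mathcal{S}^2_\Gamma$ are absent, the area radius cannot both stay equal to $r_-$ on $\CH$ and collapse to $0$ at the center. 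The other route exploits that even then $Ric(X,X)$ blows up on $\CH$ (the Corollary following Theorem~\ref{classificationrough}, cf.\ Remark~\ref{remarktransverse}), the blue-shifted transverse radiation being non-trivial, and runs a variant of the breakdown argument of \cite{r=0} driven by this transverse curvature singularity rather than by mass inflation, again to exclude Figure~\ref{Fig3}.

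Granting the static case, $\mathcal{S}\neq\emptyset$ holds unconditionally, and the Penrose diagram is that of Figure~\ref{Figconj}; together with $\CH\neq\emptyset$ and the hypothesis $\mathcal{S}^1_\Gamma=\mathcal{CH}_\Gamma=\mathcal{S}^2_\Gamma=\emptyset$, this is the claim. I expect the hard part to be precisely that static branch: it is the degenerate side of the ``mass inflation / rigidity'' dichotomy, where the standard mass-inflation input to \cite{r=0} is missing, and it concerns the behaviour of the solution near $b_\Gamma$ — a region causally shielded from the only hypothesis at our disposal, the late-time decay on $\mathcal{H}^+$ — so the argument there must be intrinsically local and geometric.
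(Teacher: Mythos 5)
Note first that Theorem \ref{spaceliketheorem} is imported from \cite{r=0}; the present paper does not reprove it, but it does indicate the mechanism: the result ``comes as a consequence of the breakdown of weak null singularities of Theorem \ref{breakdown}'', fed by the mass-inflation results established here (cf.\ the remark after Lemma \ref{massblowuplemma}). Your reduction of the claim to $\mathcal{S}\neq\emptyset$, and your treatment of the dynamical and mixed cases (produce a trapped outgoing cone reaching $\CH$ on which $\rho$ blows up, via Theorems \ref{classificationrough} and \ref{trappedrough}, then invoke Theorem \ref{breakdown} to exclude the closing-off diagram of Figure \ref{Fig3}) is exactly in the spirit of that mechanism. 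However, the proposal has two genuine gaps, the first of which you acknowledge yourself.

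The main gap is the static branch: you do not prove it, you only ``grant'' it after sketching two possible routes, and neither is usable as stated. The first route rests on a supposed incompatibility between $r_{CH}\equiv r_->0$ on $\CH$ and $r\to 0$ at $b_{\Gamma}$ and on $\mathcal{S}_{i^+}$; but Theorem \ref{oneendedapriori} only guarantees that $r$ extends as a strictly positive function on $\CH$ \emph{away from its future endpoint}, and nothing forces $r$ to extend continuously to that endpoint or to $b_{\Gamma}$. A drop of the boundary value of $r$ from $r_-$ to $0$ across the endpoint of $\CH$ is precisely what Kommemi's soft characterization permits, and is why Figure \ref{Fig3} is a priori admissible; ruling it out requires quantitative work near the endpoint, not a continuity argument. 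The second route (a ``transverse-curvature-driven'' variant of the breakdown argument) is speculation: nothing in this paper, and nothing quoted from \cite{r=0}, provides a breakdown criterion based on the blow up of $Ric(V,V)$ alone. A secondary gap is your claim that, in the dynamical/mixed case, the matter fields are pointwise bounded on the chosen trapped cone ``by the quantitative estimates near $\CH$ developed in this paper'': the estimates actually proved here, \eqref{trappedestimate6} and \eqref{trappedestimate8}, only give $|\phi|\lesssim v^{1-s}$ and $|Q|\lesssim v^{2-2s}$, which are \emph{not} bounded when $\tfrac{3}{4}<s<1$ (e.g.\ the conjectured massive rate $s=\tfrac56$ of Conjecture \ref{conjecturemassive}); pointwise boundedness of $\phi$ at the Cauchy horizon is a separate nontrivial result (the subject of \cite{Moi3Christoph}) and cannot simply be read off here, so the hypothesis ``matter fields bounded'' of Theorem \ref{breakdown} is not verified by your argument in the generality claimed.
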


\subsubsection{Other extendibility/inextendibility results} \label{otherext}

\paragraph{Space-like singularities and $C^0$-inextendibility} For the spherically symmetric model of Christodoulou, i.e. \eqref{1}, \eqref{2}, \eqref{3}, \eqref{4}, \eqref{5} in the special case $F \equiv 0$, $m^2=0$, $\mathcal{S}$ is generically the only non-trivial boundary component in the black hole interior and is ``space-like'' \cite{Christo1}, \cite{Christo2}, \cite{Christo3}. It is conjectured in the literature that Christodoulou's space-times are continuously inextendible, i.e.\ that Conjecture \ref{C0SCC} is true for the Einstein-(uncharged)-scalar field model $(F \equiv 0)$. This conjecture is motivated by the presence of the $r=0$ singularity $\mathcal{S}$ which triggers the blow up of certain tidal deformations of every in-falling observers. The only existing result in that direction is due to Sbierski \cite{JanC0} who proved $C^0$ inextendibility of the Schwarzschild solution, which features the same $r=0$ space-like singularity $\mathcal{S}$ as the Christodoulou black holes.

\paragraph{$C^0$-extendibility of the Cauchy horizon} However, it is well known that the (conjectured) $C^0$-inextendibility of Christodoulou's solutions is an artifact of the model, as black holes arising from gravitational collapse are conjectured to possess a Cauchy horizon, due to the repulsive effect of angular momentum --a feature which is absent in Christodoulou's model. Indeed, Dafermos proved the non-emptiness of a Cauchy horizon and its $C^0$-extendibility \cite{Mihalis1}, \cite{MihalisPHD} for the Einstein--Maxwell-(uncharged)-scalar-field model in spherical symmetry,  i.e. \eqref{1}, \eqref{2}, \eqref{3}, \eqref{4}, \eqref{5} in the special case $q_0 = 0$, $m^2=0$: thus Conjecture \ref{C0SCC} is false, see section \ref{unchargedsection}. Later, the author proved in \cite{Moi} that Conjecture \ref{C0SCC} is also false for the spherical collapse of a charged scalar field, i.e.\ \eqref{1}, \eqref{2}, \eqref{3}, \eqref{4}, \eqref{5} in the special case $m^2=0$, under assumptions on the exterior consistent with Conjecture \ref{chargedconj}. The same result was later reached in the massive case $m^2 \neq 0$ by Kehle and the author \cite{Moi3Christoph}, under assumptions on the exterior consistent with Conjecture \ref{conjecturemassive}. We also mention the monumental work of Dafermos and Luk \cite{KerrStab} in which Conjecture \ref{C0SCC} is falsified, for perturbations of Kerr black holes in vacuum, in the absence of symmetry, and under assumptions that are conjectured to hold in the black hole exterior. \color{black}
\paragraph{$C^0$ extendible Cauchy horizons with a null contraction singularity}

While $r=0$ singularities are associated with $C^0$ extendibility, it is often conjectured that Cauchy horizons -- i.e.\ null boundaries on which $r$ is bounded away from zero -- are \textit{always} $C^0$-extendible, as there is no obvious mechanism inducing the blow up of tidal deformations if $r>0$. It is possible to prove that this is true for data with ``a reasonable'' decay rate on the event horizon \cite{Moi4Christoph}. However, for a large class\footnote{Essentially, such data decay weakly and are non-oscillating, so do not obey the asymptotics of Conjecture \ref{conjecturemassive} (non-generic behavior).} of data on the event horizon (conjectured to arise from a non-empty, but non-generic set of regular Cauchy data), the author, in \cite{MoiThesis}, and with Kehle in \cite{Moi4Christoph} discovered a new singularity at the Cauchy horizon, for which there exists no ``$C^0$-admissible'' extension, a notion invented by Moschidis \cite{Moschidis}. This instability, which we call \textit{null contraction} (see \cite{Moi4Christoph}), is at the level of metric components, and is triggered by the point-wise blow up of the scalar field at the Cauchy horizon.

\paragraph{Extendibility results for black holes approaching Schwarzschild or extremality}
The $C^2$-inextendibility results of Theorem \ref{rough2} and Theorem \ref{rough1} only apply when the black hole exterior settles down towards a sub-extremal Reissner--Nordstr\"{o}m space-time, i.e.\ that the black hole charge converges to a non zero and non-extremal value. This situation is conjectured to be generic \cite{Kommemi}. Nevertheless it is interesting to understand what happens both for a black hole converging to Schwarzschild -- i.e.\ when the asymptotic charge is zero -- and for a black hole converging to extremality , as those are limit cases. The author has proved in \cite{MoiThesis} that, if the asymptotic charge is zero then the Cauchy horizon $\CH$ is \underline{empty}, thus $r=0$ on the whole boundary and the space-time is $C^2$-future-inextendible, under the same assumptions as in Theorem \ref{rough2} or Theorem \ref{rough1}. As $r=0$ on the whole boundary, one may even expect that the space-time is also $C^0$-inextendible as in the Schwarzschild case, but this question remains open. In the extremal limit, we mention the result of Gajic and Luk \cite{extremeJonathan} who prove $H^1$ extendibility of the solution, and the absence of a weak null singularity, i.e.\ the finiteness of the Hawking mass. Whether their space-times are inextendible or not in a stronger norm remains an open problem.

\subsection{Methods and strategy of the proof}  \label{outline}

The main objective of the present paper is to prove that $\CH$, the Cauchy horizon emanating from time-like infinity, is $C^2$-future-inextendible (Theorem \ref{rough2}, Theorem \ref{rough1} and Theorem \ref{rough1cond}). There are two known strategies to obtain $C^2$-inextendibility:\begin{itemize}
	\item by the blow-up of the Hawking mass (triggering the blow up of the Kretschmann scalar);
	\item by the blow-up of $Ric(X,X)$, where $X$ is an null radial geodesic vector field which is transverse to $\CH$.
\end{itemize} The Hawking mass does not blow up uniformly, due to the existence of Cauchy horizon of static and mixed type so it cannot be used on its own to prove $C^2$-inextendibility. Nevertheless, an alternative strategy would be to prove the blow up of $Ric(X,X)$ over the ``static parts''  of Cauchy horizons of static or mixed types, and use the blow up of the mass for the other part. We make a different choice and rely on the blow up of $Ric(X,X)$ on the entire Cauchy horizon instead to prove $C^2$-inextendibility in all three cases with the same method. While propagating the blow up of $Ric(X,X)$ over the non-static parts is technically more involved, we also obtain other global properties of the Cauchy horizon in this process, and we derive quantitative estimates which are of independent interest \footnote{e.g.\ they are important for the $C^0$-inextendibility result of \cite{Moi3Christoph}.}.



The $C^2$-inextendibility of $\CH$ results from the classification of the Cauchy horizon into static, mixed or dynamical type and the associated quantitative estimates (Theorem \ref{classificationrough} and its corollary), eventually triggering the blow up of $Ric(X,X)$. 

In turn, the classification relies on the existence of a trapped neighborhood $\mathcal{T}'$ of the Cauchy horizon $\CH$ (Theorem \ref{trappedrough}), as depicted in Figure \ref{Figtrapped1}. Indeed, using the fact that $\mathcal{T}'$ has finite space-time volume (because it is trapped), one can obtain the quantitative estimates responsible for the classification and the blow up of the transverse curvature components.

To prove the existence of a trapped neighborhood of $\CH$, we first establish a breakdown criterion (Theorem \ref{blowuprough}). For this, we define the set of static points $S_0 \subset \CH$ as the set of $u_0 \in \CH$ such that the opposite of \eqref{staticrough} is true i.e.\ \begin{equation} \label{staticconditionintro}
\int_{C_{u_0}} \frac{dr}{\frac{2\rho}{r}-1} =+\infty,
\end{equation} where $C_{u_0}$ is a null cone transverse to $\CH$, $r$ is the area-radius and $\rho$ the Hawking mass. We call $\CH-S_0$ the set of Dafermos points, satisfying the Dafermos condition \eqref{staticrough}. In Theorem \ref{blowuprough}, our breakdown condition (triggering \textit{eventually} \color{black} the blow up of the Hawking mass $\rho$ to the future of  $C_{u_0}$\color{black}) is precisely the statement that $u_0$ is a Dafermos point i.e.\ $u_0 \notin S_0$. \begin{rmk}
	Note that on the Reissner--Nordstr\"{o}m Cauchy horizon, all points are static i.e.\ $S_0=\CH$. Nevertheless, in the dynamical case, it is conjectured that, \textit{generically}, every point in the Cauchy horizon is a Dafermos point i.e.\ $S_0=\emptyset$.
\end{rmk}

Now, we walk the reader through the steps of the paper, starting from the proof of Theorem \ref{blowuprough} to that of Theorem \ref{rough2}.

\begin{enumerate}
	\item \textit{Static points occur only if the radiation is trivial on $\CH$ (section \ref{rigiditysection})} \label{stepstatic}
	
	Let $u_0 \in S_0$, a static point. Then, we prove that there is no radiation on $\CH \cap \{ u\leq u_0\}$ i.e.\ that $\CH \cap \{ u\leq u_0\}$ is isometric to a portion of a Reissner--Nordstr\"{o}m Cauchy horizon. Moreover, we prove that there exists a trapped neighborhood $\mathcal{T}_{u_0+\epsilon}$ of $\CH \cap \{ u\leq u_0+\epsilon\}$, for $\epsilon>0$ and that quantitative estimates hold on $\mathcal{T}_{u_0+\epsilon}$ (Theorem \ref{classificationtheorem} ).
	
	The proof relies on a bootstrap method to extend estimates from $ \{ u \leq u_s\}$, which we have by \cite{Moi} (see section \ref{LB} for a reminder) to a rectangle $[u_s,u_0] \times [v_0,+\infty)$, if $u_0>u_s$ (the easier case $u_0 \leq u_s$ is treated in Lemma \ref{<us}). For this, we define $\mathcal{B}_{v_0} \subset [u_s,u_0]$ as $u \in \mathcal{B}_{v_0}$ if certain (mild) quantitative estimates are valid on $[u_s,u_0] \times [v_0,+\infty)$. For some $v_0$,  $\mathcal{B}_{v_0} \neq \emptyset$ by the estimates in $ \{ u \leq u_s\}$ and we will show that $\mathcal{B}_{v_0}$ is open and closed in $[u_s,u_0]$, hence $\mathcal{B}_{v_0}=[u_s,u_0]$. \begin{enumerate}
		\item We prove that if $u \in \mathcal{B}_{v_0}$, then $\phi \equiv 0$ on $\CH \cap [-\infty,u]$, hence $\CH \cap [-\infty,u]$ is isometric to a portion of Reissner--Nordstr\"{o}m Cauchy horizon; moreover sharper estimates are satisfied on $[u_s,u] \times [v_0,+\infty)$ (Lemma \ref{propagation}).
		\item Then, using these estimates, we show that $[u_s,u] \times [v_0,+\infty)$ is trapped, hence $[u_s,u+\epsilon] \times [v_0,+\infty)$ is also trapped, by openess of the trapped region and the Raychaudhuri equation, for some small $\epsilon>0$ (Lemma \ref{trappedlemma}).
		\item Using that $[u_s,u+\epsilon] \times [v_0,+\infty)$ is trapped, hence has finite and small space-time volume, we prove estimates in this region, which are stronger than the ``original'' mild estimates satisfied on $\mathcal{B}_{v_0}$ (Lemma \ref{finitevolumeestlemma}).
		\item We ``retrieve the bootstrap'': thanks to the estimates, we prove that $\mathcal{B}_{v_0}$ is open and closed, hence $\mathcal{B}_{v_0}=[u_s,u_0]$.
	\end{enumerate}
	We also proved quantitative estimates on the trapped rectangle  $[u_s,u+\epsilon] \times [v_0,+\infty)$; Theorem \ref{classificationtheorem} is then proven.
	\item \textit{A first classification of the Cauchy horizon, by the structure of the static set $\mathcal{S}_0$ (section \ref{classificationsection1} and section \ref{classificationsection2})} \label{stepfirstclassification}
	
	From the Raychaudhuri equation, one can prove immediately that $\mathcal{S}_0$ is a past set: if \eqref{staticconditionintro} holds at $u_0$, then it holds for any $u \leq u_0$. Thus, we introduce the terminology of the classification, with three possible cases (Corollary \ref{threetypes}): \begin{enumerate}
		\item $S_0 =\emptyset$: we then say that $\CH$ is a Cauchy horizon of dynamical type.
		\item  $S_0 = \CH$: we then say that $\CH$ is a Cauchy horizon of static type.
		\item  $S_0 = (-\infty,u_T]$: we then say that $\CH$ is a Cauchy horizon of mixed type and $u_T$ is the transition time.
	\end{enumerate} In the next step, we will relate the dynamical and mixed Cauchy horizons to the blow up of the mass.
	
	\item \textit{``Local'' blow up of the Hawking mass, for dynamical and mixed types (section \ref{localsection})} \label{steplocalblowup}
	
	Using the quantitative estimates of Theorem \ref{classificationtheorem}, we prove that the Hawking mass $\rho$ blows up on \begin{enumerate}
		\item  $\CH \cap \{ u \leq u_s\}$ for some $u_s \in \RR$ if $\CH$ is of dynamical type (Lemma \ref{localmassblowupdynamical}), using estimates from \cite{Moi}.
		\item  $\CH \cap (u_T,u_T+\epsilon]$ for some $\epsilon>0$ if $\CH$ is of mixed type, where $u_T$ is the transition time (Lemma \ref{localmassblowupmixed}). 
	\end{enumerate} 
	
	\item \textit{Propagation of the Hawking mass blow up and proof of Theorem \ref{blowuprough} (section \ref{propagationsection} and section \ref{blowupcriterion})} \label{stepglobalblowup}
	
	We prove that, if the Hawking mass $\rho$ blows up at $u_0 \in \CH$ then $\rho$ blows up for all $u_0 \leq u \in \CH$ (Lemma \ref{massblowuplemma}).
	
	\begin{rmk}
		Lemma \ref{massblowuplemma} is in fact independent of the other results, and can be used alone, as for instance in \cite{r=0}.
	\end{rmk} Thus, using Step \ref{steplocalblowup}, we prove that the Hawking mass $\rho$ blows up on \begin{enumerate}
	\item  $\CH$, if $\CH$ is of dynamical type.
	\item $\CH \cap \{ u > u_T \}$ if $\CH$ is of mixed type, where $u_T$ is the transition time.
\end{enumerate} 
Then, invoking the (preliminary)  classification of Step \ref{stepfirstclassification}, we obtain a proof of Theorem \ref{blowuprough}: if $u_0$ is a Dafermos point i.e.\ $u_0 \notin S_0$, then either $\CH$ is of dynamical type, or $\CH$ is of mixed type and $u_0 >u_T$. In any case, the Hawking mass $\rho$ blows up on $\CH$ at any $u \geq u_0$.

\item \textit{Trapped neighborhood of $\CH$ and proof of Theorem \ref{trappedrough} (section \ref{trappedprop})}

Recall (see section \ref{geometricframework}) that $(u,v) \in \T$ if and only if $2\rho(u,v)>r(u,v)$. Hence, since $r$ is bounded inside the black hole, any null cone $C_{u}$ under $\CH$ is eventually trapped, providing the Hawking mass $\rho$ blows up at $u$. Thus, from Step \ref{stepstatic} and Step \ref{stepglobalblowup}, we construct a trapped neighborhood $\mathcal{T}'$ of $\CH$ as depicted in Figure \ref{Figtrapped1} in the following way: \begin{enumerate}
	\item Using the blow up of $\rho$ on the entire $\CH$, if $\CH$ is of dynamical type,
	\item Using the trapped neighborhood $\mathcal{T}_{u_0+\epsilon}$ of Theorem \ref{classificationtheorem} for all $u_0 \in \CH$, if $\CH$ is of static type,
	\item Using the blow up of $\rho$ on $\CH \cap \{ u >u_T \}$ and the trapped neighborhood $\mathcal{T}_{u_0+\epsilon}$ of Theorem \ref{classificationtheorem} for all $u_0 \in \CH \cap \{ u  \leq u_T \}$, if $\CH$ is of mixed type.
\end{enumerate} Thus, Theorem \ref{trappedrough} is proved.

\item \textit{Quantitative estimates and final classification of the Cauchy horizon, proof of Theorem \ref{classificationrough} (section \ref{quantitative})}

At this stage, we already have quantitative estimates, in particular the blow up of $Ric(X,X)$, on a neighborhood of $S_0$, but nothing on $\CH \backslash S_0 \cap \{ u \geq u_s\}$. While this is sufficient to conclude in the static case, we need a new approach in the mixed and dynamical case, to propagate the local estimate of \cite{Moi}, valid only in the region $\{u \leq u_s \}$. 

For this, we prove that $L^1-L^{\infty}$ estimates are true on any region of finite space-time  (Lemma \ref{finitevolumeestlemma2}). Since $\mathcal{T}'$, the trapped neighborhood of $\CH$, has a finite space-time volume, we propagate the desired estimates (Corollary \ref{trappedprop}).

As a result, we obtain the blow up of $Ric(X,X)$ on the \underline{entire} $\CH$ for all three types (Proposition \ref{Ricciblowupprop}).

Using also Step \ref{stepstatic}, we obtain that $r$ extends continuously to a function $r_{CH}$ on $\CH$ and that \begin{enumerate}
	\item $r_{CH}$ is strictly decreasing, if $\CH$ is of dynamical type,
	\item $r_{CH} \equiv r_->0$ is constant, if $\CH$ is of static type,
	\item $r_{CH}(u) =  r_->0$ for all $u \leq u_T$ and $r_{CH}$ is strictly decreasing on $\CH \cap \{ u >u_T\}$, if $\CH$ is of mixed type.
\end{enumerate}
This ends the classification of the Cauchy horizon into dynamical, static or mixed type and the proof of Theorem \ref{classificationrough}.

\item \textit{$C^2$ inextendibility and proof of Theorem \ref{rough2}, Theorem \ref{rough1} and Theorem \ref{rough1cond} (section \ref{inextproofsection})}

To conclude the proof of $C^2$ inextendibility, we work by contradiction, following closely the strategy of \cite{JonathanStab}.
\begin{enumerate}
	\item If $(M,g)$ is $C^2$-future-extendible,  then no $C^2$ geodesic can cross either of the boundaries $\mathcal{S}$, $\mathcal{S}^1_{\Gamma}$, $\mathcal{S}^2_{\Gamma}$, $\mathcal{S}_{i^+}$ due to the blow up of the Kretschmann scalar when $r=0$ (an argument due to Kommemi \cite{Kommemi}, re-used in \cite{JonathanStab}).
	
	\item Thus, in the two-ended case, one can construct a $C^2$ radial null geodesic of tangent vector $X$ crossing $\CH$ into the extension (Proposition \ref{geomextendJonathan}, originally in \cite{JonathanStab}). This contradicts the blow up of $Ric(X,X)$ (Proposition \ref{Ricciblowupprop}).
	\item One-ended space-times are not necessarily inextendible, due to the presence of $\mathcal{CH}_{\Gamma}$; yet one can still prove that $\CH$ is $C^2$-future-inextendible using the blow up of $Ric(X,X)$ (Proposition \ref{geomextendoneened} and Proposition \ref{inextproponeended}).
	\item Still in the one-ended case, if this obstruction disappears i.e.\ if $\mathcal{CH}_{\Gamma}=\emptyset$, then one can adapt the earlier arguments to prove the $C^2$-future-inextendibility of the space-time (Proposition \ref{oneendedCHgammaprop} and Lemma \ref{lastlemma} originally from \cite{JonathanStab}).
\end{enumerate}


\end{enumerate}

\subsection{Acknowledgments} 	I am very grateful to Jonathan Luk for suggesting this problem and for fruitful discussions. I would like to thank Sung-Jin Oh for this interest in this problem, and for insightful discussions. I am grateful to Mihalis Dafermos and Jonathan Luk for useful comments on the manuscript. Most of this work was completed while I was a Ph.D.\ student at the University of Cambridge, and a visiting graduate student at Stanford University, whose hospitality I gratefully acknowledge. Finally, I would like to thanks two anonymous reviewers for helpful suggestions which helped clarify some aspects of the paper.

\section{Geometric framework} \label{geometricframework}

In this section, we provide the geometric set-up and the definition of various quantities that will be use throughout the paper. We also present the equations and the coordinates that will be used in the proofs.

\subsection{Spherically symmetric initial data set} \label{sphericallysymdatasection}

To obtain a spherically symmetric space-time, we work with spherically symmetric initial data and this symmetry is then transmitted to the solution (c.f. \cite{Kommemi}). Such a strategy is standard, so we only briefly recall some key definitions. 

\begin{defn}
	An initial data set of the Cauchy problem for \eqref{1}, \eqref{2}, \eqref{3}, \eqref{4}, \eqref{5} is given by a septuplet $(\Sigma^{(3)},h_{i j}, K_{i j}, \phi_0, \phi'_0, E_i, B_i)$, where $(\Sigma^{(3)},h_{i j})$ is a three-dimensional Riemannian manifold, $ K_{i j}$ is a symmetric tensor, $\phi_0$, $\phi'_0$ are complex-valued scalar functions on $\Sigma^{(3)}$, and $E_i$ (the electric field), $B_i$ (the magnetic field) are vectors on $\Sigma^{(3)}$.
\end{defn}
\begin{rmk}
	The Cauchy problem for \eqref{1}, \eqref{2}, \eqref{3}, \eqref{4}, \eqref{5} is locally well-posed, see \cite{Kommemi}, \cite{MGHD}: for regular data $(\Sigma^{(3)}, ...)$ satisfying the constraints, there exists a solution $(M,g,\phi,F)$ and a coordinate system $(t,x_1,x_2,x_3)$ such that $$\Sigma^{(3)}= \{t=0\},$$ $$h_{i j}= g_{ | \Sigma^{(3)}},$$ $$K_{i j}=  \mathcal{L}_{\partial_t}  g_{ | \Sigma^{(3)}},$$ $$ \phi_0 = \phi_{| \Sigma^{(3)}}, $$ $$ \phi'_0 = D_t \phi_{| \Sigma^{(3)}}, $$$$E_{i}=  F(\partial_t, \partial_i),$$ $$B_{i}=  *F(\partial_t, \partial_i).$$
\end{rmk}
In fact, there exists a unique globally hyperbolic solution $(M,g,\phi,F)$ which is also maximal (c.f.\ \cite{MGHD} for precise definitions). We call $(M,g,\phi,F)$ the maximal (globally hyperbolic) development of the initial data $(\Sigma^{(3)},h_{i j}, K_{i j}, \phi_0, \phi'_0, E_i, B_i)$.

\begin{defn}[\cite{Kommemi}]
	We say that $(M,g,\phi,F)$, the maximal development of $(\Sigma^{(3)},...)$, is spherically symmetric if \begin{enumerate}\item The Lie group $SO(3)$ acts smoothly by isometry on $(\Sigma^{(3)},h_{i j})$.
		\item The action of $SO(3)$ leaves $K$, $\phi_0$, $\phi'_0$, $E$ and $B$ invariant.
		\item $ \Sigma^{(3)}/ SO(3)$ can be equipped with the structure of a one-dimensional Riemmanian manifold-with-boundary.
		
	\end{enumerate}
\end{defn}

As a result, we obtain a smooth $SO(3)$ action on $(M,g)$ by isometry, with spacelike orbits. Then, one defines the quotient $ M / SO(3) $ which we conformally embed into a bounded subset of $(\RR^{1+1}, m_{\mu, \nu})$. We denote the  embedding as $\mathcal{Q}$ (a bounded subset of $\RR^{1+1}$). We also denote $\mathcal{Q}^+$, the future domain of dependence in $\mathcal{Q}$ of the conformal image of $\Sigma^{(3)}/ SO(3)$ in $\RR^{1+1}$. We also define $\mathcal{B}^+$ as the boundary of $\mathcal{Q}^+$ induced by the ambient $\RR^{1+1}$ (i.e.\ the limit points of $\mathcal{Q}^+$ as a subset of $\RR^{1+1}$)  and subsequently the bounded domain-with-boundary $\overline{\mathcal{Q}^+}:=  \mathcal{Q}^+\cup \mathcal{B}^+ \subset \RR^{1+1}$  (see Proposition 2.1 and its discussion in \cite{Kommemi} for details).

\color{black}
\begin{defn}
	We call $\Pi: M \rightarrow \mathcal{Q}$ the natural projection taking a point to its group orbit. Note that for all $p \in \mathcal{Q}$,  $\Pi^{-1}(p)$ is isometric to a sphere. We then define the area-radius function $r$ on $ \mathcal{Q}$ by the formula $$r(p)= \sqrt{\frac{Area(\Pi^{-1}(p))}{4 \pi}}.$$
\end{defn}

The metric $g$ on $M$ is then given by  \begin{equation} \label{gQ}
g= g_{\mathcal{Q}}+ r^{2}d \sigma_{\mathbb{S}^{2}},
\end{equation} where $d \sigma_{\mathbb{S}^{2}}$ is the standard metric on $\mathbb{S}^{2}$ and $g_{\mathcal{Q}}$ is a Lorentzian metric on $\mathcal{Q}$.

We will denote $\Gamma \subset \mathcal{Q}$, that we call the center of symmetry, the set of fixed points under the $SO(3)$ action on $M$ (which we identify with its image under $\Pi$). Notice that, by definition, $r_{|\Gamma}=0$.
\subsection{Metric in double null coordinates} \label{doublenull}
$g_{\mathcal{Q}}$ defined in \eqref{gQ} is a $1+1$ Lorentzian metric and, as such, is conformally flat: thus, there exists coordinates $(u,v)$, which we call null coordinates, on $(\mathcal{Q},g_{\mathcal{Q}})$ and a function $\Omega^2(u,v)$ such that $$ g_{\mathcal{Q}}:= - \frac{\Omega^{2}(u,v)}{2} (du  \otimes dv+dv  \otimes du)$$ 

In view of this formalism, we consider (abusing notation) the area-radius $r$ as a function $r(u,v)$ on $\RR^2$. In fact, one can use the coordinate system $(u,v,\theta,\varphi)$ on $M$ where $(\theta,\varphi)$ are the standard coordinates on $\mathbb{S}^2$. Thus, \eqref{gQ} becomes $$ g = - \frac{\Omega^{2}(u,v)}{2} (du  \otimes dv+dv  \otimes du)+ r^2(u,v) (d\theta^2+\sin(\theta)^2 d \varphi ^2).$$

\begin{rmk} \label{doublenullremark}
	The choice of null coordinates is not unique: one can renormalize $(u,v)$ into new null coordinates $(\tilde{u},\tilde{v})$ by the identities $d\tilde{u}= f(u) du$, $d\tilde{v}= g(v) dv$ for \textit{any} strictly positive function $f$ and $g$. Notice that, upon this change of coordinate, $\Omega^2$ is also changed by the formula $\tilde{\Omega}^2 d\tilde{u} d\tilde{v} = \Omega^2 du dv$ i.e.\ $\tilde{\Omega}^2(u,v) = f(u)^{-1}  \cdot g(v)^{-1} \cdot  \Omega^2(u,v)$.
\end{rmk}

\begin{rmk} \label{penrosediagramremark}
	To draw a Penrose diagram (see Figures \ref{Fig1}, \ref{Fig5}, \ref{Fig2}, \ref{Figtrapped1}, \ref{Fig3}, \ref{Figconj}), we choose $(u,v)$ to be in $\mathcal{P}$, a bounded subset of $\RR^2$ and we draw $\mathcal{P}$. Since $\mathcal{P}$ and $\mathcal{Q}$ are conformally isometric, it is a standard fact that they have the same causal structure.
\end{rmk}

Now, we will use this coordinate system $(u,v)$ to define important quantities. We abuse notation denote  $F$ the push-forward by $\Pi$ of the original 2-form over $M$, and the same for $\phi$. The spherically symmetric character (c.f.\ \cite{Kommemi}) of $F$ imposes that there exists a scalar function $Q(u,v)$ (independent of the coordinate choice), called the charge, such that  $$ F= \frac{Q}{2r^{2}}  \Omega^{2} du \wedge dv.$$
\begin{rmk}
	In our setting, the scalar field is charged, hence \eqref{4} is an \textit{inhomogeneous} Maxwell equation, and $Q(u,v)$ is a scalar function. This is in contrast with the uncharged case where $Q \equiv e \in \RR$, c.f. remarks \ref{staticMaxwell} and \ref{staticMaxwell2}.
\end{rmk}

Subsequently, we define the Lorentzian gradient of $r$, and introduce the mass ratio $\mu$ by the formula $$ 1-\mu:=g_{\mathcal{Q}}(\nabla r,\nabla r)= \frac{-4 \partial_u r \cdot \partial_v r}{\Omega^2}.$$ $\mu$ is \textit{independent} of the coordinate choice. We define the Hawking	
mass $\rho$ (also independent of the coordinates choice): 

$$ \rho := \frac{\mu \cdot r}{2} =\frac{r}{2} \cdot(1- g_{\mathcal{Q}} (\nabla r, \nabla r )),$$ and the modified mass $\varpi$, the last quantity which is independent of the coordinates choice, also involving the charge $Q$:
\begin{equation} \label{electromass}
\varpi := \rho + \frac{Q^2}{2r}= \frac{\mu r}{2} + \frac{Q^2}{2r} .
\end{equation}	

Now, we introduce notations for coordinate-dependent quantities: the ingoing derivative of $r$ in $(u,v)$ coordinates: \begin{equation} \label{nudef}
\nu(u,v):= \partial_u r (u,v),
\end{equation} followed by the outgoing derivative of $r$ in $(u,v)$ coordinates : \begin{equation} \label{lambdadef}
\lambda(u,v):= \partial_v r (u,v).
\end{equation}
Then, we define $\kappa \in \mathbb{R} \cup \{ \pm \infty\} $ by the following formula, also using the previous notations:
\begin{equation} \label{kappadef}
\kappa = \frac{\lambda}{1-\frac{2\rho}{r}}\color{black}=\frac{-\Omega^2}{4 \nu },
\end{equation} and its ``outgoing'' analogue  $\iota \in \mathbb{R} \cup \{ \pm \infty\} $ \begin{equation} \label{iotadef}
\iota = \frac{\nu}{1-\frac{2\rho}{r}}\color{black}=\frac{-\Omega^2}{4 \lambda }.
\end{equation}

We summarize all the relations between the different quantities:	\begin{equation} \label{murelation}
1-\mu =1-\frac{2\rho}{r} = 1-\frac{2\varpi}{r}+\frac{Q^2}{r^2}=\frac{-4 \nu \lambda}{\Omega^2}= \kappa^{-1} \cdot \lambda = \iota^{-1} \cdot \nu.
\end{equation}



\subsection{The Reissner--Nordstr\"{o}m solution} \label{RNsolution}

The sub-extremal Reissner--Nordstr\"{o}m space-time of mass $M>0$ and charge $ e \in (0,M)$ is a two-ended (see section \ref{topology}) spherically symmetric black hole solving the system \eqref{1}, \eqref{2}, \eqref{3}, \eqref{4}, \eqref{5} for $\phi \equiv 0$, whose metric is given by $$ g_{RN}= -(1-\frac{2M}{r} +\frac{e^2}{r^2})^{-1}  dt^2 + (1-\frac{2M}{r} +\frac{e^2}{r^2})  dr^2 + r^2 (d\theta^2+\sin(\theta)^2 d \varphi ^2),$$ and in the coordinate system $(t,r,\theta,\varphi)$, $\partial_t$ is a time-like Killing vector field. We define null coordinates $u= \frac{r^*-t}{2}$, $v=\frac{r^*+t}{2}$, where $r^*$ is defined by the $\frac{dr^*}{dr}= -(1-\frac{2M}{r} +\frac{e^2}{r^2})^{-1}$. Thus, the metric can be re-written in $(u,v,\theta_,\varphi)$ coordinates as
$$ g_{RN}= -2(1-\frac{2M}{r} +\frac{e^2}{r^2})  (du  \otimes dv+dv  \otimes du)+ r^2(u,v) (d\theta^2+\sin(\theta)^2 d \varphi ^2).$$
The polynomial $r^2-2Mr+e^2=r^2(1-\frac{2M}{r} +\frac{e^2}{r^2})$ admits two distinct roots $r_{\pm}(M,e)= M \pm \sqrt{M^2-e^2}$, as $e \in (0,M)$. Note that $r^*(r_{\pm})= \mp \infty$. The larger root corresponds to a (bifurcate) event horizon $\mathcal{H}^+:= \{ r=r_+\} = \{u+v = -\infty\}$, while the smaller root corresponds to a (bifurcate) Cauchy horizon  $\CH:= \{ r=r_-\} = \{u+v = +\infty\}$. \\ We define $K_{+}(M,e)= \frac{M}{r^2_+}- \frac{e^2}{r_+^3}>0$ the surface gravity of $\mathcal{H}^+$ and $K_{-}(M,e)= \frac{M}{r^2_-}- \frac{e^2}{r_-^3}<0$ the surface gravity of $\CH$.

Defining $\Omega^2_{RN} = -4 (1-\frac{2M}{r} +\frac{e^2}{r^2})$, a standard computation in the black hole interior shows that $$   \Omega^2_{RN}(u,v)=\Omega^2_{RN}(r^*) = C_+ \cdot e^{2K_+ \cdot r^*}+o( e^{2K_+ \cdot r^*})= C_+ \cdot e^{2K_+ \cdot (u+v)}+o(  e^{2K_+ \cdot (u+v)}),$$ as $r^* \rightarrow -\infty$ i.e.\ towards the event horizon $\mathcal{H}^+$, for some \textit{explicit} constant $C_+(M,e)>0$\color{black}. Similarly, we have the following asymptotics as $r^* \rightarrow +\infty$ i.e.\ towards the Cauchy horizon $\CH$, for another \textit{explicit} constant $C_-(M,e)>0$: $$   \Omega^2_{RN}(u,v)=\Omega^2_{RN}(r^*) = C_- \cdot e^{2K_- \cdot r^*}+o(e^{2K_- \cdot r^*})= C_- \cdot e^{2K_- \cdot (u+v)}+o( e^{2K_- \cdot (u+v)}).$$

The Penrose diagram of the Reissner--Nordstr\"{o}m black hole is a particular case of the diagram of Figure \ref{Fig2}, where $\mathcal{S}=\mathcal{S}_{i^+}= \emptyset$, i.e.\ the two Cauchy horizons $\CH$ emanating from each end $i^+$ meet at a bifurcation sphere $(u=+\infty,v=+\infty)$.
Now, we express the quantities defined in section \ref{doublenull} for the Reissner--Nordstr\"{o}m metric. We start with the ingoing and outgoing derivatives of $r$, defined in \eqref{nudef} and \eqref{lambdadef}:
$$ \nu = \lambda = \frac{dr}{dr^*}= 1-\frac{2M}{r} +\frac{e^2}{r^2} = -\frac{\Omega^2_{RN}}{4}.$$
Consequently, the quantities $\kappa$ and $\iota$ defined in \eqref{kappadef}, \eqref{iotadef} obey the following relation $$\kappa = \iota = 1.$$
Moreover, the charge $Q$ and the renormalized mass $\varpi$ are constant (but not the Hawking mass $\rho$): $$ Q \equiv e, \hskip 5 mm \varpi \equiv M,$$ $$ \rho = M -\frac{e^2}{2r}.$$

\subsection{Equations in double null coordinates in spherical symmetry} 	Now, we formulate the equations \eqref{1}, \eqref{2}, \eqref{3}, \eqref{4}, \eqref{5} in any null coordinate system $(u,v)$ as introduced in section \ref{doublenull} (see also \cite{MihalisPHD}, \cite{JonathanStab}, \cite{Kommemi})\color{black}. We start by the wave equation for $r$, recalling the definitions of $\nu$ and $\lambda$ from \eqref{nudef}, \eqref{lambdadef}:
\begin{equation}\label{Radius}\partial_{u}\partial_{v}r = \partial_u \lambda = \partial_v \nu =\frac{- \Omega^{2}}{4r}-\frac{\nu \lambda}{r}
+\frac{ \Omega^{2}}{4r^{3}} Q^2 +  \frac{m^{2}r }{4} \Omega^2 |\phi|^{2} , \end{equation}
and the following reformulation of \eqref{Radius} will be  useful :		
\begin{equation} \label{Radius3}
-\partial_u  \partial_v (\frac{r^2}{2})	=\partial_u (-r  \lambda) =	\partial_v (-r  \nu ) = \frac{\Omega^2}{4}\cdot (1- \frac{Q^2}{r^2}-m^2 r^2 |\phi|^2).
\end{equation}
Now we turn to the wave equation for $\log(\Omega^2)$ :\begin{equation}\label{Omega}
\partial_{u}\partial_{v} \log(\Omega^2)=-2\Re(D_{u} \phi \overline{D_{v}\phi})+\frac{ \Omega^{2}}{2r^{2}}+\frac{2 \nu \lambda}{r^{2}}- \frac{\Omega^{2}}{r^{4}} Q^2,
\end{equation}
which can also be written, combining with \eqref{Radius}:
\begin{equation}\label{Omega3}
\partial_{u}\partial_{v} \log(r\Omega^2)=  \frac{ \Omega^{2}}{4r^{2}} \cdot \left(1-   \frac{3Q^2}{r^{2}} +m^{2}r^2 |\phi|^{2} - 8 r^2\Re( \frac{D_{u}\phi}{\Omega^2} \cdot  D_{v}\bar{\phi})  \right).
\end{equation}
Then we formulate, the ingoing Raychaudhuri equations, recalling the definition of $\kappa$ from \eqref{kappadef}: 
\begin{equation}\label{RaychU}\partial_u( \kappa^{-1})=\frac {4r}{\Omega^{2}}|  D_{u} \color{black}\phi|^{2}, \end{equation}
and the outgoing Raychaudhuri equation, recalling the definition of $\iota$ from \eqref{iotadef}:
\begin{equation} \label{RaychV}\partial_v( \iota^{-1})=\frac {4r}{\Omega^{2}}|  D_{v} \color{black}\phi|^{2},\end{equation}

Now we present the propagation equation for a massive and charged scalar field  (Klein-Gordon wave equation):
\begin{equation}\label{Field}
D_{u} D_{v} \phi =-\frac{ \lambda \cdot D_{u}\phi}{r}-\frac{ \nu \cdot  D_{v}\phi}{r} +\frac{ iq_{0} Q \Omega^{2}}{4r^{2}} \phi
-\frac{ m^{2}\Omega^{2}}{4}\phi,\end{equation}which can also be written in different ways, noticing that $[D_u,D_v]=\frac{ iq_{0} Q \Omega^{2}}{2r^{2}}$:	\begin{equation}\label{Field2}
D_{u}(rD_{v} \phi) =- \lambda \cdot D_u\phi +\frac{ \Omega^{2} \cdot \phi}{4r} \cdot ( i q_{0} Q-m^2 r^2)
, \end{equation}\begin{equation}\label{Field3}
D_{v}(rD_{u} \phi) = - \nu  \cdot D_{v}\phi 
- \frac{ \Omega^{2} \cdot \phi}{4r} \cdot ( i q_{0} Q+m^2 r^2)
. \end{equation}

From the Maxwell equation \eqref{4}, we obtain two null transport equations:
\begin{equation} \label{chargeUEinstein}
\partial_u Q = -q_0 r^2 \Im( \phi \overline{ D_u \phi}),
\end{equation}	\begin{equation} \label{ChargeVEinstein}
\partial_v Q = q_0 r^2 \Im( \phi \overline{D_v \phi}) .
\end{equation}

Now we can reformulate the former equations to put them in a form that is more convenient to use. 

It is interesting to use \eqref{Radius}, \eqref{RaychU}, \eqref{RaychV}, \eqref{chargeUEinstein}, \eqref{ChargeVEinstein} to derive an equation for the Hawking mass :

\begin{equation} \label{massUEinstein}
\partial_u \rho = \frac{r^2}{2\iota}| D_u \phi |^2 +  \left(\frac{m^2}{2} r^2 |\phi|^2+ \frac{Q^2}{r^2} \right) \cdot \nu  ,
\end{equation}
\begin{equation} \label{massVEinstein} 
\partial_v \rho = \frac{r^2}{2\kappa}| D_v \phi |^2+ \left(\frac{m^2}{2} r^2 |\phi|^2+ \frac{Q^2}{r^2} \right) \cdot \lambda.
\end{equation}



\subsection{Electromagnetic gauge choice and gauge invariant estimates}

Even after we fix a double null coordinate system $(u,v)$, an electromagnetic gauge freedom subsist. 

Indeed, since $F=dA$, $F$ is unchanged by the transformation $ A \rightarrow  A+ d f$. In fact, a solution of the system \eqref{1}, \eqref{2}, \eqref{3}, \eqref{4}, \eqref{5} gives rise to another solution under the following gauge transform  $$ \phi \rightarrow  e^{-i q_0 f } \phi ,$$ $$ A \rightarrow  A+ d f.$$ This is because $D_{\mu} \phi$ is transformed according to the formula (coming from an elementary computation): $$ D_{\mu} \phi \rightarrow e^{-i q_0 f }  D_{\mu}\phi,$$ hence $|D_{\mu} \phi|$ and $|\phi|$ are gauge invariant (but not $\phi$, nor $D_{\mu} \phi$!). 

In view of this fact, one can easily derive the following gauge invariant estimates (see \cite{Moi2}):  for all $u_1<u_2$, $v_1<v_2$: $$ |\phi|(u_2,v) \leq |\phi|(u_1,v)+ \int_{u_1}^{u_2} |D_u \phi|(u',v) du',$$
or its analogue with $v$ replacing $u$. For simplicity, we will work in this paper in the vicinity of $\CH$ and under the gauge \begin{equation} \label{electrogauge}
A_v \equiv 0,
\end{equation}\begin{equation} \label{electrogauge2}
A_u(u_0,\cdot) = 0,
\end{equation} for some $u_0 \in \RR$. 
Although the gauge choice is irrelevant for gauge invariant estimates, $ D_v \phi = \partial_v \phi$ in practice by \eqref{electrogauge}. \color{black}


\subsection{Trapped region and apparent horizon}

We define the trapped region $\mathcal{T}$, the regular region $\R$ and the apparent horizon $\A$ using \eqref{murelation}, as
\begin{enumerate}
	\item  \label{characttrapped} $(u,v) \in \T$   if and only if   $1-\frac{2\rho(u,v)}{r(u,v)}<0$ if and only if $\lambda(u,v)<0$ and  $\nu(u,v) <0$,
	\item $(u,v) \in \R$  if and only if   $1-\frac{2\rho(u,v)}{r(u,v)}>0$ if and only if either $\lambda(u,v)>0$ and $\nu(u,v)<0$ or $\lambda(u,v)<0$ and $\nu(u,v)>0$.
	\item $(u,v) \in \A$  if and only if   $1-\frac{2\rho(u,v)}{r(u,v)}=0$ if and only if $\lambda(u,v) \cdot \nu(u,v)=0$.
\end{enumerate}	

Note that, for admissible space-times (in the sense of Definition \ref{admissibilitydef}), the case  $\lambda(u,v)>0$ and $\nu(u,v)>0$ i.e. that $(u,v)$ is an anti-trapped surface \textit{never} occurs.

Notice also, that, for one-ended admissible space-times, $\nu <0$ everywhere so the occurrence of a trapped surface in $(u,v)$ depends only on the sign of $\lambda(u,v)$. Moreover, for two ended admissible space-times, there exists a neighborhood of $\mathcal{CH}_{i^+_1}$ on which $\nu<0$. Therefore, the following characterisation is pertinent: over $\mathcal{D}:=\{ (u,v)/ \nu(u,v)<0\}$, we have: \begin{enumerate}
	\item  $(u,v) \in \T$  if and only if $\lambda(u,v)<0$.
	\item $(u,v) \in \R$ if and only if $\lambda(u,v)>0$.
	\item $(u,v) \in \A$  if and only if   if and only if $\lambda(u,v)=0$.
\end{enumerate}	

It is in fact this final characterization that we will use. 

Note also that for one-ended admissible space-time, as $r_{|\Gamma}=0$, we have $\Gamma \subset \R$.

\subsection{Topology of the initial data: the one-ended case and the two-ended case} \label{topology}

In this section, we define mathematically the notion of one-ended or two-ended space-times, following \cite{Kommemi}.

\begin{defn} \label{oneendedef}
	We say that $(M,g_{\mu \nu}, \phi,F_{\mu \nu})$ is the maximal development of spherically symmetric one-ended \color{black}      initial data if $(M,g_{\mu \nu}, \phi,F_{\mu \nu})$ is the future maximal globally hyperbolic development of $(\Sigma^{(3)},h_{i j}, K_{i j}, \phi_0, \phi'_0, Q_0)$ and $\Sigma^{(3)}$ is diffeomorphic to $\RR^3$.
\end{defn}

\begin{defn} \label{twoendedef}
	We say that $(M,g_{\mu \nu}, \phi,F_{\mu \nu})$ is the maximal development of spherically symmetric two-ended initial data if $(M,g_{\mu \nu}, \phi,F_{\mu \nu})$ is the future maximal globally hyperbolic development of $(\Sigma^{(3)},h_{i j}, K_{i j}, \phi_0, \phi'_0, Q_0)$ and $\Sigma^{(3)}$ is diffeomorphic to $ \mathbb{S}^2 \times \RR$.
\end{defn}
\begin{rmk}
	The results of Theorem \ref{oneendedapriori} and Theorem \ref{twoendedapriori} prove that in the two-ended case $\Gamma = \emptyset$, while in the one-ended case $\Gamma$ is connected, time-like, and non-empty.
\end{rmk}
\begin{defn} \label{admissibilitydef}
	We say that  $(\Sigma^{(3)},h_{i j}, K_{i j}, \phi_0, \phi'_0, Q_0)$ is an admissible data set  \begin{enumerate}
		\item in the one-ended case, if there exists no anti-trapped surface on $\Sigma^{(3)}$ in the sense that $\nu_{|\Sigma_0} <0$.
		
		\item in the two-ended case, if there exists $u_1 < u_2$ such that $\nu_{|\Sigma_0}(u) <0$ for all $u \leq u_2$ 
		and $\lambda_{|\Sigma_0}(u) <0$ for all $u \geq u_1$.	
	\end{enumerate}
\end{defn}
\begin{rmk}
	Note that the particularity of two-ended admissible data sets is that they already contain a trapped surface, hence their maximal development feature a black hole. Thus the two-ended case does not allow for trapped surfaces (hence black holes) to form dynamically, in contrast with the one-ended case, suitable to study gravitational collapse.
\end{rmk}

\subsection{Notions of inextendibility} \label{inextsection}

In this section, we define two notions of $C^2$-inextendibility: the first one is standard and can be found in \cite{JonathanStab}. The second one is the $C^2$-inextendibility across $\CH$, a new (but analogous) notion which we use in the one-ended case, as there exists an additional obstruction to $C^2$-inextendibility in this case, related to Weak Cosmic Censorship (see section \ref{connected}).

\begin{defn} \label{generalinext} We consider $(M,g_{\mu \nu}, \phi,F_{\mu \nu})$ the maximal development of smooth, spherically symmetric and admissible (in the sense of Definition \ref{admissibilitydef}) one-ended or two-ended initial data satisfying the Einstein--Maxwell--Klein--Gordon system. 
	Then we say that $(M,g)$ is $C^2$-future-extendible if there exists a differentiable manifold $\tilde{M}$ equipped with a $C^2$ Lorentzian metric $\tilde{g}$ and a smooth isometric embedding $i: M \rightarrow  \tilde{M}$, such that $i(M)$ is a proper subset of $\tilde{M}$ and moreover the following condition holds true: \begin{enumerate}
		\item For every $p \in \tilde{M}-i(M)$, $I^+(p) \cap i(M) = \emptyset$.
		
	\end{enumerate}

	If no such extension exists, we say that $(M,g)$ is $C^2$-future-inextendible.
\end{defn}

\begin{defn} \label{CHinext} We consider $(M,g_{\mu \nu}, \phi,F)$ the maximal development of smooth, spherically symmetric and admissible (in the sense of Definition \ref{admissibilitydef}) one-ended initial data satisfying the Einstein--Maxwell--Klein--Gordon system. Following Theorem \ref{oneendedapriori}, we obtain the a priori boundary decomposition of $\mathcal{Q_+}$ induced by the ambient $\RR^{1+1}$ as $$\mathcal{B_+}=b_{\Gamma}\cup \mathcal{S}_{\Gamma}^1 \cup  \mathcal{S}_{\Gamma}^2 \cup \mathcal{S}  \cup \mathcal{S}_{i^+}  \cup \CH \cup i^+ \cup \mathcal{I}^+ \cup i^0 .$$
	
	Then we say that $(\barM,g)$ is $C^2$-future-extendible across $\CH$ if there exists a differentiable manifold $\tilde{M}$ equipped with a $C^2$ Lorentzian metric $\tilde{g}$ and a smooth isometric embedding $i: M \rightarrow  \tilde{M}$, such that $i(M)$ is a proper subset of $\tilde{M}$ and moreover the following conditions hold true: \begin{enumerate}
		\item For every $p \in \tilde{M}-i(M)$, $I^+(p) \cap i(M) = \emptyset$.
		
		\item There exists $p\in \partial M$, where $ \partial M$ is the topological boundary of $i(M)$ in $\tilde{M}$, at which $\partial M$ is differentiable and a continuous curve $\gamma:(-\epsilon,\epsilon) \rightarrow \tilde{M}$ with $\gamma(0)=p$, $\gamma(-\epsilon,0) \subset M$ and moreover $ \overline{\Pi(\gamma(-\epsilon,0))}^{\mathcal{Q}^+} \cap \CH \neq \emptyset.$
	\end{enumerate}

	If no such extension exists, we say that $(M,g)$ is $C^2$-future-inextendible across $\CH$.
\end{defn}

In practice, we always identify $M$ with $i(M)$ to lighten the notation when possible, as these two manifolds are isometric. \color{black}
\section{Precise statement of the main results} \label{results}

In this section, we describe the results of section \ref{firstversion} in a more precise way. We will work with the conventions, notations and definitions of Theorem \ref{oneendedapriori}, Theorem \ref{twoendedapriori} and section \ref{geometricframework}. When there is no risk of confusion, we will use the notation $\CH$ to denote any Cauchy horizon emanating from time-like infinity, in both the one-ended or the two-ended case.

\subsection{Recalling the setting and the previous results of \cite{Moi}, near time-like infinity} 

In this section, we rephrase the results of \cite{Moi} in a way which is convenient to use in our setting. The assumptions of the following theorem are also the basic assumptions we will rely on during the entire paper.

The result of \cite{Moi} is \textit{local} in a neighborhood of $i^+$ and thus, can be applied indifferently to the one-ended or two-ended case. For convenience, we rephrase the statement of the theorem of \cite{Moi} into a one-ended case, and a two-ended case.

We introduce the notation $A \lesssim B$ if there exists a constant $C(M, e, m^2, q_0)>0$, and $A \sim B$ if  $A \lesssim B$ and  $B \lesssim A$. Sometimes in the course of the paper, we will abuse notation and allow also for $C$ to depend on some fixed $u_0$ (in which case it will be specified, see for instance the proof of Lemma \ref{propagation}).\color{black}
\begin{thm}[Non-linear stability and instability of the Cauchy horizon, \cite{Moi}] \label{previous}

	We consider the maximal development of smooth, spherically symmetric and admissible (in the sense of Definition \ref{admissibilitydef}) one-ended or two-ended initial data  $(M=\mathcal{Q}^+ \times_r \mathcal{S}^2,g_{\mu \nu}, \phi,F_{\mu \nu})$ satisfying the Einstein--Maxwell--Klein--Gordon system. 
	\begin{enumerate}
		\item One-ended case

		
		\begin{hyp} \label{blackholehyp}
			Assume that $\mathcal{Q}^+-J^{-}(\mathcal{I}^+) \neq \emptyset$.
		\end{hyp}
		
		Then one can define the event horizon $\mathcal{H}^+$ and the black hole region $\mathcal{BH}=\mathcal{Q}^+-J^{-}(\mathcal{I}^+) \subset \mathcal{Q}^+$.
		
		\begin{hyp} \label{affinecomplete}
			Assume that $\mathcal{H}^+$ is null future affine complete.
		\end{hyp}
		
		We introduce a double null coordinate system $(U,v)$ system on $\mathcal{BH}$, in which the metric takes the form $$ g =  - \frac{\Omega^{2}_H}{2} (dU  \otimes dv+dv  \otimes dU)+r^{2}[ d\theta^{2}+\sin(\theta)^{2}d \psi^{2}].$$
		
		In $\mathcal{Q}^+$, we define an ingoing null hypersurface $\mathcal{C}_{in}= \{ v=v_0, \hskip 1 mm 0 \leq U \leq U_0\}$ and an outgoing null hypersurface $ \mathcal{C}_{out}=\mathcal{H}^+ \cap  \{v \geq v_0\}$, where in these coordinates, $\mathcal{H}^+= \{ U=0\}$.
		
		We choose $(U,v)$ to be a regular coordinate  system across $\mathcal{H}^+$ and determined by the following conditions:	\begin{equation} \label{gauge2}
		\kappa_{|\mathcal{H}^+} = (\frac{-\Omega^2_{H}(0,v)}{4\partial_{U}r(0, v)})_{|\mathcal{H}^+}\equiv 1,
		\end{equation}    \begin{equation} \label{gauge1}
		(\partial_{U}r)_{|\mathcal{C}_{in}}(U, v_0) \equiv -1.
		\end{equation}
		
		
		Moreover, we will make the following geometric assumption: 
		
		\begin{hyp} \label{subexthyp}\label{hypchargeevent} We require $\mathcal{H}^+$ to be a sub-extremal  Reissner--Nordstr\"{o}m event horizon in the limit, with non zero asymptotic charge, namely 		
			
			$$ 0<  \limsup_{v \rightarrow +\infty} |Q|_{|\mathcal{H}^+}(v)  \neq  \limsup_{v \rightarrow +\infty} r_{|\mathcal{H}^+}(v) <+\infty.$$

		\end{hyp}

		\begin{hyp} \label{fieldevent}	 We make the following decay assumption for $\phi$ on $\mathcal{H}^+$: there exists $ C>0$ and $s>\frac{3}{4}$ \color{black} such that \footnote{In fact, the original theorem of \cite{Moi} only requires $s>\frac{1}{2}$, but we assume $s>\frac{3}{4}$ to simplify the proofs in the present paper. \color{black}}:
			$$|\phi|_{|\mathcal{H}^+}(v) +| D_{v} \phi|_{|\mathcal{H}^+}(v) \leq  C \cdot v^{-s},$$ 
		\end{hyp}
		
		\begin{hyp} \label{fieldUevent}  We also assume the following red-shift estimate on $C_{in}$: \label{fieldv0}
			$$| D_{U} \phi| (U, v_0) \leq C. $$

		\end{hyp}

		Then $\CH \neq \emptyset $ and we have, in $\mathcal{Q}^+$, $ \{ 0 < U \leq U_s, v=+\infty \} \subset \CH$ for some $U_s>0$\color{black}. Moreover stability estimates \footnote{Additionally, if if $s>1$, we prove in \cite{Moi} that $(M,g,\phi,F)$ admits a continuous extension to $\CH$.} hold in a region $\{ U \leq U_s, v\geq v_0\}$, in particular all the estimates of Proposition \ref{LBprop} (except maybe \eqref{phiVLBlowerbound}).

		If we also make a lower bound assumption on $\phi$, we obtain also an instability result on the Ricci curvature: \begin{hyp} 
			Assume that for some $2s-1 \leq p \leq  \max \{2s, 6s-3\}$, and some $D>0$, the following lower bound holds \begin{equation} \label{lowerhyp}
			\int_v^{+\infty}|D_{v} \phi|_{|\mathcal{H}^+}^2(v')dv' \geq D \cdot v^{-p}.
			\end{equation} 
		\end{hyp}
		
		Then, under this additional assumption, \eqref{phiVLBlowerbound} holds, and moreover, defining the outgoing radial geodesic vector field $V= \Omega^{-2} \partial_v$, we have, for every $0<U \leq U_s $: $$ \limsup_{v \rightarrow+\infty} Ric(V,V)(U,v)=+\infty. $$

		\item  Two-ended case
		
		
		\begin{hyp}
			Assume that $\mathcal{H}^+_1$ and $\mathcal{H}^+_2$ are future null affine complete.
		\end{hyp}
		
		We introduce a double null coordinate system $(U_1,v_1)$ system on $\mathcal{BH}$, in which the metric takes the form $$ g =  - \frac{\Omega^{2}_1}{2} (dU_1  \otimes dv_1+dv_1  \otimes dU_1)+r^{2}[ d\theta^{2}+\sin(\theta)^{2}d \psi^{2}],$$ and a double null coordinate system $(u_2,V_2)$ in which the metric takes the form $$ g =  - \frac{\Omega^{2}_2}{2} (du_2  \otimes dV_2+dV_2  \otimes du_2)+r^{2}[ d\theta^{2}+\sin(\theta)^{2}d \psi^{2}].$$ 
		
		In $\mathcal{Q}^+$, we define two non-intersecting ingoing null hypersurfaces $\mathcal{C}_{in}^1= \{ v_1=v_{0}, \hskip 1 mm 0 \leq U_1 \leq U_0\}$ and $\mathcal{C}_{in}^2= \{ u_2=u_0, \hskip 1 mm 0 \leq V_2 \leq V_0\}$ and two outgoing null hypersurfaces $ \mathcal{C}_{out}^1=\mathcal{H}^+_1 \cap  \{v_1 \geq v_0\}$ and $ \mathcal{C}_{out}^2=\mathcal{H}^+_2 \cap  \{u_2 \geq u_0\}$, where in these coordinates, $\mathcal{H}^+_1= \{ U_1=0\}$, $\mathcal{H}^+_2= \{ V_2=0\}$.
		
		We choose $(U_1,v_1)$ to be a regular coordinate  system across $\mathcal{H}_1^+$ and determined by the following conditions: 	\begin{equation} 
		\kappa_{|\mathcal{H}^+_1} = (\frac{-\Omega^2_{1}(0,v_1)}{4\partial_{U_1}r(0, v_1)})_{|\mathcal{H}^+_1}\equiv 1,
		\end{equation}    \begin{equation}
		(\partial_{U_1}r)_{|\mathcal{C}_{in}^1}(U_1, v_0) \equiv -1.
		\end{equation}   	We choose $(u_2,V_2)$ to be a regular coordinate  system across $\mathcal{H}_1^+$ and determined by the following conditions: 	\begin{equation} 
		\iota_{|\mathcal{H}^+_2} = (\frac{-\Omega^2_{2}(0,v)}{4\partial_{V}r(0, v)})_{|\mathcal{H}^+_2}\equiv 1,
		\end{equation}    \begin{equation}
		(\partial_{V_2}r)_{|\mathcal{C}_{in}^2}(u_0, V_2) \equiv -1.
		\end{equation}

		\begin{hyp}	We require $\mathcal{H}^+_1$ and $\mathcal{H}^+_2$ to be a sub-extremal event horizons in the limit, with non zero asymptotic charge, in the sense of Assumption \ref{subexthyp}. 	
		\end{hyp}

		\begin{hyp} 	 We also make the following decay assumptions for $\phi$ on $C_{out}^1$ and $C_{out}^2$:  there exists $ C_1>0$, and $s_1>\frac{1}{2}$ such that for all $v_1\geq v_0$
			$$|\phi|_{|\mathcal{H}^+_1}(v_1) +| D_{v_1} \phi|_{|\mathcal{H}^+_1}(v_1) \leq  C_1 \cdot v_1^{-s},$$ and there exists $ C_2>0$, and $s_2>\frac{1}{2}$ such that for all $u_2\geq u_0$,
			$$|\phi|_{|\mathcal{H}^+_2}(u_2) +| D_{u_2} \phi|_{|\mathcal{H}^+_2}(u_2) \leq  C_2 \cdot u_2^{-s}.$$
		\end{hyp}
		
		\begin{hyp} We also assume the following red-shift estimate on $C_{in}^1$ and $C_{in}^2$: 
			$$| D_{U_1} \phi| (U_1, v_0) \leq C_1,$$  		   	$$| D_{V_2} \phi| (u_0, V_2) \leq C_2.$$

		\end{hyp}

		Then $\CHone \neq \emptyset $ and $\CHtwo \neq \emptyset $ and for some $U_s>0$, $ \{ 0 < U_1 \leq U_s, v_1=+\infty \} \subset \CHone$, and for some $V_s>0$, $ \{ 0 < V_2 \leq V_s, u_2=+\infty \} \subset \CHtwo$. 
		
		Moreover stability estimates hold in a region $\{ 0 \leq U_1 \leq U_s, v_1\geq v_0\} \cup \{ 0 \leq V_2 \leq V_s, u_1\geq u_0\}$.

		If we also make a lower bound assumption on $\phi$, we obtain also an instability result on the Ricci curvature: \begin{hyp}
			Assume that for some $2s_1-1 \leq p_1 \leq  \max \{2s_1, 6s_1-3\}$, and some $D_1>0$, the following lower bound holds $$ \int_{v_1}^{+\infty}|D_{v_1} \phi|_{|\mathcal{H}^+_1}^2(v')dv' \geq D_1 \cdot v_1^{-p_1},$$ 	and assume that for $2s_2-1 \leq p_2 \leq  \max \{2s_2, 6s_2-3\}$, and some $D_2>0$, the following lower bound holds $$ \int_{u_2}^{+\infty}|D_{u_2} \phi|_{|\mathcal{H}^+_2}^2(u')du' \geq D_2 \cdot u_2^{-p_2}.$$
		\end{hyp}
		
		Then, under this additional assumption, defining the outgoing radial geodesic vector field $V= \Omega_1^{-2} \partial_{v_1}$ and the ingoing radial geodesic vector field $U= \Omega_2^{-2} \partial_{u_2}$, we have, for every $0<U_1 \leq U_S$: $$ \limsup_{v_1 \rightarrow+\infty} Ric(V,V)(U_1,v_1)=+\infty, $$ and for every $0 < V_2 \leq V_S$: $$ \limsup_{u_2 \rightarrow+\infty} Ric(U,U)(u_2,V_2)=+\infty.$$
		
	\end{enumerate}	
	
\end{thm}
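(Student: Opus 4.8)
The plan is to recognise that this theorem is a transcription of the main results of \cite{Moi} into the coordinates and notation of Section~\ref{geometricframework}; accordingly, the proof will (a) recast the evolution near $i^+$ as a characteristic initial value problem on the event horizon, (b) check that Assumptions~\ref{blackholehyp}--\ref{fieldUevent}, together with the lower bound \eqref{lowerhyp}, are precisely the hypotheses of the stability and instability theorems of \cite{Moi}, and (c) quote those theorems. For step (a) in the one-ended case, Assumption~\ref{blackholehyp} gives $\mathcal{BH} = \mathcal{Q}^+ - J^{-}(\mathcal{I}^+) \neq \emptyset$, so Theorem~\ref{oneendedapriori} supplies a well-defined connected event horizon $\mathcal{H}^+$, and Assumption~\ref{affinecomplete} lets one take $v$ to be an affine parameter along it with range $[v_0,+\infty)$. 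I would then introduce the double null gauge $(U,v)$ normalised by \eqref{gauge2}--\eqref{gauge1}, whose existence is the standard renormalisation freedom of Remark~\ref{doublenullremark}, so that the solution on $\{0 \le U \le U_0,\, v \ge v_0\}$ is determined by characteristic data on $\mathcal{C}_{out} = \mathcal{H}^+ \cap \{v \ge v_0\}$ and on $\mathcal{C}_{in} = \{v = v_0\}$.

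For step (b), I would observe that Assumption~\ref{subexthyp} is the sub-extremal-in-the-limit condition required in \cite{Moi}, Assumption~\ref{fieldevent} furnishes the decay of $\phi$ and $D_v\phi$ on $\mathcal{H}^+$, and Assumption~\ref{fieldUevent} is the red-shift bound on $\mathcal{C}_{in}$. The decay exponent is taken here to be $s > \tfrac{3}{4}$, which is strictly stronger than the $s > \tfrac{1}{2}$ sufficient for \cite{Moi}; this only streamlines later arguments in the present paper and does not affect applicability. The range $2s - 1 \le p \le \max\{2s, 6s-3\}$ in \eqref{lowerhyp} is exactly the one for which the $C^2$-instability mechanism is established in \cite{Moi}.

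For step (c), invoking the non-linear stability theorem of \cite{Moi} (cf.\ Theorem~\ref{C0stab}) then yields $\CH \neq \emptyset$, the inclusion $\{0 < U \le U_s,\, v = +\infty\} \subset \CH$ for some $U_s > 0$, and the stability estimates on $\{U \le U_s,\, v \ge v_0\}$ --- in particular every estimate of Proposition~\ref{LBprop} except possibly the lower bound \eqref{phiVLBlowerbound}. Adding \eqref{lowerhyp} and quoting the instability part of \cite{Moi} (cf.\ Theorem~\ref{C2instab}) produces \eqref{phiVLBlowerbound} and $\limsup_{v \to +\infty} Ric(V,V)(U,v) = +\infty$ for every $0 < U \le U_s$, with $V = \Omega^{-2}\partial_v$. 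For the two-ended case I would repeat step (a) independently in a neighbourhood of $i^+_1$ and of $i^+_2$, using the gauges displayed in the statement (normalising $\kappa$ along $\mathcal{H}^+_1$ and $\iota$ along $\mathcal{H}^+_2$), and apply the same two theorems there; since every estimate in \cite{Moi} is local near time-like infinity, the topology of $\Sigma^{(3)}$ is irrelevant, which is the content of the remark following Theorem~\ref{C0stab}.

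The one point demanding care --- bookkeeping rather than a genuine obstacle --- is that the quantitative estimates of \cite{Moi} are stated in a specific normalisation of the double null coordinates, so one must confirm that \eqref{gauge2}--\eqref{gauge1} and their two-ended analogues agree with it; this is immediate, since any two such gauges differ by a rescaling $d\tilde{u} = f(U)\,dU$, $d\tilde{v} = g(v)\,dv$ with $f,g$ bounded above and below on the relevant range, under which the estimates transform trivially by Remark~\ref{doublenullremark}.
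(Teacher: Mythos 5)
Your proposal matches the paper's treatment: Theorem \ref{previous} is not proved in this paper but is quoted from \cite{Moi} as a semi-local characteristic initial value result near $i^+$ (with data on $\mathcal{H}^+\cup\mathcal{C}_{in}$), so that verifying the hypotheses and gauge normalisations \eqref{gauge2}--\eqref{gauge1} and invoking the stability and instability theorems of \cite{Moi} (cf.\ Theorems \ref{C0stab} and \ref{C2instab}, and Proposition \ref{LBprop}) is exactly what is intended, with the topology of the data being irrelevant for the two-ended case precisely as you argue.
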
	\begin{rmk}
We emphasize that the statement of Theorem \ref{previous} was originally formulated in \cite{Moi} as a characteristic initial value problem, with data on $\mathcal{H}^+ \cup \mathcal{C}_{in} $ thus the global topology of the Penrose diagram was \textit{irrelevant} and the statement was identical in both the one-ended and the two-ended case. However, in the present paper, the distinction between one and two ended is somewhat important, which is why we phrased the theorem in this way. 
\end{rmk}

\begin{rmk}\label{EHremark}
	Note that assumption \ref{hypchargeevent} consists of four statements \begin{enumerate}[i]
		\item \label{EH1} $\mathcal{H}^+$ is an \underline{event horizon} i.e.\ $$ \limsup_{v\rightarrow+\infty} r_{|\mathcal{H}^+}(v) < +\infty.$$
		\item \label{EH2}  $\mathcal{H}^+$ possesses a finite  \underline{asymptotic charge} i.e.\ $$ \limsup_{v\rightarrow+\infty} |Q|_{|\mathcal{H}^+}(v)<+\infty.$$ \item \label{EH3}  $\mathcal{H}^+$ is asymptotically a  \underline{Reissner--Nordstr\"{o}m} event horizon i.e.\ the asymptotic charge is non-zero i.e.\ $$ \limsup_{v\rightarrow+\infty} |Q|_{|\mathcal{H}^+}(v)>0.$$
		\item \label{EH4}  $\mathcal{H}^+$ is asymptotically a \underline{sub-extremal} (Reissner--Nordstr\"{o}m) event horizon  $$ \limsup_{v\rightarrow+\infty} |Q|_{|\mathcal{H}^+}(v) \neq   \limsup_{v\rightarrow+\infty} r_{|\mathcal{H}^+}(v).$$
	\end{enumerate}
Note that the sub-assumption \ref{EH2} is in fact unnecessary (although we add it for clarity), as it follows from sub-assumption \ref{EH1} and the decay of the initial data dictated by assumption \ref{fieldevent}, c.f.\ \cite{Moi}.

  	Assumption \ref{EH3} is important: indeed, if the asymptotic charge \textbf{is} zero instead i.e.\  $\mathcal{H}^+$ is asymptotically a \textbf{Schwarzschild} event horizon, then the author proved that no Cauchy horizon is present: $\CH=\emptyset$, thus $r=0$ on $\mathcal{B}^+$, see \cite{MoiThesis}, section 2.3.4. Note, however, that the asymptotic charge is conjectured to be non-zero generically, see \cite{Kommemi}.

	 Notice that sub-assumption \ref{EH4} requiring $\mathcal{H}^+$ to be a sub-extremal is superfluous in the uncharged case $q_0=0$ where the event horizon is necessarily sub-extremal (see Appendix A of \cite{JonathanStab}), a property which does not persist in the charged  $q_0 \neq 0$. The sub-extremality of the event horizon is also conjectured to be generic, c.f.\  \cite{Kommemi}.
	 
	 	\end{rmk}

\color{black}

We are now ready to phrase our new \textit{global} results, starting from the $C^2$-inextendibility of the Cauchy horizon.
\subsection{Global inextendibility properties across the Cauchy horizon emanating from $i^+$}
In this section, we present our $C^2$-inextendibility (defined in section \ref{inextsection}) results, starting with the two-ended case: 
\begin{thm} \label{conditionnalSCCtheoremtwoended}
	Let $(M,g)$ be the maximal development of admissible two-ended initial data, satisfying the assumptions of Theorem \ref{previous}. Then $(M,g)$ is $C^2$-future-inextendible.

\end{thm}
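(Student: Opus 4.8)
The plan is to derive $C^2$-future-inextendibility from the blow-up of the transverse Ricci curvature along the \emph{entire} Cauchy horizon, combined with the impossibility of extending across the $r=0$ boundary components, arguing by contradiction as in \cite{JonathanStab}.

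First I would establish the global curvature blow-up, which is the substance of the paper. Starting from the semi-local stability and instability estimates of Theorem \ref{previous} near $i^+$ (valid on $\{0 \le U_1 \le U_s,\ v_1 \ge v_0\}$ and $\{0 \le V_2 \le V_s,\ u_2 \ge u_0\}$), the task is to propagate control all the way along $\CHone$ and $\CHtwo$. This proceeds in stages: (i) show that a static point $u_0 \in S_0$ forces the scalar field to vanish on the causal-past portion of the Cauchy horizon — so that portion is isometric to a Reissner--Nordstr\"{o}m Cauchy horizon — and simultaneously yields a trapped neighborhood of $\CH \cap \{u \le u_0 + \epsilon\}$ together with quantitative estimates, via a bootstrap on rectangles $[u_s, u_0] \times [v_0, +\infty)$; (ii) observe, from the Raychaudhuri equation \eqref{RaychU}, that $S_0$ is a past set, hence $\CH$ is of dynamical, static, or mixed type; (iii) prove a local Hawking-mass blow-up on dynamical and mixed portions using the estimates of \cite{Moi}; (iv) propagate this blow-up to the future (the breakdown criterion of Theorem \ref{blowuprough}): if \eqref{staticrough} holds on $C_{u_0}$ then the Hawking mass blows up pointwise on all outgoing cones to the future; (v) conclude that $\CHone$ and $\CHtwo$ each admit a trapped neighborhood $\mathcal{T}'$ (Theorem \ref{trappedrough}), which therefore has finite spacetime volume; (vi) exploit this finite volume to run $L^1$--$L^\infty$ estimates on $\mathcal{T}'$, propagating the instability mechanism of Theorem \ref{previous} beyond a neighborhood of $i^+$, and obtain that $\limsup Ric(X,X) = +\infty$ along every radial null geodesic vector field $X$ transverse to $\CHone$ or $\CHtwo$ (Proposition \ref{Ricciblowupprop}).

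Second, I would run the contradiction argument. Suppose $(\tilde M, \tilde g)$ is a $C^2$ Lorentzian manifold with a smooth isometric embedding $i: M \to \tilde M$ satisfying Definition \ref{generalinext}, with $i(M) \subsetneq \tilde M$ and $I^+(p) \cap i(M) = \emptyset$ for all $p \in \tilde M \setminus i(M)$. Pick $p$ on the topological boundary of $i(M)$ in $\tilde M$ where this boundary is differentiable. By the a priori Penrose diagram of Theorem \ref{twoendedapriori}, $\mathcal{B}^+ = \mathcal{S} \cup \mathcal{S}_{i^+} \cup \CH \cup i^+ \cup \mathcal{I}^+ \cup i^0$; the condition $I^+(p) \cap i(M) = \emptyset$ rules out $p$ corresponding to $\mathcal{I}^+$, $i^0$, or $i^+$, so $p$ must lie over $\mathcal{S} \cup \mathcal{S}_{i^+}$ (where $r \to 0$) or over $\CHone \cup \CHtwo$ (where $r$ is bounded away from $0$). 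On $\mathcal{S} \cup \mathcal{S}_{i^+}$ the Kretschmann scalar blows up as $r \to 0$, so by Kommemi's argument \cite{Kommemi}, re-used in \cite{JonathanStab}, no $C^2$ geodesic can cross into the extension there; hence $p$ lies over $\CHone$ or $\CHtwo$. Then one constructs inside the extension a radial null geodesic with tangent vector $X$ whose spacetime projection has a limit point on $\CH$ (Proposition \ref{geomextendJonathan}, originally in \cite{JonathanStab}); since $\tilde g$ is $C^2$, $Ric(X,X)$ must remain bounded along this geodesic, contradicting the blow-up from Proposition \ref{Ricciblowupprop}. Therefore no such extension exists and $(M,g)$ is $C^2$-future-inextendible.

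The main obstacle is step (vi): propagating the $Ric(X,X)$ blow-up — known only locally near $i^+$ through the smallness of the perturbation in Theorem \ref{previous} — over the \emph{non-static} parts of $\CH$, which may be arbitrarily far from $i^+$ where no smallness parameter is available. Unlike the uncharged Einstein--Maxwell-scalar-field model, there is no usable monotonicity for the Hawking mass or the non-degenerate scalar-field energy, so the argument must instead leverage the finiteness of the spacetime volume of the trapped neighborhood $\mathcal{T}'$ together with tailored $L^1$--$L^\infty$ estimates and the new continuation criterion of Theorem \ref{blowuprough}. In particular, the delicate point is to show that the Dafermos (non-staticity) condition \eqref{staticrough} genuinely \emph{triggers} mass inflation on the cone where it holds and on all later cones — rather than merely being propagated to the future as the "soft" consequence of Raychaudhuri noted in Remark \ref{monotonicityremark} — and this quantitative mechanism is the crux of the whole analysis.
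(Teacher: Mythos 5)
Your proposal is correct and follows essentially the same route as the paper: the classification/trapped-neighborhood/finite-volume machinery yielding the transverse Ricci blow-up \eqref{trappedestimate10} on all of $\CHone\cup\CHtwo$ (Proposition \ref{Ricciblowupprop}), combined with the contradiction argument of \cite{JonathanStab} (Proposition \ref{geomextendJonathan}, with the $r=0$ boundary components excluded via the Kretschmann blow-up argument of \cite{Kommemi}). The only cosmetic difference is that the paper packages the boundary dichotomy and the construction of the crossing radial null geodesic entirely inside the cited Proposition \ref{geomextendJonathan}, whereas you spell out that dichotomy explicitly; the content is the same.
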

In the one-ended case, additional complications arise because of the potential existence of an ``outgoing Cauchy horizon'' emanating from the center $\mathcal{CH}_{\Gamma}$. Since this $\mathcal{CH}_{\Gamma}$ \textit{could} be $C^2$-extendible, we only prove the $C^2$-inextendibility of $\CH$:
\begin{thm} \label{CHinexttheorem}
	Let $(M,g)$ be the maximal development of admissible one-ended initial data, satisfying the assumptions of Theorem \ref{previous}. Then $(M,g)$ is $C^2$-future-inextendible across $\CH$, in the sense of Definition \ref{CHinext}.
\end{thm}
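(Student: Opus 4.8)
The plan is to argue by contradiction, following the scheme of Luk--Oh \cite{JonathanStab} adapted to the ``across $\CH$'' notion of Definition \ref{CHinext}. Suppose $(M,g)$ were $C^2$-future-extendible across $\CH$, and let $(\tilde M,\tilde g)$, the smooth isometric embedding $i:M\to\tilde M$, the point $p\in\partial M$ at which $\partial M$ is differentiable, and the continuous curve $\gamma:(-\epsilon,\epsilon)\to\tilde M$ with $\gamma(0)=p$, $\gamma(-\epsilon,0)\subset M$ and $\overline{\Pi(\gamma(-\epsilon,0))}^{\mathcal{Q}^+}\cap\CH\neq\emptyset$ be as in Definition \ref{CHinext}. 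Since $\tilde g$ is $C^2$, its Ricci tensor is continuous on $\tilde M$, so along any $C^2$ (indeed $C^1$) curve $c$ in $\tilde M$ with continuous velocity, $|Ric_{\tilde g}(\dot c,\dot c)|$ is bounded on every compact parameter interval. The whole point will be to produce such a curve that is, on the portion lying inside $i(M)$, an affinely parametrised radial null geodesic reaching $\CH$ transversally, so that the contradiction comes from the curvature blow-up already established.

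The main inputs are the following. By Theorem \ref{classificationrough} and Proposition \ref{Ricciblowupprop}, for $X$ the outgoing radial null geodesic vector field transverse to $\CH$ one has $Ric(X,X)\to+\infty$ pointwise along every outgoing cone reaching $\CH$; moreover $r$ extends continuously and strictly positively to $\CH$ away from its future endpoint, and by Theorem \ref{trappedrough} a trapped neighbourhood $\mathcal{T}'$ surrounds $\CH$, so $\lambda<0$ and $r$ is bounded below near $\CH$. First I would use these facts, together with the blow-up of the Kretschmann scalar on the $r=0$ boundary components $\mathcal{S},\mathcal{S}^1_\Gamma,\mathcal{S}^2_\Gamma,\mathcal{S}_{i^+}$ (Kommemi's argument \cite{Kommemi}, as in \cite{JonathanStab}), to localise: near the point of $\overline{\Pi(\gamma(-\epsilon,0))}\cap\CH$ the curve $\gamma$ stays in a coordinate neighbourhood in which $\CH$ is the level set of the advanced time, $r$ is comparable to $r_{\CH}>0$, and the geometry is controlled by the stability estimates of Theorem \ref{previous} and their global extensions.

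Next, following the geometric extension step of \cite{JonathanStab} (Propositions \ref{geomextendoneened} and \ref{inextproponeended}), one shows that differentiability of $\partial M$ at $p$ forces $\partial M$ to be achronal near $p$, a null hypersurface with generator tangent to the ingoing direction of $\CH$; one then picks a point $q\in\tilde M\setminus i(M)$ just to the future of $p$ across $\CH$ and a past-directed null geodesic segment $c$ of $\tilde g$ issued from $q$ transverse to $\partial M$. By uniqueness of geodesics for the $C^2$ metric $\tilde g$ and since $i$ is an isometric embedding, the portion of $c$ inside $i(M)$ is the image under $i$ of a null geodesic of $(M,g)$; transversality to $\CH$ and normalisation of its velocity identify it, after affine reparametrisation, with an integral curve of $X$ approaching $\CH$. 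Then on one hand $Ric_{\tilde g}(\dot c,\dot c)$ is continuous up to and including the endpoint $q$, hence bounded along $c$; on the other hand, along the $i(M)$-portion it equals $Ric(X,X)$ on an outgoing cone reaching $\CH$, which tends to $+\infty$ by Proposition \ref{Ricciblowupprop}. This contradiction shows no such extension exists.

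The main obstacle is precisely this last construction: transferring the topological hypothesis ``there is a curve in $M$ whose quotient accumulates on $\CH$, above a differentiable boundary point $p$'' into the existence of an incomplete \emph{radial} null geodesic of $\tilde g$ transverse to $\CH$ with bounded transverse Ricci curvature. The delicate points are (i) pinning down the causal character of $\partial M$ near $p$ without assuming $\tilde M$ is spherically symmetric, so that a genuinely transverse null geodesic can be shot across, and (ii) matching this geodesic with the radial one carrying the blow-up of $Ric(X,X)$ --- here the global quantitative estimates behind Proposition \ref{Ricciblowupprop}, valid on the \emph{entire} $\CH$ thanks to the trapped neighbourhood of Theorem \ref{trappedrough}, are essential, since the local estimates of \cite{Moi} near $i^+$ alone would only rule out extensions touching $\CH$ near time-like infinity. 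Finally, the reason one cannot upgrade Theorem \ref{CHinexttheorem} to full $C^2$-future-inextendibility in the one-ended case is that an extension could instead be attached across the outgoing Cauchy horizon $\mathcal{CH}_\Gamma$ emanating from the centre, where no curvature blow-up is available under our assumptions; this is the obstruction removed by the extra hypothesis $\mathcal{CH}_\Gamma=\emptyset$ in Theorem \ref{rough1cond}.
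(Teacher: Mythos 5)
Your overall strategy is the paper's: argue by contradiction, run the Luk--Oh geometric argument to produce a radial null geodesic of the extension crossing $\CH$, and contradict the blow-up of $Ric(V,V)$ coming from the global estimate \eqref{trappedestimate10} via Proposition \ref{Ricciblowupprop}. However, the step you yourself flag as ``the main obstacle'' is where your construction actually fails as written. Shooting a past-directed null geodesic $c$ of $\tilde g$ from a point $q\in\tilde M\setminus i(M)$ and invoking transversality plus uniqueness of geodesics does give you a null geodesic of $(M,g)$ on the interior portion, but it does \emph{not} make it radial: a generic null geodesic in a spherically symmetric spacetime carries angular momentum, its quotient is not a curve of constant $u$, and $Ric(\dot c,\dot c)$ is then not $c^2\,\Omega^{-4}|\partial_v\phi|^2$, so the blow-up of Proposition \ref{Ricciblowupprop} cannot be brought to bear on it. The paper avoids this by constructing the geodesic \emph{through the boundary point $p$ itself}: by the Dafermos--Rendall lemma (Lemma \ref{DafermosRendallLemma}) the Killing rotations extend continuously to $\partial M$, they span a spacelike plane at $p$ (since $r(p)>0$, itself proved via the Kretschmann blow-up at $r=0$), and one chooses the initial velocity in the orthogonal timelike plane, transverse to $T_p(\partial M)$ thanks to achronality; this is exactly what forces the interior portion to be radial and its quotient to lie in $\{u=u_p\}$, as in Proposition \ref{geomextendoneened}. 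Also, achronality of $\partial M$ is a general consequence of the future condition in Definition \ref{CHinext}, not of differentiability at $p$, and nothing in the argument identifies $\partial M$ near $p$ as a null hypersurface with generator along the ingoing direction --- nor is that needed.

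The second genuine omission is the ingredient that is new relative to \cite{JonathanStab} and specific to the ``across $\CH$'' notion: you must show that the quotient projection $\Pi$ extends continuously to the (almost every, differentiable) boundary points of $\partial M$, i.e.\ Corollary \ref{Piextension}. Definition \ref{CHinext} only hands you a curve $\gamma$ in $M$ whose \emph{quotient} accumulates on $\CH$ above a differentiable boundary point $p$; without the continuous extension $\tilde\Pi$ you cannot conclude that $p$ sits over a definite point $p_{\CH}=(u_p,+\infty)\in\CH$ with $u_p<u_{\CH}$, nor that the radial geodesic you construct through $p$ lies over the \emph{same} outgoing cone $C_{u_p}$ on which \eqref{trappedestimate10} blows up --- a priori the geodesic's cone could terminate on a different boundary component. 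Proving this extension is not a ``localisation'' routine: it uses the positivity $r(p)>0$ (Kretschmann blow-up), properness of $\Pi$, and the achronality of $\partial M$ together with a geodesic shot into $M$ to exclude two distinct limit points in $\overline{\mathcal{Q}^+}$. Your proposal's coordinate-neighbourhood paragraph concerns the original curve $\gamma$, not this identification, so the link between the hypothesis of Definition \ref{CHinext} and the cone carrying the curvature blow-up is left open. Your closing remark about $\mathcal{CH}_\Gamma$ being the residual obstruction in the one-ended case is correct.
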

Nevertheless, $\mathcal{CH}_{\Gamma}=\emptyset$ if one accepts Conjecture \ref{trappedsurfaceconj}. Thus, this last obstruction to $C^2$-inextendibility should disappear with a proof of Conjecture \ref{trappedsurfaceconj}, which would also imply the Weak Cosmic Censorship Conjecture (Conjecture \ref{WCC}). Note, however, that such a proof would require to study the space-time near the center-endpoint $b_{\Gamma}$ and thus would require different techniques than those employed in the present paper. In view of these considerations, it is interesting to use the $C^2$-inextendibility of $\CH$ to obtain the following ``conditional'' $C^2$-future-inextendibility of one-ended space-times:
\begin{thm} \label{conditionnalSCCtheoremoneended}
	Let $(M,g)$ be the maximal development of admissible one-ended initial data, satisfying the assumptions of Theorem \ref{previous} and assume moreover that $\mathcal{CH}_{\Gamma} =\emptyset$, where $\mathcal{CH}_{\Gamma}$ is a priori boundary component as defined in Theorem \ref{oneendedapriori}. Then $(M,g)$ is $C^2$-future-inextendible.
\end{thm}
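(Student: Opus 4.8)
The plan is to argue by contradiction and to reduce the statement to the already–established $C^2$–inextendibility \emph{across} $\CH$ (Theorem~\ref{CHinexttheorem}), using the hypothesis $\mathcal{CH}_{\Gamma}=\emptyset$ to eliminate the one boundary component not covered by that notion nor by the $r=0$ argument. So suppose $(M,g)$ is $C^2$–future–extendible in the sense of Definition~\ref{generalinext}: there is a $C^2$ Lorentzian manifold $(\tilde M,\tilde g)$ and a smooth isometric embedding $i:M\hookrightarrow\tilde M$ with $i(M)$ proper and $I^+(p)\cap i(M)=\emptyset$ for all $p\in\tilde M\setminus i(M)$; identify $M$ with $i(M)$. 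Since $\tilde g\in C^2$, the Kretschmann scalar and $Ric$ of $\tilde g$ are continuous, hence locally bounded, up to the topological boundary $\partial M\subset\tilde M$. Following \cite{JonathanStab} (cf.\ Proposition~\ref{geomextendoneened}), I would extract from a point of $\partial M$ a future–directed timelike geodesic $\gamma:(-\epsilon,0]\to\tilde M$ of finite affine length with $\gamma(-\epsilon,0)\subset M$, $\gamma(0)=p\in\partial M$, $\partial M$ differentiable at $p$, along which the curvature of $g=\tilde g|_M$ stays bounded; the projected curve $\Pi(\gamma)\subset\mathcal{Q}^+$ then accumulates, as $t\to 0^-$, on a nonempty connected subset $\mathcal{X}$ of the a priori boundary $\mathcal{B}^+$ of Theorem~\ref{oneendedapriori}.

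Second, I would rule out every component of $\mathcal{B}^+$ other than $\CH$. The sets $i^0$, $\mathcal{I}^+$, $i^+$ are ``at infinity'': a causal geodesic of finite affine length cannot accumulate there, using the assumed affine completeness of $\mathcal{H}^+$ and the structure of Theorem~\ref{oneendedapriori}. On $\mathcal{S}$, $\mathcal{S}^1_{\Gamma}$, $\mathcal{S}^2_{\Gamma}$, $\mathcal{S}_{i^+}$ and at the center endpoint $b_{\Gamma}$ the area radius $r$ extends continuously to $0$, while — by Kommemi's a priori analysis \cite{Kommemi} away from $\CH$ and by the quantitative estimates on the trapped neighborhood $\mathcal{T}'$ of Theorem~\ref{trappedrough} near $\CH$ — the renormalized mass $\varpi$ (equivalently the charge $Q$) does not vanish in the limit; hence the Kretschmann scalar blows up as $r\to0$, contradicting the boundedness of the curvature of $\tilde g$ along $\gamma$. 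Therefore $\mathcal{X}$ meets none of these components, and $\mathcal{CH}_{\Gamma}=\emptyset$ by hypothesis, so $\mathcal{X}\cap\CH\neq\emptyset$.

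Third, this conclusion exhibits, together with the curve $\gamma$ (which we may take defined on $(-\epsilon,\epsilon)$ by extending it slightly past $p$ in $\tilde M$, using $\tilde g\in C^2$), precisely the data of a $C^2$–future–extension of $(M,g)$ \emph{across} $\CH$ in the sense of Definition~\ref{CHinext}: a point $p\in\partial M$ at which $\partial M$ is differentiable, with $\gamma(-\epsilon,0)\subset M$ and $\overline{\Pi(\gamma(-\epsilon,0))}^{\mathcal{Q}^+}\cap\CH\neq\emptyset$. This contradicts Theorem~\ref{CHinexttheorem}, so $(M,g)$ is $C^2$–future–inextendible. Equivalently, one may bypass Definition~\ref{CHinext} and re–run the proof of Theorem~\ref{CHinexttheorem} directly: from the extension across $\CH$ one builds (Proposition~\ref{oneendedCHgammaprop} and Lemma~\ref{lastlemma}, after \cite{JonathanStab}) a $C^2$ outgoing radial null geodesic with tangent $X=\Omega^{-2}\partial_v$ crossing $\CH$ transversally, along which $Ric(X,X)$ is finite — contradicting the blow–up of $Ric(X,X)$ on the \emph{entire} $\CH$ established in Proposition~\ref{Ricciblowupprop}.

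I expect the main obstacle to be the step forcing $\mathcal{X}$ to reach $\CH$ genuinely rather than ``escaping sideways'' near the future endpoint of $\CH$ (where it abuts $\mathcal{S}_{i^+}$) or near $b_{\Gamma}$: one must use the trapped neighborhood $\mathcal{T}'$ of $\CH$ from Theorem~\ref{trappedrough} to see that a causal curve of finite length entering a small neighborhood of $\CH$ is confined between two trapped cones and can therefore only limit onto $\CH$ itself, and one invokes $\mathcal{CH}_{\Gamma}=\emptyset$ precisely to close off the last escape route through $\mathcal{CH}_{\Gamma}$. Once this is in place, the $r=0$ Kretschmann blow–up is routine given the lower bound on $\varpi$, the ``infinity'' components are standard, and the rest is bookkeeping layered on top of Theorems~\ref{CHinexttheorem} and~\ref{trappedrough} and Proposition~\ref{Ricciblowupprop}.
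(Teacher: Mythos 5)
Your proposal is correct and follows essentially the same route as the paper: argue by contradiction at a differentiability point of $\partial M$, exclude every $r=0$ boundary component via Kommemi's Kretschmann blow-up (this is the content of Proposition \ref{oneendedCHgammaprop} and Lemma \ref{lastlemma}), use $\mathcal{CH}_{\Gamma}=\emptyset$ to eliminate the only remaining component, and rule out a crossing of $\CH$ by the blow-up of $Ric(X,X)$ transverse to $\CH$. Your packaging of the $\CH$ case as a reduction to Definition \ref{CHinext} and Theorem \ref{CHinexttheorem} is only a cosmetic difference from the paper, which instead re-runs the arguments of Propositions \ref{geomextendoneened} and \ref{inextproponeended} at that point.
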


\begin{rmk}
	In these theorems, we assumed that the asymptotic black hole charge is non-zero, which is the hardest case (and conjecturally the generic one). Recall from Remark \ref{EHremark} that if the charge is zero, then $\CH=\emptyset$. Therefore, the space-time is immediately $C^2$-future-inextendible in the two-ended case (and no more work is needed, see \cite{MoiThesis}). In the one-ended case, the space-time is also $C^2$-future-inextendible, under the same assumption as Theorem \ref{conditionnalSCCtheoremoneended} i.e. if $\mathcal{CH}_{\Gamma}=\emptyset$.
\end{rmk}
\subsection{Classification of the Cauchy horizon emanating from $i^+$ and estimates} \label{classificationsection}


In this section, we present our classification of the Cauchy horizons emanating from time-like infinity, which is fundamental to our proof of $C^2$-inextendibility. In what follows, we will use the generic notation $\CH$ for any Cauchy horizon emanating from time-like infinity, be it $\CH$ in the one-ended case (see Theorem \ref{oneendedapriori}), or $\CHone$, $\CHtwo$ in the two-ended case (see Theorem \ref{twoendedapriori}). Correspondingly, $\mathcal{H}^+$ will be the generic notation for the corresponding event horizon, e.g. $\mathcal{H}^+_1$ for $\CHone$. In particular, we emphasize that all our subsequent results are valid \textbf{both} in the one-ended case, and the two-ended case. We also take the convention that the end-point of $\CH$ does not belong to $\CH$. 

We will parametrize \footnote{Of course, if $\CH$ is the ingoing Cauchy horizon of the one-ended case, or $\CHone$ in the two-ended case, we can chose $\tau=u$ and $\varsigma=v$.} $\CH = \{ -\infty < \tau < \tau_{\CH}\}$ by $\tau$ and $\mathcal{H}^+:= \{ \varsigma_0 \leq \varsigma \leq +\infty \}$  by $\varsigma$ in this section.

\begin{defn} \label{dynamicaldef}
	We say the Cauchy horizon $\CH = \{ -\infty < \tau < \tau_{\CH}\}$ is a Cauchy Horizon of dynamical type  if there exists $\tau_s \in \RR$ such that the area-radius function $r$ extends continuously to a function $r_{CH}(\tau)$ for $-\infty< \tau \leq \tau_s$ and $\tau \rightarrow r_{CH}(\tau)$ is strictly decreasing on $(-\infty,\tau_s)$.

\end{defn} 
\begin{rmk}
This definition may seem minimalist at first sight, as one would expect $r$ to be strictly decreasing on the \textit{whole} Cauchy horizon in the dynamical type case (not only on $\{\tau \leq\tau_s\}$ as in Definition \ref{dynamicaldef}). While we do not include this latter property in Definition \ref{dynamicaldef}, we prove \textit{a posteriori} that it is satisfied on any Cauchy horizon of dynamical type, see Theorem \ref{classification}. 
\end{rmk} \color{black}
\begin{defn}  \label{rigiddef}
	We say the Cauchy horizon $\CH = \{ -\infty < \tau < \tau_{\CH}\}$ is a Cauchy Horizon of static type if the area-radius function $r$ extends continuously to a constant function $r_{CH}(\tau) = r_0$, $r_0 >0$,  for $-\infty< \tau < \tau_{\CH} $.
\end{defn}

\begin{defn}  \label{mixeddef}
	We say the Cauchy horizon $\CH = \{ -\infty < \tau < \tau_{\CH}\}$ is a Cauchy Horizon of mixed type if there exists a transition time $-\infty < \tau_T <\tau_{\CH}$, $\epsilon>0$ with $\tau_T+\epsilon < \tau_{\CH}$ such that the area-radius function $r$ extends continuously to a function $r_{CH}(\tau)$ for $-\infty< \tau \leq \tau_{T}+\epsilon $ and there exists some constant $r_0 >0$, such that \begin{enumerate}
		\item $r_{CH}(\tau)= r_0$ for all $-\infty< \tau \leq \tau_{T} $.
		\item $\tau \rightarrow r_{CH}(\tau)$ is strictly decreasing on $(\tau_T,\tau_{T}+\epsilon]$.
	\end{enumerate}
\end{defn}

\begin{thm} \label{classification}
	We work with $\CH$, under the assumptions of Theorem \ref{previous}, and we \footnote{Those limits exist as a soft consequence of the assumptions of Theorem \ref{previous}, see \cite{Moi}.} define $M= \lim_{\varsigma \rightarrow +\infty} \varpi_{|\mathcal{H}^+}(\varsigma)$ and $e= \lim_{\varsigma \rightarrow +\infty} Q_{|\mathcal{H}^+}(\varsigma)$ on the event horizon $\mathcal{H}^+$ corresponding to $\CH$ .

	Then $r$ extends continuously to a function $r_{CH}$ on $\CH$ and there are three possibilities: 
	\begin{enumerate}
		\item $\CH$ is of dynamical type: then $\rho$ and $\varpi$ extend \footnote{By this, we mean that $\rho^{-1}$ and $\varpi^{-1}$ extend continuously to $0$ on $\CH$.} to $+\infty$ on $\CH$ and $r_{CH}$ is strictly decreasing on $(-\infty,\tau_{\CH})$.
		
		\item $\CH$ is of static type: then $r-r_-(e,M)$, $\phi$, $D_u \phi$, $\varpi-M$, $Q-e$ all extend continuously to $0$ on $\CH$.

		\item $\CH$ is of mixed type: then \begin{enumerate}
			\item $r_{CH}$ is strictly decreasing on $(\tau_T,\tau_{\CH})$.
			\item $r-r_-(e,M)$, $\phi$, $D_u \phi$, $\varpi-M$, $Q-e$ all extend continuously to $0$ on $\CH \cap \{ \tau \leq \tau_T\} $.
			
			\item $\rho$ and $\varpi$ extend to $+\infty$ on $\CH  \cap \{ \tau_T< \tau < \tau_{\CH} \}$. 
		\end{enumerate}	 Moreover, in \underline{all three cases}, the following estimates hold: for all $u_1 <u_2<u_{\CH}$, there exists $C(M,e,q_0,m^2,u_1,u_2,s)>0$ such that for all  $u_1  \leq u \leq u_2$ and $v \geq v(u)$, defining $\psi:=r\phi$, recalling \eqref{nudef}, \eqref{lambdadef} and that $K_-(M,e)<0$: \begin{equation} \label{trappedestimate1} 
		\int_{u_1}^{u_2}  |\partial_u \log(\Omega^2)|(u,v)du  \leq C \cdot (v^{2-2s} 1_{ \{s<1\}} +   1+  \log(v)^2 1_{ \{s=1\}}) ,
		\end{equation} 	\begin{equation}  \label{trappedestimate2} 
		\int_{u_1}^{u_2}  |D_u \psi|(u,v)du  \leq C ,	\end{equation} 	\begin{equation}  \label{trappedestimate3} 
		\int_{u_1}^{u_2}   |D_u \phi|(u,v)du  \leq C \cdot (v^{1-s} 1_{ \{s<1\}} +   1 +  \log(v)1_{ \{s=1\}}) ,	\end{equation} 	\begin{equation}  \label{trappedestimate4} 
		\int_{u_1}^{u_2}  |\nu|(u,v)du  \leq C ,
		\end{equation} 	\begin{equation}  \label{trappedestimate5} 
		|\lambda|(u,v)  \leq C \cdot  v^{-2s} ,
		\end{equation} 
		\begin{equation}  \label{trappedestimate6} 
		|\phi|(u,v) \leq C \cdot   (v^{1-s} 1_{ \{s<1\}} +   1 +  \log(v)1_{ \{s=1\}}),
		\end{equation} 
		\begin{equation}  \label{trappedestimate7} 
		|D_v \phi|(u,v) \leq C \cdot v^{-s},
		\end{equation}    
		\begin{equation}  \label{trappedestimate8} 
		|Q|(u,v) \leq C \cdot 	(v^{2-2s} 1_{ \{s<1\}} +   1+  \log(v)^2 1_{ \{s=1\}}),	\end{equation}   \begin{equation} \label{trappedestimate9} 
		C^{-1} \cdot  e^{2.01 K_- \cdot  v}  \leq \Omega^2(u,v) \leq C \cdot  e^{1.99 K_- \cdot  v},	\end{equation} \begin{equation} \label{trappedestimate11} 
		|\partial_v \log(\Omega^2) -2K_-| \leq C \cdot \color{black}v^{1-2s},	\end{equation}
		\begin{equation} \label{trappedestimate10} 
		\int_{v(u)}^{v} Ric(V,V)(u,v')dv' \geq C \cdot  e^{1.98 |K_-| v} ,	\end{equation}
		where $V:= \Omega^{-2} \partial_v$ is a null radial geodesic vector field which is transverse to $\CH$. 
	\end{enumerate}
	\begin{rmk} \label{weirdrate}
The decay rates $e^{2.01 K_- \cdot  v}$, $e^{1.99 K_- \cdot  v}$, $e^{1.98 K_- \cdot  v}$ were chosen as $e^{(2+\eta) K_- \cdot  v}$, $e^{ (2-\eta) K_- \cdot  v}$, $e^{(2-2\eta) K_- \cdot  v}$ where the choice $\eta=0.01$ is arbitrary (and not very important). We will use similar notations in the sequel.
	\end{rmk} \color{black}
	\begin{rmk} \label{s<1remark}
		While we always require $s>\frac{1}{2}$ (see Theorem \ref{previous}), we must consider the three cases $s<1$, $s=1$ or $s>1$ for the statement of our estimates, as we have different rates in each of those cases, see \eqref{trappedestimate1}, \eqref{trappedestimate3}, \eqref{trappedestimate6}, \eqref{trappedestimate8}. For the sake of simplicity and fluidity of exposition, we will assume that $\frac{1}{2}<s<1$ in the \underline{proof} of the estimates of Theorem \ref{classification} and in fact in the rest of the paper. We can do this with no loss of generality, as this just makes the assumption in Theorem \ref{previous} \textit{weaker}. Indeed, note that the case $s \geq 1$ is slightly easier and our proof works just as well in this situation.
	\end{rmk}
	
\end{thm}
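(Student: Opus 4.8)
The plan is to combine the local estimates near $i^+$ (from Theorem \ref{previous}, valid in the region $\{u \leq u_s\}$) with a continuation/breakdown argument governed by the staticity integral $\int_{C_u} \frac{dr}{\frac{2\rho}{r}-1}$. First I would define the static set $S_0 \subset \CH$ as the set of $\tau_0$ for which $\int_{C_{\tau_0}} \frac{dr}{\frac{2\rho}{r}-1} = +\infty$, and observe via the Raychaudhuri equations \eqref{RaychU}–\eqref{RaychV} together with the null energy condition (both $\partial_u(\kappa^{-1}) \geq 0$ and $\partial_v(\iota^{-1}) \geq 0$) that $S_0$ is a past set. Hence $S_0 \in \{\emptyset, \CH, (-\infty,\tau_T]\}$, giving the three-fold split into dynamical, static, and mixed type. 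The correspondence with Definitions \ref{dynamicaldef}–\ref{mixeddef} then comes from the observation that $\tau \in S_0$ forces $r_{CH}$ to be locally constant (since $\partial_u r_{CH}$ is related to the staticity integrand), while $\tau \notin S_0$ eventually forces the Hawking mass to blow up — this is precisely Theorem \ref{blowuprough} and its corollary. For the static-type and the past-part of mixed-type cases, the bootstrap argument sketched in step \ref{stepstatic} of the introduction propagates the $\{u \leq u_s\}$ estimates onto a trapped rectangle $[u_s, \tau_0 + \epsilon] \times [v_0, +\infty)$, and on this trapped rectangle one shows $\phi \equiv 0$ on $\CH \cap \{\tau \leq \tau_0\}$, so that $r - r_-(e,M), \phi, D_u\phi, \varpi - M, Q - e$ all extend continuously to $0$ there; the isometry with Reissner--Nordström follows by uniqueness.

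The second half of the theorem — the uniform estimates \eqref{trappedestimate1}–\eqref{trappedestimate10} valid in \emph{all three} cases on any compact $[u_1,u_2]$ — is where the real work lies, and I expect the main obstacle to be propagating estimates across the \emph{non-static} part of the Cauchy horizon, far from $i^+$, where no smallness parameter is available. The key structural input is Theorem \ref{trappedrough}: the whole Cauchy horizon is surrounded by a trapped neighborhood $\mathcal{T}'$, which therefore has \emph{finite space-time volume} $\int_{\mathcal{T}'} \Omega^2 \, du\, dv < \infty$ (because in the trapped region $\lambda < 0$, so $\int \lambda\, dv$ converges, and one controls $\Omega^2$ against $|\lambda|$ via $\kappa$). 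On a region of finite space-time volume one establishes $L^1$–$L^\infty$ estimates (Lemma \ref{finitevolumeestlemma2} in the introduction's outline): integrating the transport equations \eqref{Field2}, \eqref{Radius3}, \eqref{chargeUEinstein}, \eqref{ChargeVEinstein}, \eqref{RaychU} along the $u$-direction and using Grönwall, one gets the bounds \eqref{trappedestimate2}, \eqref{trappedestimate4} on $\int |D_u\psi|\, du$ and $\int|\nu|\, du$ directly (no $v$-growth), and then \eqref{trappedestimate3}, \eqref{trappedestimate6} on $\int|D_u\phi|\, du$ and $|\phi|$ with the advertised $v^{1-s}$ growth coming from the event-horizon decay $|\phi|_{\mathcal{H}^+} \lesssim v^{-s}$ fed through the $v$-integration of the source terms. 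The $v$-side estimates \eqref{trappedestimate5}, \eqref{trappedestimate7} — the sharp decay $|\lambda| \lesssim v^{-2s}$, $|D_v\phi| \lesssim v^{-s}$ — are propagated from the event horizon using the outgoing Raychaudhuri equation \eqref{RaychV} and equation \eqref{Field2} once the $u$-direction quantities are controlled.

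For the $\Omega^2$ estimates \eqref{trappedestimate9}, \eqref{trappedestimate11} I would integrate \eqref{Omega3} (the equation for $\partial_u\partial_v \log(r\Omega^2)$): the right-hand side is $\frac{\Omega^2}{4r^2}(1 - 3Q^2/r^2 + \dots)$, and combining with the charge bound \eqref{trappedestimate8} and the scalar-field bounds, one closes a bootstrap showing $\partial_v \log\Omega^2 \to 2K_-$ at rate $v^{1-2s}$, hence the exponential pinching $\Omega^2 \sim e^{2K_- v}$ up to the $\eta = 0.01$ slack (Remark \ref{weirdrate}). Finally, the curvature blow-up \eqref{trappedestimate10} follows from the formula for $Ric(V,V)$ with $V = \Omega^{-2}\partial_v$: one has $Ric(V,V) = \frac{2}{r}\Omega^{-4}|D_v\phi|^2 \cdot(\text{correction})$ roughly, i.e.\ $Ric(V,V) \gtrsim \Omega^{-4}|D_v\phi|^2$, which is forced to be non-integrable because of the lower bound hypothesis \eqref{lowerhyp} on $\int_v^\infty |D_v\phi|_{\mathcal{H}^+}^2$ propagated inward, combined with the exponential decay of $\Omega^2$ making $\Omega^{-4}$ blow up like $e^{-4K_- v} = e^{4|K_-|v}$; the net rate $e^{1.98|K_-|v}$ absorbs all the $\eta$-losses. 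The subtle point throughout is that these estimates must hold uniformly on $[u_1,u_2]$ \emph{regardless of type}, so one cannot use monotonicity of the mass (unavailable here, cf.\ Remark \ref{monotonicityremark}) — the finite-volume/$L^1$–$L^\infty$ machinery is what replaces it, and verifying that the trapped neighborhood $\mathcal{T}'$ genuinely covers a full one-sided neighborhood of every point of $\CH$ (including the transition time $\tau_T$ in the mixed case, where one glues the static-side trapped rectangle to the dynamical-side blow-up region) is the delicate bookkeeping step.
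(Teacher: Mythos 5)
Your overall architecture is essentially the paper's: the static set $S_0$ defined by \eqref{MihalisPHDcondition}, its past-set property via \eqref{RaychU}, the rigidity bootstrap giving $\phi\equiv 0$ and the Reissner--Nordstr\"{o}m limits on the static part, local mass blow-up plus its $u$-propagation on the Dafermos part, the trapped neighborhood of $\CH$, and the finite-volume $L^1$--$L^{\infty}$ machinery (Lemma \ref{finitevolumeestlemma2}) for the uniform estimates \eqref{trappedestimate1}--\eqref{trappedestimate9}, \eqref{trappedestimate11}. Up to that point the sketch matches the paper.

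The genuine gap is in your last step, the transverse curvature lower bound \eqref{trappedestimate10}. You propose to take the event-horizon lower bound \eqref{lowerhyp}, ``propagate it inward'', and combine it with $\Omega^{-4}\sim e^{4|K_-|v}$. Two problems. First, there is no mechanism to transport a lower bound on the outgoing flux $\int |D_v\phi|^2$ in the $u$-direction beyond the region $\{u\le u_s\}$ where \eqref{phiVLBlowerbound} was proven in \cite{Moi}: the field equation \eqref{Field} has reflection terms with no sign, and the absence of any such monotonicity far from $i^+$ is exactly the difficulty this theorem must overcome (cf.\ Remark \ref{monotonicityremark}). Second, even where the flux bound is available, the weight cannot be ``combined'' the way you indicate: in $\int \Omega^{-4}|\partial_v\phi|^2\,dv'$ one may only pull out the \emph{infimum} of $\Omega^{-4}$ over the integration range, and since \eqref{phiVLBlowerbound} is a tail bound $\int_v^{+\infty}|\partial_v\phi|^2\gtrsim v^{-p}$ with no dyadic localization (the pointwise bound $|\partial_v\phi|\lesssim v^{-s}$ only controls the tail past $\alpha v$ by $v^{1-2s}$, which is not smaller than $v^{-p}$ because $p\ge 2s-1$), this route yields at best a polynomial lower bound --- not the exponential growth of \eqref{trappedestimate10}, and not even an evident blow-up of $Ric(V,V)$.

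The paper's mechanism is geometric and runs through $\lambda$ instead of the field: the integrated lower bound \eqref{lambdaus} on $\int|\lambda|$ (obtained from \eqref{lowerhyp} via \eqref{RaychV} on the early cones) gives by pigeonhole a dyadic sequence $v_n$ with $|\lambda|(u_s,v_n)\gtrsim v_n^{-p-1}$; this is transported in $u$ across the whole rectangle because \eqref{Radius3} makes $\partial_u(r\lambda)=O(\Omega^2)$ exponentially small (this is where the $L^1$--$L^{\infty}$ estimates enter); dividing by the upper bound on $\Omega^2$ gives $|\lambda|/\Omega^2\gtrsim e^{1.98|K_-|v_n}$, upgraded to all $v$ by the monotonicity of $\iota^{-1}=4|\lambda|/\Omega^2$ from \eqref{RaychV}; finally \eqref{RaychV} converts this into a lower bound on $\int |\partial_v\phi|^2/\Omega^2$, and $\int Ric(V,V)\,dv' \ge \bigl(\inf\Omega^{-2}\bigr)\int \Omega^{-2}|\partial_v\phi|^2\,dv'$ gives \eqref{trappedestimate10}. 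Your proof of that estimate should be replaced by this chain; as written it would fail. (Relatedly, the strict decrease of $r_{CH}$ on the non-static part needs the converse step --- boundedness of $\nu$ would force $\kappa^{-1}\lesssim 1$ and hence bounded mass, contradicting the blow-up --- which your sketch leaves implicit but which does follow from the same quantitative estimates.)
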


\subsection{A trapped neighborhood of the Cauchy horizon emanating from $i^+$}
In this section, we state a side result: there exists a trapped neighborhood $\mathcal{T}^{'}$ of $\CH$, as depicted in Figure \ref{Figtrapped1}. This result, which is of independent interest, is also used as a key ingredient in the proof of the classification of Theorm \ref{classification}.
\begin{thm} \label{trappedtheorem} We work under the assumptions of Theorem \ref{previous} and let $\CH$ be either the Cauchy horizon emanating from time-like infinity in the one-ended case, or $\CHone$, $\CHtwo$ in the two-ended case. \\Then for every $u<u_{\CH}$, there exists $v(u) \in \RR$ such that $\{u\} \times [v(u),+\infty) \subset \mathcal{T}$. \\ In particular, there are no limit points of $\mathcal{A}$ on $\CH$.

\end{thm}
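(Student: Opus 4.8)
The plan is to prove that every outgoing cone $C_u$ reaching $\CH$ is \emph{eventually} trapped, and then to assemble these segments into an open neighborhood of $\CH$ lying inside $\mathcal{T}$. Since $\nu<0$ in the one-ended case, and $\nu<0$ on a neighborhood of $\CHone$ (resp.\ $\CHtwo$) in the two-ended case, the characterization of the trapped region gives $(u,v)\in\mathcal{T}\iff 2\rho(u,v)>r(u,v)\iff\lambda(u,v)<0$ in the region under consideration. Two soft facts drive the argument: (i) since $\mathcal{H}^+$ is an event horizon and $\nu<0$, $r$ is bounded above by a constant $R$ in the black hole interior, so whenever $\rho(u,v)\to+\infty$ as $v\to+\infty$ one has $2\rho(u,v)>r(u,v)$ for all large $v$, i.e.\ $\{u\}\times[v(u),+\infty)\subset\mathcal{T}$; and (ii) by the outgoing Raychaudhuri equation \eqref{RaychV} together with \eqref{iotadef}, the quantity $\iota^{-1}=-4\lambda/\Omega^2$ is non-decreasing in $v$, so once $C_u$ enters $\mathcal{T}$ it remains in $\mathcal{T}$; in particular $(u',v_\ast)\in\mathcal{T}$ implies $\{u'\}\times[v_\ast,+\infty)\subset\mathcal{T}$.

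I would then split into the three cases of the classification according to the structure of the static set $S_0$ (a past set, by a soft consequence of \eqref{RaychU}). If $\CH$ is of dynamical type, the local blow-up of the Hawking mass on $\CH\cap\{u\le u_s\}$ (Lemma \ref{localmassblowupdynamical}, via \cite{Moi}) combined with the propagation of mass blow-up (the corollary of Theorem \ref{blowuprough}) gives $\rho(u,\cdot)\to+\infty$ for every $u<u_{\CH}$, and reduction (i) finishes this case. If $\CH$ is of mixed type with transition time $u_T$, the same argument using Lemma \ref{localmassblowupmixed} (local blow-up just to the future of $u_T$) and the propagation corollary handles every $u>u_T$; for $u\le u_T$, every such $u$ is a static point, and I would invoke the static analysis, Theorem \ref{classificationtheorem}, which provides a trapped rectangle $[u_s,u+\epsilon]\times[v_0,+\infty)$ surrounding $\CH\cap\{u'\le u+\epsilon\}$ and in particular a whole initial segment of $C_u$ in $\mathcal{T}$. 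If $\CH$ is of static type, every $u<u_{\CH}$ is a static point and Theorem \ref{classificationtheorem} again supplies the required trapped rectangle.

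To conclude, for each $u<u_{\CH}$ we have produced $v(u)\in\RR$ with $\{u\}\times[v(u),+\infty)\subset\mathcal{T}$. By openness of $\mathcal{T}$ there is, around each $(u,v(u))$, a coordinate box $(u-\delta,u+\delta)\times(v(u)-\delta',v(u)+\delta')\subset\mathcal{T}$; hence $(u',v(u))\in\mathcal{T}$ for $|u'-u|<\delta$, and the monotonicity (ii) upgrades this to $\{u'\}\times[v(u),+\infty)\subset\mathcal{T}$ for all such $u'$. Therefore $\mathcal{T}':=\bigcup_{u<u_{\CH}}\{u\}\times(v(u),+\infty)$ is an open subset of $\mathcal{T}$ which contains, near every point of $\CH$, a set of the form $(u_0-\delta,u_0+\delta)\times(v_0,+\infty)$ — the trapped neighborhood depicted in Figure \ref{Figtrapped1}. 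Finally, since $\mathcal{A}\cap\mathcal{T}=\emptyset$ by definition, a sequence in $\mathcal{A}$ converging to a point $p=(u_\ast,+\infty)\in\CH$ (necessarily with $u_\ast<u_{\CH}$, since the endpoint of $\CH$ is excluded) would eventually lie in such a box, hence in $\mathcal{T}$: a contradiction. Thus $\mathcal{A}$ has no limit point on $\CH$.

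The main obstacle is the static-type case and the $\{u\le u_T\}$ portion of the mixed-type case, i.e.\ Theorem \ref{classificationtheorem}: one must run a bootstrap showing that a static point forces the radiation to vanish on $\CH$ to its past, that the near-Reissner--Nordstr\"om behavior propagates all the way down to the region where the \cite{Moi} estimates apply, and that the resulting rectangle is genuinely trapped (equivalently $r_-<r<r_+$ throughout) — this is precisely where the finiteness of the space-time volume and the sharp quantitative estimates \eqref{trappedestimate1}--\eqref{trappedestimate11} enter. The dynamical pieces and the $\{u>u_T\}$ part of the mixed case, by contrast, reduce quickly to the mass-blow-up inputs (Theorem \ref{blowuprough} and its corollary), which are taken as already established.
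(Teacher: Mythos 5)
Your proposal is correct and follows essentially the same route as the paper's proof of this result (Proposition \ref{trappedprop}): the three-way split according to the classification of $\CH$, with the dynamical case and the $\{u>u_T\}$ part of the mixed case handled by local mass blow-up (Lemmas \ref{localmassblowupdynamical}, \ref{localmassblowupmixed}) propagated by Lemma \ref{massblowuplemma} together with the boundedness of $r$, and the static portions handled by the trapped rectangles of Theorem \ref{classificationtheorem}, using the monotonicity from \eqref{RaychV} throughout. Your explicit openness argument for the absence of limit points of $\mathcal{A}$ on $\CH$ is a minor elaboration of what the paper leaves implicit, not a different method.
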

\subsection{A breakdown criterion to propagate the weak null singularity} In this last section, we present a breakdown criterion: essentially, if the Dafermos condition \eqref{staticrough} is satisfied on one cone, then the Hawking mass blows up on $\CH$. This result is the very first step towards the proof of the existence of a trapped neighborhood, the classification of the Cauchy horizon and ultimately the inextendibility results.
\begin{thm} \label{propagationtheorem} Under the assumptions of Theorem \ref{previous}, assume that \underline{one} of the following (non-equivalent) conditions holds on an outgoing cone $C_{u_0}$ under $\CH$, for some $u_0 <u_{\CH}$ (recalling the definition of $\kappa$ from \eqref{kappadef}):
	$$\int_{v_0}^{+\infty} \kappa(u_0,v) dv <+\infty,$$
	$$\limsup_{v \rightarrow+\infty} \rho(u_0,v)=+\infty,$$
	$$\limsup_{v \rightarrow+\infty} \varpi(u_0,v)=+\infty,$$
	$$ \limsup_{v \rightarrow+\infty} |\nu|(u_0,v) >0,$$ 		$$\limsup_{v \rightarrow+\infty} |\phi|^2(u_0,v)+ \limsup_{v \rightarrow+\infty} |Q|(u_0,v)=+\infty,$$
	then for all $u_0 \leq u < u_{\CH}$, \underline{both} the following conditions hold
	$$\lim_{v \rightarrow+\infty} \rho(u,v)=+\infty,$$
	$$\int_{v_0}^{+\infty} \kappa(u,v)dv <+\infty.$$
	
\end{thm}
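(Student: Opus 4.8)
The plan is to first show that all five hypotheses listed are equivalent to the single statement ``$u_0 \notin S_0$'', i.e.\ that $\int_{v_0}^\infty \kappa(u_0,v)\,dv < +\infty$; then to prove the dichotomy (static behaviour $\Leftrightarrow$ $u \in S_0$) and the propagation of the blow-up. For the equivalence of the five conditions, I would use the constitutive relations \eqref{murelation} together with the mass equations \eqref{massVEinstein} and \eqref{RaychU}. The key observation is that on an outgoing cone reaching $\CH$, the Raychaudhuri equation \eqref{RaychU} shows $\kappa^{-1}$ is monotone nondecreasing in $u$, and by the stability estimates of Theorem \ref{previous} (recalled in Proposition \ref{LBprop}) we already control everything on $\{u \le u_s\}$; integrating \eqref{RaychV} and using $\kappa = \lambda/(1-\tfrac{2\rho}{r})$ one identifies $\int \kappa\,dv = \int \tfrac{dr}{2\rho/r - 1}$, which is the Dafermos condition \eqref{staticrough}. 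From $1 - \tfrac{2\rho}{r} = \kappa^{-1}\lambda$ and the bound $|\lambda| \lesssim v^{-2s}$ (which holds in the trapped regime), finiteness of $\int \kappa\,dv$ forces $r$ to converge, $\limsup|\nu| = 0$ (since $\nu = \iota(1-\tfrac{2\rho}{r})$ and $1-\tfrac{2\rho}{r} \to 0$ with $\iota$ controlled), and conversely a nonzero $\limsup|\nu|$, or blow-up of $\rho$, $\varpi$, $|\phi|$ or $|Q|$, all propagate (via \eqref{massVEinstein}, \eqref{chargeUEinstein}–\eqref{ChargeVEinstein}, \eqref{Field2}) into $\int\kappa\,dv < \infty$. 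This part is essentially bookkeeping with the ODE structure of the system in double-null gauge; I expect it to be routine given the a priori estimates.

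The substantive content is the implication ``$\int_{v_0}^\infty \kappa(u_0,v)\,dv < +\infty$ at \emph{one} $u_0$ $\Rightarrow$ $\rho(u,v) \to +\infty$ as $v\to+\infty$ for \emph{all} $u \ge u_0$''. I would split this according to the outline of Section \ref{outline}: Step \ref{stepstatic} gives the contrapositive at $u_0$ itself. Namely, if $u_0 \in S_0$ (the Dafermos condition \emph{fails}), then $\CH \cap \{u \le u_0\}$ is isometric to a Reissner–Nordstr\"om Cauchy horizon with $\phi \equiv 0$ there, and a trapped neighbourhood $\mathcal{T}_{u_0+\epsilon}$ exists — this is proved by the bootstrap argument of Theorem \ref{classificationtheorem}, extending the estimates from $\{u \le u_s\}$ to a rectangle $[u_s,u_0]\times[v_0,+\infty)$. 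Hence if $\int\kappa\,dv$ is finite at $u_0$, then $u_0$ is a Dafermos point, and by the trichotomy of $S_0$ (which is a past set by \eqref{RaychU}), either $\CH$ is of dynamical type or it is of mixed type with $u_0 > u_T$. In the dynamical case, Lemma \ref{localmassblowupdynamical} gives $\rho \to +\infty$ on $\CH \cap \{u \le u_s\}$ using the instability estimates of \cite{Moi}; in the mixed case Lemma \ref{localmassblowupmixed} gives $\rho \to +\infty$ on $\CH \cap (u_T, u_T+\epsilon]$.

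Finally I would invoke the propagation lemma, Lemma \ref{massblowuplemma}: if $\rho \to +\infty$ on one outgoing cone $C_{u'}$ reaching $\CH$, then $\rho \to +\infty$ on every $C_{u''}$ with $u'' \ge u'$. The proof of this lemma is the real obstacle, since no monotonicity of $\rho$ is available (unlike the uncharged models); one must combine the finite-spacetime-volume property of the trapped neighbourhood with the $L^1$–$L^\infty$ estimates of Lemma \ref{finitevolumeestlemma2}, controlling the charge term $\tfrac{Q^2}{r^2}\lambda$ and the field term $\tfrac{r^2}{2\kappa}|D_v\phi|^2$ in \eqref{massVEinstein} uniformly, and showing that once $1 - \tfrac{2\rho}{r}$ is pushed negative (trapped) on one cone, the Raychaudhuri monotonicity of $\kappa^{-1}$ plus these quantitative bounds prevent $1-\tfrac{2\rho}{r}$ from returning to $0$, forcing $\rho/r \to +\infty$ and hence $\rho \to +\infty$ since $r$ is bounded. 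Chaining: finiteness of $\int\kappa\,dv$ at $u_0$ $\Rightarrow$ $\rho \to+\infty$ on \emph{some} cone $C_{u'}$ with $u' \ge u_0$ (by the local blow-up lemmas and, if $u_0 > u_s$, a short additional argument to reach past $u_s$) $\Rightarrow$ $\rho\to+\infty$ on all $C_u$, $u \ge u_0$ (by Lemma \ref{massblowuplemma}) $\Rightarrow$ $\int_{v_0}^\infty \kappa(u,v)\,dv < +\infty$ for all such $u$ (since $r$ bounded and $\rho\to\infty$ make $\tfrac{dr}{2\rho/r-1}$ integrable). I expect the delicate point to be the uniform control of the charge feedback through \eqref{chargeUEinstein}–\eqref{ChargeVEinstein} inside the trapped rectangle, which is precisely where the charged case departs from Dafermos's monotone framework.
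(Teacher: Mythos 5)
Your overall architecture is the paper's: the trichotomy of $\CH$ via the static set $\mathcal{S}_0$ (Corollary \ref{threetypes}, resting on the rigidity Theorem \ref{classificationtheorem}), the local mass blow-up Lemmas \ref{localmassblowupdynamical} and \ref{localmassblowupmixed}, the propagation Lemma \ref{massblowuplemma}, and the conclusion that $\int \kappa\,dv<\infty$ at every later $u$ (either, as you do, from mass blow-up plus boundedness of $r$, or, as the paper does, simply from $[u_0,u_{\CH})\subset\CH\setminus\mathcal{S}_0$ since $\mathcal{S}_0$ is a past set). Your second and third paragraphs reproduce this chain correctly, and this is essentially how Proposition \ref{propprop} is proved.

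The genuine gap is in your first paragraph, i.e.\ in how the four blow-up hypotheses ($\rho$, $\varpi$, $|\nu|$, $|\phi|^2+|Q|$) are reduced to the Dafermos condition at $u_0$. You claim all five conditions are equivalent to $u_0\notin\mathcal{S}_0$ and propose to verify this by ``routine bookkeeping'' with \eqref{murelation}, \eqref{massVEinstein}, \eqref{chargeUEinstein}--\eqref{ChargeVEinstein}. The equivalence is false in general (the theorem itself calls the conditions non-equivalent): $u_0\notin\mathcal{S}_0$ does not force $|\phi|^2+|Q|$ to blow up, as the bounds \eqref{trappedestimate6}, \eqref{trappedestimate8} show. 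Only the one-way implications are needed, but these are not routine either: there is no obvious direct argument that, say, $\limsup|\nu|>0$ or blow-up of $|\phi|^2+|Q|$ at $u_0$ forces $\int\kappa(u_0,\cdot)\,dv<\infty$. The paper obtains all of them contrapositively by applying the rigidity Theorem \ref{classificationtheorem} (via Corollary \ref{threetypes}) at $u_0$: if $u_0\in\mathcal{S}_0$, the rigidity estimates give $\rho,\varpi$ bounded, $\lim|\nu|=0$, $\phi\to 0$, $Q\to e$, so none of the five conditions can hold. You have this tool in hand (you invoke Step \ref{stepstatic} in your second paragraph) but you only wire it to the first hypothesis. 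Moreover your sentence asserting that finiteness of $\int\kappa\,dv$ forces $\limsup|\nu|=0$ is backwards: it is the staticity condition \eqref{MihalisPHDcondition} that forces $\nu\to 0$, whereas under the Dafermos condition one shows a posteriori that $\lim|\nu|>0$. The repair is simply to route all five hypotheses through the contrapositive of Corollary \ref{threetypes}; after that your chain goes through as written. (A minor remark: the paper's proof of Lemma \ref{massblowuplemma} is a direct bootstrap on \eqref{massUEinstein} with the ansatz $|\phi|^2+Q^2\lesssim\rho^{2\alpha}$, not the finite-volume $L^1$--$L^\infty$ machinery you sketch, but since you cite the lemma as a black box this does not affect the argument.)
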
 Once this result is proven, one can establish the classification of Theorem \ref{classification}. \textit{A posteriori}, once Theorem \ref{classification} is also available, we obtain the following result as a corollary of Theorem \ref{propagationtheorem}:
\begin{cor} We make the same assumptions as Theorem \ref{propagationtheorem}, for some $u_0 <u_{\CH}$.
	
	Then either $\CH$ is of dynamical type, or $\CH$ is of mixed type with $u_0 > u_T$. 
	
	In both cases, for all $u_0 \leq u < u_{\CH}$, $ \lim_{v \rightarrow+\infty} |\partial_u r|(u,v)$ exists and $ \lim_{v \rightarrow+\infty} |\partial_u r|(u,v) >0$. 
\end{cor}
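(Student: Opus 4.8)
The plan is to deduce the corollary by combining Theorem~\ref{propagationtheorem} with the trichotomy of Theorem~\ref{classification}, and then to read off the limiting behaviour of $\nu = \partial_u r$ from the structure of the dynamical part of $\CH$. First I would invoke Theorem~\ref{propagationtheorem}: whichever of the listed conditions holds on $C_{u_0}$, its conclusion gives $\lim_{v\to+\infty}\rho(u,v) = +\infty$ for every $u_0 \le u < u_{\CH}$ (together with $\int_{v_0}^{+\infty}\kappa(u,v)\,dv<+\infty$ there, which is anyway automatic once $\rho$ blows up, since then $\kappa = |\lambda|/(2\rho/r-1)\le |\lambda|$ eventually and $\int|\lambda|\,dv<+\infty$ by monotonicity of $r$ on the eventually trapped cone $C_u$). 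In particular $\rho$, hence $\varpi = \rho + Q^2/(2r)$, blows up on $C_{u_0}$. This excludes that $\CH$ be of static type, for which $\varpi-M\to 0$ everywhere by Theorem~\ref{classification}(2); and it excludes that $\CH$ be of mixed type with transition time $u_T \ge u_0$, since by Theorem~\ref{classification}(3)(b) $\varpi$, hence $\rho$, would extend continuously and finitely to $\CH\cap\{\tau\le\tau_T\}\ni u_0$. Hence $\CH$ is of dynamical type, or of mixed type with $u_T < u_0$; in either case every $u\in[u_0,u_{\CH})$ lies in the dynamical part, where $\rho,\varpi$ extend to $+\infty$ and (by Theorem~\ref{classification}) $r_{CH}$ is strictly decreasing.

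Next I would show that $\lim_{v\to+\infty}|\partial_u r|(u,v)$ exists for each such $u$. Fix $u_1<u<u_2$ in $[u_0,u_{\CH})$. By Theorem~\ref{trappedtheorem} the cone $\{u\}\times[v(u),+\infty)$ is trapped, so there $\lambda<0$ and $r(u,\cdot)$ is decreasing with $0<r_{CH}(u)\le r(u,v)\le r(u,v(u))$, strict positivity of $r_{CH}(u)$ holding because $u<u_{\CH}$. Integrating the reformulation \eqref{Radius3}, $\partial_v(-r\nu)(u,v) = \tfrac{\Omega^2}{4}\bigl(1-Q^2/r^2-m^2 r^2|\phi|^2\bigr)(u,v)$, in $v$ from $v(u)$ to $+\infty$: by \eqref{trappedestimate9} the lapse $\Omega^2$ decays exponentially in $v$, whereas $|Q|$ and $|\phi|$ grow at most polynomially by \eqref{trappedestimate8} and \eqref{trappedestimate6}, and $r$ is pinched between two positive constants; so the right-hand side is absolutely integrable with an exponentially small tail, uniformly for $u\in[u_1,u_2]$ (the constants in Theorem~\ref{classification} being allowed to depend on $u_1,u_2$). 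Hence $(-r\nu)(u,v)$ converges as $v\to+\infty$, and dividing by $r(u,v)\to r_{CH}(u)>0$ shows that $\nu_{CH}(u):=\lim_{v\to+\infty}\nu(u,v)$ exists with $\nu_{CH}(u)\le 0$. Passing to the limit in $r(u,v)=r(u_1,v)+\int_{u_1}^u\nu(u',v)\,du'$, using the uniform convergence on $[u_1,u_2]$, yields $r_{CH}(u)=r_{CH}(u_1)+\int_{u_1}^u\nu_{CH}(u')\,du'$, so $\nu_{CH}$ is continuous and $r_{CH}\in C^1$ on $(u_0,u_{\CH})$ with $r_{CH}'=\nu_{CH}$.

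It then remains to prove $\nu_{CH}(u)<0$, i.e.\ $\lim_{v\to+\infty}|\partial_u r|(u,v)>0$, and this is the step I expect to be the main obstacle. The strict monotonicity of $r_{CH}$ from Theorem~\ref{classification} already gives $r_{CH}'\le 0$ with no proper subinterval of vanishing, but since $r_{CH}$ is only known to be $C^1$ one must still rule out an isolated zero $\nu_{CH}(u_*)=0$ for some $u_*\in[u_0,u_{\CH})$. I would argue by contradiction: if $\nu_{CH}(u_*)=0$, then by the exponentially small tail of Step~2, $|\nu|(u_*,v)$ is exponentially small in $v$; feeding this into $\frac{2\rho}{r}-1 = \frac{4|\nu|\,|\lambda|}{\Omega^2}$ (see \eqref{murelation}), together with the sharp exponential rate of $\Omega^2$ and the matching asymptotics for $\lambda$ (using $\iota^{-1}=4|\lambda|/\Omega^2$ together with the Raychaudhuri monotonicity \eqref{RaychV}, \eqref{trappedestimate5} and the near-constancy \eqref{trappedestimate11} of $\partial_v\log\Omega^2$), one aims to show that $\tfrac{2\rho}{r}-1$ cannot in fact tend to $+\infty$ along $C_{u_*}$ — contradicting $\rho(u_*,v)\to+\infty$. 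I emphasize that the crude two-sided bounds \eqref{trappedestimate5}, \eqref{trappedestimate9} (whose exponents differ by the small $\eta$ of Remark~\ref{weirdrate}) are not sharp enough for this balancing: it is precisely here that the \emph{finer} asymptotics established in the course of proving Theorem~\ref{classification} on the dynamical part enter, and in effect this strict negativity of $\nu_{CH}$ is the quantitative content that underlies the strict monotonicity of $r_{CH}$ there. Granting it, $\nu_{CH}(u)<0$ for all $u\in[u_0,u_{\CH})$, which is the claim; the classification half of the statement, by contrast, is soft once Theorems~\ref{propagationtheorem} and \ref{classification} are available.
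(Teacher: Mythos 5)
Your first two steps (the classification dichotomy via Theorem \ref{propagationtheorem}/Proposition \ref{propprop} and Theorem \ref{classification}, and the existence of $\nu_{CH}(u):=\lim_{v\to+\infty}\partial_u r(u,v)$ by integrating \eqref{Radius3} against the exponential decay \eqref{trappedestimate9} and the polynomial bounds \eqref{trappedestimate6}, \eqref{trappedestimate8}) match the paper. The genuine gap is your third step: you leave the strict negativity of $\nu_{CH}$ unproven, asserting that the ``crude'' bounds \eqref{trappedestimate5}, \eqref{trappedestimate9} cannot balance the exponential rates and that one must import finer asymptotics ``from the proof of Theorem \ref{classification}''. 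That claim is wrong, and the deferral is moreover quasi-circular: the pointwise non-vanishing of $\nu_{CH}$ on the dynamical part is exactly the content of the corollary at the end of section \ref{quantitative}, which is what \emph{completes} the proof of the strict monotonicity in Theorem \ref{classification} — so there are no independent ``finer asymptotics'' to borrow.

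The step closes in a few lines with precisely the estimates you already have, because the exponential-rate mismatch $\eta$ never enters. Suppose $\liminf_{v\to+\infty}|\nu|(u_*,v)=0$ for some $u_*\in[u_0,u_{\CH})$. By \eqref{Radius3}, \eqref{trappedestimate6}, \eqref{trappedestimate8} one has $|\partial_v(r\nu)|(u_*,v)\lesssim \Omega^2(u_*,v)\,v^{2-2s}$, and \eqref{trappedestimate11} (near-constancy of $\partial_v\log\Omega^2$ at the negative rate $2K_-$) gives $\int_v^{+\infty}\Omega^2(u_*,v')(v')^{2-2s}dv'\lesssim \Omega^2(u_*,v)\,v^{2-2s}$; hence the limit of $r\nu(u_*,\cdot)$ must be $0$ and the tail bound reads $|\nu|(u_*,v)\lesssim \Omega^2(u_*,v)\,v^{2-2s}$ — a bound \emph{relative to $\Omega^2$ at the same point}, so the ratio $\kappa^{-1}=4|\nu|/\Omega^2$ (see \eqref{kappadef}) satisfies $\kappa^{-1}(u_*,v)\lesssim v^{2-2s}$ without ever comparing an upper bound for $|\nu|$ with a lower bound for $\Omega^2$. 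Then by \eqref{murelation} and \eqref{trappedestimate5}, $\frac{2\rho}{r}-1=\kappa^{-1}|\lambda|\lesssim v^{2-4s}\to 0$ since $s>\frac{3}{4}$, so $\rho(u_*,\cdot)$ stays bounded, contradicting the mass blow-up at $u_*$ furnished by Theorem \ref{propagationtheorem}. This is exactly the paper's argument; your own ``exponentially small tail'' from step two already contained the needed inequality, and the worry about the $1.99$ versus $2.01$ exponents is a red herring.
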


While we stated our results inductively -- from the most specific (the $C^2$ inextendibility theorems) to the most general (the breakdown criterion) -- we will, understandably, prove them deductively, starting from Theorem \ref{propagationtheorem} and finishing with Theorem \ref{conditionnalSCCtheoremtwoended}, Theorem \ref{CHinexttheorem} and Theorem \ref{conditionnalSCCtheoremoneended}. We refer to section \ref{outline} for the logic of the proof.

\section{Recalling the previous estimates, underlying in Theorem \ref{previous}} \label{LB}
In this section, we recall the estimates of \cite{Moi} which served in the proof of Theorem \ref{previous}. These estimates will be an important starting point in the proof of our present results. Recall (Remark \ref{s<1remark}) that we chose $\frac{1}{2} < s <1$ with \underline{no loss of generality}.

We will also renormalize the $U$ coordinate defined by \eqref{gauge1}, defining a new coordinate $u \in \RR$ (singular across $\mathcal{H}^+$) by: \begin{equation} \label{gauge1.5}
u:= \frac{\log(U)}{2K_+(M,e)}.
\end{equation}

\begin{prop}[\cite{Moi}]\label{LBprop}
	Under the assumptions of Theorem \ref{previous}, there exists a space-like curve $\gamma$ terminating at $i^+$ and such that, in $\mathcal{LB}:=J^+(\gamma) \cap \{ u \leq u_s \}$ for some $u_s \in \RR$, we have the following estimates, in the gauge \eqref{gauge2}, \eqref{gauge1.5}, \eqref{electrogauge}:
	
	\begin{equation} \label{varpigamma}
	|\varpi(u,v_{\gamma}(u))-M| \lesssim |u|^{1-2s}.
	\end{equation}	
	\begin{equation} \label{Qgamma}
	|Q(u,v_{\gamma}(u))-e| \lesssim |u|^{1-2s}.
	\end{equation}	
	\begin{equation} \label{rgamma}
	|r(u,v_{\gamma}(u))-r_- | \lesssim |u|^{1-2s}.
	\end{equation} \begin{equation} \label{phigamma}
	|\phi|(u,v_{\gamma}(u)) \lesssim |u|^{-s},
	\end{equation}	\begin{equation} \label{Omegaexp}
	e^{2.01 K_- \cdot v}	  \lesssim 		\Omega^2 \lesssim e^{1.99 K_- \cdot v},
	\end{equation}	\begin{equation} \label{phiVLB}
	| \partial_v \phi| \lesssim v^{-s},
	\end{equation}\begin{equation} \label{QphiLB}
	|Q|+ |\phi|^2 \lesssim v^{2-2s},
	\end{equation}		
	\begin{equation} \label{phiVLBlowerbound}
	\int_{v}^{+\infty}	|\partial_v \phi|^2(u,v') dv' \gtrsim v^{-p},
	\end{equation}	\begin{equation} \label{partialvOmegaLB}
	|\partial_v \log(\Omega^2)-2K_-(M,e) |  \lesssim    v^{1-2s},
	\end{equation}					
	\begin{equation} \label{lambdaLB}
	0<	 -\lambda \lesssim  v^{-2s},
	\end{equation}
	\begin{equation} \label{nuLB}
	0< -\nu \lesssim |u|^{-2s}.
	\end{equation}
	
	As a consequence of \eqref{lambdaLB} and \eqref{nuLB}, $r$ extends continuously to $\CH$ to a continuous function $r_{CH}$ on $(-\infty,u_s]$.

\end{prop}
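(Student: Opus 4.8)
\emph{Proof idea.} Although Proposition \ref{LBprop} is quoted from \cite{Moi}, the argument I would follow is a semi-global bootstrap confined to a neighbourhood of time-like infinity $i^+$, in the spirit of Dafermos's treatment of the uncharged model \cite{MihalisPHD}, \cite{Mihalis1}, with the essential new feature that the Maxwell charge $Q$ is now a \emph{dynamical} quantity, governed by \eqref{chargeUEinstein}, \eqref{ChargeVEinstein}. I would first recast the problem as a characteristic initial value problem with data on $\mathcal{H}^+ \cup \mathcal{C}_{in}$, fix the gauge \eqref{gauge2}, \eqref{gauge1}, \eqref{gauge1.5}, \eqref{electrogauge}, and decompose the relevant part of $\mathcal{Q}^+$ near $i^+$ into three dynamically distinct regions: a \emph{red-shift region} adjacent to $\mathcal{H}^+$, where $r$ stays close to $r_+(M,e)$ and the sign of $\partial_v\log(\Omega^2)$ is favourable; a bounded \emph{no-shift region} across which $r$ decreases from a value near $r_+$ to a value near $r_-$; and a \emph{blue-shift region} in which $r$ is close to $r_-(M,e)$. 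The space-like curve $\gamma$ would be taken at the interface between the no-shift and blue-shift regions (e.g.\ as $\{v=v_\gamma(u)\}$, the first $v$ at which $r$ has descended to $r_-+\epsilon_0$), and $u_s$ chosen small enough that on $\gamma \cap \{u\le u_s\}$ the inverse-polynomial decay of Assumption \ref{fieldevent} has been converted into genuine smallness.

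In the red-shift and no-shift regions the plan is to propagate the decay $|\phi|_{\mathcal{H}^+}+|D_v\phi|_{\mathcal{H}^+}\lesssim v^{-s}$ of Assumption \ref{fieldevent}, together with the convergence $\varpi\to M$, $Q\to e$, $r\to$ (the Reissner--Nordström value), inward to $\gamma$. Here one integrates the Raychaudhuri equations \eqref{RaychU}, \eqref{RaychV} (which give monotonicity of $\kappa^{-1}$ and $\iota^{-1}$ and hence two-sided control of $\lambda,\nu$ via \eqref{murelation}), the field equations \eqref{Field2}, \eqref{Field3} for $rD_v\phi$ and $rD_u\phi$, the charge transport equations \eqref{chargeUEinstein}, \eqref{ChargeVEinstein}, and the mass equations \eqref{massUEinstein}, \eqref{massVEinstein}, exploiting the red-shift damping of ingoing data exactly as in \cite{PriceLaw}. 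This yields the data estimates \eqref{varpigamma}--\eqref{phigamma} and an analogue of \eqref{partialvOmegaLB} on $\gamma$; the power $|u|^{1-2s}$ in \eqref{varpigamma}--\eqref{rgamma} reflects the $L^2_v$-norm of $D_v\phi\lesssim v^{-s}$ integrated over a dyadic range of $v$.

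The heart of the proof is the bootstrap in $\mathcal{LB}=J^+(\gamma)\cap\{u\le u_s\}$: one posits weakened versions of \eqref{Omegaexp}, \eqref{phiVLB}, \eqref{QphiLB}, \eqref{lambdaLB}, \eqref{nuLB}, \eqref{partialvOmegaLB} and recovers them with the stated constants. The key mechanism is that \eqref{Omega3} forces $\partial_v\log(\Omega^2)$ to remain within $O(v^{1-2s})$ of $2K_-(M,e)$; integrating in $v$ from $\gamma$ and noting $\int^v (v')^{1-2s}\,dv'=o(v)$ (since $s>\tfrac12$) gives $\log(\Omega^2)=2K_- v+o(v)$, i.e.\ the two-sided bound \eqref{Omegaexp} --- the loss from $2K_-$ to $1.99 K_-$ and $2.01 K_-$ being precisely the device for absorbing the $o(v)$ error. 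Via $\kappa=-\Omega^2/(4\nu)$ and $\iota=-\Omega^2/(4\lambda)$ together with the boundedness of $\kappa^{-1},\iota^{-1}$ inherited from $\gamma$, the quantities $\lambda$ and $\nu$ are then exponentially small in $v$, so the blue-shift source terms in \eqref{Field2}, \eqref{Field3} (which carry a factor $\Omega^2$, as does $\lambda\,D_u\phi$) are negligible; integrating \eqref{Field2} in $u$ over the finite range of $u$ recovers $|D_v\phi|\lesssim v^{-s}$, whence $|\phi|(u,v)\le |\phi|(u,v_\gamma(u))+\int|D_v\phi|\lesssim v^{1-s}$, and then $|Q-e|\lesssim\int r^2|\phi|\,|D_v\phi|\lesssim v^{2-2s}$ from \eqref{ChargeVEinstein}, giving \eqref{QphiLB}. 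Finally, since $2s>1$, the bounds \eqref{lambdaLB}, \eqref{nuLB} make $\lambda$ and $\nu$ integrable in $v$ and $u$ respectively, so $r$ extends continuously to $r_{CH}$ on $(-\infty,u_s]$.

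For the instability statement I would add Assumption \eqref{lowerhyp} and propagate the \emph{lower} bound $\int_v^{+\infty}|D_v\phi|^2\gtrsim v^{-p}$ from $\mathcal{H}^+$ through the red-shift region and into $\mathcal{LB}$, checking that none of the now quantitatively controlled nonlinear and blue-shift terms can cancel it; this yields \eqref{phiVLBlowerbound}, which in the companion curvature statement is blue-shifted into the divergence of $Ric(V,V)$. The hard part will be the blue-shift bootstrap: unlike the uncharged models, one must close the estimates for the dynamical charge $Q$ simultaneously with those for $\phi$ and $\Omega^2$, and one must ensure that the $o(v)$ errors in $\log(\Omega^2)$ do not accumulate across $\mathcal{LB}$ (slablike in $u$ but unbounded in $v$) --- which is exactly why the exponents $1.99$, $2.01$ rather than $2$ appear in \eqref{Omegaexp}, and why the hypothesis $s>\tfrac12$ (strengthened here to $s>\tfrac34$ for convenience) enters.
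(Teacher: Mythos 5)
The paper itself does not prove Proposition \ref{LBprop}: it is quoted verbatim from \cite{Moi}, so the only fair comparison is with the strategy of that cited work. Your overall architecture (characteristic data on $\mathcal{H}^+\cup\mathcal{C}_{in}$, gauge fixing, red-shift/no-shift/blue-shift decomposition, and a coupled bootstrap for $\phi$, $Q$, $\Omega^2$ in the region near $\CH$) is indeed the strategy of \cite{Moi}, and most of your heuristics (the origin of the $|u|^{1-2s}$ powers on $\gamma$, the growth $|\phi|\lesssim v^{1-s}$, $|Q|\lesssim v^{2-2s}$, the role of $s>\tfrac12$, and the need to propagate the lower bound \eqref{lowerhyp} for \eqref{phiVLBlowerbound}) are correct.

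There is, however, a genuine gap in where you place $\gamma$. You take $\gamma$ to be (essentially) the level set $\{r=r_-+\epsilon_0\}$, i.e.\ the no-shift/blue-shift interface, and you start the exponential bootstrap there. But the estimates \eqref{Omegaexp}, \eqref{rigidity1}-type bounds are \emph{false} on and near that curve, already for the exact Reissner--Nordstr\"{o}m solution: in the gauge \eqref{gauge2}, \eqref{gauge1.5} one has $\Omega^2\sim C_- e^{2K_-(u+v)}$ near $\CH$, and the curve $\{r=r_-+\epsilon_0\}$ is asymptotically $\{u+v=\mathrm{const}\}$, along which $\Omega^2$ is a fixed constant of size $\epsilon_0$ while $e^{1.99K_-v}\to 0$. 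The two-sided bound \eqref{Omegaexp}, uniform in $u\le u_s$, can only hold in a region whose past boundary $\gamma$ satisfies $|u_\gamma(v)|\lesssim \eta v$ for a small $\eta$ dictated by the $1.99/2.01$ margins; the curve $\{r=r_-+\epsilon_0\}$ has $|u|\approx v$ and is far outside this regime. Consequently your blue-shift bootstrap, which leans on the exponential smallness of $\Omega^2$, $\nu$, $\lambda$ from $\gamma$ onwards, cannot be initialized where you start it. This is precisely why \cite{Moi} works with \emph{four} regions, inserting an \emph{early} blue-shift region between $\{r\approx r_-+\epsilon_0\}$ and a much deeper space-like curve $\gamma$ (the past boundary of $\LB$), on which only smallness/polynomial estimates are proved and across which the data estimates \eqref{varpigamma}--\eqref{phigamma} are transported; the exponential regime \eqref{Omegaexp}--\eqref{nuLB} is established only to the future of that later $\gamma$. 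A secondary imprecision: in $\LB$ the $u$-interval $[u_\gamma(v),u_s]$ is \emph{not} of finite length (it grows with $v$ since $\gamma$ terminates at $i^+$), so your justification for integrating \eqref{Field2}, \eqref{Field3} in $u$ should invoke the $u$-integrability of the sources (e.g.\ $|\nu|\lesssim|u|^{-2s}$ with $2s>1$, or the exponential smallness of $\Omega^2$), not finiteness of the range; within the bootstrap this is fixable, but as stated the reasoning is not correct.
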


In the rest of the paper, we will always work with $(u,v)$ defined by the gauge \eqref{gauge2}, \eqref{gauge1.5}, unless specified otherwise.

\section{Classification of Cauchy horizon types} 
\subsection{Preliminary results} \label{lowerboundslambdasection}

In this section, we provide some preliminary and easy results which will be essential in the rest of the paper. We start by an integrated lower bound on $|\lambda|$ on an outgoing cone transverse to the Cauchy horizon and sufficiently close to time-like infinity (and included in the trapped region, so that $\lambda<0$):
\begin{lem} For all $u \leq u_s$ and $v \geq v_{\gamma}(u)$, we have the following integrated lower bound on $\lambda$:
	\begin{equation} \label{lambdaus}
	\int_{v}^{+\infty} |\lambda|(u,v') dv' \geq  D \cdot v^{-p},
	\end{equation} 
\end{lem}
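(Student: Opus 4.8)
The plan is to obtain \eqref{lambdaus} by transferring the lower bound \eqref{phiVLBlowerbound} on the ingoing energy flux $\int_v^{+\infty}|\partial_v\phi|^2$ onto $\int_v^{+\infty}|\lambda|$, using the outgoing Raychaudhuri equation \eqref{RaychV} as the bridge. First, in the gauge \eqref{electrogauge} one has $D_v\phi=\partial_v\phi$, and, recalling from \eqref{iotadef} that $\iota^{-1}=-4\lambda/\Omega^2$, equation \eqref{RaychV} rewrites as
\[
\partial_v\!\Bigl(\frac{-\lambda}{\Omega^2}\Bigr)(u,v)=\frac{r}{\Omega^2}\,|\partial_v\phi|^2(u,v)\ \ge\ 0 .
\]
Since $-\lambda>0$ on $\mathcal{LB}$ by \eqref{lambdaLB}, the function $v\mapsto\frac{-\lambda}{\Omega^2}(u,v)$ is non-negative and non-decreasing, so integrating from $v$ to $v'$ for $v_\gamma(u)\le v\le v'$ gives
\[
\frac{-\lambda}{\Omega^2}(u,v')\ \ge\ \int_v^{v'}\frac{r}{\Omega^2(u,w)}\,|\partial_v\phi|^2(u,w)\,dw .
\]

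Next I would multiply this inequality by $\Omega^2(u,v')>0$, integrate $v'$ over $[v,+\infty)$, and interchange the order of integration, which yields
\[
\int_v^{+\infty}(-\lambda)(u,v')\,dv'\ \ge\ \int_v^{+\infty} r(u,w)\,|\partial_v\phi|^2(u,w)\,\frac{1}{\Omega^2(u,w)}\Bigl(\int_w^{+\infty}\Omega^2(u,v')\,dv'\Bigr)dw .
\]
The point of this manœuvre is to neutralise the exponentially small factor $\Omega^2$: by \eqref{partialvOmegaLB} (which gives $\partial_v\log\Omega^2 = 2K_-(M,e)+O(v^{1-2s})$ with $K_-(M,e)<0$ and $s>\tfrac12$) one has $\Omega^2(u,v')\ge e^{-C_0(v'-w)}\Omega^2(u,w)$ on $[w,w+1]$, with $C_0$ depending only on $M,e,m^2,q_0$, hence $\int_w^{+\infty}\Omega^2(u,v')\,dv'\gtrsim\Omega^2(u,w)$. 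Combining this with $r\ge r_-(M,e)>0$ collapses the right-hand side to $\gtrsim\int_v^{+\infty}|\partial_v\phi|^2(u,w)\,dw$, which is $\gtrsim v^{-p}$ by \eqref{phiVLBlowerbound}. Renaming the implied constant $D$ then gives \eqref{lambdaus}.

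I expect no genuine obstacle: the lemma is essentially a bookkeeping consequence of \eqref{RaychV}, \eqref{lambdaLB}, \eqref{partialvOmegaLB} and \eqref{phiVLBlowerbound}. The one place I would be careful is the uniformity (in $u\le u_s$ and $w\ge v_\gamma(u)$) of the bound $\int_w^{+\infty}\Omega^2(u,v')\,dv'\gtrsim\Omega^2(u,w)$: rather than claiming $\partial_v\log\Omega^2<0$ pointwise — which would require $v$ to be "large" — I would retain only the contribution of the unit interval $[w,w+1]$ and use that the constant $C$ in \eqref{partialvOmegaLB} and the lower bound $v_0$ of the $v$-range on $\mathcal{LB}$ are fixed, so that $C_0:=2|K_-(M,e)|+C\,v_0^{1-2s}$ is a fixed positive constant. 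All remaining steps — the monotonicity, the Fubini swap, and the bound $r\ge r_-(M,e)$ valid since $r$ is decreasing in $v$ and extends to $r_{CH}\ge r_-(M,e)$ on $\CH$ — are routine.
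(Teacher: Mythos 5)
Your argument is correct and rests on exactly the same ingredients as the paper's proof: the outgoing Raychaudhuri equation \eqref{RaychV}, the flux lower bound \eqref{phiVLBlowerbound}, and the near-constancy of $\partial_v\log(\Omega^2)$ from \eqref{partialvOmegaLB}. The only difference is bookkeeping: the paper integrates the product-rule form $-\partial_v\lambda+\lambda\,\partial_v\log(\Omega^2)=r|\partial_v\phi|^2$ once over $[v,+\infty)$ (the boundary term $\lambda(u,v)$ being harmless by sign, or by \eqref{lambdaLB}) and then uses $|\partial_v\log(\Omega^2)|\lesssim 1$ to pass from $\int|\lambda|\,|\partial_v\log(\Omega^2)|$ to $\int|\lambda|$, whereas you integrate $\partial_v(-\lambda/\Omega^2)$ and then undo the $\Omega^{-2}$ weight by a second integration and Fubini, using $\int_w^{+\infty}\Omega^2\gtrsim\Omega^2(u,w)$ on a unit interval; the two routes are interchangeable and equally short, and your treatment of the uniformity of the constant is fine.

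One small correction: your justification of the lower bound on $r$ is not right as stated. Since $\nu\le 0$ and $r_{CH}(u)\to r_-(M,e)$ as $u\to-\infty$, the continuous extension satisfies $r_{CH}\le r_-(M,e)$ on $(-\infty,u_s]$ (with strict inequality in the dynamical case), so you cannot claim $r\ge r_-(M,e)$. What you actually need — and what the paper also uses implicitly — is only a uniform positive lower bound $r\ge r_{\mathrm{inf}}>0$ on $\mathcal{LB}$, which follows from the stability estimates of Proposition \ref{LBprop}: by \eqref{rgamma} and \eqref{lambdaLB}, $r(u,v)\ge r(u,v_\gamma(u))-C\,v_\gamma(u)^{1-2s}\ge r_-(M,e)/2$, say, for $\mathcal{LB}$ chosen as in \cite{Moi}. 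With that replacement your proof goes through verbatim.
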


\begin{proof}
	We start with the Raychaudhuri equation \eqref{RaychV} which we write as $$ -\partial_v \lambda  +\lambda \cdot \partial_v \log(\Omega^2)= r |\partial_v \phi|^2.$$
	
	Then, integrating and using \eqref{phiVLBlowerbound}, $$ \lambda(u,v)+ \int_{v}^{+\infty}\lambda(u,v') \cdot \partial_v \log(\Omega^2)(u,v') dv' \gtrsim v^{-p}.$$
	
Then, using \eqref{lambdaLB} and the fact that $p<2s$ and the fact that $\lambda<0$ we get  $$  \int_{v}^{+\infty}|\lambda|(u,v') \cdot [-\partial_v \log(\Omega^2)(u,v') ]dv' \gtrsim v^{-p}.$$ Now using \eqref{partialvOmegaLB} as $-\partial_v \log(\Omega^2)(u,v') \leq 3 |K_-|(M,e)$ for $v$ large enough  (since $K_-(M,e)<0$) we obtain the desired \eqref{lambdaus}. \color{black}
\end{proof}
Next, we prove that a space-time rectangle which is entirely trapped and does not contain $(u_{\infty}(\CH),+\infty)$ (the end-point of $\CH$) has finite space-time volume:
\begin{lem} \label{trappedvolume}
	Let $-\infty < u_1 <u_2 < u_{\infty}(\CH)$ and $v_0 \in \RR$. Assume that the rectangle $R:=[u_1,u_2] \times [v_0,+\infty]$ is included in the trapped region: $R \subset \mathcal{T}$. Then the space-time volume of $R$ is finite and we have the following estimate: $$ vol(R):=  \int_{u_1}^{u_2} \int_{v_0}^{+\infty} r^2 \Omega^2(u',v') du' dv' \leq (u_2-u_1) \cdot r^3(u_1,v_0) \cdot \sup_{u_1 \leq u' \leq u_2} \frac{\Omega^2(u',v_0)}{|\lambda|(u',v_0)}  <+\infty.$$
	
\end{lem}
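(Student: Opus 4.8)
The plan is to estimate the volume integral by integrating in $v$ first, using the Raychaudhuri equation \eqref{RaychV} to control the growth of $\Omega^2$ along outgoing cones inside the trapped region. Since $R\subset\mathcal{T}$, we have $\lambda<0$ and $\nu<0$ throughout $R$ by the characterization \ref{characttrapped} of the trapped region, and in particular $r$ is decreasing in both $u$ and $v$, so $r(u',v')\leq r(u_1,v_0)$ for all $(u',v')\in R$. This immediately bounds the $r^2$ factor in the integrand (indeed by $r^2(u_1,v_0)$), and reduces the problem to estimating $\int_{v_0}^{+\infty}\Omega^2(u',v')\,dv'$ uniformly in $u'\in[u_1,u_2]$.

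The key step is to bound $\int_{v_0}^{+\infty}\Omega^2(u',v')\,dv'$ by $r(u_1,v_0)\cdot\frac{\Omega^2(u',v_0)}{|\lambda|(u',v_0)}$. For this I would rewrite $\Omega^2 = \frac{-4\nu\lambda}{1-\mu}$ using \eqref{murelation}, or more directly use $\kappa = \frac{-\Omega^2}{4\nu}$ from \eqref{kappadef} together with $\iota = \frac{-\Omega^2}{4\lambda}$ from \eqref{iotadef}. The cleanest route: on an outgoing cone $\{u'\}\times[v_0,+\infty)$, the Raychaudhuri equation \eqref{RaychV} gives $\partial_v(\iota^{-1})=\frac{4r}{\Omega^2}|D_v\phi|^2\geq 0$, hence $\iota^{-1}$ is non-decreasing in $v$. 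Since $\Omega^2 = -4\lambda\iota$ and $\lambda<0$, we get $\Omega^2 = \frac{-4\lambda}{\iota^{-1}}$; but actually the monotonicity I want is of $\frac{\Omega^2}{|\lambda|} = 4\iota$... let me instead use that $\frac{\Omega^2}{-4\lambda} = \iota$ and that $\iota^{-1}$ is monotone non-decreasing, so $\iota$ is non-increasing, giving $\frac{\Omega^2(u',v)}{|\lambda|(u',v)} \leq \frac{\Omega^2(u',v_0)}{|\lambda|(u',v_0)}$ for $v\geq v_0$. Then $\int_{v_0}^{+\infty}\Omega^2(u',v')\,dv' = \int_{v_0}^{+\infty} \frac{\Omega^2(u',v')}{|\lambda|(u',v')}\,|\lambda|(u',v')\,dv' \leq \frac{\Omega^2(u',v_0)}{|\lambda|(u',v_0)}\int_{v_0}^{+\infty}|\lambda|(u',v')\,dv' = \frac{\Omega^2(u',v_0)}{|\lambda|(u',v_0)}\,(r(u',v_0) - r_{\CH}(u')) \leq \frac{\Omega^2(u',v_0)}{|\lambda|(u',v_0)}\, r(u_1,v_0)$, where I used $\int_{v_0}^{+\infty}|\lambda|(u',v')\,dv' = \int_{v_0}^{+\infty}(-\partial_v r)(u',v')\,dv' = r(u',v_0)-\lim_{v\to\infty}r(u',v)\leq r(u',v_0)\leq r(u_1,v_0)$.

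Combining, $\operatorname{vol}(R)=\int_{u_1}^{u_2}\int_{v_0}^{+\infty}r^2\Omega^2\,du'dv' \leq r^2(u_1,v_0)\int_{u_1}^{u_2}\Big(\int_{v_0}^{+\infty}\Omega^2(u',v')\,dv'\Big)du' \leq r^3(u_1,v_0)\int_{u_1}^{u_2}\frac{\Omega^2(u',v_0)}{|\lambda|(u',v_0)}\,du' \leq (u_2-u_1)\cdot r^3(u_1,v_0)\cdot\sup_{u_1\leq u'\leq u_2}\frac{\Omega^2(u',v_0)}{|\lambda|(u',v_0)}$, which is finite since the supremum is over a compact interval on which $\Omega^2$ and $\lambda$ are continuous and $\lambda$ is bounded away from zero (as $R\subset\mathcal{T}$ is open, $\lambda<0$ on the compact set $[u_1,u_2]\times\{v_0\}$). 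The main obstacle — really the only subtlety — is getting the monotonicity of $\frac{\Omega^2}{|\lambda|}$ along outgoing cones pointing in the correct direction; this is exactly the content of the outgoing Raychaudhuri equation \eqref{RaychV} combined with the null energy condition, and one must be careful that $\iota<0$ in the trapped region (since $1-\mu<0$ and $\nu<0$), so the sign bookkeeping needs $\iota^{-1}$ non-decreasing $\Rightarrow \iota$ non-increasing among negative values $\Rightarrow |\iota|$ non-decreasing, and $\frac{\Omega^2}{|\lambda|}=4|\iota|$ would then be non-decreasing, which is the wrong direction. The correct statement uses the \emph{ingoing} Raychaudhuri equation: actually $\frac{\Omega^2}{|\lambda|} = \frac{\Omega^2}{-\lambda}$ and from \eqref{murelation}, $\frac{-\Omega^2}{4\lambda}=\iota=\frac{\nu}{1-\mu}$; in the trapped region $1-\mu<0$ and $\nu<0$ so $\iota>0$, hence $\frac{\Omega^2}{|\lambda|}=4\iota>0$, and now $\partial_v(\iota^{-1})\geq 0$ means $\iota^{-1}$ increases, so $\iota$ decreases, so $\frac{\Omega^2}{|\lambda|}$ decreases in $v$ — which is the bound I need. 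So I would present the argument carefully with this sign verification, which is the one place an error could creep in.

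\begin{proof}
Since $R\subset\mathcal{T}$, by the characterization \ref{characttrapped} of the trapped region we have $\lambda<0$, $\nu<0$ and $1-\mu = 1-\frac{2\rho}{r}<0$ on $R$. In particular $r$ is non-increasing in $v$ on $R$, so $r(u',v')\leq r(u',v_0)\leq r(u_1,v_0)$ for all $(u',v')\in R$ (using also that $r$ is non-increasing in $u$ since $\nu<0$).

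Next, fix $u'\in[u_1,u_2]$ and consider the outgoing cone $\{u'\}\times[v_0,+\infty)$, all of which lies in $\mathcal{T}$. By \eqref{iotadef} and \eqref{murelation}, $\iota = \frac{\nu}{1-\mu}>0$ on this cone, since $\nu<0$ and $1-\mu<0$, and $\iota = \frac{-\Omega^2}{4\lambda} = \frac{\Omega^2}{4|\lambda|}$. The outgoing Raychaudhuri equation \eqref{RaychV} gives $\partial_v(\iota^{-1}) = \frac{4r}{\Omega^2}|D_v\phi|^2 \geq 0$, so $\iota^{-1}$ is non-decreasing in $v$, hence $\iota$ is non-increasing in $v$ along this cone. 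Therefore, for all $v\geq v_0$,
\begin{equation*}
\frac{\Omega^2(u',v)}{|\lambda|(u',v)} = 4\iota(u',v) \leq 4\iota(u',v_0) = \frac{\Omega^2(u',v_0)}{|\lambda|(u',v_0)}.
\end{equation*}
Consequently, using $|\lambda| = -\partial_v r$,
\begin{equation*}
\int_{v_0}^{+\infty}\Omega^2(u',v')\,dv' = \int_{v_0}^{+\infty}\frac{\Omega^2(u',v')}{|\lambda|(u',v')}\,|\lambda|(u',v')\,dv' \leq \frac{\Omega^2(u',v_0)}{|\lambda|(u',v_0)}\int_{v_0}^{+\infty}|\lambda|(u',v')\,dv' \leq \frac{\Omega^2(u',v_0)}{|\lambda|(u',v_0)}\, r(u',v_0),
\end{equation*}
where we used $\int_{v_0}^{+\infty}|\lambda|(u',v')\,dv' = r(u',v_0) - \lim_{v\to+\infty}r(u',v) \leq r(u',v_0)\leq r(u_1,v_0)$.

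Combining these bounds,
\begin{equation*}
vol(R) = \int_{u_1}^{u_2}\int_{v_0}^{+\infty} r^2\Omega^2(u',v')\,du'\,dv' \leq r^2(u_1,v_0)\int_{u_1}^{u_2}\left(\int_{v_0}^{+\infty}\Omega^2(u',v')\,dv'\right)du' \leq r^3(u_1,v_0)\int_{u_1}^{u_2}\frac{\Omega^2(u',v_0)}{|\lambda|(u',v_0)}\,du'.
\end{equation*}
Since $[u_1,u_2]\subset(-\infty,u_{\infty}(\CH))$ is compact and $\mathcal{T}$ is open with $\{u\}\times[v_0,+\infty)\subset\mathcal{T}$, the function $u'\mapsto\frac{\Omega^2(u',v_0)}{|\lambda|(u',v_0)}$ is continuous and finite on $[u_1,u_2]$, hence bounded. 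Therefore
\begin{equation*}
vol(R) \leq (u_2-u_1)\cdot r^3(u_1,v_0)\cdot\sup_{u_1\leq u'\leq u_2}\frac{\Omega^2(u',v_0)}{|\lambda|(u',v_0)} <+\infty.
\end{equation*}
\end{proof}
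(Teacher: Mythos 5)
Your proof is correct and follows essentially the same route as the paper: the monotonicity of $\frac{\Omega^2}{|\lambda|}$ along outgoing cones (i.e.\ of $\iota$, via the Raychaudhuri equation \eqref{RaychV} and $\lambda<0$ in the trapped region), the conversion $\Omega^2 = \frac{\Omega^2}{|\lambda|}\cdot|\lambda|$ with $\int|\lambda|\,dv \leq r(u',v_0)$, and the boundedness of $r$ and of $\frac{\Omega^2(\cdot,v_0)}{|\lambda|(\cdot,v_0)}$ on the compact initial data. Despite the sign-convention hesitation in your planning discussion, the final write-up gets the bookkeeping right and matches the paper's argument.
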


\begin{proof}
	Since $R$ is trapped, for all $u_1 \leq u \leq u_2$, $v \rightarrow r(u,v)$ is decreasing, thus for all $(u,v) \in R$, $r$ is bounded: $$ r(u,v) \leq \sup_{u_1 \leq u' \leq u_2} r(u',v_0) <+\infty.$$

	Since $\lambda(u,v) <0$ for all $(u,v) \in R$, we see from \eqref{RaychV} that $ v \rightarrow \frac{\Omega^2(u,v)}{|\lambda|(u,v)}$ is non-increasing, thus $$ \frac{\Omega^2(u,v)}{|\lambda|(u,v)} \leq \sup_{u_1 \leq u' \leq u_2} \frac{\Omega^2(u',v_0)}{|\lambda|(u',v_0)}<+\infty,$$
	where the last inequality comes from the fact that $|\lambda|$ is bounded away from $0$ on $[u_1,u_2]\times \{v_0\}$. Thus: $$\int_{u_1}^{u_2} \int_{v_0}^{+\infty} r^2 \Omega^2(u',v') du' dv' \leq \left(\sup_{u_1 \leq u' \leq u_2} r(u',v_0) \right)^2 \cdot \sup_{u_1 \leq u' \leq u_2} \frac{\Omega^2(u',v_0)}{|\lambda|(u',v_0)} \int_{u_1}^{u_2} \int_{v_0}^{+\infty} -\lambda(u',v') du' dv',$$
	
	Hence, and since the volume form is $r^2 \Omega^2 du dv$, we finally obtain the finiteness of the space-time volume: $$\int_{u_1}^{u_2} \int_{v_0}^{+\infty} r^2 \Omega^2(u',v') du' dv' \leq (u_2-u_1) \cdot \left(\sup_{u_1 \leq u' \leq u_2} r(u',v_0) \right)^3 \cdot \sup_{u_1 \leq u' \leq u_2} \frac{\Omega^2(u',v_0)}{|\lambda|(u',v_0)}  \leq (u_2-u_1) \cdot r^3(u_1,v_0) \cdot \sup_{u_1 \leq u' \leq u_2} \frac{\Omega^2(u',v_0)}{|\lambda|(u',v_0)} ,$$
	where for the last estimate, we used the fact that $\partial_u r \leq 0$, which come from the admissibility condition (Definition \ref{admissibilitydef}.
\end{proof}

To finish this section, we prove a small result: in the trapped region and away from the end-point of $\CH$, the area-radius $r$ is upper and lower bounded: \begin{lem} \label{easylemma}
	Let $-\infty < u_1 <u_2 < u_{\infty}(\CH)$ and $v_0 \in \RR$. Assume that the rectangle $R:=[u_1,u_2] \times [v_0,+\infty]$ is included in the marginally trapped region: $R \subset \mathcal{T} \cup \mathcal{A} $. Then there exists $r_{inf}>0$  such that for all $(u,v) \in R$: $$  r_{inf}\leq  r \leq r_+(M,e).$$
	
	\begin{proof}
		By definition of $\CH$ and since $-\infty < u_1 <u_2 < u_{\infty}(\CH)$, it is clear that $r_{\CH}(u)$ is lower bounded on $[u_1,u_2]$ so $  r_{\CH}^{-1}(u_2) =\| r^{-1}_{\CH} \|_{ L^{\infty}([u_1,u_2])} < +\infty$, where the first equality is due to $\nu \leq 0$.
		But for all $(u,v) \in R$, since $\lambda \leq 0$, $r(u,v) \geq r_{\CH}(u) \geq r_{\CH}(u_2)=r_{inf}>0$. The upper bound is trivial: $r(u,v) \leq r_{|\mathcal{H}^+}(v) \leq r_+(M,e)$, using $\nu \leq 0$.
	\end{proof}
	
\end{lem}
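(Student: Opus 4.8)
The plan is to use the only two monotonicity facts available on $R$. First, the admissibility condition (Definition \ref{admissibilitydef}) gives $\nu=\partial_u r\le 0$ throughout $\mathcal{Q}^+$; second, the hypothesis $R\subset\T\cup\A$ forces $\lambda=\partial_v r\le 0$ on $R$. Thus $r$ is non-increasing in both null directions on $R$, so its supremum over $R$ is approached as $u\to-\infty$ and its infimum over $R$ is approached as $v\to+\infty$ along the cone $\{u_2\}\times[v_0,+\infty)$. The whole lemma then reduces to controlling $r$ on the event horizon and on $\CH$.

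For the upper bound I would integrate $\nu\le 0$ from the event horizon: for fixed $v\ge v_0$, $r(u,v)\le\lim_{u'\to-\infty}r(u',v)=r_{|\mathcal{H}^+}(v)$, and since $\mathcal{H}^+$ is untrapped and converges to a sub-extremal Reissner--Nordström event horizon (Assumption \ref{subexthyp}, see also Theorem \ref{previous}), $r_{|\mathcal{H}^+}$ is non-decreasing with $\lim_{v\to+\infty}r_{|\mathcal{H}^+}(v)=r_+(M,e)$, hence $r_{|\mathcal{H}^+}(v)\le r_+(M,e)$. Therefore $r\le r_+(M,e)$ on all of $R$.

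For the lower bound, fix $u\in[u_1,u_2]$. Along the cone $\{u\}\times[v_0,+\infty)$ the function $v\mapsto r(u,v)$ is non-increasing and nonnegative, hence converges to some $r_{\CH}(u)\ge 0$. Since $u<u_{\infty}(\CH)$, the limit point lies on $\CH$ strictly to the past of its future endpoint, so by the a priori boundary characterization (Theorem \ref{oneendedapriori}, item 1, respectively Theorem \ref{twoendedapriori}) the area-radius extends there to a strictly positive value, i.e.\ $r_{\CH}(u)>0$. Because $\nu\le 0$, the map $u\mapsto r(u,v)$ is non-increasing for each $v$, hence so is its pointwise limit $u\mapsto r_{\CH}(u)$; therefore $\inf_{[u_1,u_2]}r_{\CH}=r_{\CH}(u_2)>0$. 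Finally, using $\lambda\le 0$ on $R$ once more, for every $(u,v)\in R$ we obtain $r(u,v)\ge\lim_{v'\to+\infty}r(u,v')=r_{\CH}(u)\ge r_{\CH}(u_2)$, so the claim holds with $r_{inf}:=r_{\CH}(u_2)>0$.

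The only non-routine ingredient is the strict positivity $r_{\CH}(u)>0$ for $u<u_{\infty}(\CH)$, which is exactly the content of the structure theorems recalled at the start of the paper (the area-radius extends as a strictly positive function on $\CH$ away from its future endpoint), together with the minor point that $r_{|\mathcal{H}^+}\le r_+(M,e)$, which follows from the convergence to Reissner--Nordström. Everything else is bookkeeping with the signs of $\nu$ and $\lambda$, so I do not expect any serious obstacle.
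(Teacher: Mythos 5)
Your proof is correct and follows essentially the same route as the paper: $\lambda\le 0$ on $R$ gives $r(u,v)\ge r_{\CH}(u)$, the a priori characterization of $\CH$ plus $\nu\le 0$ give $\inf_{[u_1,u_2]}r_{\CH}=r_{\CH}(u_2)>0$, and $\nu\le 0$ together with the sub-extremal Reissner--Nordstr\"{o}m asymptotics of $\mathcal{H}^+$ give $r\le r_{|\mathcal{H}^+}\le r_+(M,e)$. The only difference is that you spell out the upper bound $r_{|\mathcal{H}^+}\le r_+(M,e)$, which the paper states without comment.
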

Lemma \ref{easylemma} will be used implicitly everywhere throughout the proof, and we will very frequently omit to refer to it in the course of the argument. Additionally, in terms of notations in all that follows, we are going to assume that $r_{CH}$ is a given function on $(-\infty,u_{\infty}(\CH))$ and thus, whenever a quantity depends on $r_{inf}(u_0):=r_{CH}(u_0)$ for $u_0 < u_{\infty}(\CH)$, we are just going to write that this quantity depends on $u_0$.
\subsection{A rigidity theorem} \label{rigiditysection}
In this section, we start effectively the proof of our main results. Our first theorem is a ``rigidity result'': if $u_0$ is a ``static point'', then some rigidity estimates hold in the past of $u_0$, in particular the radiation is \textit{trivial} i.e. $\CH \cap \{ u \leq u_0 \}$ is an isometric copy of its Reissner--Nordstr\"{o}m counterpart. This result is one of the key ingredients in the proof:

\begin{thm} \label{classificationtheorem} If there exists $u_0 <u_{\infty}(\CH)$ and $v \in \RR$ such that the following condition is satisfied \begin{equation} \label{MihalisPHDcondition}
	\int_{v}^{+\infty} \kappa(u_0,v')dv' = +\infty,\end{equation} then there exists $v_0=v_0(e,M,u_0) \in \RR$ such that \begin{enumerate}
		\item \label{class1} If $u_0 \geq u_s$, then \footnote{Otherwise, we already know that $\mathcal{LB} \cap \{u \leq u_s\} \subset \mathcal{T}$ (recall that $\mathcal{LB}$ is defined as the future of a space-like trapped curve $\gamma$, c.f.\ \cite{Moi}).\color{black}} we have the inclusion $[u_s,u_0] \times [v_0,+\infty) \subset \mathcal{T}$.
		\item \label{secondeasy} For all $u \leq u_0$, $$\lim_{v \rightarrow +\infty}|\nu|(u,v)=0.$$
		\item $r-r_-(e,M)$, $\phi$, $D_u \phi$, $\varpi-M$, $Q-e$ extend continuously to $0$ on $CH^{\leq u_0}_{i^+}:=\CH \cap \{ u \leq u_0\}$.
		\item The following estimates are true for all $(u',v) \in [u_s,u_0] \times [v_0,+\infty)$, for some $C=C(M,e,q_0,m^2,u_0)>1$: \begin{equation}\label{rigidity1}
		C^{-1} \cdot e^{2.01 K_- v}\leq 	\Omega^2(u',v) \leq  C \cdot e^{1.99K_- v}, 	\end{equation}  \begin{equation} \label{rigidity2}
	| \phi|(u',v)+	\color{black}	|\partial_v \phi|(u',v) \leq  C \cdot v^{-s},
		\end{equation}  \begin{equation} \label{rigidity3}
			|D_u \phi|(u',v)\leq  C \cdot \Omega^2(u',v) \cdot v^{-s} \leq  C^2 \cdot e^{1.99K_- v} \cdot v^{-s},
		\end{equation}  	\begin{equation} \label{rigidity4}
		\int_{v}^{+\infty} |\partial_v \phi|^2(u,v') dv' \geq  C \cdot v^{-p},			\end{equation}\begin{equation} \label{rigidity5}
		|Q-e|(u',v) \leq  C \cdot v^{1-2s},
		\end{equation}\begin{equation} \label{rigidity6}
		|r(u',v)-r_-(M,e)| \leq  C \cdot v^{1-2s},
		\end{equation}
		\begin{equation} \label{rigidity7}
		|\partial_v \log(\Omega^2)(u',v)-2K_-|\leq  C \cdot v^{1-2s},
		\end{equation} \begin{equation} \label{rigidity9}
		|\kappa^{-1}(u,v)-1| \leq  C \cdot v^{1-2s},
		\end{equation} \begin{equation} \label{rigidity10}
		|\varpi(u,v)-M| \leq  C \cdot v^{1-2s},
		\end{equation}
		\begin{equation} \label{rigidity11}
		|\lambda|(u',v) \leq  C \cdot v^{-2s},	\end{equation}
		\begin{equation} \label{rigidity12}
		\int_{v}^{+\infty}	|\lambda|(u',v') dv' \geq  C \cdot  v^{-p}\end{equation}
		\begin{equation} \label{rigidity13}
		|\nu|(u',v)\leq  C \cdot \Omega^2(u',v) \leq C^2 \cdot e^{1.99K_- v}. 			\end{equation}
		
		\item There exists $\epsilon_0=\epsilon_0(M,e,q_0,m^2,u_0)>0$ such that for all $0<\epsilon<\epsilon_0$, the following estimates are true for all $(u',v) \in [u_0,u_0+\epsilon] \times [v_0,+\infty)$, for some $D=D(M,e,q_0,m^2,u_0)>1$: \begin{equation}\label{rigidity14}
		D^{-2} \cdot e^{2.01K_- v} \leq  D^{-1} \cdot \Omega^2(u_0,v)\leq 	\Omega^2(u',v) \leq D \cdot \Omega^2(u_0,v)  \leq   D^2\cdot e^{1.99K_- v},  \end{equation} \begin{equation}\label{rigidity15}
		|\phi|+|Q-e| \leq D \cdot \epsilon ,  \end{equation}
		\begin{equation} \label{rigidity16}
		\int_{v}^{+\infty}	|\lambda|(u',v') dv' \geq  D \cdot  v^{-p},\end{equation}  \begin{equation} \label{rigidity17}
		|\lambda|(u',v) \leq   D \cdot  v^{-2s}.\end{equation}	\end{enumerate}
\end{thm}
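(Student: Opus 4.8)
The plan is to run a continuity/bootstrap argument in the rectangle $[u_s, u_0] \times [v_0, +\infty)$ (assuming $u_0 \geq u_s$; the case $u_0 < u_s$ being already covered by the estimates of Proposition \ref{LBprop} in $\mathcal{LB}$), exactly as sketched in the outline of section \ref{outline}. The key conceptual point is that condition \eqref{MihalisPHDcondition}, namely $\int_v^{+\infty}\kappa(u_0,v')\,dv' = +\infty$, together with the Raychaudhuri equation \eqref{RaychV} propagated \emph{toward the past in $u$}, forces the ``ingoing radiation'' on $\CH \cap \{u \leq u_0\}$ to be trivial. Concretely, I would define the bootstrap set $\mathcal{B}_{v_0} \subset [u_s, u_0]$ to consist of those $u$ such that a family of mild a priori estimates (weak versions of \eqref{rigidity1}--\eqref{rigidity13}, with enlarged constants) hold on $[u_s, u] \times [v_0, +\infty)$; by Proposition \ref{LBprop} this set contains a neighborhood of $u_s$ once $v_0$ is taken large enough, so it is nonempty, and the whole point is to show it is open and closed, hence equals $[u_s, u_0]$.

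**The closed/open step and the role of the trapped region.** The heart of the argument, following the numbered sub-steps (a)--(d) of Step \ref{stepstatic} in the outline, is: (a) on $\mathcal{B}_{v_0}$ one first upgrades the mild estimates to show $\phi \equiv 0$ on $\CH \cap \{u \leq u\}$ — here is where \eqref{MihalisPHDcondition} is essential, because static behavior at $u_0$ (i.e.\ $\kappa \notin L^1$) pushed to the past through Raychaudhuri \eqref{RaychV} and the wave equation \eqref{Field2} for $\psi = r\phi$ forces $D_v\phi$ to vanish on the Cauchy horizon limit, and then the transport equations \eqref{Field3}, \eqref{chargeUEinstein}, \eqref{ChargeVEinstein} give $\phi \equiv 0$, $Q \equiv e$, $\varpi \equiv M$, $r \equiv r_-$ there; (b) with these sharper estimates one shows the rectangle is trapped ($\lambda < 0$), and by openness of $\mathcal{T}$ and Raychaudhuri a slightly larger rectangle $[u_s, u+\epsilon]\times[v_0,+\infty)$ is still trapped — this is Lemma \ref{trappedlemma} in the outline and uses Lemma \ref{easylemma} to keep $r$ bounded above and below; (c) using that a trapped rectangle not containing the endpoint of $\CH$ has finite space-time volume (Lemma \ref{trappedvolume}), one runs $L^1$--$L^\infty$ estimates along the $u$ and $v$ transport equations to re-derive the mild estimates with the \emph{original} (smaller) constants — this closes the bootstrap; (d) openness and closedness of $\mathcal{B}_{v_0}$ then follow from (c) plus standard continuation, so $\mathcal{B}_{v_0} = [u_s, u_0]$. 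The estimates \eqref{rigidity1}--\eqref{rigidity13} are then exactly what one reads off from the now-global bootstrap bounds on $[u_s,u_0]\times[v_0,+\infty)$, and the continuous extension statements (item 3 of the theorem, and $\lim_v |\nu| = 0$ in item 2) follow from \eqref{lambdaLB}-type bounds for $\lambda$ and from \eqref{rigidity13} together with $\int |\nu| \lesssim 1$.

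**The last rectangle $[u_0, u_0+\epsilon]$.** Finally, for the estimates \eqref{rigidity14}--\eqref{rigidity17} on the thin slab $[u_0, u_0+\epsilon]\times[v_0,+\infty)$ \emph{to the future} of $u_0$, I would no longer have $\phi \equiv 0$, but only that the initial data on $C_{u_0}$ are now known (sharp bounds from the previous steps). One integrates the transport equations \eqref{Radius}, \eqref{Omega}, \eqref{Field2}, \eqref{chargeUEinstein}, \eqref{RaychV} forward in $u$ over an interval of length $\epsilon$, using that $\Omega^2$ is exponentially small in $v$ (so the $\Omega^2$-weighted source terms are harmless) and that $\int_{u_0}^{u_0+\epsilon} |\nu| \lesssim \epsilon$: a short Grönwall/bootstrap in $u$ on the slab gives comparability $\Omega^2(u',v) \sim_D \Omega^2(u_0, v)$, the smallness $|\phi| + |Q - e| \lesssim \epsilon$, and propagates the lower and upper bounds \eqref{rigidity16}, \eqref{rigidity17} on $|\lambda|$ from $C_{u_0}$ (where they hold by \eqref{rigidity11}, \eqref{rigidity12}). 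Here $\epsilon_0$ must be chosen small depending on $u_0$ so the slab stays in the regime where $r$ is bounded away from $0$ and the $u$-transport does not leave the domain.

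**Main obstacle.** The delicate point I expect is step (a): deducing that \eqref{MihalisPHDcondition} at the single value $u_0$ \emph{genuinely forces vanishing of the radiation on the entire past segment} $\CH\cap\{u\le u_0\}$, rather than just a soft monotonicity statement. The Raychaudhuri equation \eqref{RaychV} only directly controls $\iota^{-1}$ in the $v$-direction, and $\kappa^{-1}$ in the $u$-direction via \eqref{RaychU}; translating ``$\kappa \notin L^1(dv)$ on $C_{u_0}$'' into ``$r_{\CH}$ is constant on $(-\infty,u_0]$ and hence $\int |D_v\phi|^2 = 0$ there'' requires carefully combining the two Raychaudhuri equations with the relation \eqref{murelation} and the finiteness of $r$, in the spirit of the remark after statement \ref{III} in the dust discussion, but now with a genuine scalar field and without the monotonicity that makes the dust case trivial — this is precisely why the finite-volume $L^1$--$L^\infty$ machinery (Lemmas \ref{trappedvolume}, \ref{finitevolumeestlemma}) rather than a pure monotonicity argument is needed.
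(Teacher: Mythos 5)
Your overall architecture (bootstrap set $\mathcal{B}_{v_0}$ on $[u_s,u_0]\times[v_0,+\infty)$, trapping of the rectangle, finite space-time volume via Lemma \ref{trappedvolume}, $L^1$--$L^\infty$ estimates to close the bootstrap, then a short Gr\"onwall argument on the thin slab $[u_0,u_0+\epsilon]$) is the same as the paper's, and the case $u_0\le u_s$ handled directly from Proposition \ref{LBprop} also matches. The problem is your step (a), which is the actual rigidity mechanism, and there your proposal is wrong in substance, not just in labeling. You claim that \eqref{MihalisPHDcondition} ``forces $D_v\phi$ to vanish on the Cauchy horizon limit'' and, in your closing paragraph, that $r_{\CH}$ constant gives ``$\int |D_v\phi|^2=0$ there.'' This cannot be the mechanism: the transverse flux $\int_v^{+\infty}|\partial_v\phi|^2\,dv'$ obeys the \emph{lower} bound \eqref{rigidity4} (inherited from the event-horizon lower bound \eqref{lowerhyp} via \eqref{phiVLBlowerbound}), which is part of the very theorem you are proving and is precisely what later drives the Ricci blow-up even in the static case (Remark \ref{remarktransverse}). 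Moreover $\partial_v\phi\to 0$ pointwise as $v\to+\infty$ holds in the dynamical case as well (it is just the decay \eqref{phiVLB}), so no argument built on the vanishing of $D_v\phi$ can distinguish static from dynamical behavior, and it certainly does not yield $\phi\equiv 0$, $Q\equiv e$ on $\CH\cap\{u\le u_0\}$.

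The quantities that the staticity condition actually kills are the \emph{ingoing} ones, $\nu$ and $D_u\phi$, i.e.\ the data tangential to $\CH$. The paper's chain is: (i) by the ingoing Raychaudhuri \eqref{RaychU} (not \eqref{RaychV}, which you cite for this step), $u\mapsto\kappa(u,v)$ is non-increasing, so \eqref{MihalisPHDcondition} at $u_0$ propagates to every $u\le u_0$; (ii) from \eqref{Radius3} together with the a priori bounds, $\partial_v(r\nu)$ is integrable, so $r\nu(u,v)\to l(u)$, and if $l(u)\ne 0$ then $\kappa=\Omega^2/(4|\nu|)\lesssim\Omega^2$, which is integrable in $v$ because $\partial_v\log\Omega^2\approx 2K_-<0$, contradicting \eqref{MihalisPHDcondition}; hence $\nu\to 0$ and $r_{CH}\equiv r_-$; (iii) $rD_u\phi\to l'(u)$, and integrating \eqref{RaychU} in $u$ (exploiting the exponential gap between the upper bound on $\Omega^2$ and the bound on $|rD_u\phi-l'(u)|$) forces $l'(u)=0$, i.e.\ $D_u\phi\to 0$; only then does integration in $u$ from the curve $\gamma$ give $\phi\to 0$, \eqref{chargeUEinstein} give $Q\to e$, and a careful integration by parts in \eqref{Radius3} give $\kappa^{-1}\to 1$ and $\varpi\to M$. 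This $u$-directional argument is the missing idea in your proposal; without it, the sharpened estimates needed in your steps (b)--(d) (in particular the smallness \eqref{lambda}-type bound used against the lower bound \eqref{lambdaus} to prove trapping) have no source, and the bootstrap does not close.
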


We divide the proof of Theorem \ref{classificationtheorem} in several lemmata. We start to prove the result for $u_0 \leq u_s$, slightly simpler than the general case, as we already have estimates from section \ref{LB} at disposal. This is the object of the following lemma:

\begin{lem} \label{<us}We define $u'_s:= \min \{ u_0, u_s\}$ and $v'_s:=v_{\gamma}(u'_s)$. Under the assumption of Theorem \ref{classificationtheorem}: \begin{enumerate}
		\item For all $u \leq u'_s$, $$\lim_{v \rightarrow +\infty}|\nu|(u,v)=0.$$
		\item $r-r_-(e,M)$, $\phi$, $D_u \phi$, $\varpi-M$, $Q-e$ extend continuously to $0$ on  $CH^{\leq u'_s}_{i^+}:=\CH \cap \{ u \leq u'_s\}$.
		\item The following additional estimates are true on $\LB \cap \{ u \leq u'_s\}$:   \begin{equation} \label{ropt}
		|r-r_-(M,e)| \lesssim v^{1-2s},
		\end{equation} 
		\begin{equation} \label{phiopt}
		|\phi|+ |\partial_v \phi| \lesssim v^{-s},
		\end{equation}  	\begin{equation} \label{Duphi2}
		|D_u \phi| \lesssim \Omega^2 \cdot v^{-s},
		\end{equation} 
		\begin{equation} \label{kappaopt}
		|\kappa^{-1}-1| \lesssim v^{1-2s},
		\end{equation} \begin{equation} \label{massopt}
		|\varpi-M| \lesssim v^{1-2s},
		\end{equation} \begin{equation} \label{Qopt}
		|Q-e| \lesssim v^{1-2s}.
		\end{equation}
	\end{enumerate}
	
\end{lem}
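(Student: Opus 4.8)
\textbf{Proof proposal for Lemma \ref{<us}.}

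The plan is to leverage the estimates from Proposition \ref{LBprop} (available in $\LB \cap \{u \le u_s\}$, hence in $\LB \cap \{u \le u'_s\}$ since $u'_s \le u_s$) together with the hypothesis \eqref{MihalisPHDcondition}, which asserts $\int^{+\infty} \kappa(u_0,v')\,dv' = +\infty$ on a cone transverse to $\CH$. First I would record that, by the Raychaudhuri equation \eqref{RaychU}, the quantity $\kappa^{-1}$ is non-decreasing in $u$ (since $\partial_u(\kappa^{-1}) = \frac{4r}{\Omega^2}|D_u\phi|^2 \ge 0$), so $\kappa$ is pointwise non-increasing in $u$; since $\kappa > 0$ in the trapped region, the divergence of $\int \kappa(u_0,v')\,dv'$ forces $\int \kappa(u,v')\,dv' = +\infty$ for all $u \le u_0$, and in particular on the cone $C_{u'_s}$. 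Combined with the estimate \eqref{Omegaexp} ($\Omega^2 \sim e^{2K_- v}$ with $K_- < 0$) and the relation $\kappa = \frac{-\Omega^2}{4\nu}$ from \eqref{kappadef}, the divergence $\int \kappa\,dv = +\infty$ is essentially equivalent, under the a priori bounds, to $\lim_{v\to+\infty}|\nu|(u,v) = 0$: indeed, if $|\nu|$ stayed bounded below by $c>0$ along $C_u$, then $\kappa \lesssim e^{2K_- v}/c$ would be integrable, a contradiction. This will need to be made quantitative using \eqref{nuLB} and the monotonicity of $\nu$ coming from \eqref{Radius3} or a bootstrap.

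The heart of the matter is to upgrade the crude bounds of Proposition \ref{LBprop} to the ``optimal'' rates \eqref{ropt}--\eqref{Qopt}. I would proceed by a bootstrap/continuity argument on $C_u$ for each fixed $u \le u'_s$, or equivalently integrate the evolution equations in $v$ from $v_\gamma(u)$ towards $+\infty$ using the decay of the data $|\phi|_{|\mathcal{H}^+}, |D_v\phi|_{|\mathcal{H}^+} \lesssim v^{-s}$ (Assumption \ref{fieldevent}). Concretely: from \eqref{phiVLB} we have $|\partial_v\phi| \lesssim v^{-s}$; since $s > \frac12$ this is not yet integrable, but combined with the hypothesis \eqref{MihalisPHDcondition} — which, via the identity $\int_{C_{u_0}} \frac{dr}{\frac{2\rho}{r}-1} = \int \kappa\,dv \cdot (\text{stuff})$, i.e. \eqref{staticityrough2} being \emph{violated} — forces $\int_{v}^{+\infty}|\lambda|(u_0,v')\,dv' $ and hence (by monotonicity/Raychaudhuri in $u$) $\int_v^{+\infty}|\lambda|(u,v')\,dv'$ to control the oscillation of $r$. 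The key point is that the static condition \eqref{MihalisPHDcondition} forces $r \to r_-(M,e)$ and not merely $r \to r_{\CH}(u)$; one then runs the equations \eqref{Field2}/\eqref{Field3}, \eqref{massVEinstein}, \eqref{ChargeVEinstein}, \eqref{Omega3} as a coupled transport system, using $r - r_- = O(v^{1-2s})$ to close, getting $|\varpi - M|, |Q-e| \lesssim v^{1-2s}$ and $|\kappa^{-1}-1| \lesssim v^{1-2s}$ from integrating $\partial_u(\kappa^{-1}) = \frac{4r}{\Omega^2}|D_u\phi|^2$ against the bound $|D_u\phi| \lesssim \Omega^2 v^{-s}$ in \eqref{Duphi2} (this last bound \eqref{Duphi2} itself follows from integrating \eqref{Field2} in $u$ from the event horizon, using the red-shift data \eqref{fieldUevent} and the exponential smallness of $\Omega^2$). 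For the continuous extension to $\CH$: once $|\lambda| \lesssim v^{-2s}$ with $2s > 1$, $r$ extends continuously in $v$; then $\lim_{v\to+\infty}|\nu| = 0$ gives that the limit $r_{\CH}(u)$ is \emph{constant} $= r_-(M,e)$ on $\{u \le u'_s\}$, and the exponential decay of $\Omega^2$ together with $|\phi| \lesssim v^{-s} \to 0$, $|D_u\phi| \lesssim \Omega^2 v^{-s} \to 0$, $|Q-e| \lesssim v^{1-2s}\to 0$, $|\varpi - M|\lesssim v^{1-2s}\to 0$ yield the claimed continuous extensions to $0$ on $CH_{i^+}^{\le u'_s}$.

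The main obstacle I anticipate is closing the bootstrap for the coupled system: the bounds on $\phi$, $\varpi$, $Q$, $r$, and $\Omega^2$ all feed into each other through the nonlinear terms in \eqref{Field2}, \eqref{massVEinstein}, \eqref{Omega3}, \eqref{ChargeVEinstein}, and the naive estimates from Proposition \ref{LBprop} are just barely too weak (rates like $v^{2-2s}$ for $Q$ versus the desired $v^{1-2s}$). The gain must come from exploiting the divergence hypothesis \eqref{MihalisPHDcondition} to pin down $r_-$ exactly — without it one only gets $r \to r_{\CH}(u)$ and the surface gravity mismatch destroys the exponential decay rate $e^{2K_- v}$ for $\Omega^2$. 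So the delicate step is showing that \eqref{MihalisPHDcondition} propagates to all $u \le u'_s$ (via $u$-monotonicity of $\kappa^{-1}$) and then feeds the exact value $r_-(M,e)$ into the $\Omega^2$ evolution \eqref{Omega3} so that the $e^{2K_- v}$ asymptotics hold with the \emph{correct} $K_-$; everything else is a more-or-less routine integration of transport equations with exponentially decaying source terms, exactly parallel to the arguments already carried out in \cite{Moi} for the region $\{u \le u_s\}$.
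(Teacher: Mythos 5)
Your first step is essentially the paper's: by \eqref{RaychU}, $\kappa^{-1}$ is non-decreasing in $u$, so \eqref{MihalisPHDcondition} propagates to all $u\leq u_0$, and the divergence of $\int\kappa\,dv$ is contradicted if $r|\nu|$ has a positive limit along $C_u$; combined with the integrability of $\partial_v(r\nu)$ (from \eqref{Radius3}, \eqref{QphiLB}, \eqref{Omegaexp} — this, not a ``monotonicity of $\nu$'', is what guarantees the limit exists) one gets $|\nu|\to 0$, then $|\nu|\lesssim \Omega^2 v^{2-2s}$ by integrating back from $v=+\infty$, hence $r_{CH}$ is constant and equal to $r_-(M,e)$ after sending $u\to-\infty$ along $\gamma$ and using \eqref{rgamma}, giving \eqref{ropt}. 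Up to the vagueness just noted, this part is fine.

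The genuine gap is your derivation of \eqref{Duphi2}. You claim $|D_u\phi|\lesssim\Omega^2 v^{-s}$ ``follows from integrating \eqref{Field2} in $u$ from the event horizon, using the red-shift data and the exponential smallness of $\Omega^2$.'' First, \eqref{Field2} is the transport equation for $D_v\phi$; the relevant equation is \eqref{Field3}. More importantly, no argument that uses only the event-horizon data can yield \eqref{Duphi2}: the same data assumptions are compatible with a Cauchy horizon of dynamical type, for which $\lim_{v\to+\infty} rD_u\phi(u,v)=:l'(u)$ is non-zero (this non-trivial tangential radiation is precisely what drives mass inflation), so $|D_u\phi|$ does not decay at all in $v$, let alone like $\Omega^2 v^{-s}$. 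The estimate \eqref{Duphi2} is exactly the rigidity statement that the radiation on $\CH\cap\{u\leq u'_s\}$ vanishes, and it must come from the staticity hypothesis. The paper obtains it in two steps: integrating \eqref{Field3} in $v$ shows $rD_u\phi(u,v)\to l'(u)$ with $|rD_u\phi - l'|\lesssim \Omega^2 v^{-s}$; then, using $|\nu|\lesssim e^{1.98K_-v}$ to get the crude bound $\kappa^{-1}\lesssim e^{0.03|K_-|v}$, one integrates the Raychaudhuri equation \eqref{RaychU} in $u$ to bound $\int_{u_1}^{u_2}\frac{r^2|D_u\phi|^2}{\Omega^2}\,du'\lesssim e^{0.03|K_-|v}$, which together with \eqref{Omegaexp} forces $\int_{u_1}^{u_2}|l'|^2du'\lesssim e^{-1.95|K_-|v}\to 0$, i.e.\ $l'\equiv 0$ on $(-\infty,u'_s]$; only then does \eqref{Duphi2} follow. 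Without this step your chain collapses: \eqref{phiopt}, \eqref{Qopt}, \eqref{kappaopt}, \eqref{massopt} all rely on the exponential smallness of $D_u\phi$ (and, for \eqref{kappaopt}, the paper anyway proceeds differently, integrating \eqref{Radius3} in $v$ from the Cauchy horizon, where $r|\nu|\to 0$, and invoking the identity $2K_-r_-=1-e^2/r_-^2$, rather than anchoring $\kappa^{-1}$ at the event horizon as you propose, which would additionally require control of $\frac{|D_u\phi|^2}{\Omega^2}$ across the whole interior).
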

\begin{proof}
	From the Raychaudhuri equation \eqref{RaychU}, we see that $u \rightarrow \kappa(u,v)$ is a non-increasing function of $u$. Therefore, from \eqref{MihalisPHDcondition}, we see that for all $u \leq u_0$, \begin{equation} 	\int_{v}^{+\infty} \kappa(u,v')dv' = +\infty.\end{equation} 
	
	Using the estimates \eqref{QphiLB} of \cite{Moi}, we see that on $\LB=\{ (u,v), u \leq u_s, v \geq v_{\gamma}(u) \}$ we have, using also \eqref{Radius3}: \begin{equation} \label{dvnu}
	|\partial_v(r \nu)| \lesssim \Omega^2 \cdot v^{2-2s}  \lesssim e^{1.99 K_- v} \cdot v^{2-2s} \color{black}
	\end{equation} where in the last inequality, we used \eqref{Omegaexp}.
 	Then, for all $u \leq u_s$, $ v \rightarrow |\partial_v(r \nu)|(u,v)$ is clearly integrable as $v \rightarrow +\infty$. Thus for some $l(u) \geq 0$, $r \nu(u,v) \rightarrow l(u)$ as $v \rightarrow +\infty$, for all $u \leq u_s$. We want to show that for all $u \leq u'_s$, $l(u)=0$. Suppose not; if $l(u_1)>0$ for some $u_1 \leq u'_s$, then there exists $v_1$ such that for all $v \geq v_1$, $r |\nu|(u_1,v)>\frac{l(u_1)}{2}$. Then, it means that, since $r$ is bounded: $$ \int_{v_1}^{+\infty}\kappa(u_1,v)dv \lesssim  \int_{v_1}^{+\infty}\frac{-\partial_v \Omega^2(u_1,v)}{l(u_1)} dv  \leq \frac{\Omega^2(u_1,v_1)}{l(u_1)}<+\infty,$$ which is a contradiction (note that we used \eqref{partialvOmegaLB} and $K_-(M,e)<0$)\color{black}. Thus, as $r$ is lower bounded, $|\nu|(u,v) \rightarrow 0$ as $v\rightarrow+\infty$ for all $u \leq u'_s$.
	
	Then, we can integrate \eqref{dvnu}, also using \eqref{Omegaexp} to obtain \begin{equation} \label{nu}
	|\nu| \lesssim e^{1.99 K_- v} \cdot v^{2-2s}  \lesssim  e^{1.98 K_- v}
	\end{equation}
	
	Recall from Proposition \ref{LBprop} that $r$ extends continuously to $r_{CH}$ on $\CH$. We write for $u_1<u_2 \leq u'_s$ and using \eqref{nu}: 
	
	$$ |r_{CH}(u_2)-r_{CH}(u_1)|= \lim_{v \rightarrow +\infty} |r(u_2,v)-r(u_1,v)| \lesssim (u_2-u_1)  \lim_{v \rightarrow +\infty}e^{1.98 K_- \cdot v}=0.$$ This means that $r_{CH}(u)$ is a constant function on $(-\infty,u'_s]$. Using \eqref{lambdaLB}, we also get $$ |r_{CH}(u)-r(u,v_{\gamma}(u))| \lesssim |u|^{1-2s},$$
	
	and taking the limit $u \rightarrow -\infty$, also using \eqref{rgamma}, we proved that $r_{CH}(u)=r_-(M,e)>0$ for all $u \leq u'_s$ and also \eqref{ropt}.
	
	Then, from  \eqref{phiVLB}, \eqref{QphiLB}, \eqref{Omegaexp} and \eqref{Field3}, we get $|\partial_v(r D_u \phi)| \lesssim \Omega^2 \cdot v^{2-2s} \lesssim e^{1.99 K_- v} \cdot v^{2-2s}$  \color{black}which is integrable. This means that for all $u\leq u'_s$, $r D_u \phi(u,v) \rightarrow l'(u)$ for some $l'(u) \in \CC$ as $v \rightarrow +\infty$. Moreover, by \eqref{Omegaexp}, for all $u\leq u'_s$ and $v$ large enough \begin{equation} \label{Duphi}
	|r D_u\phi(u,v)-l'(u)| \lesssim \Omega^2(u,v) \cdot v^{-s} \lesssim  e^{1.98 K_- \cdot v}.
	\end{equation}
	
	Now, using \eqref{Omegaexp} and \eqref{nu}, we also have the (very sub-optimal but ultimately sufficient) bound \begin{equation} \label{kappasubopt}
	\kappa^{-1}(u,v) \lesssim e^{0.03 |K_-| \cdot v}.
	\end{equation}
	
	Now, integrating \eqref{RaychU} and using the fact that $r$ is lower bounded, we see that for all $u_1 < u_2 \leq u'_s$ and for $v$ large enough we get $$ \int_{u_1}^{u_2} \frac{r^2|D_u \phi|^2}{\Omega^2}(u',v) du' \lesssim e^{0.03 |K_-| \cdot v},$$ which also implies, using \eqref{Duphi}, \eqref{Omegaexp} and Cauchy-Schwarz: $$ \int_{u_1}^{u_2} \frac{|l'(u)|^2}{\Omega^2}(u',v) du' \lesssim \int_{u_1}^{u_2} \frac{r^2|D_u \phi|^2}{\Omega^2}(u',v) du'+ \int_{u_1}^{u_2} \frac{|l'(u)-rD_u \phi|^2}{\Omega^2}(u',v) du' \lesssim  e^{0.03 |K_-| \cdot v}+(u_2-u_1)e^{-1.95 |K_-| \cdot v},$$ which then implies, using \eqref{Omegaexp} again $$ \int_{u_1}^{u_2} |l'(u)|^2 du' \lesssim e^{-1.95 |K_-| \cdot v},$$ which proves immediately that $l'(u)=0$ for all $u \leq u'_s$. As a consequence, since $r$ is lower bounded, we also obtain \eqref{Duphi2}.

	We can integrate this estimate, noticing that $\partial_u (e^{iq_0 \int_{u_{\gamma}(v)}^{u} A_u(u',v) du'} \phi)(u,v) = e^{iq_0 \int_{u_{\gamma}(v)}^{u} A_u(u',v) du'} D_u\phi(u',v)$:$$ | \phi(u,v)- e^{-iq_0 \int_{u_{\gamma}(v)}^{u} A_u(u',v) du'}\phi(u_{\gamma}(v),v)| \lesssim (u-u_{\gamma}(v)) \cdot e^{1.98 K_- \cdot v} \lesssim v \cdot e^{1.98 K \cdot v}.$$ 
	
	Using also \eqref{phiVLB}, \eqref{phigamma}, we obtain \eqref{phiopt} and that $\phi$ extends continuously to $0$ as $v\rightarrow +\infty$ on $ \{u \leq u'_s \}$.
	
	Then, we integrate \eqref{chargeUEinstein} using \eqref{Qgamma} and \eqref{phiopt}, \eqref{Duphi2} to obtain \eqref{Qopt}. Then we return to \eqref{Radius3},  now written as $$ \partial_v(4r |\nu|)= -\partial_v \Omega^2 \cdot \frac{1- \frac{e^2}{r_-(M,e)^2}}{2K_-(M,e)} + \partial_v \Omega^2  \cdot \frac{\frac{Q^2}{r^2}- \frac{e^2}{r_-(M,e)^2}+m^2r^2 |\phi|^2}{2K_-(M,e)} + \Omega^2 \cdot (1-\frac{\partial_v \log(\Omega^2)}{2K_-}) \cdot (1 -\frac{Q^2}{r^2}+m^2r^2 |\phi|^2 ).$$
	Using \eqref{partialvOmegaLB}, \eqref{phiopt} and \eqref{Qopt}, it is clear that we have in fact, for some constant $C>0$:  $$ -\partial_v \Omega^2 \cdot \frac{1- \frac{e^2}{r_-(M,e)^2} +C \cdot v^{1-2s}}{2K_-(M,e)}  \leq \partial_v(4r |\nu|) \leq -\partial_v \Omega^2 \cdot \frac{1- \frac{e^2}{r_-(M,e)^2} - C \cdot v^{1-2s}}{2K_-(M,e)}.$$ We integrate this expression on $[v,+\infty)$: first notice that integration by parts of the term $(-\partial_v \Omega^2) \cdot v^{1-2s}$ term gives $$ \int_v^{+\infty}-\partial_v \Omega^2(u,v') \cdot (v')^{1-2s} dv'= \Omega^2(u,v) \cdot v^{1-2s}-(2s-1)   \int_v^{+\infty}\frac{-\partial_v \Omega^2(u,v')}{-\partial_v \log(\Omega^2) (u,v')} \cdot (v')^{-2s} dv'= \Omega^2(u,v) \cdot [v^{1-2s}+O(v^{-2s})],$$ 
	
 where we used  \eqref{partialvOmegaLB} and a second integration by parts to obtain the last estimate. Therefore, for some $\check{C}>0$: 	\color{black}
	 \begin{equation*}
	\frac{ \frac{e^2}{r_-(M,e)^2} -1 -\check{C} \cdot v^{1-2s}}{2|K_-|(M,e)} \leq  r \kappa^{-1}(u,v) \leq \frac{ \frac{e^2}{r_-(M,e)^2} -1 +\check{C} \cdot v^{1-2s}}{2|K_-|(M,e)},
	\end{equation*} which implies, again using \eqref{ropt} that for some other constant $C'>0$: \begin{equation*}
	\frac{ \frac{e^2}{r_-(M,e)^2} -1 -C' \cdot v^{1-2s}}{2|K_-|(M,e) \cdot r_-(M,e)} \leq  \kappa^{-1}(u,v) \leq \frac{ \frac{e^2}{r_-(M,e)^2} -1 +C' \cdot v^{1-2s}}{2|K_-|(M,e) \cdot r_-(M,e)}.
	\end{equation*}
	Now, we use a computation from section \ref{geometricframework} which states that $2K_-(M,e) \cdot r_-(M,e) = 1-\frac{e^2}{r_-(M,e)}$. Hence we proved \eqref{kappaopt}. 
	
	Now, using \eqref{massUEinstein}, \eqref{massVEinstein} together with \eqref{kappaopt}, \eqref{phiVLB}, \eqref{Qopt}, \eqref{phiopt}, it is easy to see that $$ |\partial_u \varpi| \lesssim \Omega^2,$$  $$ |\partial_v \varpi| \lesssim v^{-2s}.$$
	From these two estimates, we conclude that $\varpi$ extends continuously to a constant $\varpi_0 \in \RR$ on $\CH$ and that $$ |\varpi(u,v) - \varpi_0| \lesssim v^{1-2s}.$$ From \eqref{varpigamma}, we get that $\varpi_0=M$, which finally gives \eqref{massopt}.

\end{proof}
Lemma \ref{<us} concludes the proof of Theorem \ref{classificationtheorem} in the case $u_0 \leq u_s$. The harder case $u_0 >u_s$ remains. We will address it in the next two lemma, using Lemma \ref{<us} as a building block. 

The objective is to use  bootstrap method, which we write in detail. We introduce a set $\mathcal{B}_{v_0}$ over which certain estimates are satisfied. We prove that \begin{enumerate}
	\item \label{first} The set $ \{ (u,\infty), u \in \mathcal{B}_{v_0}  \}\subset \CH$ is ``static'' i.e.\ all quantities coincide with their Reissner--Nordstr\"{o}m counterparts, like in  statement \ref{secondeasy} from Lemma \ref{<us}. Moreover, estimates analogous to those  of Lemma \ref{<us} are true $\mathcal{B}_{v_0}$.

	\item \label{third} $\mathcal{B}_{v_0}$ is entirely included in the trapped region, using the lower bound \eqref{lambdaus} of section \ref{lowerboundslambdasection}.
	
	\item \label{fourth} There exists a strictly bigger open set $\mathcal{B}'_{v_0}\supset \mathcal{B}_{v_0}$, which is trapped, hence has finite volume.
	\item \label{fifth} $L^1$-type estimates are true on $\mathcal{B}'_{v_0}$ and they imply \eqref{B1} and \eqref{B2} on $\mathcal{B}'_{v_0}$.
	
	As a consequence of these four facts, we will show that $\mathcal{B}_{v_0}$ is both open and closed, concluding the bootstrap-type argument; therefore $\mathcal{B}_{v_0}=[u_s,u_0]$, which concludes the proof of Theorem \ref{classificationtheorem}.
\end{enumerate}

The proof of step \ref{first}, together with the definition of the set $\mathcal{B}_{v_0}$, is the object of the following lemma:

\begin{lem} \label{propagation} Under the assumptions of Theorem \ref{classificationtheorem}, we additionally assume that $u_0 >u_s$. Consider, for $\Delta>0$, the following estimates: \begin{equation}\label{B1}
	\Omega^2(u',v) \leq  \Delta \cdot e^{K_- v}, \end{equation}  \begin{equation} \label{B2}
	| \phi|(u',v) 	+	|\partial_v \phi|(u',v) \leq  \Delta \cdot v^{-s}.
	\end{equation} and $\mathcal{B}_{v_0}$, the set of $u \in [u_s,u_0]$ such that the estimates \eqref{B1}, \eqref{B2} hold on the rectangle $[u_s,u] \times [v_0,+\infty)$. 
	
	Then, there exists $\Delta=\Delta(M,e,m^2,q_0,u_0)>0$, $\tilde{v}_0=\tilde{v}_0(M,e,m^2,q_0,u_0)$ such that, for all $v_0 \geq \tilde{v}_0$: \begin{enumerate}
		\item \label{firstlemma} $\mathcal{B}_{v_0}$ is non-empty.
		\item \label{secondlemma}  If $u \in \mathcal{B}_{v_0}$, then for all $u' \leq u$
		$$\lim_{v \rightarrow +\infty}|\nu|(u',v)=0.$$
		\item  \label{thirdlemma}  If $u \in \mathcal{B}_{v_0}$, then $r-r_-(e,M)$, $\phi$, $D_u \phi$, $\varpi-M$, $Q-e$ extend continuously to $0$ on $CH^{\leq u}_{i^+}:=\CH \cap \{ u' \leq u\}$.
		
		\item \label{fourthlemma} There exists $C=C(M,e,m^2,q_0,u_0)>0$ such that for all $u\in  \mathcal{B}_{v_0}$ and for all $(u',v) \in [u_s,u] \times [v_0,+\infty)$: \begin{equation}\label{B1'}
		C^{-1} \cdot 	e^{2.01 K_- v} \leq 	\Omega^2(u',v) \leq  C \cdot e^{1.99K_- v}, 	\end{equation}  \begin{equation} \label{B2'}
		|\phi|(u',v)+	|\partial_v \phi|(u',v) \leq  C \cdot v^{-s},
		\end{equation}  \begin{equation} \label{B2.5'}
		|D_u \phi|(u',v)\leq  C \cdot \Omega^2(u',v) \cdot v^{-s}  \leq \color{black}  C^2 \cdot e^{1.99K_- v} \cdot v^{-s},
		\end{equation}  	\begin{equation} \label{B.275'}
		\int_{v}^{+\infty} |\partial_v \phi|^2(u,v') dv' \geq  C \cdot v^{-p},			\end{equation}\begin{equation} \label{B3'}
		|Q-e|(u',v) \leq  C \cdot v^{1-2s},
		\end{equation}\begin{equation} \label{B4'}
		|r(u',v)-r_-(M,e)| \leq  C \cdot v^{1-2s},
		\end{equation}
		\begin{equation} \label{B5'}
		|\partial_v \log(\Omega^2)(u',v)-2K_-|\leq  C \cdot v^{1-2s},
		\end{equation}  	\begin{equation} \label{B6'}
		|\kappa^{-1}(u,v)-1| \leq  C \cdot v^{1-2s},
		\end{equation} \begin{equation} \label{B7'}
		|\varpi(u,v)-M| \leq  C \cdot v^{1-2s},
		\end{equation}
		\begin{equation} \label{lambda}
		|r\lambda(u',v)-r\lambda(u_s,v)| \leq  C \cdot e^{1.99K_- v},	\end{equation}
		\begin{equation} \label{lambda2}
		|\lambda|(u',v) \leq  C \cdot v^{-2s},	\end{equation}
		\begin{equation} \label{nu2}
		|\nu|(u',v)\leq  C \cdot \Omega^2(u',v) \leq \color{black} C^2 \cdot e^{1.99K_- v}. 			\end{equation}

	\end{enumerate}
	
\end{lem}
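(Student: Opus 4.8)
The plan is to run a continuity (bootstrap) argument in the variable $u$ on the interval $[u_s,u_0]$, with building blocks the sharp asymptotic estimates of Proposition \ref{LBprop} on the cone $\{u=u_s\}$ and the already-settled case $u_0\le u_s$ from Lemma \ref{<us}. The guiding principle is that the bootstrap assumption \eqref{B1} forces $\Omega^2$ to be \emph{exponentially small in $v$}, uniformly on $R:=[u_s,u]\times[v_0,+\infty)$; since the volume form and every ``transverse'' term in the structure equations is weighted by $\Omega^2$, the equations become effectively one-dimensional as $v\to+\infty$, and any estimate transported from $u=u_s$ to $u'\in[u_s,u]$ picks up only exponentially-small-in-$v$ corrections, negligible against the polynomial error rates already present on $\{u=u_s\}$. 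Non-emptiness (statement \ref{firstlemma}) is then immediate: on $\{u_s\}\times[v_0,+\infty)$, \eqref{Omegaexp} and \eqref{phiopt} give $\Omega^2\lesssim e^{1.99K_-v}\le e^{K_-v}$ and $|\phi|+|\partial_v\phi|\lesssim v^{-s}$, so \eqref{B1}, \eqref{B2} hold at $u=u_s$ once $\Delta$ is chosen large and $\tilde v_0\ge v_{\gamma}(u_s)$; hence $u_s\in\mathcal{B}_{v_0}$.

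Next, fixing $u\in\mathcal{B}_{v_0}$, I would derive conclusions \ref{secondlemma}--\ref{fourthlemma} by integrating the structure equations, essentially replaying Lemma \ref{<us} with \eqref{B1}--\eqref{B2} in place of the estimates of \cite{Moi}. Using \eqref{Radius3}, \eqref{ChargeVEinstein}, \eqref{Field3}, \eqref{massUEinstein}--\eqref{massVEinstein} together with \eqref{B1}, \eqref{B2}, the transverse quantities $r\nu$, $Q$, $rD_u\phi$, $\varpi$ admit limits as $v\to+\infty$ on every $\{u'\}\times[v_0,+\infty)$ with $u'\le u$. The staticity hypothesis \eqref{MihalisPHDcondition}, propagated to all $u'\le u_0$ by monotonicity of $\kappa$ in $u$ (Raychaudhuri \eqref{RaychU}), then forces $\lim_v r\nu(u',v)=0$: otherwise $\kappa\lesssim\Omega^2/l$ would be integrable in $v$ by \eqref{B1}, contradicting \eqref{MihalisPHDcondition}. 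This is statement \ref{secondlemma} (for $u'\le u_s$ it is Lemma \ref{<us}), and it yields \eqref{nu2}. Integrating \eqref{Omega} in $u$ from $u_s$ gives $\partial_v\log(\Omega^2)=2K_-+O(v^{1-2s})$, i.e.\ \eqref{B5'}, and a further integration in $v$ from $v_0$ yields the two-sided bound \eqref{B1'}; \eqref{B6'} and \eqref{B7'} follow by redoing on $[u_s,u]$ the computations of \eqref{kappaopt} and of the mass. A Grönwall argument in $u$ from \eqref{Radius}, starting from $|\lambda(u_s,v)|\lesssim v^{-2s}$, gives \eqref{lambda}, \eqref{lambda2}, and combining with the lower bound \eqref{phiVLBlowerbound} at $u_s$ and the exponential smallness of $\lambda(u',v)-\lambda(u_s,v)$ produces \eqref{B.275'}. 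Finally, $|\partial_u|\phi|^2|\lesssim|D_u\phi||\phi|$ together with the bound just obtained on $D_u\phi$ shows that $\lim_v|\phi|(u',v)$ is independent of $u'$, hence $0$ by Lemma \ref{<us}, and a Cauchy--Schwarz/Raychaudhuri argument as in Lemma \ref{<us} gives $\lim_v rD_u\phi=0$ and \eqref{B2.5'}. Assembling the uniform rates and the vanishing limits yields the continuous extension to $0$ on $\CH\cap\{u'\le u\}$ of $r-r_-$, $\phi$, $D_u\phi$, $\varpi-M$, $Q-e$ (statement \ref{thirdlemma}), completing \eqref{B1'}--\eqref{nu2}.

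The main obstacle is that the sharp lower bound on $\Omega^2$ in \eqref{B1'}, the boundedness of $\kappa^{-1}$, and the vanishing of $\lim_v rD_u\phi$ form an entangled system --- the natural proof of each uses the other two --- so they cannot simply be derived in sequence. I would untangle them by a subsidiary continuity argument \emph{inside} $[u_s,u]$: let $u^\star$ be the supremum of those $u'\in[u_s,u]$ for which $|\partial_v\log(\Omega^2)-2K_-|\le v^{1-2s}$ on $[u_s,u']\times[v_0,+\infty)$; on $[u_s,u^\star)$ this estimate makes $\Omega^2$ comparable to $e^{2K_-v}$ from both sides, which controls $\kappa^{-1}$, which via \eqref{RaychU} and Cauchy--Schwarz bounds $\int_{u_s}^{u^\star}|D_u\phi||\partial_v\phi|\,du'$ by an exponentially small quantity; this closes the bound on $\partial_v\log(\Omega^2)$ \emph{strictly}, forcing $u^\star=u$. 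A secondary, purely technical point is the bookkeeping of constants: one must check that each $u$-integration over $[u_s,u]\subset[u_s,u_0]$ contributes only exponentially-small-in-$v$ increments, so that the final constant $C$ depends on $M,e,m^2,q_0,u_0$ alone --- through $u_s$, the data on $\{u=u_s\}$, and $r_{CH}(u_0)$ --- and not on the particular $u\in\mathcal{B}_{v_0}$ or on $v_0\ge\tilde v_0$.
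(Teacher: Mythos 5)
Your proposal is correct and follows essentially the same route as the paper: initialize at $u_s$ via Proposition \ref{LBprop} (so $u_s\in\mathcal{B}_{v_0}$), and for $u\in\mathcal{B}_{v_0}$ replay the argument of Lemma \ref{<us} with the bootstrap bounds \eqref{B1}--\eqref{B2} in place of the estimates of \cite{Moi} --- boundedness of $Q$ from \eqref{ChargeVEinstein}, the staticity condition propagated in $u$ by \eqref{RaychU} forcing $\nu\to 0$, integration of \eqref{Omega} to pin $\partial_v\log(\Omega^2)$ near $2K_-$, and then the sequential upgrade of all estimates, with \eqref{B.275'} recovered from the lower bound at $u_s$. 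The only divergence is your subsidiary continuity argument in $u$ to untangle the claimed circularity between the lower bound on $\Omega^2$, $\kappa^{-1}$, and $\lim_v rD_u\phi$: the paper does not need it, since with $|D_u\phi|$ merely bounded one already gets the two-sided $\Omega^2$ bound from \eqref{Omega}, hence a crude bound $\kappa^{-1}\lesssim e^{1.01|K_-|v}$, which beats the $e^{1.99|K_-|v}$ growth of $\Omega^{-2}$ and forces $\lim_v rD_u\phi\equiv 0$ exactly as in Lemma \ref{<us}, after which the sharp estimates follow in sequence.
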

\begin{proof}

	From Lemma \ref{<us}, if $\Delta$ is large enough, and for $v_0 \geq v_s$, we see that $u_s \in \mathcal{B}_{v_0}$, thus $\mathcal{B}_{v_0} \neq \emptyset$, thus statement \ref{firstlemma} is proven. We will chose $\Delta$ to be a large constant depending only on $M$, $e$, $m^2$, $q_0$ and $u_0$. We recall the notation $A \lesssim B$ if there exists a constant $C(M, e, m^2, q_0,u_0)>0$ such that $A \leq C \cdot B$. In this notation, $\Delta \lesssim 1$.
	
	If $u\in \mathcal{B}_{v_0}$, then, using the same method as for Lemma \ref{<us}, we can show that $r-r_-(e,M)$, $\phi$, $D_u \phi$, $\varpi-M$, $Q-e$ extend continuously to $0$ on $CH^{\leq u}_{i^+}$. This is because \eqref{B1}, \eqref{B2} imply, using \eqref{ChargeVEinstein}: $$ |\partial_v Q| \lesssim  v^{-2s},$$ which is integrable, therefore there exists $Q_+>0$ such that for all $u_s \leq u' \leq u$, $v \geq v_0$: $$|Q|(u',v) \leq Q_+.$$
	
	From this, we can prove an estimate similar to \eqref{dvnu} and following the same argument as in Lemma \ref{<us}, we prove statement \ref{secondlemma} and statement \ref{thirdlemma}, together with the following estimates:  \begin{equation*} 
	|\nu|(u',v) \lesssim  e^{ K_- v},
	\end{equation*} \begin{equation*} \label{est2}
	| \phi|(u',v)+ | \partial_v \phi|(u',v)  \lesssim  v^{-s}
	\end{equation*} \begin{equation*} 
	|D_u \phi|(u',v) \lesssim  e^{ K_- v},
	\end{equation*} \begin{equation*} \label{est4}
	|r(u',v)-r_-(M,e)|+ |Q-e| + |\varpi-M|  \lesssim  v^{1-2s},
	\end{equation*}  \begin{equation*} \label{est7}
	|r\lambda(u',v)-r\lambda(u_s,v)| \lesssim  e^{K_- v}.
	\end{equation*}
	\begin{equation*} \label{est9}
	e^{2.01 K_- v} \lesssim 	\Omega^2(u',v)\lesssim  e^{1.99 K_- v}.
	\end{equation*}  
	Then, using \eqref{Omega} with all these estimates, together with \eqref{partialvOmegaLB}, one can prove that \begin{equation*}
	|\partial_v \log(\Omega^2)-2K_-(M,e)| \lesssim  v^{1-2s},
	\end{equation*} thus for $v_0$ large enough $\partial_v \log(\Omega^2) <K_-(M,e)$ and one can repeat the argument of Lemma \ref{<us} and improve the estimates:  \begin{equation*} \label{est1}
	|\nu|(u',v) \lesssim \Omega^2(u',v) \lesssim e^{1.99 K_- v },
	\end{equation*}  \begin{equation*} \label{est3}
	|D_u \phi|(u',v) \lesssim  \Omega^2(u',v) \cdot v^{-s} \lesssim e^{1.99 K_-v} \cdot v^{-s}  ,
	\end{equation*} \begin{equation*} \label{est6}
	|\kappa^{-1}(u',v)-1| \lesssim  v^{1-2s}.
	\end{equation*} 
	The last estimate \eqref{B.275'} is then obtained with no further difficulty, using all the other estimates.

\end{proof}

In the next lemma, we prove step \ref{third} and step \ref{fourth} of the bootstrap argument. \eqref{lambda} is the crucial estimate: combining with \eqref{lambdaus} from section \ref{lowerboundslambdasection}, we can prove that for all for all $u\in \mathcal{B}_{v_0}$, $C_u \cap \{v\geq v_0\}$ is included in the trapped region.	Step \ref{fourth} is then achieved using the openness of the trapped region: 
\begin{lem} \label{trappedlemma}
	
	\begin{enumerate} We choose $\Delta=\Delta(M,e,m^2,q_0,u_0)>0$ and  $\tilde{v}_0=\tilde{v}_0(M,e,m^2,q_0,u_0)$ as in the statement of Lemma \ref{propagation}. Then, there exists  $\breve{v}_0(M,e,m^2,q_0,u_0,s,p) \geq \tilde{v}_0$ such that for all $u\in \mathcal{B}_{\breve{v}_0}$, the following statements are true:
		\item \label{first10}  $[u_s,u] \times [\breve{v}_0,+\infty] \subset \mathcal{T}$.
		
		\item \label{second10} There exists $\epsilon_0(M,e,q_0,m^2,u_0)>0$ \underline{independent} of $u$, such that $[u_s,u+\epsilon_0] \times [\breve{v}_0,+\infty] \subset \mathcal{T}$.
		
		\item \label{third10} For all $0 <\epsilon \leq \epsilon_0$, $[u,u+\epsilon] \times [\breve{v}_0,+\infty]$ has finite space-time volume and we have the estimate $$ \int_{u}^{u+\epsilon} \int_{\breve{v}_0}^{+\infty} r^2 \Omega^2(u',v') du' dv'\leq 8 \epsilon  \cdot r_-^3 \sup_{u \leq u' \leq u+\epsilon} \frac{\Omega^2(u',\breve{v}_0)}{|\lambda|(u',\breve{v}_0)} <+\infty.$$
		
	\end{enumerate}
\end{lem}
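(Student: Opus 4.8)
The plan is to convert the estimates of Lemma \ref{propagation} --- which hold on $[u_s,u]\times[\breve v_0,+\infty)$ for every $u\in\mathcal B_{\breve v_0}$ --- into the statement that this rectangle (and slightly more) lies in $\mathcal T$, and then to invoke Lemma \ref{trappedvolume}. First I would note that $\nu<0$ on the rectangle: in the one-ended case this is the admissibility condition of Definition \ref{admissibilitydef}, and in the two-ended case it holds near $\CHone$ and follows for $\breve v_0$ large from \eqref{nu2}. By the relations \eqref{murelation} it then suffices to show $\lambda<0$ on $[u_s,u]\times[\breve v_0,+\infty)$, and here the outgoing Raychaudhuri equation \eqref{RaychV} does the bookkeeping: $\iota^{-1}=-4\lambda/\Omega^2$ is non-decreasing along each outgoing cone $C_{u'}$, so if $\lambda(u',\breve v_0)<0$ (i.e.\ $\iota^{-1}(u',\breve v_0)>0$) then $\iota^{-1}(u',v)\ge\iota^{-1}(u',\breve v_0)>0$, hence $\lambda(u',v)<0$, for all $v\ge\breve v_0$. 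Thus everything reduces to proving that the ingoing ray $[u_s,u]\times\{\breve v_0\}$ is trapped, i.e.\ $\lambda(u',\breve v_0)<0$ for every $u'\in[u_s,u]$.

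For statement \ref{first10} I would combine \eqref{lambda} with the integrated lower bound \eqref{lambdaus}, which is the crucial input. We already know $\lambda(u_s,\breve v_0)<0$ since $\mathcal{LB}\cap\{u\le u_s\}\subset\mathcal T$ and $\breve v_0\ge v_{\gamma}(u_s)$. Integrating \eqref{lambda} in $v$ over $[v,+\infty)$ for any $v\ge\breve v_0$, using $\int_v^{+\infty}r\lambda(u_s,v')\,dv'=\tfrac12(r_-^2-r^2(u_s,v))$ together with $r(u_s,v)-r_-=\int_v^{+\infty}|\lambda|(u_s,v')\,dv'\ge Dv^{-p}$ from \eqref{lambdaus}, and choosing $\breve v_0$ large enough that $\tfrac{C}{1.99|K_-|}e^{1.99K_-v}\le\tfrac{r_-D}{2}v^{-p}$ for all $v\ge\breve v_0$ (possible since exponential decay beats any polynomial), I obtain
\[
\tfrac12\bigl(r_-^2-r^2(u',v)\bigr)=\int_v^{+\infty}r\lambda(u',v')\,dv'\le-\tfrac{r_-D}{2}\,v^{-p}<0
\]
for all $v\ge\breve v_0$ and $u'\in[u_s,u]$ (recall $r_{CH}(u')=r_-$ for $u'\in\mathcal B_{\breve v_0}$ by \eqref{B4'}), hence a quantitative lower bound $r(u',v)-r_-\gtrsim v^{-p}$. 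Feeding this, together with the static-regime estimates \eqref{B3'}, \eqref{B4'}, \eqref{B7'}, into the identity $1-\tfrac{2\rho}{r}=1-\tfrac{2\varpi}{r}+\tfrac{Q^2}{r^2}=\tfrac{(r-r_-)(r-r_+)}{r^2}+O(v^{1-2s})$, and noting that the leading Reissner--Nordstr\"om term is strictly negative since $r_-<r(u',\breve v_0)<r_+(M,e)$, I conclude --- after enlarging $\breve v_0$ once more so this leading term dominates the error --- that $1-\tfrac{2\rho}{r}(u',\breve v_0)<0$, i.e.\ $\lambda(u',\breve v_0)<0$, with a bound $r\lambda(u',\breve v_0)\le -c_2<0$ uniform over $u\in\mathcal B_{\breve v_0}$ and $u'\in[u_s,u]$. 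This step --- and in particular controlling the constant in the comparison between the $v^{-p}$ lower bound on $r-r_-$ and the $v^{1-2s}$-errors, which is where the precise decay assumptions on the event horizon enter --- is the main obstacle; the remainder is routine.

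For statement \ref{second10} I would use that $u'\mapsto r\lambda(u',\breve v_0)$ is Lipschitz on a right-neighborhood of $[u_s,u]$ with constant $L=L(M,e,q_0,m^2,u_0)$ (from $\partial_u(r\lambda)=-\tfrac{\Omega^2}{4}(1-\tfrac{Q^2}{r^2}-m^2r^2|\phi|^2)$ via \eqref{Radius3} and the uniform bound on $\Omega^2(\cdot,\breve v_0)$); combined with the uniform bound $r\lambda(u',\breve v_0)\le-c_2$ of the previous step, this yields $r\lambda(u',\breve v_0)<0$ for all $u'\le u+\epsilon_0$ with $\epsilon_0:=c_2/(2L)$ independent of $u$, and the Raychaudhuri argument of the first paragraph (equivalently, openness of $\mathcal T$ together with \eqref{RaychV}) upgrades this to $[u_s,u+\epsilon_0]\times[\breve v_0,+\infty)\subset\mathcal T$. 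Finally, statement \ref{third10} is immediate from Lemma \ref{trappedvolume} applied to $[u,u+\epsilon]\times[\breve v_0,+\infty)$, which is trapped by statement \ref{second10}, for $0<\epsilon\le\epsilon_0$: since $r(u,\breve v_0)\le r_-+C\breve v_0^{1-2s}\le 2r_-$ for $\breve v_0$ large, the general estimate of Lemma \ref{trappedvolume} gives $vol\bigl([u,u+\epsilon]\times[\breve v_0,+\infty)\bigr)\le\epsilon\,r^3(u,\breve v_0)\sup_{u\le u'\le u+\epsilon}\tfrac{\Omega^2(u',\breve v_0)}{|\lambda|(u',\breve v_0)}\le 8\epsilon\,r_-^3\sup_{u\le u'\le u+\epsilon}\tfrac{\Omega^2(u',\breve v_0)}{|\lambda|(u',\breve v_0)}$, which is the claimed bound.
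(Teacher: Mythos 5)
Your reduction of everything to the sign of $\lambda$ on the ingoing segment $[u_s,u]\times\{\breve{v}_0\}$ (via the monotonicity of \eqref{RaychV}), your Lipschitz-in-$u$ argument for statement \ref{second10}, and your use of Lemma \ref{trappedvolume} for statement \ref{third10} are all in the spirit of the paper. The genuine gap is in your proof of statement \ref{first10}, which is the crux of the lemma. Your intermediate estimate $r(u',v)-r_-\gtrsim v^{-p}$ (obtained by integrating \eqref{lambda} in $v$ and using \eqref{lambdaus}) is fine, but the final step fails: you try to deduce $1-\frac{2\rho}{r}(u',\breve{v}_0)<0$ from $1-\frac{2\varpi}{r}+\frac{Q^2}{r^2}=\frac{(r-r_-)(r-r_+)}{r^2}+O(v^{1-2s})$, claiming the negative Reissner--Nordstr\"{o}m term dominates the error after enlarging $\breve{v}_0$. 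It cannot, for rate reasons rather than constant-chasing: the guaranteed size of the negative term is only of order $v^{-p}$, while the $\varpi$- and $Q$-errors from \eqref{B7'} and \eqref{B3'} are $O(v^{1-2s})$ with uncontrolled sign, and Theorem \ref{previous} only assumes $2s-1\le p\le\max\{2s,6s-3\}$, so $v^{-p}\le v^{1-2s}$ and generically $v^{-p}=o(v^{1-2s})$ (e.g.\ $s=1$, $p=3$ is admissible). Enlarging $\breve{v}_0$ does not change the relative sizes, so the sign of $\lambda(u',\breve{v}_0)$, and hence your uniform bound $r\lambda(u',\breve{v}_0)\le -c_2$, cannot be extracted along this route.

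The paper closes precisely this point without ever invoking the mass or charge: by a pigeonhole/contradiction argument applied to the integrated bound \eqref{lambdaus}, it chooses $\breve{v}_0=v_2$ (beyond any prescribed $v_1$) at which a \emph{pointwise} polynomial lower bound $|\lambda|(u_s,\breve{v}_0)\ge D\,p\,\breve{v}_0^{-p}$ holds, and then transfers the sign in the $u$-direction at that single fixed time using \eqref{lambda}, whose error $C\,e^{1.99K_-\breve{v}_0}$ is exponentially small and therefore beats the polynomial $\breve{v}_0^{-p}$ once $v_1$ is chosen large enough. Concretely, if $\lambda(u_{\mathcal{A}},\breve{v}_0)=0$ for some $u_{\mathcal{A}}\in[u_s,u]$, then \eqref{lambda} forces $|\lambda|(u_s,\breve{v}_0)\lesssim e^{1.99K_-\breve{v}_0}$, contradicting the pointwise lower bound; the key is that the only error incurred is the exponentially small $u$-transfer error, not the $v^{1-2s}$ deviations of $\varpi$ and $Q$. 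If you replace your mass computation by this choice of $\breve{v}_0$ and comparison, your uniform bound $r\lambda(u',\breve{v}_0)\le-c_2$ follows, and the remainder of your argument (the $\epsilon_0=c_2/(2L)$ step, the Raychaudhuri monotonicity, and the volume bound with $r(u,\breve{v}_0)\le 2r_-$) goes through essentially as in the paper.
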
\begin{proof} First, we start to \textit{choose} $\breve{v}_0$. \eqref{lambdaus} shows that for all $v_1 \geq \tilde{v}_0>0$, there exists $v_2 \geq v_1$ such that \begin{equation} \label{lambdalowerpointwise}
|\lambda|(u_s,v_2) \geq D \cdot p \cdot v_2^{-p}.
\end{equation} 
Indeed, it is easy to show \eqref{lambdalowerpointwise} by a small contradiction argument. We will choose  $v_1 \geq \tilde{v}_0$ large enough so that \begin{equation} \label{v_1choice}
e^{1.99 K_- v} \cdot v^{p} < \frac{ D \cdot p}{100C \cdot r_-(M,e)}
\end{equation} 
for all $v\geq v_1$, where $C>0$ is the constant present on the right-hand-side of \eqref{lambda} (recall that $K_-<0$). Then we choose $v_2 \geq v_1$ so that \eqref{lambdalowerpointwise} is satisfied. Note that, since $v_2 \geq v_1$, \eqref{v_1choice} is also satisfied for $v=v_2$. We will now choose $\breve{v_0}=v_2$.

 \color{black}
For statement \ref{first10}, note that it is enough to prove that $[u_s,u] \times \{\breve{v}_0\} \subset \mathcal{T}$, by the monotonicity induced by \eqref{RaychV}. To obtain this statement, we work by contradiction: assume that there exists $u_{\mathcal{R}}(\breve{v}_0) \in [u_s,u]$ with $\lambda(u_{\mathcal{R}}(\breve{v}_0),\breve{v}_0) \geq 0$. 

 \color{black} Then, since for any fixed $\breve{v}_0$, $u \rightarrow \lambda(u,\breve{v}_0)$ is a continuous\footnote{We are using, implicitly, a standard local well-posedness argument for characteristic initial data \underline{entirely} inside the space-time, i.e.\ whose closure is disjoint from the boundary of the Penrose diagram.} function, there exists $u_{\mathcal{A}}(\breve{v}_0) \in [u_s,u]$ such that $\lambda(u_{\mathcal{A}}(\breve{v}_0),\breve{v}_0) = 0$. Taking $u=u_{\mathcal{A}}(\breve{v}_0)$ in \eqref{lambda} we get also using \eqref{B4'} as $r(u_s,v)>\frac{r_-}{2}$: $$ |\lambda|(u_s,\breve{v}_0) \leq \frac{C \cdot e^{1.99 K_- \breve{v}_0 }}{r(u_s,\breve{v}_0)} \leq  \frac{2C \cdot e^{1.99 K_- \breve{v}_0 }}{r_-} \leq  \frac{ D \cdot p \cdot \breve{v}_0^{-p} }{50} ,$$ where for the last inequality we used \eqref{v_1choice}. This last estimate clearly contradicts \eqref{lambdalowerpointwise}.	
\color{black} Thus, $[u_s,u] \times [\breve{v}_0,+\infty] \subset \mathcal{T}$.

In particular, $(u,\breve{v}_0) \in \mathcal{T}$. Since $\mathcal{T}$ is an open set in the $\RR^{1+1}$ topology of the Penrose diagram, there exists, in particular,  $\epsilon_0=C(M,e,q_0,m^2,u_0) \cdot e^{1.99 K_- \breve{v}_0} \breve{v}_0^{-2s}>0$ such that $[u,u+\epsilon_0] \times \{\breve{v}_0\} \subset \mathcal{T}$.

Then, using the monotonicity from \eqref{RaychV}, we see that a rectangle is trapped: $[u,u+\epsilon_0] \times [\breve{v}_0,+\infty) \subset \mathcal{T}$. This provides a proof of statement \ref{second10}.

Then, using Lemma \ref{trappedvolume}, we get that the rectangle $[u,u+\epsilon_0] \times [\breve{v}_0,+\infty)$ has finite space-time volume and we also obtain the claimed estimate, also taking advantage of $r(u,v) \leq 2r_-$ for $v \geq \breve{v}_0$ and $\breve{v}_0$ large enough, using \eqref{B4'}. This concludes the proof of statement \ref{third10}.

\end{proof}
In the last part of the proof of Theorem \ref{classificationtheorem}, we prove step \ref{fifth} and conclude the bootstrap argument. The $L^1$-type estimates, using finiteness of the space-time volume of $ [u_s,u+\epsilon_0] \times [v_0,+\infty]$ proven in step \ref{fourth}, are inspired by the work \footnote{Precisely Lemma 10.3, in which it is assumed that the rectangle has finite volume. The finiteness of the volume is later obtained by a different argument, using a monotonicity property specific to the uncharged and massless considered in \cite{JonathanStab}.} of Luk and Oh \cite{JonathanStab} on uncharged and massless scalar fields. 

Some important modifications are, however, carried out to accommodate the case of a variable charge . First, we also couple the $L^1$ estimates with bootstraped $L^{\infty}$ estimates on $Q$ (because the field is charged) and $L^{\infty}$ estimates on $\phi$ (because the field is massive). In view of the slow decay assumed on the event horizon, consistent with the expected decay of massive and/or charged fields c.f. section \ref{decayconj}, the analogue of estimates (10.6) and (10.7) of Lemma 10.3 of \cite{JonathanStab} do not hold, because $\partial_v \log(\Omega^2)-2K-$ and $\partial_v \phi$ are no longer integrable in the case $s<1$ (this is expected for massive and charged scalar fields). Yet, we can still prove the analogue of (10.5) and the ``ingoing'' parts of (10.6) and (10.7):

\begin{lem} \label{finitevolumeestlemma} We choose $\Delta=\Delta(M,e,m^2,q_0,u_0)>0$ as in the statement of Lemma \ref{propagation}, and $\epsilon_0=\epsilon_0(M,e,q_0,m^2,u_0)>0$, and  $\breve{v}_0=\breve{v}_0(M,e,m^2,q_0,u_0,s,p)$ as in the statement of Lemma \ref{trappedlemma}. For $u\in \mathcal{B}_{\breve{v}_0}$, assume without loss of generality that $u+\epsilon_0 < u_{\infty}(\CH)$. 
	Then there exists $v_0(M,e,m^2,q_0,u_0,s,p)\geq \breve{v}_0$  and $0<\epsilon \leq \epsilon_0$ such that the following estimates are true, for some $D(M,e,q_0,m^2,u_0,s,p)>0$ \underline{independent} of $u$ and of $\epsilon$: \begin{equation} \label{dulogomegaL1}
	\int_{u}^{u+\epsilon} \sup_{v \in [v_0,+\infty)} |\partial_u \log( \frac{\Omega^2(u,v)}{\Omega^2(u,v_0)})|du  \leq D.
	\end{equation} 	\begin{equation} \label{duphiL1}
	\int_{u}^{u+\epsilon} \sup_{v \in [v_0,+\infty)} |D_u \phi|(u,v)du  \leq D.
	\end{equation} 	\begin{equation} \label{durL1}
	\int_{u}^{u+\epsilon} \sup_{v \in [v_0,+\infty)} |\nu|(u,v)du  \leq D.
	\end{equation} 	\begin{equation} \label{dvrL1}
	\int_{v_0}^{+\infty} \sup_{u' \in [u,u+\epsilon]} |\lambda|(u',v') dv' \leq D.
	\end{equation} 	\begin{equation} \label{phiQrLinfty}
	\sup_{(u',v) \in [u,u+\epsilon] \times [v_0,+\infty]} |Q|(u',v)+ |\phi|(u',v) \leq D.
	\end{equation}
	
	As a consequence, $\mathcal{B}_{v_0}$ is a closed and open set in the topology of $[u_s,u_0]$, therefore  $\mathcal{B}_{v_0}=[u_s,u_0]$. 
\end{lem}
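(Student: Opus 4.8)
The plan is to run the $L^1$–$L^\infty$ estimates on a trapped rectangle of the form $[u,u+\epsilon]\times[v_0,+\infty)$ — which we know, by Lemma \ref{trappedlemma}, is trapped and has finite space-time volume — following the scheme of Lemma 10.3 in \cite{JonathanStab}, but with the modifications forced by the variable charge $Q$ and the mass $m^2$. The core mechanism is a bootstrap: I would posit that \eqref{dulogomegaL1}–\eqref{phiQrLinfty} hold with $D$ replaced by $2D$ (or some inflated constant) on a sub-rectangle $[u,u+\epsilon']\times[v_0,+\infty)$ for $\epsilon'\le\epsilon$, then improve the constants back to $D$ using the smallness of the space-time volume $\mathrm{vol}(R)\le 8\epsilon\, r_-^3 \sup \frac{\Omega^2}{|\lambda|}(\cdot,\breve v_0)$ from Lemma \ref{trappedlemma}, which can be made as small as desired by shrinking $\epsilon$ and raising $v_0$. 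Concretely: integrate \eqref{Omega3} (for $\partial_u\log(r\Omega^2)$) in $v$ from $v_0$ upward to control $\partial_u\log(\Omega^2(u,v)/\Omega^2(u,v_0))$, using that the curvature-type terms on the right carry a factor $\Omega^2$, hence are integrable against $dv$ on the trapped cone where $\Omega^2$ decays like $e^{1.99 K_- v}$; then integrate in $u$. For \eqref{duphiL1} use \eqref{Field3} written as $D_v(rD_u\phi) = -\nu D_v\phi - \frac{\Omega^2\phi}{4r}(iq_0 Q + m^2 r^2)$, integrate in $v$ (the source terms again come with $\Omega^2$ or with $\nu$, both integrable), then in $u$. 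For \eqref{durL1} use \eqref{Radius3} in the $\partial_v(-r\nu)$ form, whose right side is $\frac{\Omega^2}{4}(1-\frac{Q^2}{r^2}-m^2r^2|\phi|^2)$, again $O(\Omega^2)$. For \eqref{dvrL1} use the Raychaudhuri equation \eqref{RaychV} together with \eqref{lambda2} or the lower bound structure. For \eqref{phiQrLinfty}, propagate $|Q|$ from $\{u\}\times[v_0,+\infty)$ using \eqref{chargeUEinstein} (whose integrand $q_0 r^2\,\mathrm{Im}(\phi\overline{D_u\phi})$ is controlled by the $L^1$ bound \eqref{duphiL1} on $D_u\phi$ and the $L^\infty$ bound on $\phi$), and propagate $|\phi|$ using the gauge-invariant transport estimate $|\phi|(u',v)\le|\phi|(u,v)+\int_u^{u'}|D_u\phi|(u'',v)\,du''$ together with \eqref{duphiL1}.

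\textbf{Closing the bootstrap.} Once \eqref{dulogomegaL1}–\eqref{phiQrLinfty} are established with constants independent of $u$ and $\epsilon$, I would choose $\epsilon$ small enough (and $v_0$ large enough) that the recovered constants beat the bootstrap constants — this is exactly where the finiteness and smallness of the space-time volume is used, since every source term above, after one integration, produces a contribution bounded by $C\cdot\mathrm{vol}(R)$ or by $C\cdot\epsilon$. This yields: from \eqref{dulogomegaL1} we get $\Omega^2(u',v)\le 2\Omega^2(u,v)\le 2C\, e^{1.99 K_- v}$ for all $u'\in[u,u+\epsilon]$, hence \eqref{B1} holds on $[u_s,u+\epsilon]\times[v_0,+\infty)$ after possibly enlarging $\Delta$; and from \eqref{duphiL1} plus \eqref{B2'} along $\{u\}\times[v_0,+\infty)$ we get $|\phi|(u',v)+|\partial_v\phi|(u',v)\le \Delta\cdot v^{-s}$ on $[u_s,u+\epsilon]\times[v_0,+\infty)$ — note $\partial_v\phi$ is handled by integrating \eqref{Field2} in $u$, whose source is again $O(\Omega^2)$ plus $\lambda D_u\phi$. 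Thus $[u,u+\epsilon]\subset\mathcal{B}_{v_0}$, which shows $\mathcal{B}_{v_0}$ is open (relatively in $[u_s,u_0]$). Closedness follows because the estimates \eqref{B1}, \eqref{B2} are non-strict and each of $\Omega^2,\phi,\partial_v\phi$ is continuous up to the relevant limit; more precisely, if $u_n\in\mathcal{B}_{v_0}$ and $u_n\uparrow u_\infty\le u_0$, then \eqref{B1}, \eqref{B2} hold on $[u_s,u_n]\times[v_0,+\infty)$ for each $n$, hence on $[u_s,u_\infty)\times[v_0,+\infty)$, and by continuity of the solution in the interior they extend to $u=u_\infty$. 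Since $\mathcal{B}_{v_0}$ is non-empty (Lemma \ref{propagation}, statement \ref{firstlemma}), open and closed in the connected set $[u_s,u_0]$, we conclude $\mathcal{B}_{v_0}=[u_s,u_0]$.

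\textbf{Wrapping up Theorem \ref{classificationtheorem}.} With $\mathcal{B}_{v_0}=[u_s,u_0]$ in hand, statements \ref{secondeasy}–\ref{B2'} and the rest of item (4) of Theorem \ref{classificationtheorem} for $u_0>u_s$ follow directly from items \ref{secondlemma}–\ref{fourthlemma} of Lemma \ref{propagation} applied at $u=u_0$; the inclusion $[u_s,u_0]\times[v_0,+\infty)\subset\mathcal{T}$ (statement \ref{class1}) follows from Lemma \ref{trappedlemma}(\ref{first10}); and the estimates on the slightly larger slab $[u_0,u_0+\epsilon]\times[v_0,+\infty)$ (\eqref{rigidity14}–\eqref{rigidity17}) follow from Lemma \ref{trappedlemma}(\ref{second10})–(\ref{third10}) together with one more application of the $L^1$–$L^\infty$ machinery of Lemma \ref{finitevolumeestlemma} on that slab, using that the constants there were obtained independently of $u$. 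Combined with Lemma \ref{<us} for the case $u_0\le u_s$, this completes the proof of Theorem \ref{classificationtheorem}.

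\textbf{Main obstacle.} The delicate point is the self-improving structure: the $L^1$ bound on $D_u\phi$ (\eqref{duphiL1}) is needed to propagate $|Q|$ (\eqref{phiQrLinfty}), but the $L^\infty$ bound on $Q$ enters the source term of the equation \eqref{Field3} governing $D_u\phi$, so the two must be closed simultaneously in a single bootstrap, unlike the uncharged case of \cite{JonathanStab} where $Q\equiv e$ is a constant. One must carefully order the continuity argument so that the $L^\infty$ bound on $Q$ used in estimating $D_u\phi$ is the bootstrap assumption (constant $2D$), while the recovered bound on $Q$ (constant $D$) is obtained only after the $L^1$ bound on $D_u\phi$ has been closed — the smallness of $\epsilon$ and largeness of $v_0$ providing just enough room. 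A secondary subtlety is that, because $s<1$, the quantities $\partial_v\log(\Omega^2)-2K_-$ and $\partial_v\phi$ are \emph{not} integrable in $v$, so one cannot directly mimic estimates (10.6)–(10.7) of \cite{JonathanStab}; instead one only controls the ``ingoing'' ($u$-direction) versions and uses \eqref{partialvOmegaLB}-type pointwise decay along $v$, which is exactly why \eqref{dulogomegaL1} is phrased in terms of $\log(\Omega^2(u,v)/\Omega^2(u,v_0))$ rather than $\log\Omega^2$ itself.
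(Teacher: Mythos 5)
Your overall architecture matches the paper's (bootstrap/continuity argument on $[u_s,u_0]$, $L^1$--$L^{\infty}$ estimates on the finite-volume trapped rectangle, simultaneous closing of the $\phi$, $Q$, $D_u\phi$ bounds, openness plus closedness plus connectedness), and your closedness argument via continuity of non-strict inequalities is fine. But there is a genuine gap at the heart of the $L^1$ estimates: the $v$-integrations of the product terms. To bound $\sup_v|D_u\phi|$ from \eqref{Field3} you must control $\int_{v_0}^{v}\nu\,\partial_v\phi\,dv'$, and to prove \eqref{dulogomegaL1} from \eqref{Omega} (or \eqref{Omega3}) you must control $\int_{v_0}^{v}\Re\bigl(\overline{D_u\phi}\,\partial_v\phi\bigr)\,dv'$. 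Your justification --- that these source terms ``come with $\Omega^2$ or with $\nu$, both integrable'' --- fails: on the new strip $[u,u+\epsilon]$ neither $\nu$ nor $D_u\phi$ has any known pointwise $v$-decay (the bounds \eqref{nu2}, \eqref{B2.5'} hold only up to $u$), and even granting $|\partial_v\phi|\lesssim v^{-s}$ these integrals diverge when $s<1$; appealing to \eqref{partialvOmegaLB}-type pointwise decay inside the new strip is circular, since such bounds are exactly what is being established there. The paper's proof resolves this by integrating by parts in $v$: $\int\nu\,\partial_v\phi\,dv'$ is rewritten via \eqref{Radius} as boundary terms $\nu\phi$ plus integrals whose integrands carry $\Omega^2$ or $\nu\lambda/r$, and the $\overline{D_u\phi}\,\partial_v\phi$ term is handled analogously using \eqref{Field3}; the remainders are then controlled by the volume bound, $\sup_v r|\nu|$ and $\int|\lambda|\,dv$. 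This integration-by-parts device is indispensable in the $s<1$ regime and is absent from your plan, so your estimates \eqref{duphiL1} and \eqref{dulogomegaL1} do not close as written. Note also that the paper bootstraps only the pointwise bounds $|Q|,|\phi|\le 2|e|$ and derives the $L^1$ quantities outright, retrieving the bootstrap from the \emph{smallness} of the recovered $\int|D_u\phi|\,du$ (of order $\epsilon+\delta(\epsilon,v_0)+e^{1.99K_-v_0}$); your inflated-constant bootstrap on all six output estimates could be made to work, but only once the product terms are handled as above.

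A second, smaller gap is in your openness step: from \eqref{duphiL1} and \eqref{B2'} on $\{u\}\times[v_0,+\infty)$ you claim $|\phi|(u',v)+|\partial_v\phi|(u',v)\le\Delta\,v^{-s}$ on the extended strip, but integrating the $L^1_u$ bound on $D_u\phi$ only yields $|\phi|(u',v)\le|\phi|(u,v)+D$, an $O(1)$ bound with no $v$-decay. To recover \eqref{B2} (and hence $[u,u+\epsilon]\cap[u_s,u_0]\subset\mathcal{B}_{v_0}$) one must first extract $\Omega^2\lesssim e^{1.99K_-v}$ on the extension from \eqref{dulogomegaL1} together with \eqref{B1'}, and then re-run the scheme of Lemma \ref{propagation} on the extended strip (first $|\lambda|\lesssim v^{-2s}$ from \eqref{Radius3}, then $|D_v\phi|\lesssim v^{-s}$ from \eqref{Field2}, then $|D_u\phi|\lesssim\Omega^2 v^{-s}$ from \eqref{Field3}, and only then $|\phi|\lesssim v^{-s}$); this is how the paper closes the openness step (compare the proof of Corollary \ref{supcorollary}). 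Your one-line deduction skips this re-derivation, which is precisely where the exponential decay of $\Omega^2$ earns its keep.
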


\begin{rmk} \label{estimates>u0} Notice that we assumed $u+\epsilon < u_{\infty}(\CH)$, in order to work with a lower bound on $r$, c.f. Lemma \ref{easylemma}, but we are allowed to have $u+\epsilon > u_0$: in particular, since $u_0 \in \mathcal{B}_{v_0}$ (a posteriori), this gives estimates \eqref{dulogomegaL1}, \eqref{duphiL1}, \eqref{dvrL1}, \eqref{phiQrLinfty} on $[u_0,u_0+\epsilon] \times [v_0,+\infty]$, with $\epsilon>0$ depending, however, on $u_0$.
\end{rmk} \begin{rmk}
In fact, for every $\eta>0$ small enough, there exists $v_0(\eta)$ large enough and $\epsilon(\eta)$ small enough such that the estimates of Lemma \ref{finitevolumeestlemma} hold with $D$ replaced by $\eta$. We will not, however, require such ``smallness estimates''.
\end{rmk}
\begin{proof} Take $v_0 \geq \breve{v_0}$, to be chosen more precisely later to satisfy certain inequalities. First, by local well-posedness, notice that $u' \rightarrow \frac{\Omega^2(u',v_0)}{|\lambda|(u',v_0)}$ is continuous thus there exists $\eta(v_0)>0$ such that, if $0 \leq \epsilon \leq \eta$ then for all $u \leq u' \leq u+\epsilon$: $$\frac{\Omega^2(u,v_0)}{2|\lambda|(u,v_0)} \leq \frac{\Omega^2(u',v_0)}{|\lambda|(u',v_0)} \leq \frac{2\Omega^2(u,v_0)}{|\lambda|(u,v_0)}. $$
	
	Define the rectangle $R= [u,u+\epsilon] \times [v_0,+\infty)$ and its volume $vol(R)$. Lemma \ref{propagation} and estimate \eqref{B4'} imply that $vol(R)$ is finite and we have the estimate $$\int_{u}^{u+\epsilon} \int_{v_0}^{+\infty} \Omega^2 du dv \lesssim vol(R) \lesssim \epsilon \cdot \frac{\Omega^2(u,v_0)}{|\lambda|(u,v_0)} \lesssim \epsilon \cdot \frac{ e^{1.99 K_- v_0}}{r|\lambda|(u_s,v_0) -C \cdot r(u_s,v_0) \cdot e^{1.99 K_-  v_0} }  \lesssim   \epsilon \cdot  e^{1.99 K_- v_0} \cdot v_0^{p+1}\color{black} \lesssim \epsilon,$$ where we used \eqref{B1'}, and we picked $v_0$ large enough and such that $|\lambda|(u_s,v_0) \geq C \cdot v_0^{-p-1}$ by \eqref{lambda} (see \eqref{lambdalowerpointwise} and the proof of Lemma \ref{trappedlemma}).\color{black} \\	Then, we make the following bootstrap assumptions, for $v\geq v_0$ and $u \leq u' \leq u+ \epsilon$:  \begin{equation} \label{B1''}
	|Q|(u',v) \leq 2|e|,
	\end{equation}
	\begin{equation} \label{B2''}
	|\phi|(u',v) \leq 2|e|.
	\end{equation}
	
	More precisely, we consider the set $\mathcal{B}(e) \subset R$ of space-time spheres $(u'',v'') \in R$, for which \eqref{B1''} and \eqref{B2''} are satisfied for all $u \leq u' \leq u''$ and for all $v_0 \leq v \leq v''$. Notice that $ \{u\} \times [v_0,+\infty) \subset \mathcal{B}(e)$, for $v_0$ large enough, hence $\mathcal{B}(e) \neq \emptyset$.
	
	From now on, we choose $(u'',v'') \in \mathcal{B}(e) $ and we will make $L^1$-based estimates on the rectangle $R(u'',v'')= [u,u''] \times [v_0,v''] \subset R$. We start integrating \eqref{Radius3} in the $u$ direction: we get, for all $(u',v') \in R(u'',v'')$ for some $C(M,e,q_0,m^2)>0$
	
	$$ r|\lambda|(u',v') \leq r|\lambda|(u,v') +C \cdot \int_{u}^{u'}  \Omega^2(u''',v') du''',$$ where we used bootstraps \eqref{B1''} and \eqref{B2''}.
	Now, taking a $\sup$ and then integrating in $v$, we get that for some $C'(M,e,q_0,m^2,u_0)>0$: \begin{equation} \label{estim4}
	\int_{v_0}^{v''}\sup_{ u \leq u' \leq u'' }r|\lambda|(u',v') dv' \leq C' \cdot( \epsilon+v_0^{1-2s}),
	\end{equation} where in the last line, we used the space-time volume estimate, and \eqref{lambda2}. Similarly, we can integrate \eqref{Radius3} in the $v$ direction and obtain \begin{equation} \label{estim3}
	\int_{u}^{u''}\sup_{ v_0 \leq v' \leq v'' }r|\nu|(u',v') du' \leq C' \cdot \epsilon + \frac{ r^2(u,v_0)- r^2(u'',v_0)}{2} \leq C' \cdot \epsilon +\delta(\epsilon,v_0),  \end{equation} where $\delta(\epsilon,v_0) \rightarrow 0$ as $\epsilon \rightarrow 0$, using the continuity of $u' \rightarrow r^2(u',v_0)$. Then, we integrate \eqref{Field3}, we get 
	$$ r|D_u \phi|(u',v') \leq r|D_u \phi|(u',v_0) +C \cdot \int_{v_0}^{v'}  \Omega^2(u',v''') dv'''+ |\int_{v_0}^{v'} \nu \cdot\partial_v \phi(u',v')  dv'|.$$
	
	For the last term of this estimate, we must integrate by parts, using \eqref{Radius}\color{black}, as $$ \int_{v_0}^{v'}  \nu \cdot \color{black} \partial_v \phi(u',v')  dv' =  \nu \cdot \color{black} \phi(u',v')- \nu \cdot \color{black} \phi(u',v_0) +  \int_{v_0}^{v'} \frac{ \nu \cdot \lambda}{r}  \color{black}\phi(u',v''')  dv'''+  \int_{v_0}^{v'}\frac{\Omega^2 \phi}{4r}\cdot (1- \frac{Q^2}{r^2}-m^2 r^2 |\phi|^2)(u',v''') dv''',$$
	which we can estimate, using bootstraps \eqref{B1''} and \eqref{B2''} together with $ \int_{v_0}^{v'}  |\lambda|(u',v''')dv'''  \leq C' \cdot( \epsilon+v_0^{1-2s})\lesssim 1$: $$   | \int_{v_0}^{v'}  \nu \cdot \color{black} \partial_v \phi(u',v')  dv'| \lesssim  \sup_{ v_0 \leq v' \leq v'' }r|\nu|(u',v')  + \int_{v_0}^{v'} \Omega^2(u',v''')dv''',$$  where we also used the fact that $r$ is lower bounded \color{black}. Thus, we also get  $$ \sup_{ v_0 \leq v' \leq v''}r|D_u \phi|(u',v') \leq r|D_u \phi|(u',v_0) +C' \cdot \left( \int_{v_0}^{v''}  \Omega^2(u',v''') dv'''+  \sup_{ v_0 \leq v' \leq v'' }r|\nu|(u',v') \right).$$
	
	Now, $u' \rightarrow r|D_u \phi|(u',v_0)$ is continuous thus there exists $\eta'_0(v_0)>0$ such that, if $0 \leq \epsilon \leq \eta'$ then for all $u \leq u' \leq u+\epsilon$: $$\frac{r|D_u \phi|(u,v_0)}{2}\leq r|D_u \phi|(u',v_0)\leq 2r|D_u \phi|(u,v_0) \lesssim e^{1.99 K_- v_0},$$ where we also used \eqref{B2.5'}. Combining these estimates with \eqref{estim3} we also get: \begin{equation}  \label{estim2}
	\int_{u}^{u+\epsilon}\sup_{ v_0 \leq v' \leq v''}|D_u \phi|(u',v') du' \lesssim \epsilon +\delta(\epsilon,v_0)+ e^{1.99 K_- v_0}.
	\end{equation}

	Integrating \eqref{estim2}, also using \eqref{B2.5'}, we get, for some $C''=C''(M,e,q_0,m^2,u_0)>0$: \begin{equation} \label{estim5}
	|\phi|(u'',v'') \leq C''  \cdot (\epsilon +\delta(\epsilon,v_0)+ v_0^{-s}).
	\end{equation}
	
	Now we can choose $v_0(M,e,q_0,m^2,u_0,s)$ large enough such that $C'' \cdot v_0^{-s} \leq \frac{|e|}{4}$. Then $v_0$ is fixed for the rest of the proof. Then, we can chose $\epsilon$ small enough so that $C''  \cdot (\epsilon +\delta(\epsilon,v_0)) \leq  \frac{|e|}{2}$; thus bootstrap \eqref{B2''} is retrieved.

	Now, integrating \eqref{chargeUEinstein} using \eqref{B3'} and \eqref{estim5}, we also retrieve \eqref{B1''}. Therefore, we proved that for all $(u',v) \in R$: \begin{equation} \label{estim7}
	|Q|(u',v) \leq 2|e|,
	\end{equation}
	\begin{equation} \label{estim8}
	|\phi|(u',v) \leq 2|e|.
	\end{equation}
	
	We obtain that the $L^1$ estimates \eqref{estim4}, \eqref{estim3}, \eqref{estim2}, \eqref{estim5} are valid on $R$. This proves \eqref{duphiL1}, \eqref{durL1}, \eqref{dvrL1}, \eqref{phiQrLinfty}.
	
	Finally, we must also prove \eqref{dulogomegaL1}. We integrate \eqref{Omega} in the $v$ direction and use similar estimates as before: $$ |\partial_u \log(\Omega^2)(u',v')-\partial_u \log(\Omega^2)(u',v_0)| \lesssim   | \int_{v_0}^{v'} \Re(\overline{D_u \phi} \partial_v \phi)(u',v''')dv'''|+\sup_{ v_0 \leq v' \leq v'' }r|\nu|(u',v')  + \int_{v_0}^{v'} \Omega^2(u',v''')dv'''$$
	
	We make use of an integration by parts for the first term, using \eqref{Field3}: \begin{equation*}\begin{split}
	\int_{v_0}^{v'} \Re(\overline{D_u \phi} \partial_v \phi)(u',v''')dv'''=  \Re( \phi\overline{D_u \phi})(u',v')-\Re(\phi\overline{D_u \phi})(u',v_0)+\\  \int_{v_0}^{v'} \phi\left(\frac{\partial_{u}r \partial_{v}\phi }{r}
	+\frac{ m^{2}  \Omega^{2}}{4} \phi+\frac{ q_{0}i \Omega^{2}}{4r^2}Q \phi \right)
	(u',v''')dv''' + \int_{v_0}^{v'} \frac{\lambda}{r^2}\Re(\overline{D_u \phi} \ \phi)(u',v''')dv'''.\end{split} 	\end{equation*} 
	
	Now, using \eqref{estim7}, \eqref{estim8} we get $$ |\int_{v_0}^{v'} \Re(\overline{D_u \phi} \partial_v \phi)(u',v''')dv'''| \lesssim \sup_{ v_0 \leq v' \leq v''}|D_u \phi|(u',v')+ \sup_{ v_0 \leq v' \leq v'' }r|\nu|(u',v')  + \int_{v_0}^{v'} \Omega^2(u',v''')dv''',$$ where to obtain this inequality, \color{black} we estimated, in a similar way to what was done before (involving an integration by parts): $$   | \int_{v_0}^{v'} \frac{\nu}{r} \phi \partial_v \phi(u',v')  dv'| \lesssim  \sup_{ v_0 \leq v' \leq v'' }r|\nu|(u',v')  + \int_{v_0}^{v'} \Omega^2(u',v''')dv''' .$$
	
	Thus, integrating in $u$ and combining with \eqref{estim2} we obtain $$ |\partial_u \log(\Omega^2)(u',v')-\partial_u \log(\Omega^2)(u',v_0)| \lesssim \sup_{ v_0 \leq v' \leq v'' }r|\nu|(u',v')  + \int_{v_0}^{v'} \Omega^2(u',v''')dv'''.$$
	
	Now we can take a $\sup$, integrate in $u$ and use \eqref{estim3} with the volume estimate to finally obtain \eqref{dulogomegaL1}.

	As a consequence of \eqref{dulogomegaL1}, we see that for all $u \leq u' \leq u+\epsilon$, $v \geq v_0$: $$ |\log(\frac{\Omega^2(u',v)}{\Omega^2(u',v_0)})- \log(\frac{\Omega^2(u,v)}{\Omega^2(u,v_0)})| \leq D, $$ hence using \eqref{B1'} we have $$ \Omega^2(u',v) \lesssim e^D \cdot e^{1.99 K_- v} \lesssim  e^{1.99 K_- v}.$$ 
	
	Thus, combined with \eqref{phiQrLinfty}, it means that for some $\Delta$ large enough, $[u,u+\epsilon] \in \mathcal{B}_{v_0}$ so  $\mathcal{B}_{v_0}$ is open.
	
	Now we want to show that $\mathcal{B}_{v_0}$ is closed: let $u_n \in  \mathcal{B}_{v_0}$ to be a sequence of points converging to some $u_s \leq u_{lim} \leq u_0$ as $n \rightarrow+\infty$. Then, using Lemma \ref{trappedlemma}, we see that there exists $\epsilon(M,e,q_0,m^2,u_0)>0$ \underline{independent} of $n$  such that $[u_s,u_n +\epsilon] \times [v_0,+\infty] \subset \mathcal{T}$. Take $n$ large enough so that $u_n \geq u_{lim}-\frac{\epsilon}{2}$: then we have that $[u_s,u'_{lim} ] \times [v_0,+\infty] \subset \mathcal{T}$, where $u'_{lim}= \min \{ u_{lim} +\frac{\epsilon}{2}, u_0\}$. In any case, this means that $[u_s,u_{lim} ] \times [v_0,+\infty] \subset \mathcal{T}$, which implies, using the same argument as developed in Lemma \ref{trappedlemma} and the present lemma, that $u_{lim} \in  \mathcal{B}_{v_0}$. Thus $\mathcal{B}_{v_0}$ is closed.
	
	Since $[u_s,u_0]$ is a connected interval and that $\mathcal{B}_{v_0}$ is non-empty, this implies that $\mathcal{B}_{v_0}=[u_s,u_0]$.
	
\end{proof}

\subsection{Static points and Dafermos points} \label{classificationsection1}

Recall that the ``staticity condition'' \eqref{MihalisPHDcondition}, which is equivalent to \eqref{staticconditionintro} (see section \ref{geometricframework}), is gauge-independent (see the discussion in section \ref{outline}). It was first introduced by Dafermos in \cite{MihalisPHD}, in the context of his proof of mass inflation in the interior of dynamical black holes, for the Einstein--Maxwell-(uncharged)-scalar-field model. While Dafermos does not use of this staticity condition \eqref{MihalisPHDcondition} in his proof, he effectively produces \footnote{Making point-wise lower bounds assumptions on the scalar field on the event horizon, and exploiting a special monotonicity property which is specific to the uncharged and massless case.} a connected portion of the Cauchy horizon on which condition \eqref{MihalisPHDcondition} is \underline{violated}, for all $u_0 \leq u_s$, for some $u_s \in \RR$. Dafermos notices the difference between the space-time he constructs, for which \eqref{MihalisPHDcondition} is never \footnote{As noticed in \cite{Mihalisnospacelike}, the monotonicity of the Hawking mass in the uncharged and massless case allows to propagate the mass blow-up to the entire Cauchy horizon, which implies that condition \eqref{MihalisPHDcondition} is violated \textbf{everywhere} on $\CH$ for the space-times under consideration. This technique does not, however, survive when the field is massive and/or charged.} satisfied, and the Reissner--Nordstr\"{o}m space-time,  which satisfies condition \eqref{MihalisPHDcondition} for \textbf{all} $u_0 \in \RR$. We now introduce the set of ``static points on $\CH$'' for which \eqref{MihalisPHDcondition} is true:

\begin{defn} For $u_0 \in \RR$ and $(u_0,v=+\infty) \in \CH$, we say that $(u_0,+\infty)$ is a static point of $\CH$ if the condition \eqref{MihalisPHDcondition} is true at $u_0$, for some $v \in \RR$. We define the static set $\mathcal{S}_0 \subset \CH$ as the collection of static points of $\CH$. By abuse of notation, we also denote $\mathcal{S}_0$, the projection of $\mathcal{S}_0$ on its first component:  $\{ u_0, \hskip 1 mm(u_0,+\infty) \in \mathcal{S}_0 \}$.
\end{defn}

If, on the contrary, \eqref{MihalisPHDcondition} is violated at $u$, i.e.\ $u\in \CH-\mathcal{S}_0$, i.e\ \begin{equation}\label{Dafermoscondition}
\int_v^{+\infty} \kappa(u,v')dv'<+\infty;
\end{equation}  $u$ is called a Dafermos point and \eqref{Dafermoscondition} the Dafermos condition, which is equivalent to \eqref{staticrough} (see section \ref{geometricframework}).

Note that the Dafermos set $\CH-\mathcal{S}_0$ is an ``increasing set'' i.e. for all $u_1 \leq u_2$, $u_1 \in \CH-\mathcal{S}_0$ implies $u_2 \in \CH-\mathcal{S}_0$: we obtain this property immediately from the Raychaudhuri equation \eqref{RaychU} and the null energy condition $\mathbb{T}_{u u}=2|D_u \phi|^2 \geq 0$. Equivalently, the static set $\mathcal{S}_0$ is a ``decreasing set'': for all $u_1 \leq u_2$, $u_2 \in \mathcal{S}_0$ implies $u_1 \in \mathcal{S}_0$.

The rigidity theorem \ref{classificationtheorem} imposes additional constraints on the static set $\mathcal{S}_0$. In the next result, which follows almost immediately from Theorem \ref{classificationtheorem}, we show that the static set, at least away\footnote{This additional assumption is required to obtain a lower bound on $r$, which is not necessarily valid as one approaches $\{(u_{\infty}(\CH),v=+\infty)\}$ the end-point towards which $r$ may (or may not) approach $0$, c.f. \cite{Kommemi}.} from $(u_{\infty}(\CH),v=+\infty)$ the future end point of $\CH$, must be an (possibly empty) interval and a neighborhood of $(-\infty,v=+\infty)$:

\begin{cor} \label{supcorollary}
	For all $u_1 <u_{\infty}(\CH)$, the static set $ \mathcal{S}_0^{\leq u_1}:=\mathcal{S}_0 \cap \{u \leq u_1\}$ is a closed set in the topology of $(-\infty,u_1]$.
	
	If $\mathcal{S}_0 \neq \emptyset$, we define $u_D(u_1):= \sup \mathcal{S}_0^{\leq u_1}= \max \mathcal{S}_0^{\leq u_1}$. Then $\mathcal{S}_0^{\leq u_1}:= (-\infty,u_D(u_1)]$.
	
\end{cor}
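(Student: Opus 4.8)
The plan is to deduce Corollary \ref{supcorollary} directly from Theorem \ref{classificationtheorem} together with the monotonicity already recorded for the static set. First I would recall the elementary monotonicity fact: by the Raychaudhuri equation \eqref{RaychU}, $u \mapsto \kappa(u,v)$ is non-increasing, so if \eqref{MihalisPHDcondition} holds at $u_0$ it holds at every $u \leq u_0$; equivalently $\mathcal{S}_0$ is a decreasing set. Consequently $\mathcal{S}_0^{\leq u_1}$ is automatically an initial segment of $(-\infty, u_1]$: it is either empty, all of $(-\infty,u_1]$, or of the form $(-\infty, a)$ or $(-\infty,a]$ for some $a \leq u_1$. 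So the only real content is \textbf{closedness}, i.e.\ that the supremum $u_D(u_1)$ is attained when $\mathcal{S}_0 \neq \emptyset$.

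For closedness, I would take a sequence $u_n \in \mathcal{S}_0^{\leq u_1}$ with $u_n \uparrow u_D := \sup \mathcal{S}_0^{\leq u_1} \leq u_1 < u_\infty(\CH)$, and show $u_D \in \mathcal{S}_0$. The key leverage is Theorem \ref{classificationtheorem} applied at each $u_n$: since \eqref{MihalisPHDcondition} holds at $u_n$, the theorem yields (for $v$ large) the estimate \eqref{rigidity11}, namely $|\lambda|(u',v) \leq C(u_n) v^{-2s}$ on $[u_s, u_n] \times [v_0(u_n), +\infty)$, and more importantly the estimates of part (6) on a slab $[u_n, u_n + \epsilon_0(u_n)] \times [v_0, +\infty)$ — in particular \eqref{rigidity14}, which compares $\Omega^2$ on cones near $u_n$ with $\Omega^2(u_n, \cdot)$ up to a multiplicative constant, and \eqref{rigidity17}, $|\lambda|(u',v) \leq D v^{-2s}$. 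The crucial point is that, because $u_D < u_\infty(\CH)$, Lemma \ref{easylemma} gives a uniform lower bound $r \geq r_{\mathrm{inf}}(u_1) > 0$ on all the relevant rectangles, so all the constants produced by Theorem \ref{classificationtheorem} at the points $u_n$ can be taken to depend only on $u_1$ (via $r_{\mathrm{inf}}$), not on $n$. Hence for $n$ large enough, $u_n + \epsilon_0 > u_D$, and the slab estimates give, on a cone $C_{u_D}$, that $\int_{v_0}^{+\infty} \kappa(u_D, v)\, dv = +\infty$: indeed $\kappa = \lambda/(1 - 2\rho/r)$ and on the trapped slab $1 - 2\rho/r < 0$, while combining the lower bound \eqref{rigidity16} (or \eqref{lambdaus}) with the uniform upper bound on $\Omega^2$ forces the integral of $\kappa$ to diverge, exactly as in the argument inside the proof of Lemma \ref{<us} (where $l(u)>0$ was shown to contradict \eqref{MihalisPHDcondition}). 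Thus $u_D \in \mathcal{S}_0^{\leq u_1}$, so $u_D = \max \mathcal{S}_0^{\leq u_1}$ and $\mathcal{S}_0^{\leq u_1} = (-\infty, u_D]$ by the decreasing-set property.

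Alternatively — and this is probably the cleaner route — I would argue the contrapositive: the Dafermos set $\CH - \mathcal{S}_0$ is \emph{open} in $(-\infty,u_1]$. If $u_0 \notin \mathcal{S}_0$ (Dafermos point), then $\limsup_{v\to+\infty} |\nu|(u_0,v) > 0$ or $\rho$ blows up along $C_{u_0}$ — in any case, by Theorem \ref{propagationtheorem} applied at $u_0$, the Hawking mass blows up on all cones $C_u$ with $u \geq u_0$, $u < u_\infty(\CH)$, so by continuity/the structure near $\CH$ (a cone on which $\rho \to \infty$ has $\int \kappa\, dv < +\infty$ since $r$ is bounded), a whole interval $[u_0, u_0+\delta]$ consists of Dafermos points; combined with the decreasing-set property this shows $\mathcal{S}_0^{\leq u_1}$ is closed. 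I would double check which of Theorem \ref{classificationtheorem} / Theorem \ref{propagationtheorem} is already available at this point in the paper's logical order — since the section is titled "Static points and Dafermos points" and sits right after the proof of Theorem \ref{classificationtheorem}, the first route (via Theorem \ref{classificationtheorem} alone) is the intended one and avoids circularity.

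The main obstacle I anticipate is purely bookkeeping: verifying that the constants $C(M,e,q_0,m^2,u_0)$, $\epsilon_0(M,e,q_0,m^2,u_0)$ produced by Theorem \ref{classificationtheorem} at the points $u_n$ are uniformly controlled as $u_n \uparrow u_D$. This hinges on the observation, made explicit in the paragraph after Lemma \ref{easylemma}, that these constants depend on $u_0$ only through $r_{\mathrm{inf}}(u_0) = r_{CH}(u_0)$, and $r_{CH}$ is bounded below on $[u_s, u_1]$ since $u_1 < u_\infty(\CH)$; granting this, the passage to the limit is routine. The second, minor point to be careful about is that $\mathcal{S}_0$ might be all of $(-\infty, u_1]$ (the static-type case), in which case $u_D(u_1) = u_1$ and there is nothing to prove; and if $\mathcal{S}_0 = \emptyset$ there is likewise nothing to prove — the statement is only asserting closedness and the identification of $\mathcal{S}_0^{\leq u_1}$ as a closed interval, both of which are then trivial.
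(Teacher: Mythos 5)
Your skeleton is the same as the paper's (a sequence $u_n\in\mathcal{S}_0^{\leq u_1}$ converging to $u_{lim}$, plus the observation that the widths/constants produced by Theorem \ref{classificationtheorem} depend on the base point only through $r_{CH}$, hence are uniform in $n$ because $u_{lim}\leq u_1<u_{\infty}(\CH)$, so the estimates extend a fixed $\epsilon$ past each $u_n$ and therefore across $u_{lim}$), but the final step of your first route has a genuine gap. The condition you must verify at $u_{lim}$ is $\int^{+\infty}\kappa(u_{lim},v)\,dv=+\infty$ with $\kappa=\Omega^2/(4|\nu|)$, and its divergence is governed by an \emph{upper} bound on $|\nu|$ relative to $\Omega^2$ (equivalently a lower bound on $\kappa$). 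The slab estimates you invoke, \eqref{rigidity14}, \eqref{rigidity15}, \eqref{rigidity16}, \eqref{rigidity17}, contain no control of $\nu$ or $\kappa$ on $[u_n,u_n+\epsilon_0]$: an integrated lower bound on $|\lambda|$ together with two-sided exponential bounds on $\Omega^2$ is perfectly compatible with $u_{lim}$ being a Dafermos point with $\nu_{CH}(u_{lim})<0$, in which case $\kappa\lesssim\Omega^2$ decays exponentially and the integral \emph{converges} (this is exactly what happens on the non-static part of a mixed-type horizon, where those slab bounds hold). Monotonicity does not rescue you either: by \eqref{RaychU}, $\kappa$ is non-increasing in $u$, so divergence at the $u_n$ does not pass to $u_{lim}>u_n$ — that is precisely why the corollary is nontrivial. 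Your appeal to the argument inside Lemma \ref{<us} is circular in this direction: there one \emph{assumes} \eqref{MihalisPHDcondition} and deduces $r\nu\to 0$; here you need the converse, and in quantitative form $|\nu|\lesssim\Omega^2$. The paper closes the gap with a different, stronger toolkit: by Remark \ref{estimates>u0}, the $L^1$ estimates \eqref{dulogomegaL1}, \eqref{duphiL1}, \eqref{durL1}, \eqref{dvrL1}, \eqref{phiQrLinfty} of Lemma \ref{finitevolumeestlemma} hold on a rectangle $[u_{lim}-\tfrac{\epsilon}{2},u_{lim}+\tfrac{\epsilon}{2}]\times[v_0,+\infty)$ with $\epsilon$ independent of $n$; from \eqref{dulogomegaL1} one gets $\Omega^2\lesssim e^{1.99K_-v}$ up to $u_{lim}$, and one then re-runs the proof of Lemma \ref{propagation} on that rectangle to recover in particular \eqref{B6'}, i.e. $|\kappa^{-1}-1|\lesssim v^{1-2s}$ at $u_{lim}$, which is what forces $\int\kappa(u_{lim},\cdot)\,dv=+\infty$.

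Your "cleaner" alternative route does not work either, independently of the circularity you flag. Forward propagation of the Dafermos property (via Theorem \ref{propagationtheorem}, or already via Raychaudhuri) only reproduces the fact that $\mathcal{S}_0$ is an initial segment, which you had from monotonicity; it says nothing about closedness. Closedness of $\mathcal{S}_0^{\leq u_1}$ is equivalent to the Dafermos set being open \emph{to the past} (a Dafermos point has Dafermos points strictly before it), equivalently to the static property propagating slightly \emph{forward} to the supremum — and that is exactly the quantitative content supplied by re-deriving the rigidity estimates across $u_{lim}$, not by forward propagation of blow-up.
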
 

\begin{proof} Let $u_1 <u_{\infty}(\CH)$. If $ \mathcal{S}_0^{\leq u_1} = \emptyset$ there is nothing to prove. If not, there exists $u_0 \leq u_1$ such that \eqref{MihalisPHDcondition} is true. 
	
	The proof that $\mathcal{S}_0^{\leq u_1}$ is closed is roughly similar to the proof that $\mathcal{B}_{v_0}$ is closed in Lemma \ref{finitevolumeestlemma}: if we have a sequence $u_n \in \mathcal{S}^{\leq u_1} \rightarrow u_{lim} \leq u_1$ as $n \rightarrow +\infty$, then there exists $\epsilon(u_1)>0$ independent of $n$ such that estimates \eqref{dulogomegaL1}, \eqref{duphiL1},\eqref{durL1}, \eqref{dvrL1}, \eqref{phiQrLinfty} are true on a rectangle $[u_{lim}-\frac{\epsilon}{2} , u_{lim}+\frac{\epsilon}{2} ] \times [v_0,+\infty]$, see Remark \ref{estimates>u0}. 
	
	With those estimates, one can re-do the proof of Lemma \ref{propagation}: we start from the following estimate, obtain from \eqref{dulogomegaL1}: $$\Omega^2 \lesssim e^{1.99K_-v},$$
	and then all the estimate of Lemma \ref{propagation} follow on $[u_{lim}-\frac{\epsilon}{2} , \min\{u_1, u_{lim}+\frac{\epsilon}{2}\}] \times [v_0,+\infty]$, in particular \eqref{B6'}, which shows that the staticity condition \eqref{MihalisPHDcondition} is satisfied at $u_{lim}$: $u_{lim} \in \mathcal{S}^{\leq u_1}$. \\	
	Since $\mathcal{S}^{\leq u_1}$ is non-empty and compact, we can define $u_D(u_1)=\sup \mathcal{S}^{\leq u_1}= \max \mathcal{S}^{\leq u_1}$. \\ Then, the monotonicity of \eqref{RaychU} implies that \eqref{MihalisPHDcondition} is satisfied for all $u \leq u_D(u_1)$ so $\mathcal{S}^{\leq u_1}=(-\infty,u_D(u_1)]$.
	
\end{proof}

\subsection{Three types of Cauchy horizons emanating from time-like infinity}  \label{classificationsection2}
We now obtain a first version of the classification of Theorem \ref{classification}: we can assert that $\CH$ is either of dynamical type, or static type, or mixed type (following the definitions of section \ref{classificationsection}).

\begin{cor} \label{threetypes}
	$\CH$ is either of dynamical, static of mixed type. More precisely, we have the following possibilities:
	\begin{enumerate}
		\item The static set is empty $\mathcal{S}_0 = \emptyset$: then $\CH$ is of dynamical type.
		
		\item $\mathcal{S}_0 \neq  \emptyset$ and $\sup_{u <u_{\infty}(\CH)} u_D(u) <u_{\infty}(\CH)$: then $\CH$ is of mixed type. We then define the transition retarded time as $u_T:=\sup_{u <u_{\infty}(\CH)} u_D(u)$. Then $\mathcal{S}_0=(-\infty,u_T]$ and the following properties are true:  \begin{enumerate}
			\item \label{a}If $u_T \geq u_s$, then for some $\epsilon>0$, we have the inclusion $[u_s,u_T+\epsilon] \times [v_{\gamma}(u_s),+\infty) \subset \mathcal{T}$.
			\item \label{b}For all $u \leq u_T$, $$\lim_{v \rightarrow +\infty}|\nu|(u,v)=0.$$
			\item \label{c} $r-r_-(e,M)$, $\phi$, $D_u \phi$, $\varpi-M$, $Q-e$ extend continuously to $0$ on  $CH^+_{\leq u_T}=\{ u \leq u_T, v=+\infty\}$.
			\item \label{d} There exists $v_T=v_T(M,e,u_T)$ such that the estimates are true \eqref{rigidity1}, \eqref{rigidity2}, \eqref{rigidity3}, \eqref{rigidity4}, \eqref{rigidity5}, \eqref{rigidity6}, \eqref{rigidity7}, \eqref{rigidity9}, \eqref{rigidity10}, \eqref{rigidity11}, \eqref{rigidity12} on $[u_s,u_T] \times [v_T,+\infty)$ for $C=C(M,e,q_0,m^2,u_T)>1$.
			\item \label{e} There exists $\epsilon_T=\epsilon_T(M,e,q_0,m^2,u_T)>0$ such that for all $0<\epsilon<\epsilon_T$, the estimates \eqref{rigidity14}, \eqref{rigidity15}, \eqref{rigidity16}, \eqref{rigidity17} are true for all $(u',v) \in [u_T,u_T+\epsilon_T] \times [v_T,+\infty)$, for some $D=D(M,e,q_0,m^2,u_T)>1$.
			
		\end{enumerate}
		\item \label{rigid} $\mathcal{S}_0 \neq  \emptyset$ and $\sup_{u <u_{\infty}(\CH)} u_D(u) =u_{\infty}(\CH)$: then $\mathcal{S}=\CH$ and $\CH$ is of static type and the following properties are true:  \begin{enumerate}
			
			\item For all $u < u_{\CH}$, $$\lim_{v \rightarrow +\infty}|\nu|(u,v)=0.$$
			\item  $r-r_-(e,M)$, $\phi$, $D_u \phi$, $\varpi-M$, $Q-e$ extend continuously to $0$ on $\CH=\{ u < u_{\CH}, v=+\infty\}$.
			\item There exists $v_0=v_0(M,e,u_0)$ such that the estimates are true \eqref{rigidity1}, \eqref{rigidity2}, \eqref{rigidity3}, \eqref{rigidity4}, \eqref{rigidity5}, \eqref{rigidity6}, \eqref{rigidity7}, \eqref{rigidity9}, \eqref{rigidity10}, \eqref{rigidity11}, \eqref{rigidity12} on $[u_s,u_T] \times [v_T,+\infty)$ for $C=C(M,e,q_0,m^2,u_T)>1$. \end{enumerate}
	\end{enumerate}
	
\end{cor}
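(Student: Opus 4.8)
The plan is to read the trichotomy off from the structure of the static set $\mathcal{S}_0$ established in Corollary \ref{supcorollary}, combined with the rigidity estimates of Theorem \ref{classificationtheorem}. I would first settle the combinatorics. Since $\mathcal{S}_0$ is a past set (immediate from \eqref{RaychU} and the null energy condition) and, for every $u_1 < u_{\infty}(\CH)$, the set $\mathcal{S}_0 \cap \{u \le u_1\}$ is a closed downward interval $(-\infty, u_D(u_1)]$ by Corollary \ref{supcorollary}, exactly three situations can occur. If $\mathcal{S}_0 = \emptyset$ we are in case 1. Otherwise set $u_T := \sup_{u_1 < u_{\infty}(\CH)} u_D(u_1)$; if $u_T = u_{\infty}(\CH)$ then $\mathcal{S}_0 = \CH$ (case 3), while if $u_T < u_{\infty}(\CH)$ one checks that $(-\infty, u_T) \subseteq \mathcal{S}_0$, that $u_T \in \mathcal{S}_0$ (because $\mathcal{S}_0 \cap \{u \le u_T\}$ is closed in $(-\infty, u_T]$ and contains the dense subset $(-\infty,u_T)$), and that no $u$ with $u_T < u < u_{\infty}(\CH)$ can be static (else $u_D(u) \ge u > u_T$), so $\mathcal{S}_0 = (-\infty, u_T]$ (case 2).

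Cases 2 and 3 then follow by applying Theorem \ref{classificationtheorem} at the right point. In case 3 I would apply it at every $u_0 < u_{\infty}(\CH)$ and take the union of the conclusions: $r - r_-(e,M)$, $\phi$, $D_u\phi$, $\varpi - M$, $Q - e$ extend continuously to $0$ on all of $\CH$, $\lim_v |\nu|(u,\cdot) = 0$ everywhere, and the estimates \eqref{rigidity1}--\eqref{rigidity13} hold; in particular $r_{CH} \equiv r_-(e,M)$, so $\CH$ is of static type and the listed properties are exactly items 2--4 of Theorem \ref{classificationtheorem}. In case 2 I would apply Theorem \ref{classificationtheorem} with $u_0 = u_T$ (legitimate since $u_T \in \mathcal{S}_0$ and $u_T < u_{\infty}(\CH)$): items 1--5 give properties (a)--(e), the rigidity $r_{CH} \equiv r_-$ on $(-\infty, u_T]$, and---using \eqref{Radius3} to see that $|\partial_u r| = |\nu|$ is bounded uniformly in $v$ on the trapped rectangle together with dominated convergence as $v \to +\infty$---the continuous extension of $r_{CH}$ to $[u_T, u_T + \epsilon_0]$.

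The one thing not produced by the above is the \emph{strict decrease} of $r_{CH}$ needed to match Definitions \ref{dynamicaldef} and \ref{mixeddef}: on $(u_T, u_T + \epsilon]$ in case 2, and on some initial segment $(-\infty, \tau_s)$ in case 1. For this I would establish a ``converse rigidity'' statement: along any outgoing cone $C_{u^*}$ close to $\CH$ on which the sharp estimates are available and on which $Q \to e$, $r \to r_-$, $\phi \to 0$, if $\lim_{v \to +\infty}(-r\nu)(u^*, v) = 0$ then $u^*$ is a \emph{static} point. Indeed, integrating \eqref{Radius3} from $v$ to $+\infty$ and using $1 - \tfrac{Q^2}{r^2} - m^2 r^2|\phi|^2 \to 1 - \tfrac{e^2}{r_-^2} < 0$ (because $e^2 = r_+ r_- > r_-^2$) gives $(-r\nu)(u^*, v) \asymp \int_v^{+\infty}\Omega^2(u^*, v')\,dv' \asymp \Omega^2(u^*, v)$, the last comparison using the two-sided exponential bound $\Omega^2 \asymp e^{2K_- v}$ with $K_-<0$ (from \eqref{Omegaexp}, resp.\ \eqref{rigidity14}); hence $\kappa(u^*, v) = \tfrac{r\Omega^2}{4(-r\nu)}(u^*, v)$ is bounded below by a positive constant and $\int_v^{+\infty}\kappa(u^*, \cdot) = +\infty$, i.e.\ $u^* \in \mathcal{S}_0$. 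In case 2, on $(u_T, u_T + \epsilon_0]$ all quantities are $O(\epsilon_0)$-close to Reissner--Nordstr\"{o}m by \eqref{rigidity14}--\eqref{rigidity15}; if $r_{CH}$ were constant on a subinterval there, one would obtain (since $(-r\nu)(u,\cdot)$ has a limit $l(u)$ and $r_{CH}'$ is controlled by $l/r_{CH}$) a point $u^* > u_T$ with $l(u^*) = 0$, hence $u^* \in \mathcal{S}_0$, contradicting $\mathcal{S}_0 = (-\infty, u_T]$; so $r_{CH}$ is strictly decreasing on $(u_T, u_T + \epsilon]$ and $\CH$ is of mixed type. In case 1, Proposition \ref{LBprop} gives the continuous extension $r_{CH}$ on $(-\infty, u_s]$, \eqref{rgamma}--\eqref{lambdaLB} give $r_{CH}(u) \to r_-$ as $u \to -\infty$, and $\nu < 0$ gives monotonicity; were $r_{CH}$ not strictly decreasing on any initial segment it would be $\equiv r_-$ on a nondegenerate interval near $i^+$, where the same argument produces a static point, contradicting $\mathcal{S}_0 = \emptyset$, so $\CH$ is of dynamical type with $\tau_s = u_s$.

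I expect the converse-rigidity step to be the main obstacle. The delicate point is securing enough closeness to Reissner--Nordstr\"{o}m near $\CH$ to control the sign and size of $1 - \tfrac{Q^2}{r^2} - m^2 r^2|\phi|^2$: in case 2 this is immediate (\eqref{rigidity15} for $\epsilon_0$ small), but in case 1, near $i^+$, the raw estimates \eqref{QphiLB} only bound $|Q| + |\phi|^2$ by a power of $v$ that grows when $s < 1$, so one must bootstrap the decay of $\phi$ and of $Q - e$ from the hypothesis $r_{CH} \equiv r_-$, essentially re-running the argument of Lemma \ref{<us} in reverse. The remaining ingredients---the combinatorics of $\mathcal{S}_0$ and the bookkeeping of which estimates of Theorem \ref{classificationtheorem} supply which of (a)--(e)---are routine.
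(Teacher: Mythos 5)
Your route is essentially the paper's: the trichotomy is read off from the structure of $\mathcal{S}_0$ given by Corollary \ref{supcorollary}, cases 2 and 3 are obtained by applying Theorem \ref{classificationtheorem} at $u_0=u_T$ (resp.\ at every $u_0<u_{\infty}(\CH)$), and the strict-decrease clauses of Definitions \ref{dynamicaldef} and \ref{mixeddef} are secured through the implication ``$\lim_{v\to+\infty} r\nu(u^*,v)=0$ forces $u^*\in\mathcal{S}_0$'', which is precisely the contrapositive of the step the paper performs when it argues, ``following the logic of Lemma \ref{<us}, since the Dafermos condition is satisfied, $\nu_{\CH}(u)<0$''. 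Two technical points in your converse-rigidity step need repair, and both repairs make the argument simpler, not harder. First, the comparison $\int_v^{+\infty}\Omega^2(u^*,\cdot)\asymp\Omega^2(u^*,v)$ does not follow from \eqref{Omegaexp} or \eqref{rigidity14} alone: those bounds have mismatched exponents ($e^{2.01K_-v}$ below, $e^{1.99K_-v}$ above), so the ratio is only pinned between $e^{-0.02|K_-|v}$ and $e^{+0.02|K_-|v}$, and your conclusion that $\kappa$ is bounded below by a positive constant does not follow as cited; you either need $\partial_v\log\Omega^2\to 2K_-$ (available from \eqref{partialvOmegaLB} in case 1, and on $(u_T,u_T+\epsilon]$ by combining \eqref{rigidity7} at $u_T$ with \eqref{rigidity14}), or, more simply, note that $\kappa(u^*,v)\gtrsim \Omega^2(u^*,v)\big/\int_v^{+\infty}\Omega^2(u^*,v')dv'$, whose integral diverges logarithmically; with this observation no two-sided comparison and no sign condition on $1-\frac{Q^2}{r^2}-m^2r^2|\phi|^2$ is needed, only an upper bound $|\partial_v(r\nu)|\lesssim \Omega^2$ (or $\lesssim\Omega^2 v^{2-2s}$) along the cone.

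Second, in the dynamical case your contradiction hypothesis only yields that $r_{CH}$ is constant on some interval at some value $\leq r_-$, not that $r_{CH}\equiv r_-$ near $i^+$, so the ``reverse bootstrap'' of the Reissner--Nordstr\"{o}m asymptotics you flag as the main obstacle has no valid starting point as set up. It is also unnecessary: at every $u\leq u_s$ the crude bound \eqref{QphiLB} together with \eqref{Radius3} and \eqref{partialvOmegaLB} gives $|\partial_v(r\nu)|\lesssim \Omega^2\, v^{2-2s}$, hence if $r\nu(u,\cdot)$ tended to zero one would get $\kappa(u,v)\gtrsim v^{2s-2}$, which is non-integrable since $s>\frac{1}{2}$, contradicting the Dafermos condition at $u$; this yields $\nu_{\CH}(u)<0$ for every $u\leq u_s$ directly, which is exactly how the paper concludes case 1 (and the same argument, run with the estimates of Lemma \ref{finitevolumeestlemma} via Remark \ref{estimates>u0}, handles the Dafermos points in $(u_T,u_T+\epsilon]$ in case 2). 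With those two adjustments your proposal is correct and coincides with the paper's proof; your explicit attention to verifying the strict-decrease requirement, which the paper treats tersely at this stage and completes in Theorem \ref{classification}, is a welcome addition.
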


\begin{proof} We start by the case $\mathcal{S}_0 =  \emptyset$: then the staticity condition is violated everywhere: in particular, for all $u \leq u_s$: $$ \int_{v}^{+\infty} \kappa(u,v') dv' < +\infty.$$
	
	Then, quite similarly to what was done in the proof of Lemma \ref{<us}, we can use the estimates of Proposition \ref{LBprop} to obtain \eqref{dvnu}, which implies that $r \nu(u,v)$ has a limit $r \nu_{\CH}(u)$ as  $v\rightarrow+\infty$ for all $u \leq u_s$ and $r(u,v)$ has a limit $r_{\CH}(u)>0$ as  $v\rightarrow+\infty$ for all $u \leq u_s$ so, following the logic of the proof of Lemma \ref{<us}, since now the Dafermos condition is satisfied, it must be that for all $u \leq u_s$, $ \nu_{\CH}(u)<0$ thus $\CH$ is of dynamical type, following Definition \ref{dynamicaldef}.
	
	Now, if $\mathcal{S}_0 \neq  \emptyset$ and $u_D^{max}:=\sup_{u <u_{\infty}(\CH)} u_D(u) <u_{\infty}(\CH)$: then, quite similarly to what was done in the proof of Corollary \ref{supcorollary}, we find that there exists $\epsilon=\epsilon(u_D^{max})>0$ such that for all $u < u_D^{max}$, the estimates of Lemma \ref{propagation} are true on $[u-\frac{\epsilon}{2} , \min\{u_D^{max}, u+\frac{\epsilon}{2}\}]$ and those of Lemma \ref{finitevolumeestlemma} are true on  on $[u-\frac{\epsilon}{2} , , u+\frac{\epsilon}{2}]$, as a consequence of Theorem \ref{classificationtheorem}. This provides a proof of statements \ref{a}, \ref{b}, \ref{c}, \ref{d} and \ref{e}. This also shows that $\CH$ is a Cauchy horizon of mixed type, following Definition \ref{mixeddef}.
	
	Lastly, if  $\mathcal{S}_0 \neq  \emptyset$ and $u_D^{max}:=\sup_{u <u_{\infty}(\CH)} u_D(u) =u_{\infty}(\CH)$ then the assumptions of Theorem \ref{classificationtheorem} are satisfied for all $u_0 <u_{\infty}(\CH)$ so, clearly, $\CH$ is a Cauchy horizon of static type, following Definition \ref{rigiddef} and $\mathcal{S}_0 = \CH$.

\end{proof}

\section{Local blow-up of the mass for dynamical and mixed type $\CH$} \label{localsection}

In this section, we prove that the Hawking mass blows up for sufficiently late retarded-time $u$ on Cauchy horizon of dynamical type (section \ref{sub1}) or mixed type (section \ref{sub2}), exploiting the classification of Corollary \ref{threetypes}.

\subsection{Local blow-up of the mass for dynamical type Cauchy horizons} \label{sub1}

We start with the dynamical case $\mathcal{S}_0=\emptyset$:

\begin{lem} \label{localmassblowupdynamical}
	Assume that $\mathcal{S}_0=\emptyset$, thus $\CH$ is of dynamical type by Corollary \ref{threetypes}. Then, for all $u\leq u_s$ we have $$ \lim_{v \rightarrow +\infty} \rho(u,v)=+\infty.$$
	Moreover for all $u_1<u_s$, there exists $C(M,e,q_0,m^2,u_1,u_s)>0$ such that the following lower bound is true in for all $  u_1\leq u\leq u_s$, $v \geq v_{\gamma}(u)$: \begin{equation} \label{masslowerbound}
	\rho(u,v) \geq C \cdot  e^{ 1.98 |K_-|v}
	\end{equation} 
\end{lem}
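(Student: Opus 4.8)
\textbf{Proof plan for Lemma \ref{localmassblowupdynamical}.} The strategy is to exploit the Dafermos condition \eqref{Dafermoscondition}, which holds at every $u \leq u_s$ since $\mathcal{S}_0 = \emptyset$, together with the lower bound \eqref{lambdaus} on $\int_v^{+\infty}|\lambda|$ established in section \ref{lowerboundslambdasection}. First I would recall that the Hawking mass can be written $\rho = \frac{r}{2}(1-\frac{2\rho}{r}) + \rho$ --- more usefully, from \eqref{murelation}, $1 - \frac{2\rho}{r} = \frac{-4\nu\lambda}{\Omega^2} = \kappa^{-1}\lambda$, hence $\frac{2\rho}{r} - 1 = -\kappa^{-1}\lambda = \kappa^{-1}|\lambda|$ in the trapped region (where $\lambda < 0$, $\kappa > 0$). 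Thus $\rho(u,v) = \frac{r}{2}\left(1 + \kappa^{-1}(u,v)|\lambda|(u,v)\cdot(\ldots)\right)$; more precisely I want to track $\frac{2\rho}{r} = 1 + \frac{|\lambda|}{\kappa}$. Since by Theorem \ref{classificationtheorem}/Corollary \ref{threetypes} we know $r$ extends continuously to a strictly positive $r_{CH}(u)$ on $(-\infty,u_s]$ in the dynamical case, and since $r$ is bounded above, it suffices to prove that $\frac{|\lambda|}{\kappa}(u,v) \to +\infty$, equivalently that $\kappa(u,v)/|\lambda|(u,v) \to 0$, as $v \to +\infty$.

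The key computation is to relate $\kappa$ and $|\lambda|$ along an outgoing cone. From \eqref{kappadef}, $\kappa = \lambda/(1-\frac{2\rho}{r})$, so $\frac{\kappa}{|\lambda|} = \frac{1}{|1 - \frac{2\rho}{r}|} = \frac{r}{2\rho - r}$ (in the trapped region). So proving $\rho \to +\infty$ is literally equivalent (given $r$ bounded and bounded away from zero) to proving $\kappa/|\lambda| \to 0$. The substantive input must come from integrating the Raychaudhuri equation \eqref{RaychV}, written as $-\partial_v \lambda + \lambda \partial_v\log(\Omega^2) = r|\partial_v\phi|^2 \geq 0$, i.e.\ $\partial_v(\kappa^{-1}\cdot \text{something})$... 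Actually the cleaner route: \eqref{RaychU} gives $\partial_u(\kappa^{-1}) = \frac{4r}{\Omega^2}|D_u\phi|^2 \geq 0$, so $\kappa$ is non-increasing in $u$; but I need control in $v$. Instead I would integrate \eqref{massVEinstein}: $\partial_v\rho = \frac{r^2}{2\kappa}|D_v\phi|^2 + (\frac{m^2}{2}r^2|\phi|^2 + \frac{Q^2}{r^2})\lambda$. In the trapped region $\lambda < 0$, and using the estimates from section \ref{LB} ($|\phi|$, $Q$ controlled, $\Omega^2$ exponentially small, $\kappa$ controlled via \eqref{Omegaexp}, \eqref{nuLB}), the second term is integrable in $v$ (exponentially small), so the blow-up of $\rho$ must be driven by $\int_{v_\gamma(u)}^v \frac{r^2}{2\kappa}|D_v\phi|^2 \,dv'$. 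Now $\kappa \sim \Omega^2/(4|\nu|)$ and in the dynamical case $|\nu|$ does \emph{not} decay to zero --- in fact, following the logic in the proof of Corollary \ref{threetypes}, $\lim_{v\to+\infty}|\nu|(u,v) = |\nu_{\CH}(u)| > 0$ --- so $\kappa^{-1} \sim 4|\nu|/\Omega^2 \gtrsim e^{-1.99 K_- v} \to +\infty$ grows exponentially. Combined with the lower bound \eqref{phiVLBlowerbound}, $\int_v^{+\infty}|\partial_v\phi|^2 \gtrsim v^{-p}$, and a dyadic/sequence argument to extract points where $|\partial_v\phi|^2$ is comparably large, this forces $\int^v \frac{r^2}{\kappa}|D_v\phi|^2 \gtrsim e^{1.98|K_-|v}$, giving the quantitative lower bound \eqref{masslowerbound}.

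More concretely, for the quantitative bound I would argue: using \eqref{Omegaexp} ($\Omega^2 \lesssim e^{1.99 K_- v}$) and the fact that $|\nu| \to |\nu_{\CH}(u)| > 0$ (so $|\nu|(u,v) \geq c(u) > 0$ for $v$ large, uniformly on compact $u$-intervals $[u_1,u_s]$), we get $\kappa^{-1}(u,v) = \frac{4|\nu|}{\Omega^2}(u,v) \gtrsim e^{-1.99 K_- v} = e^{1.99|K_-|v}$. Then $\partial_v\rho \geq \frac{r^2}{2}\kappa^{-1}|D_v\phi|^2 - C e^{1.99 K_- v}$ (the error term from the $|\phi|^2$, $Q^2$ pieces being exponentially small by section \ref{LB}), so $\rho(u,v) \geq \frac{1}{2}\inf(r^2)\int_{v_\gamma(u)}^v \kappa^{-1}(u,v')|D_v\phi|^2(u,v')\,dv' - C'$. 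To get the exponential lower bound I would split: since $\kappa^{-1}(u,v') \gtrsim e^{1.99|K_-|v'}$ is increasing and $\int_{v/2}^{v}|D_v\phi|^2 \gtrsim \int_v^{+\infty}|\partial_v\phi|^2 - $ (something)... hmm, actually \eqref{phiVLBlowerbound} gives a \emph{tail} lower bound $\int_v^{+\infty}|\partial_v\phi|^2 \gtrsim v^{-p}$, and \eqref{phiVLB} gives the \emph{pointwise upper} bound $|\partial_v\phi| \lesssim v^{-s}$; combining these, $\int_{v_\gamma(u)}^v |D_v\phi|^2 \,dv'$ is \emph{bounded below} by a positive constant for $v$ large (since the total integral $\int_{v_\gamma(u)}^{+\infty}|\partial_v\phi|^2$ could be finite, but the tail lower bound forces non-trivial mass on $[v, 2v]$ say, of order $v^{-p}$, hence on $[v_\gamma(u), v]$ of order $1$ once $v$ large). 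Then, using that $\kappa^{-1}(u,v') \geq c\, e^{1.99|K_-|v'} \geq c\, e^{1.97|K_-| v}$ on the dyadic interval $[v/2 \cdot \frac{1.97}{0.99\ldots}, v]$ say, and that $|D_v\phi|^2$ has integral $\gtrsim (v)^{-p}$ there by \eqref{phiVLBlowerbound}, one gets $\rho(u,v) \gtrsim e^{1.97|K_-|v}\cdot v^{-p} \gtrsim e^{1.98|K_-|v}$ for $v$ large, absorbing the polynomial loss into the slightly smaller exponent (as in Remark \ref{weirdrate}). The uniformity in $u \in [u_1,u_s]$ comes from uniform lower bounds on $|\nu_{\CH}(u)|$ and $r_{CH}(u)$ on the compact interval.

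\textbf{Main obstacle.} The delicate point is establishing that $|\nu|(u,v)$ stays bounded \emph{below} by a positive constant as $v \to +\infty$ (uniformly on $[u_1,u_s]$) in the dynamical case --- this is exactly the content that distinguishes the dynamical from the static case and is where the Dafermos condition \eqref{Dafermoscondition} is used crucially: if $|\nu|$ decayed to zero along some sequence, one could, via \eqref{dvnu}-type estimates and integration of \eqref{Radius3}, run the argument of Lemma \ref{<us}/Corollary \ref{threetypes} to conclude the staticity condition \eqref{MihalisPHDcondition} holds, contradicting $\mathcal{S}_0 = \emptyset$. Making this quantitative --- i.e.\ a uniform-in-$u$ positive lower bound on $\liminf_v |\nu|(u,v)$ --- requires care, since a priori $\nu_{\CH}(u)$ could degenerate as $u \to u_s^-$; but restricting to $u \leq u_1 < u_s$ and using continuity of $u \mapsto \nu_{\CH}(u)$ (which follows from the uniform estimates of section \ref{LB} on $\LB \cap \{u \leq u_s\}$) resolves this. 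A secondary technical point is correctly combining the tail lower bound \eqref{phiVLBlowerbound} with the pointwise bound \eqref{phiVLB} to extract a \emph{uniform} lower bound on $\int_{v_\gamma(u)}^v |D_v\phi|^2$, and then matching exponents to land on $e^{1.98|K_-|v}$ rather than something with a polynomial correction; this is routine once the dyadic decomposition is set up, following Remark \ref{weirdrate}.
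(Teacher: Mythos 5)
Your reduction steps are fine and partly coincide with the paper's: the dynamical-type hypothesis gives a uniform lower bound $|\nu|\geq \eta>0$ on $[u_1,u_s]\times\{v\geq v_\gamma(u)\}$ (the paper gets uniformity from the uniform convergence of $r\nu$, cf.\ \eqref{dvnu}, to a continuous negative limit on the compact interval), $r$ is bounded above and below, $\Omega^2\sim e^{2K_-v}$ by \eqref{partialvOmegaLB}, and the whole problem reduces to lower bounding $\frac{|\nu||\lambda|}{\Omega^2}$, equivalently $\kappa^{-1}|\lambda|$. But your core quantitative mechanism does not work. You drive the blow-up through $\int_{v_\gamma(u)}^{v}\kappa^{-1}|D_v\phi|^2\,dv'$ and claim that \eqref{phiVLBlowerbound} ``forces non-trivial mass on $[v,2v]$ of order $v^{-p}$'' (and, in the sketch just before, mass on an interval ending at $v$ where $\kappa^{-1}\gtrsim e^{1.97|K_-|v}$). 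The tail bound $\int_v^{+\infty}|\partial_v\phi|^2\gtrsim v^{-p}$ says nothing about \emph{where} in $[v,+\infty)$ that energy sits: since the only available upper bound is the pointwise one $|\partial_v\phi|\lesssim v^{-s}$, the tail beyond $2v$ (or beyond any fixed multiple of $v$) is controlled only by $v^{1-2s}$, and because $p\geq 2s-1$ this is \emph{not} smaller than $v^{-p}$, so the energy may lie entirely far beyond $v$. Consequently your integral is only bounded below by a constant (your observation that $\int_{v_\gamma(u)}^{v}|\partial_v\phi|^2\gtrsim 1$ is correct, but the weight $\kappa^{-1}$ is $O(1)$ near $v_\gamma(u)$), not by anything exponential in $v$. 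There is also no transport/monotonicity structure in $v$ for $\partial_v\phi$ that would let you convert the integrated lower bound into localized information, so this route cannot be repaired. Finally, the closing arithmetic ``$e^{1.97|K_-|v}\cdot v^{-p}\gtrsim e^{1.98|K_-|v}$'' is backwards: a polynomial loss can only be absorbed by \emph{lowering} the exponent (as in Remark \ref{weirdrate}), never by raising it; and even a corrected extraction would only give the bound along a sequence of times, whereas \eqref{masslowerbound} is claimed for all $v\geq v_\gamma(u)$, so a propagation step is still needed.

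The paper avoids exactly this localization problem by routing the lower bound through $\lambda$ rather than through $|\partial_v\phi|^2$: the Raychaudhuri equation \eqref{RaychV} converts \eqref{phiVLBlowerbound} into the integrated bound \eqref{lambdaus} on $|\lambda|$, and $\lambda$ --- unlike $\partial_v\phi$ --- comes with monotonicity structure in $v$ ($|\lambda|/\Omega^2$ is nondecreasing by \eqref{RaychV}, and $r|\lambda|$ is controlled via \eqref{Radius3} up to exponentially small errors), which is what allows an integrated lower bound to be turned into pointwise lower bounds $|\lambda|(u,v_n)\gtrsim v_n^{-p-1}$ along a densely spaced ($\alpha$-adic, $\alpha=1.0001$) sequence. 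Combining with $\rho\geq \tfrac{\eta r_0}{2}\tfrac{|\lambda|}{\Omega^2}$ and $\Omega^2\approx e^{2K_-v}$ gives the exponential bound at the times $v_n$, and the near-monotonicity of $\varpi$ in $v$ (the error terms in \eqref{massVEinstein} are polynomially bounded by \eqref{QphiLB}) propagates it to all $v\in[v_n,v_{n+1}]$, after which $Q^2/2r$ is subtracted off to return to $\rho$. If you want to keep your formulation via $\int\kappa^{-1}|D_v\phi|^2$ (whose monotonicity in $v$ would indeed substitute for the $\varpi$-propagation step), you must still replace the scalar-field tail bound by the $\lambda$-based pointwise extraction; as written, the decisive exponential input is missing.
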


\begin{proof} Recall that the entire region is trapped so for $u\leq u_s$, $v \geq v_{\gamma}(u)$, $\lambda(u,v)<0$ and that both $r$ and $\nu$ extends continuously to $\CH \cap \{ u \leq u_s \}$.

	Since $\CH$ is of dynamical type, there exists $\eta(u_1,u_s)>0$ such that for all $u_1\leq u\leq u'_s$, $v \geq v_{\gamma}(u)$, $|\nu|(u,v)>\eta$. There exists also $r_0>0$ such that for all $u\leq u_s$, $v \geq v_{\gamma}(u)$, $r_0<r(u,v)<r_+(M,e)$. Since $\rho=\frac{r}{2}(1+\frac{|\lambda| |\nu|}{\Omega^2})$, it implies that for all $u_1 \leq u\leq u_s$, $v \geq v_{\gamma}(u)$ $$ \rho(u,v) \geq \frac{  \eta \cdot r_0}{2} \cdot \frac{|\lambda|(u,v) }{\Omega^2(u,v)}.$$
	
	Now we can integrate \eqref{partialvOmegaLB} on $u_1 \leq u\leq u_s$, $v \geq v_{\gamma}(u)$ to obtain 
	$$	v^{-1}|\log(\Omega^2)-2K_- \cdot v |  \leq D \cdot  v^{1-2s}.$$
	From this estimate, we get that for some $D'>0$ $$ \rho(u,v) \geq  D' \cdot   e^{ 2|K_-|v \cdot (1-D \cdot v^{1-2s})}  \cdot |\lambda|(u,v).$$
	
	Recall that from \eqref{lambdaus} we get for some $D''>0$:  \begin{equation} \label{lambdaL1lowerbound}
	\int_v^{+\infty}|\lambda|(u,v) \geq D'' \cdot  v^{-p}.
	\end{equation}This implies that there exists a $\alpha$-adic sequence $v_{n}= \alpha^{n-1} v_1$ for $\alpha=1.0001$ such that for all $\max\{u_{\gamma}(v_n),u_1\} \leq  u\leq u_s$: \begin{equation} \label{lambdadyadiclowerbound}
	|\lambda|(u,v_n) \gtrsim v_n^{-p-1}.
	\end{equation} Combining this with the previous lower bound on $\rho$ we get that for all $\max\{u_{\gamma}(v_n),u_1\}\leq  u\leq u_s$ \begin{equation} \label{rhodyadiclowerbound}
	\rho(u,v_n) \geq D' \cdot D''  \cdot  e^{ 2|K_-|v_n \cdot (1-C \cdot v_n^{1-2s})}  \cdot v_n^{-p-1}.
	\end{equation} Now, we use \eqref{massVEinstein} together with \eqref{QphiLB} to get $$ \partial_v \varpi \gtrsim -v^{-2s} \cdot ( 1+ v^{2-2s}) -v^{-s} \cdot ( 1+ v^{3-3s}),$$ which we integrate on $[v_n,v]$ for $v \in [v_n,v_{n+1}]$, using the lower bound on $\rho(u,v_n)$, the formula $\varpi=\rho+\frac{Q^2}{2r}$ and \eqref{QphiLB}:  $$ \varpi(u,v) \gtrsim \varpi(u,v_n) -1- v_n^{3-4s} -v_n^{1-s} -v_n^{4-4s} \gtrsim  \rho(u,v_n)- v_n^{4-4s} -1- v_n^{3-4s} -v_n^{1-s} -v_n^{4-4s}  \gtrsim  e^{ 2|K_-|v_n \cdot (1-C \cdot v_n^{1-2s})}  \cdot v_n^{-p},$$ and now we use that for $v_1$ large enough, $2|K_-|v_n \cdot (1-C \cdot v_n^{1-2s}) \geq 1.999 |K_-|v$ and since $e^{ 2|K_-| x \cdot (1-C \cdot x^{1-2s})}  \cdot x^{-p-1}$ is increasing for $x$ large enough, we get $$ \varpi(u,v) \gtrsim  e^{ 1.998|K_-|v}.$$
	
	Now, using again \eqref{QphiLB}, we get $ \rho(u,v) \gtrsim  e^{ 1.998|K_-|v}-1-v^{2-2s}$ which implies \eqref{masslowerbound} for $v_1$ large enough.
\end{proof}
\subsection{Local blow-up of the mass for mixed type Cauchy horizons} \label{sub2}
Now we turn to the mixed case $\mathcal{S}_0 \neq \emptyset$, $\mathcal{S}_0 \neq \CH$. The proof is similar to the dynamical case.
\begin{lem} \label{localmassblowupmixed}
	Assume that $\mathcal{S}_0 \neq  \emptyset$ and $u_T:=\sup_{u <u_{\infty}(\CH)} u_D(u) <u_{\infty}(\CH)$; thus $\CH$ is of mixed type, by Corollary \ref{threetypes}. Then, for all $u_T< u\leq u_T+\epsilon_T$, where $\epsilon_T$ was introduced in the statement of Corollary \ref{threetypes}, we have $$ \lim_{v \rightarrow +\infty} \rho(u,v)=+\infty.$$
	Moreover, for all $u_T<u_1<  u_T+\epsilon_T$, there exists $C(M,e,q_0,m^2,u_1)>0$ such that the following lower bound is true in for all $u_1 \leq u\leq u_T+\epsilon_T$, $v \geq v_{T}$: \begin{equation} \label{masslowerboundmixed}
	\rho(u,v) \geq C \cdot e^{ 1.98 |K_-|v}.
	\end{equation}
\end{lem}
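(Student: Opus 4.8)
The plan is to run the argument of Lemma~\ref{localmassblowupdynamical} almost verbatim, with the estimates of Proposition~\ref{LBprop} on $\LB\cap\{u\le u_s\}$ replaced by the ``transition estimates'' \eqref{rigidity14}, \eqref{rigidity15}, \eqref{rigidity16}, \eqref{rigidity17} from Corollary~\ref{threetypes}~\ref{e}, which hold on $[u_T,u_T+\epsilon_T]\times[v_T,+\infty)$, and with the observation that this rectangle lies in $\T$ by Corollary~\ref{threetypes}~\ref{a}. The only genuinely new input --- playing here the role of the hypothesis ``$\CH$ is of dynamical type'' in Lemma~\ref{localmassblowupdynamical} --- is the non-degeneracy of $\nu$ on the dynamical portion of $\CH$: for every $u_T<u_1<u_T+\epsilon_T$ there should be $\eta(u_1)>0$ and $v(u_1)\in\RR$ with $|\nu|(u,v)\ge\eta$ for all $u_1\le u\le u_T+\epsilon_T$ and $v\ge v(u_1)$.

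To establish this, first note that \eqref{Radius3} together with \eqref{rigidity14}, \eqref{rigidity15} and the lower bound on $r$ (Lemma~\ref{easylemma}) give $|\partial_v(r\nu)|\lesssim\Omega^2\lesssim e^{1.99K_- v}$, so $r\nu(u,\cdot)$ has a finite limit as $v\to+\infty$ for each $u\in(u_T,u_T+\epsilon_T]$. Suppose this limit vanished at some such $u$. Since $Q\to e$, $r\to r_-(M,e)$ and $\phi\to0$ up to an error $O(\epsilon_T)$ on $[u_T,u_T+\epsilon_T]$ (by \eqref{rigidity15} and the estimates on $[u_s,u_T]$), and since $1-\tfrac{e^2}{r_-^2(M,e)}<0$ because $r_+r_-=e^2$ and $r_+>r_-$, after shrinking $\epsilon_T$ we get $1-\tfrac{Q^2}{r^2}-m^2r^2|\phi|^2\le -c<0$ for $v$ large on $\{u\}\times[v_T,+\infty)$; hence $r|\nu|(u,\cdot)$ is eventually decreasing and, integrating \eqref{Radius3} from $v$ to $+\infty$ and using \eqref{rigidity14}, $r|\nu|(u,v)\ge\tfrac c4\int_v^{+\infty}\Omega^2(u,v')\,dv'\gtrsim\Omega^2(u,v)$. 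Consequently $\kappa(u,v)=\tfrac{\Omega^2}{4|\nu|}(u,v)=\tfrac{r\,\Omega^2}{4\,r|\nu|}(u,v)$ is bounded below by a positive constant for $v$ large, so $\int_v^{+\infty}\kappa(u,v')\,dv'=+\infty$; this contradicts the fact that $u>u_T=\sup_{u'<u_\infty(\CH)}u_D(u')$ is a Dafermos point (i.e.\ $u\notin\mathcal{S}_0$). Therefore $\lim_v r\nu(u,v)<0$ on $(u_T,u_T+\epsilon_T]$; since the convergence is uniform by \eqref{rigidity14}, \eqref{rigidity15}, this limit depends continuously on $u$, hence is bounded away from $0$ on each compact $[u_1,u_T+\epsilon_T]$, giving the desired $\eta$.

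With this in hand, the computation of Lemma~\ref{localmassblowupdynamical} transfers directly. On $[u_1,u_T+\epsilon_T]\times[v_T,+\infty)$ one has $\rho=\tfrac r2\bigl(1+\tfrac{4|\nu||\lambda|}{\Omega^2}\bigr)\ge 2\,(r|\nu|)\,\tfrac{|\lambda|}{\Omega^2}\ge 2\eta\,\tfrac{|\lambda|}{\Omega^2}$, and $\Omega^2(u,v)=e^{2K_- v+O(v^{2-2s})}$ (from \eqref{rigidity14} comparing $\Omega^2(u,v)$ to $\Omega^2(u_T,v)$, together with $|\partial_v\log\Omega^2-2K_-|\lesssim v^{1-2s}$ on $[u_s,u_T]$, cf.\ \eqref{rigidity7}), so $\rho(u,v)\gtrsim e^{2|K_-|v(1-O(v^{1-2s}))}|\lambda|(u,v)$. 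As in the passage from \eqref{lambdaus} to \eqref{lambdadyadiclowerbound}, the integrated lower bound \eqref{rigidity16} produces an $\alpha$-adic sequence $v_n=\alpha^{n-1}v_1$ (with $\alpha$ close to $1$) such that $|\lambda|(u,v_n)\gtrsim v_n^{-p-1}$ for all $u_1\le u\le u_T+\epsilon_T$, whence $\rho(u,v_n)\gtrsim e^{1.99|K_-|v_n}$ for $v_1$ large. Finally one propagates this lower bound to $v\in[v_n,v_{n+1}]$ via \eqref{massVEinstein}: the term $\tfrac{r^2}{2\kappa}|D_v\phi|^2$ is nonnegative, and using $|\phi|\lesssim1$, $|Q|\lesssim1$ (from \eqref{rigidity15}) and $|\lambda|\lesssim v^{-2s}$ (from \eqref{rigidity17}) we get $|(\tfrac{m^2}{2}r^2|\phi|^2+\tfrac{Q^2}{r^2})\lambda|\lesssim v^{-2s}$, so $\partial_v\varpi\gtrsim -v^{-2s}$; integrating and using $\varpi=\rho+\tfrac{Q^2}{2r}$ with $|Q|\lesssim1$, $r\ge r_{inf}$, we obtain $\rho(u,v)\gtrsim e^{1.99|K_-|v_n}-Cv_n^{1-2s}\gtrsim e^{1.98|K_-|v}$ after absorbing the polynomial factor and taking $\alpha$ close enough to $1$, which is \eqref{masslowerboundmixed}; in particular $\rho(u,v)\to+\infty$. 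This propagation step is in fact easier than in Lemma~\ref{localmassblowupdynamical}, since the bounded estimate \eqref{rigidity15} replaces the polynomially growing \eqref{QphiLB}.

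The main obstacle is the middle paragraph: showing $\lim_v|\nu|(u,v)>0$ on the dynamical part $(u_T,u_T+\epsilon_T]$, which amounts to reconciling the Dafermos condition for $u>u_T$ with the Raychaudhuri monotonicity of \eqref{Radius3}. Everything else is a transcription of Lemma~\ref{localmassblowupdynamical}. (If one regards the non-degeneracy as already contained in the assertion of Corollary~\ref{threetypes} that $\CH$ is of mixed type in the sense of Definition~\ref{mixeddef}, then the present lemma reduces to bookkeeping.)
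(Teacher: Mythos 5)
Your overall route is the paper's own: once one knows $|\nu|(u,v)\geq \eta>0$ on $[u_1,u_T+\epsilon_T]$ for $v$ large, the computation of Lemma \ref{localmassblowupdynamical} transfers verbatim with \eqref{rigidity14}, \eqref{rigidity15}, \eqref{rigidity16}, \eqref{rigidity17} replacing \eqref{partialvOmegaLB}, \eqref{QphiLB}, \eqref{lambdaus}; the paper's proof is exactly this two-line reduction, with the non-degeneracy of $\nu$ quoted from Corollary \ref{threetypes} (mixed type). Your transcription of the dynamical-case computation (the $\alpha$-adic extraction from \eqref{rigidity16}, the comparison of $\Omega^2(u,\cdot)$ with $\Omega^2(u_T,\cdot)$ via \eqref{rigidity14} together with \eqref{rigidity7}, and the propagation through \eqref{massVEinstein} using the boundedness from \eqref{rigidity15}) is correct, and your instinct that the non-degeneracy of $\nu$ on $(u_T,u_T+\epsilon_T]$ is the only genuinely new input is also right -- it is the content hidden behind the paper's citation of Corollary \ref{threetypes} and Definition \ref{mixeddef}.

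However, the decisive inference in your middle paragraph is inverted. Assuming $\lim_{v\to+\infty} r\nu(u,v)=0$, you use the negativity of $1-\tfrac{Q^2}{r^2}-m^2r^2|\phi|^2$ to derive the \emph{lower} bound $r|\nu|(u,v)\geq\tfrac{c}{4}\int_v^{+\infty}\Omega^2(u,v')dv'\gtrsim\Omega^2(u,v)$, and then conclude that $\kappa=\tfrac{r\Omega^2}{4\,r|\nu|}$ is bounded \emph{below}; but a lower bound on $r|\nu|$ only bounds $\kappa$ \emph{above}, which is perfectly consistent with the Dafermos condition and yields no contradiction. What you need is the opposite bound: from $\lim_v r\nu(u,v)=0$ and \eqref{Radius3}, $r|\nu|(u,v)\leq\int_v^{+\infty}|\partial_v(r\nu)|(u,v')dv'\leq\tfrac{C'}{4}\int_v^{+\infty}\Omega^2(u,v')dv'$, using only the \emph{boundedness} of $1-\tfrac{Q^2}{r^2}-m^2r^2|\phi|^2$ (from \eqref{rigidity15} and Lemma \ref{easylemma}), so no sign information and no shrinking of $\epsilon_T$ (which the statement does not allow) is required. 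One then needs $\int_v^{+\infty}\Omega^2(u,v')dv'\lesssim\Omega^2(u,v)$, and here the crude two-sided bounds of \eqref{rigidity14} alone are insufficient (they lose a factor $e^{0.02|K_-|v}$, leaving $\kappa\gtrsim e^{-0.02|K_-|v}$, which is integrable); instead compare $\Omega^2(u,\cdot)$ to $\Omega^2(u_T,\cdot)$ via \eqref{rigidity14} and integrate using $|\partial_v\log\Omega^2(u_T,\cdot)-2K_-|\lesssim v^{1-2s}$ from \eqref{rigidity7}, exactly as you already do in your third paragraph. This gives $\kappa(u,v)\gtrsim r_{inf}>0$ for $v$ large, hence $\int^{+\infty}\kappa(u,v')dv'=+\infty$, contradicting that $u>u_T$ is a Dafermos point. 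With this correction your non-degeneracy step, and therefore the whole proof, goes through as intended.
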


\begin{proof}
	Since $\mathcal{S}_0=(-\infty,u_T]$, this implies, by Corollary \ref{threetypes} that for all $u_T<u_1<u_2<u_{\CH}$ there exists $\eta(u_1,u_2,v_T)>0$ such that for all $v \geq v_T$, $u_1 \leq u\leq u_2$, $|\nu|(u,v) \geq \eta$. 
	
	From there, it is easy to reproduce the proof of Lemma \ref{localmassblowupdynamical}, with \eqref{rigidity14}, \eqref{rigidity15}, \eqref{rigidity16}, \eqref{rigidity17} playing the role of \eqref{partialvOmegaLB}, \eqref{QphiLB} and \eqref{lambdaus}. \eqref{masslowerboundmixed} follows immediately.
\end{proof}
\section{Propagation of the mass blow-up, proof of Theorems \ref{trappedtheorem} and \ref{propagationtheorem}}

Once we know that the Hawking mass blows up (locally) for Cauchy horizon of dynamical or mixed type (section \ref{localsection}), we have to prove that this blow up is propagated (section \ref{propagationsection}). We will then use this result to prove Theorems \ref{trappedtheorem} (section \ref{blowupcriterion}) and Theorem \ref{propagationtheorem} (section \ref{trappedprop}).

\subsection{The blow-up of the Hawking mass}  \label{propagationsection}
We start by Lemma \ref{massblowuplemma}, our result proving the propagation of the blow up on the Hawking mass $\rho$ on $\CH$. This statement is quite general: we do not, in fact, require the assumption of Theorem \ref{previous} to obtain its conclusions, nor the formalism of the classification of the Cauchy horizon of Corollary \ref{threetypes}. This is why Lemma \ref{massblowuplemma} is also used independently in \cite{r=0}.

\begin{lem} \label{massblowuplemma}
	If there exists $u_1 < u_{\infty}(\CH)$ such that $$ \lim_{v \rightarrow +\infty} \rho(u_1,v)=+\infty,$$ and $v_1 \in \RR$ large enough, a constant $D>0$ such that for all $v \geq v_1$: \begin{equation} \label{phiinitblowup}
	|\phi|^2(u_1,v)+ |Q|(u_1,v) \leq D \cdot |\log(\rho)|(u_1,v).
	\end{equation} 
	then, for all $ u_1 \leq u_2 < u_{\infty}(\CH)$, $$ \lim_{v \rightarrow +\infty} \rho(u_2,v)=+\infty.$$
	
	Moreover, there exists $v_1'=v_1'(u_1,u_2) \geq v_1$ such that $[u_1,u_2] \times [v_1',+\infty) \subset \mathcal{T}$ and we have the following estimates for all $(u,v) \in [u_1,u_2] \times [v_1',+\infty)$ for some $C'(D,M,e,q_0,m^2,u_1,u_2)>0$, some  $0<\alpha<\frac{1}{2}$:
	
	\begin{equation} \label{phiQblowupimproved}
	|\phi|^2(u,v)+ |Q|(u,v) \leq C' \cdot |\log(\rho)|(u,v), 
	\end{equation} 
	\begin{equation} \label{massblowupest}
	\rho(u,v) \geq \rho(u_1,v) \cdot (1-C' \cdot \rho^{2\alpha-1}(u_1,v)).
	\end{equation}

\end{lem}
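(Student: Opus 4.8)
\textbf{Proof plan for Lemma \ref{massblowuplemma}.}

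The plan is to run a bootstrap/continuity argument in the $u$-direction on the interval $[u_1,u_2]$, with the blow-up of $\rho$ on $C_{u_1}$ as the starting data. First, I would observe that once $\lim_{v\to+\infty}\rho(u_1,v)=+\infty$ and $r$ is bounded on the relevant region (which follows from Lemma \ref{easylemma} once trappedness is established, and preliminarily from the a priori finiteness of $r$ on $\CH$ away from its endpoint), the cone $C_{u_1}$ is eventually trapped: indeed $2\rho/r>1$ for $v$ large, i.e.\ $\lambda(u_1,v)<0$ and (since $\nu<0$ on $\mathcal{D}$ by admissibility) $(u_1,v)\in\mathcal{T}$. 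Then by the Raychaudhuri equation \eqref{RaychV}, $v\mapsto \iota^{-1}=-\Omega^2/(4\lambda)$ is monotone, so trappedness propagates in $v$; and by openness of $\mathcal{T}$ in the $\RR^{1+1}$ topology, it propagates a little in $u$ as well. The set of $u\in[u_1,u_2]$ for which $[u_1,u]\times[v_1',+\infty)\subset\mathcal{T}$ together with the quantitative estimates \eqref{phiQblowupimproved}–\eqref{massblowupest} holding is therefore nonempty, and I would show it is both open and closed, hence all of $[u_1,u_2]$.

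The heart of the argument is the quantitative propagation. On a trapped rectangle $[u_1,u]\times[v_1',+\infty)$ of finite space-time volume (Lemma \ref{trappedvolume}), I would integrate the evolution equations \eqref{chargeUEinstein}, \eqref{Field3} (for $\psi=r\phi$, say), and \eqref{massUEinstein} in the $u$-direction, using the $L^1_u$-type control of $\Omega^2$, $|\nu|$, $|D_u\psi|$ that the finite-volume bound furnishes — exactly in the spirit of Lemma \ref{finitevolumeestlemma}. The key monotonicity is that $\partial_u\rho \geq \left(\tfrac{m^2}{2}r^2|\phi|^2+\tfrac{Q^2}{r^2}\right)\nu$ from \eqref{massUEinstein} (the $|D_u\phi|^2$ term has a favourable sign since $\iota<0$ in $\mathcal{T}$), so $\rho$ can only decrease along $\partial_u$ through the ``charge/mass potential'' term; dividing by $\rho$ and integrating, $|\partial_u\log\rho|\lesssim |\nu|\cdot(|\phi|^2+|Q|)\cdot r^{-2}$. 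Feeding in the bootstrap assumption $|\phi|^2+|Q|\lesssim |\log\rho|$ and the $L^1_u$-smallness of $\int|\nu|\,du$ (which can be made arbitrarily small by shrinking $v_1'$, since $|\nu|\lesssim\Omega^2\lesssim e^{1.99K_- v}$), a Grönwall-type estimate on $\log\rho$ closes the loop: $\rho(u,v)\geq \rho(u_1,v)(1-C'\rho^{2\alpha-1}(u_1,v))$ for small $\alpha$, which in particular keeps $\rho\to+\infty$ and keeps $|\log\rho|$ comparable along the rectangle, retrieving \eqref{phiQblowupimproved} from the transport estimates for $Q$ and $\phi$. One must be a little careful that the constants $C'$ and the shrinking threshold $v_1'$ depend only on $D,M,e,q_0,m^2,u_1,u_2$ and not on the running endpoint $u$, which is what makes the closedness step in the continuity argument work (as in the closedness proof of Lemma \ref{finitevolumeestlemma}).

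The main obstacle I anticipate is the interplay between the three bootstrapped quantities: $\rho$ blowing up is what forces trappedness and finite volume, but the quantitative control of $\rho$ in turn requires $|\phi|^2+|Q|\lesssim|\log\rho|$, and that bound is propagated only using the smallness of the space-time volume, which is only available once we know the rectangle is trapped. Disentangling this requires ordering the bootstrap carefully: first trappedness (soft, from openness + Raychaudhuri), then the finite-volume/$L^1_u$ estimates on $\Omega^2,\nu,D_u\psi$ (not depending on $\rho$), then the logarithmic Grönwall for $\rho$, and only then re-deriving the $|\phi|^2+|Q|$ bound. A secondary technical point is that $|\log\rho|$ must be shown monotone-ish or at least two-sided comparable in $u$ for fixed large $v$ — this follows from \eqref{massblowupest} itself once proven, but one needs to carry it as part of the bootstrap rather than assume it. The logarithm in \eqref{phiinitblowup} is exactly the slack that makes the Grönwall argument absorb the growth, so the power $2\alpha-1<0$ in \eqref{massblowupest} is not optional — it records how fast the correction decays.
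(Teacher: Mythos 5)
Your skeleton (eventual trappedness of $C_{u_1}$ from the mass blow-up, propagation in $v$ by Raychaudhuri, a bootstrap in $u$ based on the $\partial_u\rho$ equation with the kinetic term having a favourable sign) matches the paper, but the mechanism you propose for closing the estimates has a genuine gap. You lean on the finite-volume $L^1_u$ machinery of Lemma \ref{finitevolumeestlemma} and, crucially, on the ``smallness of $\int|\nu|\,du$, which can be made arbitrarily small by shrinking $v_1'$, since $|\nu|\lesssim\Omega^2\lesssim e^{1.99K_-v}$''. Neither input is available here. The exponential bound on $\Omega^2$ on a general rectangle $[u_1,u_2]\times[v_1',+\infty)$ far from $i^+$ is only established later (Lemma \ref{finitevolumeestlemma2}, Theorem \ref{classification}), and its proof uses the trapped neighbourhood of $\CH$, which is itself obtained from the present lemma — so invoking it here is circular; the paper even emphasizes that Lemma \ref{massblowuplemma} is independent of the other results (it is reused in \cite{r=0}). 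Worse, in the dynamical case the smallness is simply false: $\nu$ has a nonzero limit on $\CH$, so $\int_{u_1}^{u_2}|\nu|(u,v)\,du\to r_{CH}(u_1)-r_{CH}(u_2)>0$ as $v\to+\infty$. A second error: your claimed two-sided bound $|\partial_u\log\rho|\lesssim|\nu|\,(|\phi|^2+|Q|)r^{-2}$ does not hold — the term $\frac{r^2}{2\iota}|D_u\phi|^2$ in \eqref{massUEinstein} is nonnegative in $\mathcal{T}$ (note $\iota>0$ there, not $\iota<0$ as you wrote: $\nu<0$ and $1-\frac{2\rho}{r}<0$) but has no a priori upper bound, so only the one-sided inequality $\partial_u\rho\geq -C\,(|\phi|^2+Q^2)\,|\nu|$ is available, and your plan to ``carry comparability of $|\log\rho|$'' in the bootstrap cannot be run as stated.

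The paper closes the loop without any smallness or volume estimate. It bootstraps the weaker bound $|\phi|^2+Q^2\leq C\rho^{2\alpha}$ together with $\iota^{-1}\geq\eta>0$; then $\partial_u(\rho^{1-2\alpha})\geq -(1-2\alpha)C'|\nu|$ integrates (using only $\int_{u_1}^{u_2}|\nu|\,du=r(u_1,v)-r(u_2,v)\leq r_+$, i.e.\ boundedness, not smallness) to give \eqref{massblowupest}, hence trappedness of the whole rectangle and recovery of the $\iota^{-1}$ bootstrap. The upper bound on $|\phi|^2+|Q|$ by $\log\rho$ is then retrieved not from comparability of $\log\rho$ but from the energy identity read the other way: dividing \eqref{massUEinstein} by $2\rho-r\geq\rho$ gives $\int_{u_1}^{u_2}\frac{r|D_u\phi|^2}{2|\nu|}\,du\lesssim \rho^{-1+2\alpha}(u_1,v)+\log\bigl(\rho(u_2,v)/\rho(u_1,v)\bigr)$, and Cauchy--Schwarz with the weight $|\nu|\,du$ (again only needing $\int|\nu|\,du\lesssim1$) yields $\int|D_u\phi|\,du\lesssim(\log\rho(u_2,v))^{1/2}$, whence $|\phi|\lesssim(\log\rho)^{1/2}$ and, via \eqref{chargeUEinstein}, $|Q|\lesssim\log\rho$, closing the $\rho^{2\alpha}$ bootstrap with room to spare. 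If you redo your argument with this one-sided Grönwall and the weighted Cauchy--Schwarz in place of the finite-volume/smallness inputs, it becomes the paper's proof.
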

\begin{rmk}
	Note that the assumption \eqref{phiinitblowup} is satisfied under the assumptions of Corollary \ref{threetypes} , as $|\phi|^2(u,v)+ |Q| \lesssim v^{2-2s} \lesssim v$. In fact, assumption \eqref{phiQblowupimproved} can be considerably weakened to  $|\phi|^2(u,v)+ |Q| = O( \rho \cdot \log(\rho)^{-1})$.
\end{rmk}

\begin{proof} 
	
	First, we recall that we have the bounds $r_0(u_1,u_2,v_1) \leq r(u,v) \leq r_+(M,e)$ for all $(u,v) \in [u_1,u_2] \times [v_1,+\infty)$, for some $r_0(u_1,u_2,v_1)>0$. 
	Then, since $\rho(u_1,v) \rightarrow +\infty$ as $v \rightarrow +\infty$, $2\rho(u_1,v)> 2r_+$ for $v$ large enough hence there exists $v_1$ such that $\{u_1\} \times [v_1,+\infty) \subset \T$. By \eqref{RaychV}, this implies that $\iota^{-1}(u_1,v) \geq \eta_0>0$ for all $v\geq v_1$, defining $\eta_0:=\iota^{-1}(u_1,v_1)$.

	For some $0<\alpha<\frac{1}{2}$ and $\eta_0>\eta>0$, we bootstrap for some $C>0$ to be chosen later: \begin{equation} \label{B1mblowup}
	|\phi|^2+ Q^2 \leq C \cdot \rho^{2\alpha},
	\end{equation} 
	\begin{equation} \label{B2mblowup}
	\iota^{-1}  \geq \eta.
	\end{equation}
	
	For $C>0$ large enough, it is clear that  \eqref{B1mblowup}  and \eqref{B2mblowup} are satisfied already on $\{u_1\} \times [v_1,+\infty]$. Then, using \eqref{massUEinstein} together with bootstrap \eqref{B1mblowup}, we have for some $C'(C,M,e)>0$, 
	$$ \partial_u \rho \geq \frac{r^2}{2} \cdot \iota^{-1} |D_u \phi|^2 -C' \cdot \rho^{2\alpha} \cdot |\nu|  \geq  -C' \cdot \rho^{2\alpha}  \cdot |\nu|,$$ where for the last lower bound, we just used $\iota^{-1} \geq0$, as a soft consequence of \eqref{B2mblowup}. Since $0<\alpha<\frac{1}{2}$, it is clear that $$ \partial_u( \rho^{1-2\alpha})(u,v) \geq -(1-2\alpha) \cdot C' \cdot |\nu|(u,v).$$
	
	Thus, integrating, it is clear  that for all $u_1<u_2$ such that the bootstraps are satisfied on $[u_1,u_2] \times [v_1,+\infty)$: $$ \rho^{1-2\alpha}(u_2,v) \geq  \rho^{1-2\alpha}(u_1,v)+C'  \cdot (1-2\alpha)\cdot (r(u_2,v)-r(u_1,v)).$$
	
	From this we obtain the blow up of the mass and the estimate \eqref{massblowupest}. This estimate implies that there exists $v'_1>v_1$ such that for all $u_1 \leq u\leq u_2$, $v\geq v'_1$, $2\rho(u,v)>r_+$ thus $(u,v) \in \T$. Therefore, by \eqref{RaychV}, that $\iota^{-1}(u,v) \geq \eta_1>0$ for all $v\geq v_1'$ and $u_1 \leq u\leq u_2$, defining $\eta_1:=\sup_{u \in [u_1,u_2]}\iota^{-1}(u,v_1')$. Thus we retrieve bootstrap \eqref{B2mblowup} if $0<\eta<\eta_1$.

	Now we need to retrieve bootstrap \eqref{B1mblowup}. For this, consider \eqref{massUEinstein} and write, under bootstrap \eqref{B1mblowup} and \eqref{B2mblowup} $$   \frac{r^2 \cdot \iota^{-1}}{2} \cdot |D_u \phi|^2(u,v) \leq C' \cdot \rho^{2 \alpha}(u,v)+ \partial_u \rho(u,v),$$ which is also equivalent, using \eqref{murelation} to $$   \frac{ r|D_u \phi|^2}{2|\nu|}(u,v) \leq \frac{ C'  \cdot \rho^{2 \alpha}(u,v)}{2\rho(u,v)-r(u,v)}+ \frac{\partial_u \rho(u,v)}{2\rho(u,v)-r(u,v)} \leq C'  \cdot \rho^{-1+2 \alpha}(u,v)+ \partial_u \log(\rho)(u,v),$$ where we have used $2\rho(u,v) -r(u,v) \geq \rho(u,v)$ on $[u_1,u_2] \times [v_1,+\infty]$ for $v_1$ large enough, since $\rho$ tends to $+\infty$ by \eqref{massblowupest}.
	
	Thus, we get, integrating, also using \eqref{massblowupest}: $$ \int_{u_1}^{u_2} \frac{ r|D_u \phi|^2}{2|\nu|}(u,v) du \lesssim   \rho^{-1+2 \alpha}(u_1,v)+  \log( \frac{\rho(u_2,v)}{\rho(u_1,v)}).$$
	
	We can integrate in $u$ this estimate, using Cauchy-Schwarz as $$ |\phi(u_2,v)- e^{-iq_0 \int_{u_1}^{u_2} A_u(u',v) du'}\phi(u_1,v)| \leq \int_{u_1}^{u_2} |D_u \phi|(u',v) du' \leq (\int_{u_1}^{u_2} \frac{ r|D_u \phi|^2}{2|\nu|}(u,v) du)^{\frac{1}{2}} (\int_{u_1}^{u_2} \frac{ 2|\nu|}{r}(u,v) du)^{\frac{1}{2}},$$ which gives, using the former estimate $$ |\phi|(u_2,v)\lesssim |\phi|(u_1,v)+   (\log( \frac{\rho(u_2,v)}{\rho(u_1,v)}) )^{\frac{1}{2}} \lesssim \log( \rho(u_2,v))^{\frac{1}{2}},$$ where we used \eqref{phiinitblowup} and \eqref{massblowupest}, which is already sufficient to retrieve the $|\phi|^2$ part of the bootstrap \eqref{B1mblowup}.
	
	Then, notice using \eqref{chargeUEinstein} that $$ |\partial_u Q|(u,v) \lesssim |\phi|(u,v)|D_u \phi|(u,v),$$ so we can integrate, use Cauchy-Schwarz and the previous bounds to obtain for all $u_1 \leq u \leq u_2$:
	$$ |Q(u,v)| \lesssim |Q(u_1,v)|+ \log( \rho(u_2,v) \lesssim \log( \rho(u_2,v)) $$ where we used \eqref{phiinitblowup}: this retrieves bootstrap \eqref{B1mblowup}.


\end{proof}

\subsection{Proof of Theorem \ref{propagationtheorem}} \label{blowupcriterion}

In this section, we return to the main proof and we work again under the assumptions of Corollary \ref{threetypes} (i.e.\ the assumptions of Theorem \ref{previous}). We will use Lemma \ref{propagation} and Corollary \ref{threetypes} to obtain a proof a Theorem \ref{propagationtheorem} in the following proposition:

\begin{prop} \label{propprop}
	Assume that \underline{one} of the following conditions holds for some $u_0 <u_{\CH}$: \begin{equation} \label{condition1}
	\limsup_{v \rightarrow+\infty} \rho(u_0,v)=+\infty,
	\end{equation}  \begin{equation} \label{condition2}
	\limsup_{v \rightarrow+\infty} \varpi(u_0,v)=+\infty,
	\end{equation} \begin{equation} \label{condition3}
	\int_{v_0}^{+\infty} \kappa(u_0,v) <+\infty,
	\end{equation} \begin{equation} \label{condition4}
	\limsup_{v \rightarrow+\infty} |\nu|(u_0,v) >0,
	\end{equation}\begin{equation} \label{condition5}
	\limsup_{v \rightarrow+\infty} |\phi|^2(u_0,v)+ \limsup_{v \rightarrow+\infty} |Q|(u_0,v)=+\infty.
	\end{equation}
	Then $\CH$ is either of dynamical type, or of mixed type, with $u_T<u_0$. In any case, we have for all $u \geq u_0$ : \begin{equation} \label{result1}
	\lim_{v \rightarrow+\infty} \rho(u,v)=+\infty,
	\end{equation}  \begin{equation} \label{result2}
	\lim_{v \rightarrow+\infty} \varpi(u,v)=+\infty,
	\end{equation} \begin{equation} \label{result3}
	\int_v^{+\infty} \kappa(u,v')dv'<+\infty.
	\end{equation} 
\end{prop}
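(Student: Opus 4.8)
\textbf{Plan of proof of Proposition \ref{propprop}.} The strategy is to show that all five conditions \eqref{condition1}--\eqref{condition5} are, in fact, equivalent to the \emph{negation} of the staticity condition \eqref{MihalisPHDcondition} at $u_0$, i.e.\ to the statement $u_0 \notin \mathcal{S}_0$, and then to invoke the first-stage classification of Corollary \ref{threetypes} together with the propagation Lemma \ref{massblowuplemma}. First I would observe that if $u_0 \in \mathcal{S}_0$, then by Corollary \ref{supcorollary} we have $(-\infty,u_0] \subset \mathcal{S}_0$, so the assumptions of Theorem \ref{classificationtheorem} hold at $u_0$, and the estimates \eqref{secondeasy}, \eqref{rigidity5}, \eqref{rigidity10} there give $\lim_{v\to+\infty}|\nu|(u_0,v)=0$, $\varpi(u_0,v)\to M$, $\rho(u_0,v)\to M-\frac{e^2}{2r_-}$, $Q(u_0,v)\to e$, $\phi(u_0,v)\to 0$, and $\kappa^{-1}(u_0,v)\to 1$ (hence $\int^{+\infty}\kappa(u_0,v')dv'=+\infty$). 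This directly contradicts each of \eqref{condition1}, \eqref{condition2}, \eqref{condition4}, \eqref{condition5}, and also \eqref{condition3}. Therefore any one of the five hypotheses forces $u_0 \notin \mathcal{S}_0$.

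Once $u_0 \notin \mathcal{S}_0$, Corollary \ref{supcorollary} (applied with $u_1 > u_0$) shows $\mathcal{S}_0 \cap \{u \le u_1\}$ is either empty or an interval $(-\infty,u_D(u_1)]$ with $u_D(u_1) < u_0$; taking $u_1 \uparrow u_\infty(\CH)$ and invoking Corollary \ref{threetypes} we conclude that $\CH$ is either of dynamical type ($\mathcal{S}_0=\emptyset$) or of mixed type with $u_T = \sup_u u_D(u) < u_0$. In the dynamical case, Lemma \ref{localmassblowupdynamical} gives $\lim_{v\to+\infty}\rho(u,v)=+\infty$ for all $u \le u_s$; in the mixed case, since $u_0 > u_T$, Lemma \ref{localmassblowupmixed} gives the same blow-up for some $u\in(u_T, u_T+\epsilon_T]$, and in either case we have produced \emph{one} retarded time $u_* \le u_0$ (we may always take $u_* \le u_0$, shrinking if necessary, or use $u_*$ itself as the seed) at which the Hawking mass blows up towards $\CH$. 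Here I would need a small argument to cover the range $(\max\{u_s,u_T+\epsilon_T\}, u_0]$ if $u_0$ lies beyond where Lemmas \ref{localmassblowupdynamical}--\ref{localmassblowupmixed} directly apply --- but this is precisely what the next step handles.

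Now I would apply the propagation Lemma \ref{massblowuplemma} with $u_1 = u_*$ and $u_2 = u$ (any $u \ge u_0$, hence $u \ge u_*$). The hypothesis \eqref{phiinitblowup} of that lemma is verified at $u_*$: in the dynamical or mixed case the estimates of Corollary \ref{threetypes} (resp.\ Proposition \ref{LBprop}) give $|\phi|^2(u_*,v) + |Q|(u_*,v) \lesssim v^{2-2s} \lesssim v \lesssim |\log \rho|(u_*,v)$ using the exponential-in-$v$ lower bound $\rho(u_*,v) \gtrsim e^{1.98|K_-|v}$ from \eqref{masslowerbound}/\eqref{masslowerboundmixed}, so $|\log\rho|(u_*,v) \gtrsim v$. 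Lemma \ref{massblowuplemma} then yields $\lim_{v\to+\infty}\rho(u,v)=+\infty$ for all $u_* \le u < u_\infty(\CH)$, in particular for all $u \ge u_0$, which is \eqref{result1}. Then \eqref{result2} follows from $\varpi = \rho + \frac{Q^2}{2r}$ together with the polynomial bound $|Q|^2(u,v) \lesssim |\log\rho|^2(u,v) = o(\rho(u,v))$ from \eqref{phiQblowupimproved}, and $r$ bounded below (Lemma \ref{easylemma}). Finally, for \eqref{result3}: since $\rho(u,v)\to+\infty$ and $r$ is bounded, the cone $C_u$ is eventually trapped ($2\rho > r$), so $\lambda(u,v) < 0$ for $v$ large; moreover $\rho = \frac{r}{2}(1 + \frac{|\lambda||\nu|}{\Omega^2})$ blowing up with $r$ bounded forces $\frac{|\lambda|}{\Omega^2}\to+\infty$ along $C_u$, and combined with the monotonicity $\partial_v(\Omega^2/|\lambda|) \le 0$ from Raychaudhuri \eqref{RaychV} this gives $\kappa = \Omega^2/(4|\nu|) = \frac{\lambda}{1-2\rho/r}$... more precisely, $\int_v^{+\infty}\kappa(u,v')dv' = \int_v^{+\infty}\frac{|\lambda|}{2\rho/r - 1}dv' \le \frac{r}{2\rho(u,v)/r - 1}\cdot(\text{bounded}) \to 0$ by finiteness of $r$ and the blow-up of $\rho$, since $2\rho/r-1$ is increasing in $v$ on the trapped cone; this is exactly the elementary remark made after Theorem \ref{blowuprough}. \textbf{The main obstacle} I anticipate is the bookkeeping in the second paragraph: carefully matching the seed time $u_*$ produced by Lemmas \ref{localmassblowupdynamical}--\ref{localmassblowupmixed} (which live near $u_s$ or near $u_T$) to an arbitrary $u_0$ which could a priori be far to the future, and verifying that Lemma \ref{massblowuplemma}'s hypothesis \eqref{phiinitblowup} is available at that seed --- but since mass blow-up at $u_*$ propagates forward and $u_0 \ge u_*$ in all cases, this closes cleanly.
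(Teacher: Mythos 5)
Your proposal is correct and follows essentially the same route as the paper: rule out the static case and the mixed case with $u_0\leq u_T$ via the rigidity estimates of Theorem \ref{classificationtheorem}/Corollary \ref{threetypes}, seed the mass blow-up with Lemma \ref{localmassblowupdynamical} (resp.\ Lemma \ref{localmassblowupmixed}), check \eqref{phiinitblowup} from $|\phi|^2+|Q|\lesssim v^{2-2s}\lesssim v\lesssim|\log\rho|$ using \eqref{masslowerbound}/\eqref{masslowerboundmixed}, and propagate with Lemma \ref{massblowuplemma}; your bookkeeping of the seed time versus $u_0$ is handled adequately, and \eqref{result2} is even more immediate than you make it since $\varpi\geq\rho$.

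The one step you should tighten is \eqref{result3}: you assert that $\frac{2\rho}{r}-1$ is increasing in $v$ along a trapped cone, which is not justified (neither $\partial_v\rho$ nor $\partial_v|\nu|$ has a sign from \eqref{massVEinstein}, \eqref{Radius}), so the displayed bound ``$\leq \frac{r}{2\rho/r-1}\cdot(\text{bounded})$'' does not follow as written. The conclusion is nonetheless immediate in two ways: (i) once $\frac{2\rho}{r}-1\geq 1$ for $v\geq v_1$ (which follows from $\rho\to+\infty$ and $r$ bounded), one has $\int_{v_1}^{+\infty}\kappa\,dv'\leq\int_{v_1}^{+\infty}|\lambda|\,dv'\leq r(u,v_1)<+\infty$; or (ii), as the paper does, note that your first paragraph already gives $u_0\notin\mathcal{S}_0$, and the Raychaudhuri monotonicity of $\kappa^{-1}$ in $u$ makes $\CH\setminus\mathcal{S}_0$ a future set, so every $u\geq u_0$ is a Dafermos point and \eqref{result3} holds by definition, with no appeal to the mass blow-up at all.
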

\begin{proof}
	Clearly, if $\CH$ was of static type, then by Corollary \ref{threetypes}, none of condition \eqref{condition1}, \eqref{condition2}, \eqref{condition3}, \eqref{condition4}, \eqref{condition5} are possible. If $\CH$ is of mixed type, then again by Corollary \ref{threetypes}, it means that $u_0>u_T$, otherwise none of condition \eqref{condition1}, \eqref{condition2}, \eqref{condition3}, \eqref{condition4}, \eqref{condition5} are possible. Thus, $[u_0,u_{\CH}] \subset \CH-\mathcal{S}_0$ hence \eqref{result3} is satisfied for all $u \geq u_0$.

	To prove \eqref{result1}, we start by the case that $\CH$ is of dynamical type: then, using Lemma \ref{localmassblowupdynamical}, we see that we have the blow up $\lim_{v \rightarrow+\infty} \rho(u_s,v)=+\infty$ and by \eqref{QphiLB}, \eqref{phiinitblowup} is satisfied: thus we can apply Lemma \ref{massblowuplemma} thus \eqref{result1} and a forciori \eqref{result2} follow.	If now $\CH$ is of mixed type: then, using Lemma \ref{localmassblowupmixed}, then we obtain the same result.

\end{proof}
\subsection{Proof of Theorem \ref{trappedtheorem}} 

Now, as a second consequence of Lemma \ref{propagation}, we obtain the existence of a trapped neighbohood of $\CH$:
\begin{prop} \label{trappedprop}
	For all $u < u_{\CH}$, there exists $v(u) \in \RR$ such that $\{u\} \times [v(u),+\infty) \subset \T$.

\end{prop}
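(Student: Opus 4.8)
\textbf{Proof plan for Proposition \ref{trappedprop}.}

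The plan is to split according to the classification of $\CH$ into dynamical, static, or mixed type, which is available from Corollary \ref{threetypes}, and in each case produce, for every fixed $u<u_{\CH}$, a value $v(u)$ such that the outgoing segment $\{u\}\times[v(u),+\infty)$ lies in $\mathcal{T}$. The common mechanism is the characterization of the trapped region recalled in section \ref{geometricframework}: $(u,v)\in\mathcal{T}$ iff $1-\frac{2\rho(u,v)}{r(u,v)}<0$, equivalently (for one-ended admissible space-times, or near $\CHone$ in the two-ended case, where $\nu<0$) iff $\lambda(u,v)<0$. So I want to show that on each cone $C_u$ below $\CH$, either the Hawking mass $\rho$ eventually dominates $r/2$, or else I already have $\lambda<0$ for $v$ large from the rigidity estimates. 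Note also that once $\lambda(u,v_0)<0$ for a single $v_0$, the Raychaudhuri equation \eqref{RaychV} (which makes $v\mapsto \iota^{-1}$ non-decreasing, and $-\Omega^2/(4\lambda)=\iota$) propagates $\lambda<0$ to all $v\geq v_0$; thus it suffices to reach a single trapped sphere on each $C_u$.

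\emph{Dynamical type.} Here $\mathcal{S}_0=\emptyset$. By Lemma \ref{localmassblowupdynamical}, $\lim_{v\to+\infty}\rho(u',v)=+\infty$ for all $u'\leq u_s$; since $r$ is bounded above by $r_+(M,e)$ inside the black hole (Lemma \ref{easylemma}), this gives $2\rho(u',v)>r(u',v)$ for $v$ large, hence a trapped sphere, and therefore $\{u'\}\times[v(u'),+\infty)\subset\mathcal{T}$ for each $u'\leq u_s$. For $u\in[u_s,u_{\CH})$ I invoke Proposition \ref{propprop} (= Theorem \ref{propagationtheorem}): condition \eqref{condition1} holds at $u_s$, so \eqref{result1} gives $\lim_{v\to+\infty}\rho(u,v)=+\infty$ for all $u\geq u_s$, and again boundedness of $r$ yields a trapped sphere on $C_u$, hence $\{u\}\times[v(u),+\infty)\subset\mathcal{T}$.

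\emph{Static type.} Here $\mathcal{S}_0=\CH$, so for every $u_0<u_{\CH}$ the hypothesis \eqref{MihalisPHDcondition} of Theorem \ref{classificationtheorem} holds. Applying that theorem (case $u_0>u_s$ gives the inclusion $[u_s,u_0]\times[v_0,+\infty)\subset\mathcal{T}$ in statement \ref{class1}, and the estimate \eqref{rigidity11} gives $\lambda<0$ there; for $u_0\leq u_s$ we already have $\mathcal{LB}\cap\{u\leq u_s\}\subset\mathcal{T}$ from \cite{Moi}, or directly \eqref{rigidity11}, \eqref{rigidity12}) shows that for each $u<u_{\CH}$ there is $v(u)$ with $\{u\}\times[v(u),+\infty)\subset\mathcal{T}$.

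\emph{Mixed type.} Let $u_T=\sup_{u<u_{\CH}}u_D(u)<u_{\CH}$ be the transition time, so $\mathcal{S}_0=(-\infty,u_T]$. For $u\leq u_T$, the conclusions of Corollary \ref{threetypes}(2) (statements \ref{a}, \ref{d}, \ref{e}) give, exactly as in the static case, a trapped segment $\{u\}\times[v(u),+\infty)\subset\mathcal{T}$, using \eqref{rigidity11}--\eqref{rigidity12} on $[u_s,u_T]\times[v_T,+\infty)$ and \eqref{rigidity14}--\eqref{rigidity17} on $[u_T,u_T+\epsilon_T]\times[v_T,+\infty)$. For $u\in(u_T,u_T+\epsilon_T]$ I use Lemma \ref{localmassblowupmixed}: $\lim_{v\to+\infty}\rho(u,v)=+\infty$, hence a trapped sphere on $C_u$ and then the segment. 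Finally, for $u\in(u_T+\epsilon_T,u_{\CH})$ I apply Proposition \ref{propprop} with $u_0$ any point in $(u_T,u_T+\epsilon_T]$: condition \eqref{condition1} holds there, so \eqref{result1} gives mass blow-up on every $C_u$ with $u\geq u_0$, and boundedness of $r$ closes the argument.

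\emph{Last point and main obstacle.} In all three cases I have produced the trapped neighborhood $\mathcal{T}'=\bigcup_{u<u_{\CH}}\{u\}\times[v(u),+\infty)$ of $\CH$; the final claim that $\mathcal{A}$ has no limit point on $\CH$ follows because $\mathcal{T}$ is open and disjoint from $\mathcal{A}$ (their defining conditions $1-\frac{2\rho}{r}<0$ and $1-\frac{2\rho}{r}=0$ are incompatible), and every point of $\CH$ has a neighborhood of the form $(u,u+\delta)\times(v(u)-\delta,+\infty)$ meeting $\mathcal{Q}^+$ only inside $\mathcal{T}'$; I would spell this out using that $r_{CH}$ is bounded below on compact $u$-intervals away from $u_{\CH}$ (Lemma \ref{easylemma}) so that $v(u)$ can be chosen locally uniformly. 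The main obstacle is really bookkeeping rather than a new idea: making sure the three regimes (the ``already known'' region $\{u\leq u_s\}$ from \cite{Moi}, the ``rigidity'' region near $\mathcal{S}_0$ from Theorem \ref{classificationtheorem}, and the ``mass blow-up'' region from Lemmas \ref{localmassblowupdynamical}, \ref{localmassblowupmixed} propagated by Lemma \ref{massblowuplemma}/Proposition \ref{propprop}) glue together so that $v(u)$ is finite for \emph{every} $u<u_{\CH}$, with no gap left at the transition time $u_T$ or at $u_s$; the estimate \eqref{rigidity14} in Theorem \ref{classificationtheorem}, which gives trappedness on a full $\epsilon_T$-neighborhood to the future of each static point, is what ensures there is no such gap.
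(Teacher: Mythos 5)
Your proposal is correct and follows essentially the same route as the paper: a case split by Cauchy horizon type, with the trapped rectangle of Theorem \ref{classificationtheorem} (via Corollary \ref{threetypes}) handling the static portion and the local mass blow-up of Lemmas \ref{localmassblowupdynamical}, \ref{localmassblowupmixed} propagated forward, combined with the boundedness of $r$ and the monotonicity of \eqref{RaychV}, handling the rest. The only cosmetic difference is that you cite the already-established Proposition \ref{propprop} for the propagation step where the paper invokes Lemma \ref{massblowuplemma} directly, which is logically equivalent and non-circular since Proposition \ref{propprop} precedes this result.
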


\begin{proof}
	From the estimates of \cite{Moi}, we know that  $\{ (u,v), v \geq v_{\gamma}(u),-\infty <u \leq u_{s}   \} \subset \mathcal{T}$. Now there are three cases: 
	
	\begin{enumerate}[a]
		\item $\CH$ is of static type. \label{trappedrigid}
		
		Then $\mathcal{S}=(-\infty,u_{\CH})$ and the claim is given statement \ref{class1} by Theorem \ref{classificationtheorem}.

		\item $\CH$ is of dynamical type. \label{trappeddyn}
		
		Then $\mathcal{S}=\emptyset$. By Lemma \ref{localmassblowupdynamical}, for all $u\leq u_s$, $v\geq v_{\gamma}(u)$, \eqref{masslowerbound} is true. Thus, also using \eqref{QphiLB}, the assumptions of Lemma \ref{massblowuplemma} are satisfied for $u_1=u_s$. Recall that $(u,v) \in \T$ if and only if $2\rho(u,v)>r(u,v)$.  Since $r$ is bounded, then for all $u<u_{\CH}$, there exists $v(u)$ such that $[u_s,u] \times [v(u),+\infty) \subset \mathcal{T}$.
		
		\item $\CH$ is of mixed type.
		
		Then  $\mathcal{S}=(-\infty,u_T]$ and by Lemma \eqref{localmassblowupmixed} there exists $\epsilon>0$ such that \eqref{masslowerboundmixed} is true at $u=u_T+\epsilon$, together with \eqref{rigidity15}. Thus, the assumptions of Lemma \ref{massblowuplemma} are satisfied for $u_1=u_T+\epsilon$, which gives the result, as in case \ref{trappeddyn}.

	\end{enumerate} 
	
\end{proof}

\section{Quantitative estimates and proof of Theorem \ref{classification}} \label{quantitative}

Now we establish quantitative estimates on an arbitrarily late portion of $\CH$. The key ingredient in our proof is the presence of a trapped neighborhood of $\CH$ obtained in Theorem \ref{trappedtheorem}, whose volume is thus finite. While such estimates are not strictly speaking necessary to prove $C^2$-inextendibility (but we \textit{choose} to prove inextendibility using them), they bridge the gap between the (preliminary) classification of Theorem \ref{classificationtheorem} and the final classification of Theorem \ref{classification}.
\subsection{$L^1$ and $L^{\infty}$ estimates for a trapped rectangle}

In this section, we prove $L^1$ estimates on a trapped \footnote{In particular $R$ is of finite space-time volume, but as we shall see, the stronger assumption $R \subset \mathcal{T}$ turns out to be necessary to our method.} rectangle $R$  which includes a portion of $\CH$. These estimates are in the vein of those of Lemma \ref{finitevolumeestlemma} but we introduce a considerable difficulty, to accommodate the case of a Cauchy horizon of dynamical type \footnote{Recall that, in retrospect, Lemma \ref{finitevolumeestlemma} was concerned by Cauchy horizon of mixed type: in this case, $\phi$ and $Q$ were necessarily bounded.}: both $\phi$ and $Q$ are allowed to blow up, at a polynomial rate. While this seems anecdotal, we cannot close the estimates of Lemma \ref{finitevolumeestlemma} because we start with weaker assumptions \eqref{phiLinfty2hyp}, \eqref{QLinfty2hyp} and we must get contend \footnote{The lack of integrability when $s<1$, even if  $\phi$ is bounded, was a new difficulty in Lemma \ref{finitevolumeestlemma} already, compared to Lemma 10.3 of \cite{JonathanStab}.} with even weaker estimates. Allowing for this possibility is not necessary to prove Theorem \ref{conditionnalSCCtheoremtwoended} and Theorem \ref{CHinexttheorem}, as we already know that the mass blows up by Lemma \ref{massblowuplemma} but is crucial \footnote{This is because we require estimates, even weak ones, just to prove that $\nu$ has a non-zero limit on $\CH$.} to proving Theorem \ref{classification}. 
\begin{lem} \label{finitevolumeestlemma2} Let $u_1 < \uu < u_{\infty}(\CH)$,  in particular $\uu <+\infty$ and $v_0\geq 1$. Assume that the rectangle $R:= [u_1,\uu] \times [v_0,+\infty]$ is trapped: $R \subset \mathcal{T}$. By Lemma \ref{trappedvolume}, $vol(R)<\infty$. We also assume that for some $D>0$ and some $\frac{3}{4}<s<1$, $\alpha(M,e)>0$, the following estimates are true on the past boundary of $R$:  \begin{equation} \label{dulogomegaL12hyp}
	\int_{u_1}^{u_{max}}|\partial_u \log(\Omega^2)|(u,v_0)du  \leq D,
	\end{equation} 	\begin{equation} \label{duphiL12hyp}
	\int_{u_1}^{u_{max}} |D_u \phi|(u,v_0)du \leq D,
	\end{equation} 	\begin{equation} \label{durL12hyp}
	\int_{u_1}^{u_{max}} |\nu|(u,v_0)du  \leq D,
	\end{equation} 	\begin{equation} \label{dvrL12hyp}
	\sup_{v \in  [v_0,+\infty)} v^{2s} \cdot |\lambda|(u_1,v) dv \leq D,
	\end{equation} 	\begin{equation} \label{phiLinfty2hyp}
	\sup_{v \in  [v_0,+\infty)} v^{-1+s} \cdot|\phi|(u_1,v) \leq D,
	\end{equation}  	\begin{equation} \label{QLinfty2hyp}
	\sup_{v \in  [v_0,+\infty)} v^{-2+2s} \cdot |Q|(u_1,v) \leq D,
	\end{equation} \begin{equation} \label{OmegaLinfty2hyp}
	\sup_{v \in  [v_0,+\infty)} e^{\alpha \cdot v} \cdot \Omega^2(u_1,v) \leq D.
	\end{equation} 
	Then, for some $C(M,e,q_0,m^2,u_1,\uu,v_0,vol(R),D,s)>0$, $v_1=v_1(M,e,q_0,m^2,u_1,\uu,vol(R),D,s)>v_0$, the following estimates are true on the smaller rectangle $[u_1,\uu] \times [v_1,+\infty)$, defining $\psi:=r\phi$: \begin{equation} \label{dulogomegaL12}
	\int_{u_1}^{\uu}\sup_{v \in [v_1,+\infty)} v^{-2+2s} \cdot |\partial_u \log(\Omega^2)|(u,v)du  \leq C,
	\end{equation} 	\begin{equation} \label{dupsiL12}
	\int_{u_1}^{\uu} \sup_{v \in [v_1,+\infty)} |D_u \psi|(u,v)du  \leq C ,	\end{equation} 	\begin{equation} \label{duphiL12}
	\int_{u_1}^{\uu} \sup_{v \in [v_1,+\infty)} v^{-1+s} \color{black} \cdot |D_u \phi|(u,v)du  \leq C ,	\end{equation} 	\begin{equation} \label{durL12}
	\int_{u_1}^{\uu} \sup_{v \in [v_1,+\infty)} |\nu|(u,v)du  \leq C ,
	\end{equation} 	\begin{equation} \label{dvrL12}
	\sup_{(u,v) \in [u_1,\uu] \times [v_1,+\infty)} v^{2s} \cdot |\lambda|(u,v) dv \leq C,
	\end{equation} 
	\begin{equation} \label{phiLinfty2}
	\sup_{(u,v) \in [u_1,\uu] \times [v_1,+\infty)} v^{-1+s} \cdot|\phi|(u,v) \leq C,
	\end{equation} 
	\begin{equation} \label{dvphiLinfty2}
	\sup_{(u,v) \in [u_1,\uu] \times [v_1,+\infty)} v^{s} \cdot|\partial_v \phi|(u,v) \leq C,
	\end{equation}    
	\begin{equation} \label{QLinfty2}
	\sup_{(u,v) \in [u_1,\uu] \times  [v_1,+\infty)} v^{-2+2s} \cdot|Q|(u,v), \leq C.	\end{equation}   \begin{equation} \label{OmegaLinfty2}
	\sup_{(u,v) \in [u_1,\uu] \times  [v_1,+\infty)} e^{0.99 \alpha \cdot  v}\cdot \Omega^2(u,v) \leq C.	\end{equation}  
\end{lem}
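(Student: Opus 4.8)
\textbf{Proof strategy for Lemma \ref{finitevolumeestlemma2}.}

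The plan is to run a bootstrap argument on the rectangle $R = [u_1,\uu] \times [v_1,+\infty)$, very much in the spirit of Lemma \ref{finitevolumeestlemma}, but where this time we \emph{allow} $\phi$ and $Q$ to grow polynomially in $v$, so that the bootstrap assumptions are of the form $|\phi|(u,v) \leq 2D \cdot v^{1-s}$ and $|Q|(u,v) \leq 2D \cdot v^{2-2s}$ on $[u_1,u] \times [v_1,v']$ for spheres $(u,v')$ in a suitable bootstrap set $\mathcal{B} \subset R$; the set $\mathcal{B}$ is non-empty because the hypotheses \eqref{phiLinfty2hyp}, \eqref{QLinfty2hyp} hold on the past ingoing cone $\{u_1\} \times [v_0,+\infty)$, and it is open by continuity. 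The crucial structural input, as in Lemma \ref{finitevolumeestlemma}, is that $R \subset \mathcal{T}$, hence $vol(R) < \infty$ by Lemma \ref{trappedvolume}, and moreover (choosing $v_1$ large) one can make $\int_{u_1}^{\uu}\int_{v_1}^{+\infty}\Omega^2 \,du\,dv$ as small as desired. The key point throughout is that, being in the trapped region, $v \mapsto r(u,v)$ is decreasing and $v \mapsto \Omega^2/|\lambda|\,(u,v)$ is non-increasing (from \eqref{RaychV}), and $r$ stays bounded below by Lemma \ref{easylemma} since $\uu < u_\infty(\CH)$.

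First I would establish the $\lambda$-bound \eqref{dvrL12}: integrating \eqref{Radius3} in $u$, using the bootstrap bounds on $Q$ and $\phi$ and the volume/$\Omega^2$-smallness, gives $r|\lambda|(u,v) \leq r|\lambda|(u_1,v) + C\int_{u_1}^{u}\Omega^2(u',v)du'$, and then one takes a supremum and controls the $\Omega^2$ integral using \eqref{OmegaLinfty2hyp} propagated (see below) plus the decay rate $v^{-2s}$ coming from \eqref{dvrL12hyp}; the exponential smallness of $\Omega^2$ beats the polynomial factors. Next, integrating \eqref{Radius3} in $v$ gives the $L^1$-in-$u$ bound on $\sup_v r|\nu|$, i.e.\ \eqref{durL12}, using $r^2(u,v_1)-r^2(u,v') \leq r^2(u,v_1)$ and the $\lambda$-bound just obtained. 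Then I would integrate \eqref{Field3} for $\psi = r\phi$, integrating the troublesome $\int \nu\,\partial_v\phi\,dv'$ term by parts exactly as in Lemma \ref{finitevolumeestlemma} (using \eqref{Radius} to rewrite $\partial_v \nu$), to get the $L^1$ bound \eqref{dupsiL12} on $\sup_v |D_u\psi|$, hence \eqref{duphiL12} after dividing by the lower bound on $r$ and inserting the $v^{-1+s}$ weight. Integrating $D_u\psi$ in $u$ and using the Cauchy data bound \eqref{phiLinfty2hyp} and \eqref{dupsiL12} retrieves the $\phi$-bootstrap \eqref{phiLinfty2} with a constant that, after choosing $v_1$ large and noting the gain of a power of $v^{-s}$ or the smallness of the $u$-interval together with the $\Omega^2$ volume smallness, can be made to beat $2D v^{1-s}$; similarly integrating \eqref{chargeUEinstein}, using Cauchy--Schwarz with $\int r|D_u\phi|^2/|\nu|$ controlled via \eqref{massUEinstein}-type manipulations (or more simply the crude bounds already obtained), retrieves the $Q$-bootstrap \eqref{QLinfty2}. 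The $\partial_v\phi$ estimate \eqref{dvphiLinfty2} follows by integrating \eqref{Field2} in $u$ from $u_1$ (where \eqref{dvrL12hyp} via Raychaudhuri \eqref{RaychV} gives control of $\partial_v\phi(u_1,\cdot)$) and using the bounds already in hand. Finally, integrating \eqref{Omega} in $v$, with a by-parts on the $\Re(\overline{D_u\phi}\,\partial_v\phi)$ term using \eqref{Field3} exactly as in Lemma \ref{finitevolumeestlemma}, yields \eqref{dulogomegaL12}, and exponentiating gives $\Omega^2(u,v) \lesssim e^{D}\Omega^2(u_1,v) \lesssim e^{-0.99\alpha v}$, which both propagates \eqref{OmegaLinfty2hyp} to \eqref{OmegaLinfty2} and closes the loop.

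The main obstacle I anticipate is precisely the one flagged in the text preceding the lemma: because $s < 1$, neither $\partial_v\log(\Omega^2) - 2K_-$ nor $\partial_v\phi$ is integrable in $v$, so one cannot import the full force of Lemma 10.3 of \cite{JonathanStab}; one must carefully track that every ``bad'' polynomial-in-$v$ factor (coming from $|\phi| \lesssim v^{1-s}$, $|Q| \lesssim v^{2-2s}$, and the non-integrable tails) is always multiplied either by $\Omega^2$, which decays exponentially in $v$ by \eqref{OmegaLinfty2hyp} and its propagation, or by $|\lambda| \lesssim v^{-2s}$, so that all the relevant $v$-integrals still converge and, after choosing $v_1$ large and $\uu - u_1$ together with $vol(R)$ controlled, the constants in the retrieved bootstrap inequalities genuinely improve on $2D$ times the weight. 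The bookkeeping of weights in the by-parts integrations (especially in the $\Omega$ equation, where $D_u\phi$ carries a $v^{1-s}$ weight and $\partial_v\phi$ a $v^{-s}$ weight, so their product is $v^{1-2s}$, borderline non-integrable but saved by the accompanying $\Omega^2$ or by a further by-parts) is the delicate part; everything else is a routine adaptation of the $L^1$--$L^\infty$ machinery of Lemma \ref{finitevolumeestlemma}.
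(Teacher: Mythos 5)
Your overall framework (bootstrap on a trapped rectangle, exploiting the finiteness of the volume, with integrations by parts as in Lemma \ref{finitevolumeestlemma}) is the right family of ideas, but there is a genuine gap at the crucial point: the mechanism that closes the bootstrap over the \emph{whole} interval $[u_1,\uu]$. You propose a single global bootstrap with constants of the form $2D\cdot v^{1-s}$, $2D\cdot v^{2-2s}$, to be retrieved using ``the gain of a power of $v^{-s}$ or the smallness of the $u$-interval together with the $\Omega^2$ volume smallness''. But $\uu-u_1$ is arbitrary and cannot be taken small, and the increments you must absorb are not all accompanied by a small factor: retrieving the $\phi$-bound forces you to control $\int_{u_1}^{\uu}\sup_v v^{-1+s}|D_u\phi|\,du$, whose natural bound contains $\int_{u_1}^{\uu}\sup_v|\nu|\,du$ and $\int_{u_1}^{\uu}|D_u\phi|(u,v_1)\,du$. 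The first is an order-one quantity (bounded via $\partial_v(r|\nu|)\leq\Omega^2/4$ in terms of \eqref{durL12hyp} and $vol(R)$, but certainly not $\leq D$ and not small as $v_1\to\infty$), and the second is only controlled by the local well-posedness constant on $[u_1,\uu]\times[v_0,v_1]$, which \emph{depends on $v_1$}. So a bootstrap constant $2D$ cannot be retrieved; and if you enlarge the bootstrap constant to some $A$ and then try to kill the $A$-dependent error terms (e.g.\ $A^2 v_1^{3-4s}(\uu-u_1)$) by taking $v_1$ large, you run into a circularity, since the size $A$ you need already depends on the data at $v=v_1$, hence on $v_1$. This is precisely the obstruction the paper's proof is built to avoid: after establishing the unconditional bounds $|\lambda|\lesssim v^{-2s}$ and $\Omega^2\lesssim v^{-2s}$ (from the sign of \eqref{Radius3} in $\mathcal{T}$, the monotonicity of \eqref{RaychV} and a Gr\"onwall argument in $u$ — no bootstrap on $\phi,Q$ needed there), it fixes $v_1$, subdivides $[u_1,\uu]$ into strips of width $\epsilon$, and runs a \emph{finite induction} in which the constants are allowed to grow additively, $\phi_i=D+C\epsilon i$, $Q_i=D+C''\epsilon i$, $\Phi_i=D+C'\epsilon i$; only per-strip smallness (proportional to $\epsilon$, via $\int_{u_i}^{u_{i+1}}|\nu|(u,v_1)du\lesssim\epsilon$, the strip volume $\lesssim\epsilon$, etc.) is needed to close each step, and the total growth over $N=(\uu-u_1)/\epsilon$ strips remains bounded. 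The exponential bound \eqref{OmegaLinfty2} and the estimates \eqref{dulogomegaL12}, \eqref{dupsiL12} are then obtained \emph{after} the induction, by a separate bootstrap $\Omega^2\leq\Delta e^{-\alpha v/2}$ and by integrating \eqref{Field4}, rather than being fed back into the $\lambda$-estimate as in your scheme.

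Two secondary points. First, your derivation of \eqref{dvrL12} uses the bootstrap bounds on $\phi,Q$ together with the ``propagated'' exponential decay of $\Omega^2$; the paper's route ($\partial_u(r|\lambda|)\leq\Omega^2/4\leq C_0\,r|\lambda|$ using the $v$-monotonicity of $\Omega^2/|\lambda|$ from \eqref{RaychV}, then Gr\"onwall and \eqref{dvrL12hyp}) is both simpler and decoupled from the $\Omega$-estimate, which matters because your version makes the already delicate closing argument even more entangled. Second, your claim that \eqref{dvrL12hyp} ``via Raychaudhuri \eqref{RaychV} gives control of $\partial_v\phi(u_1,\cdot)$'' is not correct as stated: \eqref{RaychV} only yields a weighted $L^2_v$ bound on $\partial_v\phi$ along $\{u=u_1\}$, not the pointwise bound $|\partial_v\phi|(u_1,v)\lesssim v^{-s}$ that the bootstrap for \eqref{dvphiLinfty2} needs as initialization; in the intended application this pointwise bound comes from the estimates of Proposition \ref{LBprop} on the initial outgoing cone, so it should be carried as an assumption rather than derived from \eqref{dvrL12hyp}.
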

\begin{proof} First, we introduce the notation $A \lesssim B$, which means that there exists a constant $C(M,e,q_0,m^2,u_1,\uu,vol(R),D,s)>0$ such that $A \leq C \cdot B$. Before anything else, we start with an a priori estimate on $\lambda$: taking advantage of the fact that we are in the trapped region, we use \eqref{Radius3} together with the monotonicity induced by \eqref{RaychV}: there exists a constant $C_0(D,M,e,u_1,\uu)>0$ such that $$ \partial_u(r|\lambda|)(u,v)\leq \frac{\Omega^2(u,v)}{4} \leq \frac{\Omega^2}{4|\lambda|}(u,v_0) \cdot |\lambda|(u,v) \leq C_0 \cdot r |\lambda|(u,v) ,$$
	which allows us to apply a Gr\"{o}mwall estimate and obtain, using \eqref{dvrL12hyp}, that for all $(u,v) \in R$: \begin{equation} \label{lambdaglobal}
	|\lambda|(u,v) \lesssim v^{-2s}.
	\end{equation} This gives directly \eqref{dvrL12}. Using again the estimate $\Omega^2(u,v) \leq \frac{\Omega^2}{|\lambda|}(u,v_0) \cdot |\lambda|(u,v) $, we also get \begin{equation} \label{Omegaglobal}
	\Omega^2(u,v) \lesssim v^{-2s}.
	\end{equation} 
	
	Then, we reduce the size of the rectangle: we define $R'=[u_1,\uu] \times [v_1,+\infty)$ for some $v_1>v_0$. By local well-possessedness on the rectangle $[u_1,\uu] \times [v_0,v_1]$ , $u \rightarrow \partial_u \log(\Omega^2)(u,v_1)$, $u \rightarrow \nu(u,v_1)$ and $u \rightarrow D_u \phi(u,v_1)$ are bounded functions so there exists $C_1=C_1(u_1,\uu,v_0,v_1,D)>0$ such that $$ \sup_{u_1 \leq u \leq \uu}|\partial_u \log(\Omega^2)|(u,v_1)+ |\nu|(u,v_1)+|D_u \phi|(u,v_1) \leq C_1.$$
	
	Then, we cut $R'$ into small rectangles $R_i=[u_i,u_{i+1}] \times [v_1,+\infty]$, $u_N=\uu$, $u_{i+1}-u_i=\epsilon$, $\epsilon>0$ and $N=\frac{\uu-u_1}{\epsilon}$. Using Lemma \ref{trappedlemma}, we see immediately that $vol(R_i) \lesssim \epsilon$.

	Thus, we have the following initial smallness estimates for the $L^1$ norms on $[u_1,\uu] \times \{v_1\}$: \begin{equation} \label{dulogomegaL12hypsmall}
	\int_{u_i}^{u_{i+1}}|\partial_u \log(\Omega^2)|(u,v_1)du  \lesssim \epsilon,
	\end{equation} 	\begin{equation} \label{duphiL12hypsmall}
	\int_{u_i}^{u_{i+1}} |D_u \phi|(u,v_1)du \lesssim \epsilon,
	\end{equation} 	\begin{equation} \label{durL12hypsmall}
	\int_{u_i}^{u_{i+1}} |\nu|(u,v_1)du  \lesssim \epsilon.
	\end{equation}

	The proof of the result is a finite induction: we will prove the induction hypothesis: there exists 
	
	$C=C(M,e,q_0,m^2,u_1,\uu,vol(R),D,s)>0$, $C'=C'(M,e,q_0,m^2,u_1,\uu,vol(R),D,s)>0$,

	$C''=C''(M,e,q_0,m^2,u_1,\uu,vol(R),D,s)>0$ such that for all $(u,v) \in R_i$ \begin{equation} \label{hyp1}
	|\phi|(u,v) \leq \phi_i \cdot v^{1-s},
	\end{equation} 
	\begin{equation} \label{hyp2}
	|Q|(u',v) \leq  Q_i \cdot v^{2-2s},
	\end{equation}  \begin{equation} \label{hyp3}
	|\partial_v \phi|(u',v) \leq  \Phi_i \cdot v^{-s},
	\end{equation} where we defined $\phi_i:= D +C  \cdot \color{black}\epsilon \cdot i $, $\Phi_i:=  D +C'  \cdot \color{black}\epsilon \cdot i $ and $Q_i:= D +C''  \cdot \color{black} \epsilon \cdot i $.
	
	Using \eqref{QLinfty2hyp}, \eqref{phiLinfty2hyp}, \eqref{dvphiLinfty2}, we see those estimates are initially satisfied on $\{u_1\} \times [v_1,+\infty]$, so the initialization is true. Notice that $1 \leq i \leq N:=\frac{u_{max}-u_1}{\epsilon}$, an estimate we will often use.
	
	Notice that for any $0<C_0 \lesssim 1$, we have the estimate, which we use implicitly several times in the argument: $$ D+ C_0 \epsilon  \cdot i  \leq D+ C_0 \cdot  \frac{\uu-u_1}{\epsilon} \cdot  \epsilon  \lesssim 1.$$

	Assume that \eqref{hyp1}, \eqref{hyp2}, \eqref{hyp3} are true on $R_i$ and we will prove them on $R_{i+1}$.

	We bootstrap the following estimates on $R_{i+1}$: \begin{equation} \label{boostrap1}
	|\phi|(u,v) \leq 4 \phi_{i+1} \cdot  v^{1-s},
	\end{equation} \begin{equation} \label{boostrap2}
	|Q|(u,v) \leq 4 Q_{i+1} \cdot  v^{2-2s},
	\end{equation} \begin{equation} \label{boostrap3}
	|\partial_v \phi|(u,v) \leq 4 \Phi_{i+1} \cdot  v^{-s}.
	\end{equation} 
	By \eqref{hyp1}, \eqref{hyp2}, \eqref{hyp3}, we see that the bootstraps are initially satisfied on $\{u_{i+1}\} \times [v_1,+\infty]$.

	Then, taking advantage of the fact that we are in the trapped region, we use \eqref{Radius3} as $$\partial_v(r|\nu|) \leq \frac{\Omega^2}{4}.$$
	
	Then, integrating on $R_i$ and using \eqref{durL12hypsmall}, we get, very similarly \footnote{In particular we do not need to use the bootstraps \eqref{boostrap1}, \eqref{boostrap2} or \eqref{boostrap3}, which makes the estimates relatively easy.} to the proof of Lemma \ref{finitevolumeestlemma}: 	\begin{equation} \label{estimate1}
	\int_{u_i}^{u_{i+1}}\color{black} \sup_{v \in [v_1,+\infty)} |\nu|(u',v)du' dv \lesssim \epsilon.
	\end{equation} 	
	
	Then, using \eqref{Field3}\color{black} with bootstraps \eqref{boostrap1}, \eqref{boostrap2} and \eqref{boostrap3} gives, also using \eqref{Omegaglobal}, for all $u\in[u_i,u_{i+1}]$: $$ |\partial_v(rD_u \phi)|(u,v) \lesssim  |\nu|(u,v) \cdot v^{-s} +  v^{3-5s},$$ which then implies, integrating first in $v$ and using that $3-4s<0$ (recall that $s<1$, see Remark \ref{s<1remark}): $$ |D_u \phi|(u,v) \lesssim |D_u \phi|(u,v_1)  + v^{1-s} \cdot (	1+\sup_{ v' \in [v_1,+\infty)}|\nu|(u,v')),$$
	an estimate we can integrate in $u$, using \eqref{duphiL12hypsmall} and \eqref{estimate1}: for $C=C(M,e,q_0,m^2,u_1,\uu,vol(R),D,s)>0$ independent of $i$ we get \begin{equation} \label{Duphiest}
	\int_{u_i}^{u_{i+1}}	\sup_{ v \in [v_1,+\infty)} v^{-1+s} \color{black}|D_u \phi|(u,v) du \leq C \cdot \epsilon,
	\end{equation} and this gives us an estimate for $\phi$, using \eqref{hyp1}: \begin{equation}  \label{phi}
	|\phi|(u,v) \leq (\phi_i + C\cdot \epsilon) \cdot v^{1-s} \leq \phi_{i+1} \cdot v^{1-s},
	\end{equation} which closes bootstrap \eqref{boostrap1} and also proves the first induction hypothesis \eqref{hyp1} on $R_{i+1}$.
	
	Then, we can integrate \eqref{chargeUEinstein}, using \eqref{Duphiest} and \eqref{phi} and obtain, using also \eqref{hyp2}: for all $(u,v) \in R_{i+1}$: \begin{equation} \label{Q}
	|Q|(u,v) \leq (Q_i+C''\cdot \epsilon) \cdot v^{2-2s} = Q_{i+1} \cdot v^{2-2s},
	\end{equation} for $C''=C''(M,e,q_0,m^2,u_1,\uu,vol(R),D,s)>0$ independent of $i$. 
	
	This closes bootstrap \eqref{boostrap2} and also proves the second induction hypothesis \eqref{hyp2} on $R_{i+1}$.
	
	Now we turn to \eqref{Field}\color{black}, which we write, using \eqref{lambdaglobal}, \eqref{Omegaglobal}, \eqref{phi}, \eqref{Q} and bootstrap \eqref{boostrap3} as: $$ |D_u(\partial_v \phi)| \lesssim v^{-s} \cdot \sup_{ v\in [v_1,+\infty)} |\nu|(u,v)+ v^{1-3s} \cdot \sup_{ v\in [v_1,+\infty)}  v^{-1+s} \color{black} |D_u \phi|(u,v) + v^{3-5s}.$$ We can integrate this equation in $u$, using the fact that $v^{1-3s} \leq v^{-s}$ and $v^{3-5s} \leq v^{-s}$, with \eqref{Duphiest}, \eqref{estimate1} to get \begin{equation}
	|\partial_v \phi|(u,v)\leq \Phi_i \cdot v^{-s} + C' \cdot v^{-s}=\Phi_{i+1} \cdot v^{-s},
	\end{equation} for $C'=C'(M,e,q_0,m^2,u_1,\uu,vol(R),D,s)>0$ independent of $i$ and for all $(u,v) \in R_{i+1}$. This closes bootstrap \eqref{boostrap3} and also proves the third induction hypothesis \eqref{hyp3}.

	Therefore, we finished proving the induction and we immediately obtain  \eqref{dupsiL12}, \eqref{duphiL12}, \eqref{durL12}, \eqref{dvrL12}, \eqref{phiLinfty2}, \eqref{dvphiLinfty2}, \eqref{QLinfty2} on the entire rectangle $R'$.
	
	Now we turn to the proof of \eqref{dulogomegaL12} and \eqref{OmegaLinfty2}. First, we are going to bootstrap the following estimate: for some $\Delta>0$ to be determined, \begin{equation} \label{bootstrap4}
	\Omega^2 \leq \Delta \cdot  e^{-\frac{\alpha}{2} v}.
	\end{equation}
	
	Now, we use \eqref{Omega} together with \eqref{lambdaglobal}, \eqref{dvphiLinfty2} to get $$ |\partial_u \partial_v \log(\Omega^2)|(u,v) \lesssim v^{1-2s} \cdot \sup_{ v \in [v_1,+\infty)} v^{-1+s} |D_u \phi|(u,v) + \Delta \cdot v^{2-2s}  \cdot  e^{-\frac{\alpha}{2} v}+v^{-2s} \cdot \sup_{ v \in [v_1,+\infty)}|\nu|(u,v),$$ and we can integrate in $v$ to obtain  $$ |\partial_u \log(\Omega^2)|(u,v) \lesssim |\partial_u \log(\Omega^2)|(u,v_1)+ v^{2-2s} \cdot \sup_{ v \in [v_1,+\infty)} v^{-1+s} |D_u \phi|(u,v) + \Delta +  \sup_{ v \in [v_1,+\infty)}|\nu|(u,v).$$
	
	Now we can integrate in $u$ on $[u_1,\uu]$ using \eqref{durL12}, \eqref{dulogomegaL12hyp}, \eqref{duphiL12} to get \begin{equation} \int_{u_1}^{\uu} \sup_{ v \in [v_1,+\infty)}
	|\partial_u  \log(\Omega^2)|(u,v) \lesssim v^{2-2s} + \Delta.
	\end{equation}
	Integrating and using \eqref{OmegaLinfty2hyp}, this implies that, for a constant $E=E(M,e,q_0,m^2,u_1,\uu,v_0,vol(R),D,s)>0$ and for all $(u,v) \in R'$: $$ \Omega^2(u,v) \leq D \cdot e^{-\alpha v + E \cdot ( v^{2-2s} + \Delta)}.$$
	
	Then, chose $\Delta=2D$ and $v_1$ large enough so that $E \cdot ( v_1^{1-2s} + 2D \cdot v_1^{-1}) < \frac{\alpha}{100}$. 
	
	This closes bootstrap \eqref{bootstrap4} and also proves  \eqref{dulogomegaL12} and \eqref{OmegaLinfty2}.
	
	Now we want to derive an estimate for the ingoing derivative of $\psi$: we write, using \eqref{Field3}, and also \eqref{Radius3}: \begin{equation} \label{Field4}
	\partial_v( D_u \psi )= -\frac{\Omega^2}{4r} \phi \cdot \left[1-\frac{Q^2}{r^2} +iq_0 Q \phi\right] -\frac{\lambda \nu}{r} \phi,
	\end{equation} and now, we can write the estimate, using \eqref{OmegaLinfty2}, \eqref{phiLinfty2hyp}, \eqref{QLinfty2}, \eqref{lambdaglobal}: $$ |\partial_v( D_u \psi )| \lesssim e^{-0.99 \alpha \cdot v} v^{3-3s}+ v^{1-3s} \cdot \sup_{ v \in [v_1,+\infty)}|\nu|(u,v),$$ which eventually gives \eqref{dupsiL12}, after integrating in $u$ and $v$, using \eqref{durL12} and the fact that $s>\frac{3}{4}>\frac{2}{3}$.
	
\end{proof}

\begin{rmk}
	We claim that the proof of Lemma \ref{finitevolumeestlemma2}, together with a bit of algebra (see \cite{Moi3Christoph} for the full proof ), implies that for all $u_1<u_2<u_{\CH}$ such that $[u_1,u_2]\times [v_1,+\infty)$ is trapped, then \begin{equation} \label{MoiChristoph2}
	\int_{u_1}^{u_{2}} \sup_{v\in[v_1,+\infty)}  |\partial_u \left(\log(\Omega^2)+ |\phi|^2(u,v) +\int_{u_1}^{u}\frac{\nu}{r} |\phi|^2(u',v) du' \right)|du \leq C,
	\end{equation} for some $C=C(M,e,q_0,m^2,u_1,\uu,v_1,D)>0$. This estimate plays a crucial role in the inextendibility argument of \cite{Moi3Christoph}.
\end{rmk}

\begin{rmk}
A slight modification of the argument allows to prove Lemma \ref{finitevolumeestlemma2} in the case $\frac{1}{2} < s \leq  \frac{3}{4}$. We shall not pursue this route, for simplicity and because in practice we expect $s >  \frac{3}{4}$ in any cases of physical interest.  \color{black}
\end{rmk}

\subsection{Concluding the proof of Theorem \ref{classification}}

Now, we are ready to establish the quantitative estimates of Theorem \ref{classification}: \begin{prop}
	For all $u_1 <u_2<u_{\CH}$, there exists $C(M,e,q_0,m^2,u_1,u_2,s)>0$ such that for all  $u_1  \leq u \leq u_2$ and $v \geq v(u)$, the estimates \eqref{trappedestimate1}, \eqref{trappedestimate2}, \eqref{trappedestimate3}, \eqref{trappedestimate4}, \eqref{trappedestimate5}, \eqref{trappedestimate6}, \eqref{trappedestimate7}, \eqref{trappedestimate8}, \eqref{trappedestimate9}, \eqref{trappedestimate11}, \eqref{trappedestimate10} hold.
\end{prop}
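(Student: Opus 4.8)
The plan is to obtain the estimates \eqref{trappedestimate1}--\eqref{trappedestimate10} as a consequence of the trapped neighborhood of Theorem \ref{trappedtheorem} and the $L^1$--$L^\infty$ machinery of Lemma \ref{finitevolumeestlemma2}, combined with the rigidity estimates of Theorem \ref{classificationtheorem} near the static set. Fix $u_1<u_2<u_{\CH}$. By Theorem \ref{trappedtheorem} (Proposition \ref{trappedprop}), for each $u\in[u_1,u_2]$ there is $v(u)$ with $\{u\}\times[v(u),+\infty)\subset\mathcal{T}$, and a standard compactness/monotonicity argument (using openness of $\mathcal{T}$ and the Raychaudhuri equations \eqref{RaychU}, \eqref{RaychV} to propagate trappedness in both null directions) upgrades this to a single $v_0=v_0(u_1,u_2)$ with the whole rectangle $R:=[u_1,u_2]\times[v_0,+\infty)\subset\mathcal{T}$. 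By Lemma \ref{trappedvolume}, $\mathrm{vol}(R)<+\infty$.

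\textbf{Verifying the hypotheses of Lemma \ref{finitevolumeestlemma2}.} The next step is to check that the past boundary bounds \eqref{dulogomegaL12hyp}--\eqref{OmegaLinfty2hyp} hold on $[u_1',u_2]\times\{v_0\}$ for a slightly smaller $u_1'$, so that Lemma \ref{finitevolumeestlemma2} applies. This requires a case split on the type of $\CH$. If $\CH$ is of static type, all of $[u_s,u_2]$ lies in (the closure of) the static set, so Theorem \ref{classificationtheorem} directly supplies \eqref{rigidity1}--\eqref{rigidity13}, which are far stronger than what Lemma \ref{finitevolumeestlemma2} needs — one reads off the conclusion immediately, with the exponential decay $\Omega^2\lesssim e^{1.99K_-v}$, $|\lambda|\lesssim v^{-2s}$, etc. If $\CH$ is of dynamical type, one starts from the estimates of \cite{Moi} (Proposition \ref{LBprop}) on $\{u\le u_s\}$, which give exactly \eqref{dulogomegaL12hyp}--\eqref{OmegaLinfty2hyp} (with $\alpha=1.98|K_-|$, say) on an outgoing cone $C_{u_s}$, and then Lemma \ref{finitevolumeestlemma2} propagates them to $[u_s,u_2]\times[v_1,+\infty)$; combined with Lemma \ref{massblowuplemma}/Proposition \ref{trappedprop} this covers $[u_1,u_2]$. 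If $\CH$ is of mixed type, one uses Theorem \ref{classificationtheorem}/Corollary \ref{threetypes} up to $u_T+\epsilon$ and then Lemma \ref{finitevolumeestlemma2} for $u\ge u_T+\epsilon$, patching at the overlap. In every case the conclusions \eqref{dvrL12}, \eqref{duphiL12}, \eqref{durL12}, \eqref{phiLinfty2}, \eqref{dvphiLinfty2}, \eqref{QLinfty2}, \eqref{OmegaLinfty2}, \eqref{dulogomegaL12}, \eqref{dupsiL12} of Lemma \ref{finitevolumeestlemma2} translate term-by-term into \eqref{trappedestimate1}--\eqref{trappedestimate8} (recalling $\psi=r\phi$, $\frac12<s<1$, and $r$ bounded above and below by Lemma \ref{easylemma}): \eqref{trappedestimate5} from \eqref{dvrL12}; \eqref{trappedestimate3}, \eqref{trappedestimate6} from \eqref{duphiL12}, \eqref{phiLinfty2} integrated in $u$; \eqref{trappedestimate4}, \eqref{trappedestimate2} from \eqref{durL12}, \eqref{dupsiL12}; \eqref{trappedestimate7} from \eqref{dvphiLinfty2}; \eqref{trappedestimate8} from \eqref{QLinfty2}; \eqref{trappedestimate1} from \eqref{dulogomegaL12}.

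\textbf{The $\Omega^2$, $\partial_v\log\Omega^2$ and Ricci estimates.} To get the two-sided bound \eqref{trappedestimate9} and \eqref{trappedestimate11} one integrates \eqref{Omega3} (or \eqref{Omega}) in $v$: the right-hand side is controlled by \eqref{trappedestimate5}, \eqref{trappedestimate7}, \eqref{trappedestimate8} and the already-established one-sided bound, giving $|\partial_v\log(\Omega^2)-2K_-|\lesssim v^{1-2s}$ once $v_0$ is large — and then integrating this yields \eqref{trappedestimate9} with the exponents $2.01K_-$, $1.99K_-$ (the $0.01$ slack absorbing the $\int v^{1-2s}$). The lower bound half of \eqref{trappedestimate9} uses the Raychaudhuri monotonicity $v\mapsto \Omega^2/|\lambda|$ non-increasing together with the pointwise lower bound on $|\lambda|$ coming from \eqref{lambdaus}. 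Finally \eqref{trappedestimate10}: since $Ric(V,V)=Ric(\Omega^{-2}\partial_v,\Omega^{-2}\partial_v)=2\Omega^{-4}|D_v\phi|^2=2\Omega^{-4}|\partial_v\phi|^2$ in the gauge \eqref{electrogauge}, the lower bound \eqref{phiVLBlowerbound}/\eqref{lambdaus} gives $\int_{v(u)}^v|\partial_v\phi|^2(u,v')\,dv'\gtrsim v^{-p}$ with $p<2s$, and then $\int Ric(V,V)\,dv'\gtrsim \inf\Omega^{-4}\cdot v^{-p}\gtrsim e^{1.98|K_-|v}$ after using \eqref{trappedestimate9} and absorbing the polynomial factor, exactly as in Lemma \ref{localmassblowupdynamical}.

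\textbf{Main obstacle.} The delicate point is the dynamical-type case, where $\phi$ and $Q$ are \emph{not} bounded but only grow polynomially ($|\phi|\lesssim v^{1-s}$, $|Q|\lesssim v^{2-2s}$): this is precisely the difficulty Lemma \ref{finitevolumeestlemma2} was built to handle via its weighted induction on small $u$-rectangles, but one must be careful that the weights $v^{-1+s}$, $v^{-2+2s}$ are exactly compatible with the integrability threshold $s>\tfrac34$ in the source terms of \eqref{Field}, \eqref{Field3}, \eqref{Field4} — losing this would break the closing of the bootstraps. The second subtlety is the patching across the transition time $u_T$ in the mixed case: one needs the estimates of Theorem \ref{classificationtheorem} (valid up to $u_T+\epsilon_T$) and of Lemma \ref{finitevolumeestlemma2} (valid on any trapped rectangle away from the endpoint) to be mutually consistent on the overlap $[u_T,u_T+\epsilon_T]$, which holds because both reduce to the same propagation equations, but it must be spelled out. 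Everything else is a routine, if lengthy, integration of the structure equations of section \ref{geometricframework} against the bounds just listed.
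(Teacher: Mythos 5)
Your treatment of \eqref{trappedestimate1}--\eqref{trappedestimate9} and \eqref{trappedestimate11} follows essentially the paper's route (trapped rectangle from Theorem \ref{trappedtheorem}, Lemma \ref{finitevolumeestlemma2} with past-boundary data supplied by Proposition \ref{LBprop} on the cone $C_{u_s}$ plus compactness on the ingoing segment, then the wave equation \eqref{Omega} integrated \emph{in $u$} from $C_{u_s}$ -- not in $v$ as you wrote -- to get \eqref{trappedestimate11} and hence both sides of \eqref{trappedestimate9}; your alternative lower bound via the monotonicity of $\Omega^2/|\lambda|$ does not give a lower bound on $\Omega^2$, but it is not needed). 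Your case split on the type of $\CH$ when verifying the hypotheses of Lemma \ref{finitevolumeestlemma2} is redundant, since the cone estimates of Proposition \ref{LBprop} hold irrespective of the type, but it is not an error.

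The genuine gap is in \eqref{trappedestimate10}. You invoke \eqref{phiVLBlowerbound} and \eqref{lambdaus} to lower-bound $\int |\partial_v\phi|^2(u,\cdot)$, but these bounds are established only for $u\leq u_s$ (and, via \eqref{rigidity4}, on a neighborhood of the static set); for a dynamical or mixed Cauchy horizon and $u_s<u\leq u_2$ -- exactly the regime this proposition is designed to reach -- no transversal lower bound at the fixed retarded time $u$ is available, and you give no mechanism to propagate one. Moreover, even where the tail bound holds, the chain $\int_{v(u)}^{v} Ric(V,V)\,dv' \gtrsim \inf\Omega^{-4}\cdot v^{-p}\gtrsim e^{1.98|K_-|v}$ is not valid: $\Omega^{-4}(u,\cdot)$ is increasing, so its infimum over $[v(u),v]$ is $O(1)$, and pulling it out of the integral yields only a constant; to see exponential growth one must exploit that the $|\partial_v\phi|^2$-mass persists at arbitrarily late times, where it is weighted by the exponentially large factor. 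The paper's proof supplies precisely the missing mechanism: integrating \eqref{Radius3} in $u$ with the $L^1$ bounds of Lemma \ref{finitevolumeestlemma2} shows $|r\lambda(u,v)-r\lambda(u_s,v)|\lesssim e^{Cv^{2-2s}}\Omega^2(u_s,v)$, so the dyadic lower bound \eqref{lambdadyadiclowerbound} at $u_s$ transfers to $\frac{|\lambda|}{\Omega^2}(u,v_n)\gtrsim e^{1.98|K_-|v_n}$ for \emph{any} $u\in[u_1,u_2]$ (the error being exponentially negligible against $\Omega^{-2}(u,v_n)$ by \eqref{OmegaLinfty2}); the Raychaudhuri monotonicity of $|\lambda|/\Omega^2$ upgrades this to all $v$, and integrating \eqref{RaychV} converts it into $\int \Omega^{-2}|\partial_v\phi|^2\,dv'\gtrsim e^{1.97|K_-|v}$, whence \eqref{trappedestimate10} follows from $\Omega^{-2}\gtrsim 1$. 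Without this propagation-in-$u$ of the $\lambda$ lower bound (or an equivalent substitute), your argument only proves \eqref{trappedestimate10} where the local results of \cite{Moi} or the static-neighborhood estimates already apply, which defeats the purpose of the proposition.
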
 \begin{proof}
For some $u_1 < u_2<u_{\CH}$, and defining $v_s:= v_{\gamma}(u_s)$, we start with data on $[u_1,u_2] \times \{v_s\} \cup \{u_1\} \times [v_s,+\infty)$. Using the estimates of section \ref{LB} and a standard well-posedness result in the interior of the space-time, we get that  \eqref{dulogomegaL12hyp}, \eqref{duphiL12hyp}, \eqref{durL12hyp}, \eqref{dvrL12hyp}, \eqref{phiLinfty2hyp}, \eqref{QLinfty2hyp}, \eqref{OmegaLinfty2hyp} are satisfied for some $D=D(M,e,q_0,m^2,u_1,u_2)>0$ and we apply Lemma \ref{finitevolumeestlemma2} on the corresponding rectangle. This gives immediately \eqref{trappedestimate1}, \eqref{trappedestimate2}, \eqref{trappedestimate3}, \eqref{trappedestimate4}, \eqref{trappedestimate5}, \eqref{trappedestimate6}, \eqref{trappedestimate7}, \eqref{trappedestimate8}.

Additionally, we also have, for all $v \geq v_s$:  $$ |\partial_v \log(\Omega^2)(u_s,v) -2K_-(M,e) | \leq D \cdot v^{1-2s},$$ hence, integrating \eqref{Omega} in $u$, we get, for all $u_1 \leq u \leq u_2$: \begin{equation*} \begin{split}
|\partial_v \log(\Omega^2)(u,v) -2K_-(M,e) | \leq D \cdot v^{1-2s} + 2 \sup_{u' \in [u_s,u]} |\partial_v \phi|(u',v) \cdot \int_{u_s}^u |D_u \phi|(u',v)du'+ (u-u_{s}) \cdot \sup_{u' \in [u_s,u]} \Omega^2 \cdot ( \frac{1}{2r^2}+  \frac{Q^2}{r^4})(u',v) \\+ 2 \sup_{u' \in [u_s,u]} \frac{|\lambda|}{r^2}(u',v) \cdot \int_{u_s}^u |\nu|(u',v) du' \lesssim    D \cdot v^{1-2s}+  C^2 \cdot v^{1-2s}+ C^2 \cdot (u-u_s) \cdot e^{1.99 K_- v}+ C^2 \cdot  v^{-2s} \lesssim v^{1-2s}.
\end{split}\end{equation*} Hence, possibly for a different constant $D$ still depending on $u_1$ and $u_2$, \eqref{trappedestimate9} and \eqref{trappedestimate11}  hold.

Then, recall that $V = \Omega^{-2} \partial_v$, $Ric(V,V) = \Omega^{-4} |\partial_v \phi|^2$. Integrating \eqref{Radius3} with the help estimate \eqref{dulogomegaL12}, \eqref{phiLinfty2}, \eqref{QLinfty2}, we obtain for some $C>0$: $$ |r\lambda(u,v)-r\lambda(u_s,v)| \lesssim e^{ C \cdot v^{2-2s}}  \Omega^2(u_s,v),$$ which also implies, making use of the fact that $r$ is lower bounded: $$| \frac{|\lambda|(u,v)}{\Omega^2(u,v)}- \frac{r|\lambda|(u_s,v)}{r\Omega^2(u,v)}| \lesssim e^{2C \cdot v^{2-2s}}.$$ Then, we use \eqref{lambdadyadiclowerbound} for a $\alpha$-adic sequence $v_{n}= \alpha^{n-1} v_1$, $\alpha=1.0001$, together with \eqref{OmegaLinfty2} to get  $$ \frac{|\lambda|(u,v_n)}{\Omega^2(u,v_n)}\gtrsim v_n^{-1-p} \cdot e^{1.99 |K_-| v_n} -e^{2C \cdot v_n^{2-2s}} \gtrsim e^{1.98 |K_-| v_n}.$$

Then, using the monotonicity of \eqref{RaychV}, we can immediately say that for all $v\geq v_0$, \begin{equation}
\frac{|\lambda|(u,v)}{\Omega^2(u,v)} \gtrsim e^{1.97\color{black} |K_-| v }.
\end{equation}
Then, using \eqref{RaychV}, we see that this implies $$ \int_{v(u)}^{v} \frac{ |\partial_v \phi|^2}{\Omega^2}(u_p,v')dv' \gtrsim e^{1.97 |K_-| v }.$$

Then, using the lower bound $\Omega^{-2}(u,v) \gtrsim \Omega^{-2}(u,v(u)) \gtrsim 1$, we immediately get \eqref{trappedestimate10} as $$  \int_{v(u)}^{v}Ric(V,V)(u,v')dv' = \int_{v(u)}^{v} \frac{ |\partial_v \phi|^2}{\Omega^4}(u,v')dv' \gtrsim  \int_{v(u)}^{v}\frac{ |\partial_v \phi|^2}{\Omega^2}(u,v')dv' \gtrsim e^{1.97\color{black} |K_-| v }.$$

\end{proof}

As a corollary of the estimates, we obtain a reinforcement of Theorem \ref{propagationtheorem}, which completes the proof of Theorem \ref{classification}:

\begin{cor}
	Assume that \underline{one} of the following conditions holds for some $u_0 <u_{\CH}$: \begin{equation} \label{condition1a}
	\limsup_{v \rightarrow+\infty} \rho(u_0,v)=+\infty,
	\end{equation}  \begin{equation} \label{condition2a}
	\limsup_{v \rightarrow+\infty} \varpi(u_0,v)=+\infty,
	\end{equation} \begin{equation} \label{conditiona3}
	\int_{v_0}^{+\infty} \kappa(u_0,v) <+\infty,
	\end{equation} \begin{equation} \label{condition4a}
	\limsup_{v \rightarrow+\infty} |\nu|(u_0,v) >0,
	\end{equation}\begin{equation} \label{condition5a}
	\limsup_{v \rightarrow+\infty} |\phi|^2(u_0,v)+ \limsup_{v \rightarrow+\infty} |Q|(u_0,v)=+\infty.
	\end{equation}
	Then, in addition to the results of Proposition \ref{propprop}, we also have that for all $u \geq u_0$:  \begin{equation} \label{result3a}
	\lim_{v \rightarrow+\infty} |\nu|(u,v) >0.
	\end{equation}  
\end{cor}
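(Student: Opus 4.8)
The goal is to upgrade the conclusion of Proposition \ref{propprop}: having already established that $\lim_{v\to+\infty}\rho(u,v)=+\infty$ for all $u\geq u_0$, I want to additionally extract that $\lim_{v\to+\infty}|\nu|(u,v)$ exists and is strictly positive. The plan is to first observe that any of conditions \eqref{condition1a}--\eqref{condition5a} puts us, by Proposition \ref{propprop}, in the situation where $\CH$ is of dynamical type or of mixed type with $u_T<u_0$; in either case $[u_0,u_2]$ lies in the Dafermos set $\CH-\mathcal{S}_0$ and, by Proposition \ref{trappedprop}, for any $u_0<u_2<u_{\CH}$ there is $v_1$ with $[u_0,u_2]\times[v_1,+\infty)\subset\mathcal{T}$. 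Thus the hypotheses \eqref{dulogomegaL12hyp}--\eqref{OmegaLinfty2hyp} of Lemma \ref{finitevolumeestlemma2} can be verified on the past boundary of this trapped rectangle (using the estimates of section \ref{LB} on $\{u_1\}\times[v_s,+\infty)$ and local well-posedness in the interior to pass to a general $u_1\geq u_s$, exactly as in the proof preceding this corollary), so all the $L^1$ and $L^\infty$ conclusions of Lemma \ref{finitevolumeestlemma2} hold on $[u_0,u_2]\times[v_1,+\infty)$.

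Next I would fix $u_0\leq u<u_{\CH}$, pick $u_2$ with $u<u_2<u_{\CH}$, and work on the trapped rectangle $R=[u_0,u_2]\times[v_1,+\infty)$. The key equation is \eqref{Radius3} in the form $\partial_v(-r\nu)=\frac{\Omega^2}{4}(1-\frac{Q^2}{r^2}-m^2r^2|\phi|^2)$. Using \eqref{OmegaLinfty2} ($\Omega^2\lesssim e^{-0.99\alpha v}$, which decays exponentially) together with \eqref{QLinfty2} and \eqref{phiLinfty2} (so that $\frac{Q^2}{r^2}+m^2r^2|\phi|^2\lesssim v^{4-4s}$ grows only polynomially, and $r$ is bounded below by Lemma \ref{easylemma}), the right-hand side is integrable in $v$ over $[v_1,+\infty)$ and in fact $|\partial_v(r\nu)|(u,v)\lesssim e^{-0.98\alpha v}$. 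Hence $r\nu(u,v)$ converges as $v\to+\infty$ to some limit $l(u)\leq 0$, and since $r$ extends continuously and positively to $r_{CH}(u)>0$ on $[u_0,u_2]$ (Theorem \ref{classification}, dynamical/mixed part), $\nu(u,v)$ itself converges to $l(u)/r_{CH}(u)=:\nu_{CH}(u)\leq 0$. This gives existence of the limit; it remains to rule out $\nu_{CH}(u)=0$.

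For strict positivity of $|\nu_{CH}(u)|$, I would run the same contradiction argument used in Lemma \ref{<us}: if $\nu_{CH}(u)=0$ for some $u\in[u_0,u_2]$, then combining $\rho=\frac r2(1+\frac{|\lambda||\nu|}{\Omega^2})$ with $r$ bounded and the decay $|\nu|(u,v)=o(1)$ forces a bound on $\frac{|\lambda|(u,v)}{\Omega^2(u,v)}$ that is incompatible with the lower bound $\frac{|\lambda|(u,v)}{\Omega^2(u,v)}\gtrsim e^{1.97|K_-|v}$ derived in the proof of the quantitative estimates (or, more directly, with $\int_{v}^{+\infty}\kappa(u,v')dv'<+\infty$, equivalently \eqref{result3}, together with the finiteness of $r$). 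Concretely: from \eqref{result3} we know $\int^{+\infty}\kappa(u,v')dv'<+\infty$, i.e.\ $\int^{+\infty}\frac{-\Omega^2}{4\nu}(u,v')dv'<+\infty$; but if $r|\nu|(u,v)\to 0$ then $\kappa=\frac{-\Omega^2}{4\nu}=\frac{-r\Omega^2}{4r\nu}$ has denominator tending to $0$, so $\int^{+\infty}\frac{-\partial_v\Omega^2}{r|\nu|}\gtrsim\int^{+\infty}\kappa\cdot(-\partial_v\log\Omega^2)$ and using $-\partial_v\log\Omega^2\to 2|K_-|>0$ (from \eqref{trappedestimate11}) one sees $\int^{+\infty}\kappa(u,v')dv'$ must compare with $\int^{+\infty}\frac{-\partial_v\Omega^2(u,v')}{r|\nu|(u,v')}dv'=+\infty$ since the integrand $\frac{-\partial_v\Omega^2}{r|\nu|}$ can be written as $-\partial_v(\text{bounded})\cdot(\text{unbounded})$ — a contradiction. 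Hence $\nu_{CH}(u)<0$, i.e.\ $\lim_{v\to+\infty}|\nu|(u,v)>0$, which is \eqref{result3a}. The main obstacle is ensuring the contradiction step is airtight: one must be careful that the blow-up of $\frac{1}{r|\nu|}$ genuinely defeats the integrability of $\kappa$, which is why invoking the already-proven sharp lower bound $\frac{|\lambda|}{\Omega^2}\gtrsim e^{1.97|K_-|v}$ (valid on the trapped rectangle) is the cleanest route — it immediately forces $\rho(u,v)\gtrsim|\nu|(u,v)\cdot e^{1.97|K_-|v}$, and since $\rho$ is finite at each finite $v$ while $\rho\to+\infty$, $|\nu|$ cannot decay to zero faster than exponentially, yet the $\partial_v(r\nu)$ estimate shows $|\nu|$ decays exactly like $\Omega^2$-integrable tails unless $l(u)\neq 0$; reconciling these two forces $l(u)<0$.
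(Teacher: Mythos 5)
Your setup matches the paper's: the quantitative estimates on a trapped rectangle $[u_0,u_2]\times[v_1,+\infty)$ (which you do not even need to re-derive via Lemma \ref{finitevolumeestlemma2} — they are already recorded as \eqref{trappedestimate1}–\eqref{trappedestimate11}), and the convergence of $r\nu(u,\cdot)$ obtained by integrating \eqref{Radius3}, are exactly the right starting point. The genuine gap is in the step ruling out $\nu_{CH}(u)=0$, and none of the three contradictions you sketch closes it. (i) Mass blow-up together with $|\nu|=o(1)$ does not ``force a bound on $|\lambda|/\Omega^2$'': $\rho\to+\infty$ only forces $|\nu||\lambda|/\Omega^2\to+\infty$, which is perfectly compatible with the lower bound $|\lambda|/\Omega^2\gtrsim e^{1.97|K_-|v}$; to contradict blow-up you need an \emph{upper} bound on $\rho$, hence an upper bound on $\kappa^{-1}=4|\nu|/\Omega^2$, and a lower bound on $|\lambda|/\Omega^2$ is of no use for that (your final ``cleanest route'' has the same directional problem: $\rho\gtrsim|\nu|e^{1.97|K_-|v}$ and $\rho\to\infty$ contradict nothing). (ii) The ``concretely'' step rests on the false principle that $-\partial_v(\mathrm{bounded})\cdot(\mathrm{unbounded})$ has divergent integral — take $\Omega^2=e^{-v}$ and $r|\nu|=e^{-v/2}$; since both $\Omega^2$ and $r|\nu|$ tend to $0$, divergence of $\int\kappa\,dv$ is purely a question of relative rates and cannot be settled softly. (iii) Your quantitative bound does not settle it either, because $|\partial_v(r\nu)|\lesssim e^{-0.98\alpha v}$ loses exponentially: with $\alpha\approx 1.99|K_-|$ (the rate actually available) and the lower bound $\Omega^2\gtrsim e^{-2.01|K_-|v}$ of \eqref{trappedestimate9}, integrating from $v=+\infty$ only gives $\kappa^{-1}\lesssim e^{c|K_-|v}$ for some small $c>0$, which is useless against $|\lambda|\lesssim v^{-2s}$.

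What the paper does at this point is to avoid any exponential loss: assuming (for contradiction) $\liminf_{v\to\infty}|\nu|(u,v)=0$, so that $r\nu\to 0$, integrate \eqref{Radius3} from $v=+\infty$ while comparing $\Omega^2$ with $-\partial_v\Omega^2$ through \eqref{trappedestimate11} (for $v$ large, $-\partial_v\log\Omega^2\geq |K_-|$, so $\int_v^{+\infty}\Omega^2(u,v')(v')^{q}dv'\lesssim \Omega^2(u,v)v^{q}$ by the integration-by-parts argument of Lemma \ref{<us}); since $\tfrac{Q^2}{r^2}+m^2r^2|\phi|^2\lesssim v^{4-4s}$ by \eqref{trappedestimate6}, \eqref{trappedestimate8}, this yields $|\nu|(u,v)\lesssim \Omega^2(u,v)\,v^{4-4s}$, hence $\kappa^{-1}\lesssim v^{4-4s}$, hence by \eqref{trappedestimate5} $\tfrac{2\rho}{r}-1=\kappa^{-1}|\lambda|\lesssim v^{4-6s}$, which is bounded since $s>\tfrac34$ — contradicting the blow-up of $\rho$ from Proposition \ref{propprop} (equivalently, $\kappa\gtrsim v^{4s-4}$ with $4s-4>-1$ contradicts \eqref{result3}). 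So the architecture of your argument is the paper's, but the decisive quantitative step — the sharp tail estimate giving $|\nu|\lesssim\Omega^2$ up to polynomial factors, and the correct direction of the resulting contradiction — is missing or incorrect as written.
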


\begin{proof} By contradiction, assume that there exists $u \geq u_0$ such that \begin{equation} \label{contrad}
	\liminf_{v \rightarrow+\infty} |\nu|(u,v) =0.
	\end{equation}  In view of \eqref{Radius3}, \eqref{trappedestimate9}, \eqref{trappedestimate8}, \eqref{trappedestimate6}, $ v \rightarrow r |\nu|(u,v)$ is integrable and moreover,  integrating as we did before, using \eqref{trappedestimate11}:  $$ |\nu|(u,v) \lesssim \Omega^2(u,v),$$
	hence $\kappa^{-1} \lesssim 1$, hence $v \rightarrow \rho(u,v)$ is bounded, also using \eqref{trappedestimate4}. This contradicts the mass blow up of Proposition \ref{propprop}.
\end{proof}

To conclude the proof of Theorem \ref{classification}, notice that \eqref{conditiona3} is satisfied for all $u < u_{\CH}$ if $\CH$ is of dynamical type, hence \eqref{result3a} is true for all $u < u_{\CH}$. If $\CH$ is of mixed type, then \eqref{conditiona3} is satisfied for all $u>u_T$ then \eqref{result3a} is true for all $u > u_T$.

\section{Global inextendibility across the Cauchy horizon and proof of \\ Theorem \ref{conditionnalSCCtheoremtwoended}, Theorem \ref{CHinexttheorem} and Theorem \ref{conditionnalSCCtheoremoneended}} \label{inextproofsection}

In this section, we present a geometric proof of $C^2$-future-inextendibility under various assumptions. We only require very soft geometric arguments, as the quantitative estimates were already obtained in section \ref{quantitative}. While these quantitative estimates are very different from their uncharged counterparts, in the absence of certain simplifying mechanisms such as monotonicity, the geometric argument is extremely similar to what was used in \cite{JonathanStab} to prove $C^2$-future-inextendibility in the uncharged case. In this section, we rely on the blow up of the Ricci curvature given by \eqref{trappedestimate10} to obtain $C^2$-future-inextendibility and for this purpose, we adapt marginally the argument of Luk and Oh from \cite{JonathanStab} to our setting.

\subsection{$C^2$-inextendibility in the two-ended case}

We start with the two-ended case. First, we need a result from \cite{JonathanStab}: \begin{prop}[\cite{JonathanStab}, section 11] \label{geomextendJonathan}
	
	We consider $(M,g_{\mu \nu}, \phi,F)$ the maximal development of smooth, spherically symmetric and admissible \textbf{two-ended} initial data.	Assume that $(M,g)$ is $C^2$-future-extendible in the sense of Definition \ref{generalinext}.
	
	Then \footnote{Strictly speaking, the result in \cite{JonathanStab} is Lemma 11.5, which presents an alternative between two possibilities, one of them of the statement of Proposition \ref{geomextendJonathan}, and the other one is proven to be impossible in Lemma 11.6, using soft estimates present in \cite{Kommemi}.}, there exists $p \in \mathcal{CH}_{i^+_1} \cup \mathcal{CH}_{i^+_2}$ and a future-directed null geodesic $\gamma:(-\epsilon,\epsilon) \rightarrow \tilde{M}$ such that  \begin{enumerate}
		\item $\gamma(-\epsilon,0) \subset M$.
		\item $\gamma_{|(-\epsilon,0)}$ is radial.
		\item  $\Pi \circ \gamma_{|(-\epsilon,0)} \subset \{ (u_p,v), v \geq v_0\}$ for $u_p<u_{\CH}$, such that $p=(u_p,+\infty)$ in $(u,v)$ coordinates and $v_0 \in \RR$.
		\item For all $-\epsilon < t <0$, we have $\frac{d}{dt}(\Pi \circ \gamma(t))=c  \cdot \Omega^{-2}(\Pi \circ \gamma(t)) \partial_v$ for some $c>0$.
		
		\item  \label{Riccibounded1} $t \rightarrow Ric(\dot{\gamma}(t),\dot{\gamma}(t))= c^2 \cdot \Omega^{-4} |\partial_v \phi|^2$ is bounded on $(-\epsilon,0)$.
	\end{enumerate}

\end{prop}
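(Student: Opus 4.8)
The plan is to adapt the geometric argument of Luk and Oh (\cite{JonathanStab}, section 11), which is insensitive to the charged/massive nature of the matter since it uses only soft features of the Penrose diagram of Theorem \ref{twoendedapriori} together with the $C^2$ regularity of the hypothetical extension. First I would unpack Definition \ref{generalinext}: a $C^2$-future-extension provides $(\tilde{M},\tilde{g})$ with $\tilde{g}\in C^2$ and an isometric embedding $i$ whose image is proper, so the topological boundary $\overline{i(M)}\setminus i(M)$ in $\tilde{M}$ is nonempty. Because $\tilde{g}$ is $C^2$, its Christoffel symbols are $C^1$ and the geodesic equation is well-posed; hence through a suitably chosen boundary point $p$ — one at which the boundary of $i(M)$ is a differentiable null hypersurface — one can run a future-directed null geodesic $\gamma:(-\epsilon,\epsilon)\to\tilde{M}$ with $\gamma(0)=p$ and $\gamma(-\epsilon,0)\subset i(M)\cong M$. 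This is the content of the ``Lemma 11.5'' alternative: either such a crossing geodesic reaches a point of $\CHone\cup\CHtwo$, or its projection $\Pi\circ\gamma(t)$ limits onto a boundary component where $r\to 0$.

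Next I would establish the two defining features of the geodesic. Spherical symmetry is exploited locally: near $p$ the $SO(3)$ action descends to the quotient, and after projecting by $\Pi$ one may take $\gamma_{|(-\epsilon,0)}$ radial with $\Pi\circ\gamma_{|(-\epsilon,0)}\subset\{(u_p,v):v\ge v_0\}$ for a fixed retarded time $u_p<u_{\CH}$; affinely parametrizing the outgoing radial null direction forces $\frac{d}{dt}(\Pi\circ\gamma)=c\,\Omega^{-2}\partial_v$ for some $c>0$, which has finite affine length to $v=+\infty$ because $\int^{\infty}\Omega^2\,dv<\infty$ by the exponential decay $\Omega^2\sim e^{2K_- v}$, $K_-<0$. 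The boundedness of $Ric(\dot{\gamma},\dot{\gamma})$ (property \ref{Riccibounded1}) is then immediate: the Riemann, and hence Ricci, tensor of the $C^2$ metric $\tilde{g}$ is continuous, therefore bounded on a neighborhood of $p$; since $\dot{\gamma}$ is continuous up to $t=0$ and $\gamma([-\epsilon/2,0])$ is compact, $t\mapsto Ric(\dot{\gamma}(t),\dot{\gamma}(t))$ is continuous on $(-\epsilon,0]$ and hence bounded, and by isometry this agrees on $(-\epsilon,0)$ with the intrinsic quantity $c^2\,\Omega^{-4}|\partial_v\phi|^2$ computed from the Einstein--Maxwell--Klein--Gordon equations for the radial null tangent $V=\Omega^{-2}\partial_v$.

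The decisive step — the analog of ``Lemma 11.6'' — is to rule out the second alternative, namely that $p$ lies on a boundary component other than the Cauchy horizon. Since $\gamma$ is future-directed and exits the maximal development, the causal structure of the Penrose diagram of Theorem \ref{twoendedapriori} confines $\lim_{t\to 0^-}\Pi\circ\gamma(t)$ to the future interior boundary $\mathcal{S}\cup\mathcal{S}_{i^+}\cup\CHone\cup\CHtwo$; the asymptotic components $\mathcal{I}^+$, $i^0$, $i^+$ are reached only at infinite affine parameter and carry no crossable extension, consistent with the achronality condition (condition 1 of Definition \ref{generalinext}). On $\mathcal{S}$ and $\mathcal{S}_{i^+}$ the area-radius $r$ extends continuously to $0$ by Theorem \ref{twoendedapriori}, so I would invoke the soft computation of Kommemi (\cite{Kommemi}, reused in \cite{JonathanStab}): in spherical symmetry the Kretschmann scalar grows like $\varpi^2/r^6$ as $r\to 0$, so a $C^2$ geodesic limiting onto $\{r=0\}$ would encounter unbounded curvature, contradicting the local boundedness of the curvature of $\tilde{g}$ near $p$. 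This eliminates $\mathcal{S}$ and $\mathcal{S}_{i^+}$, leaving $p\in\CHone\cup\CHtwo$. The main obstacle is precisely this reduction to a \emph{radial} crossing geodesic landing on $\CHone\cup\CHtwo$: one must argue that the extension may be taken compatible with the spherical symmetry near $p$ (so that ``radial'' and the tangent formula make sense in $\tilde{M}$) and that the boundary is differentiable at the chosen $p$. Both are the technical heart of the Luk--Oh argument, and they transfer to our setting unchanged, since they rely only on the a priori Penrose diagram of Theorem \ref{twoendedapriori} and on the regularity of $r$, not on the specific matter model.
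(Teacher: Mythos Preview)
Your proposal is correct and follows essentially the same approach as the paper: the paper does not reprove Proposition \ref{geomextendJonathan} but simply invokes the Luk--Oh argument from \cite{JonathanStab}, noting that it transfers verbatim because it uses only spherical symmetry, the a priori boundary decomposition of Theorem \ref{twoendedapriori}, and the Kretschmann blow-up at $\{r=0\}$ from \cite{Kommemi}, none of which depend on the matter model. Your outline of the Lemma 11.5 alternative and the Lemma 11.6 elimination of the $r=0$ boundary via curvature blow-up is exactly what the paper has in mind.
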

While Proposition \ref{geomextendJonathan} is proved in the context of an uncharged and massless scalar fields in \cite{JonathanStab}, it does not rely on the precise matter model, and only assumes spherical symmetry and the a priori boundary characterization in the two-ended case of Theorem \ref{twoendedapriori}, which is also valid for the charged and/or massive scalar field model, c.f. \cite{Kommemi}. Therefore, Proposition \ref{geomextendJonathan} holds in our context, with the same proof.

\begin{prop} \label{Ricciblowupprop} Under the same assumptions as Proposition \ref{geomextendJonathan}, we have the following blow up of the Ricci curvature: $$ \limsup_{v \rightarrow +\infty} Ric(\Omega^{-2} \partial_v, \Omega^{-2} \partial_v)(u_p,v)=+\infty.$$
	This contradicts statement \ref{Riccibounded1} of Proposition \ref{geomextendJonathan}, thus by contradiction $(M,g)$ is $C^2$-future-inextendible.
\end{prop}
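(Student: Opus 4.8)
The statement to prove is that, under the hypotheses of Theorem \ref{previous} in the two-ended case, the Ricci curvature $Ric(\Omega^{-2}\partial_v,\Omega^{-2}\partial_v)$ blows up along the outgoing cone $C_{u_p}$ reaching $\CH$ at the point $p=(u_p,+\infty)$ supplied by Proposition \ref{geomextendJonathan}. Once this blow up is established, statement \ref{Riccibounded1} of Proposition \ref{geomextendJonathan} (boundedness of $t\mapsto Ric(\dot\gamma(t),\dot\gamma(t))$ along the geodesic $\gamma$) is contradicted, because $\dot\gamma$ is, on $(-\epsilon,0)$, a positive multiple of the radial vector field $\Omega^{-2}\partial_v$ restricted to $C_{u_p}$; hence no such $\gamma$ can exist, hence $(M,g)$ admits no $C^2$-future-extension, i.e.\ $(M,g)$ is $C^2$-future-inextendible. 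So the entire content is the curvature blow-up.

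The plan is to invoke the quantitative estimates already assembled in Section \ref{quantitative}, specifically the estimate \eqref{trappedestimate10} of Theorem \ref{classification}. First I would recall that Theorem \ref{classification} (and the remark following it) applies verbatim to $\CHone$ and $\CHtwo$ in the two-ended case, with $\CH$ replaced by the relevant Cauchy horizon and $(u,v)$ chosen so that $\tau=u$, $\varsigma=v$. Since $u_p<u_{\CH}$, pick $u_1<u_p<u_2<u_{\CH}$; by Theorem \ref{classification} there is a constant $C=C(M,e,q_0,m^2,u_1,u_2,s)>0$ and, for each $u\in[u_1,u_2]$, a value $v(u)$ such that $\{u\}\times[v(u),+\infty)\subset\mathcal{T}$ (this last from Theorem \ref{trappedtheorem}), and \eqref{trappedestimate10} gives
\begin{equation*}
\int_{v(u_p)}^{v} Ric(V,V)(u_p,v')\,dv' \geq C\cdot e^{1.98\,|K_-|\,v}
\end{equation*}
for all $v\geq v(u_p)$, where $V=\Omega^{-2}\partial_v$. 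The right-hand side tends to $+\infty$ as $v\to+\infty$, so the integral over $[v(u_p),+\infty)$ diverges; in particular the integrand $v'\mapsto Ric(V,V)(u_p,v')$ cannot be bounded on $[v(u_p),+\infty)$, which is exactly
\begin{equation*}
\limsup_{v\to+\infty} Ric(\Omega^{-2}\partial_v,\Omega^{-2}\partial_v)(u_p,v)=+\infty.
\end{equation*}
This proves the displayed claim of Proposition \ref{Ricciblowupprop}.

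For the final contradiction I would spell out the identification carefully. By Proposition \ref{geomextendJonathan}, items (2)--(4), the geodesic $\gamma$ restricted to $(-\epsilon,0)$ is radial, projects into $\{u_p\}\times[v_0,+\infty)$, and satisfies $\tfrac{d}{dt}(\Pi\circ\gamma(t))=c\,\Omega^{-2}(\Pi\circ\gamma(t))\partial_v$ with $c>0$; hence $\dot\gamma(t)=c\,(\Omega^{-2}\partial_v)$ along $C_{u_p}$, and by the tensorial scaling of $Ric$, $Ric(\dot\gamma(t),\dot\gamma(t))=c^2\,Ric(\Omega^{-2}\partial_v,\Omega^{-2}\partial_v)$ evaluated at $\Pi\circ\gamma(t)=(u_p,v(t))$ with $v(t)\to+\infty$ as $t\to0^-$. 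The blow-up just proved therefore forces $\limsup_{t\to0^-}Ric(\dot\gamma(t),\dot\gamma(t))=+\infty$, contradicting item \ref{Riccibounded1} of Proposition \ref{geomextendJonathan}, which asserted this quantity is bounded on $(-\epsilon,0)$. Since Proposition \ref{geomextendJonathan} showed that any $C^2$-future-extension produces such a $\gamma$, we conclude there is no $C^2$-future-extension, i.e.\ $(M,g)$ is $C^2$-future-inextendible, as claimed.

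I do not anticipate a genuine obstacle here: this proposition is purely a bookkeeping step that packages the hard analytic work of Sections \ref{quantitative} (the estimate \eqref{trappedestimate10}) and \ref{localsection}--\ref{propagationsection} (the mass blow up and trapped neighborhood feeding into it) together with the soft geometric input of Proposition \ref{geomextendJonathan} borrowed from \cite{JonathanStab}. The only points requiring mild care are (i) checking that $u_p<u_{\CH}$ so that the estimates of Theorem \ref{classification}, which are stated away from the future endpoint, genuinely apply at $u_p$ — this is part of the conclusion of Proposition \ref{geomextendJonathan}; and (ii) confirming that $v(u_p)<v_0$ may be arranged or, equivalently, that the blow-up statement is insensitive to the choice of base point $v_0$ versus $v(u_p)$, which is immediate since $Ric(V,V)\geq 0$ by the null energy condition and the integral lower bound already diverges. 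Everything else is the tensor-scaling identity and the divergence of $e^{1.98|K_-|v}$.
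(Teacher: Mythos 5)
Your overall route is the same as the paper's: quote estimate \eqref{trappedestimate10} of Theorem \ref{classification} (valid in the two-ended case by the conventions of section \ref{classificationsection}) on the cone $C_{u_p}$ with $u_p<u_{\CH}$, deduce a pointwise blow-up of $Ric(V,V)$ along that cone, and then contradict item \ref{Riccibounded1} of Proposition \ref{geomextendJonathan} via the scaling $Ric(\dot{\gamma},\dot{\gamma})=c^2\,Ric(\Omega^{-2}\partial_v,\Omega^{-2}\partial_v)$. The geometric bookkeeping (identification of $\dot\gamma$ with $c\,\Omega^{-2}\partial_v$, insensitivity to the base point since $Ric(V,V)\geq 0$) is fine and matches the intended argument.

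There is, however, one step that is wrong as written. You argue: ``the right-hand side tends to $+\infty$, so the integral over $[v(u_p),+\infty)$ diverges; in particular the integrand cannot be bounded.'' Divergence of $\int_{v(u_p)}^{\infty} Ric(V,V)\,dv'$ does \emph{not} imply unboundedness of the integrand (a constant positive integrand already has a divergent integral), so this inference fails. What saves the conclusion is the \emph{rate}: if $Ric(V,V)(u_p,\cdot)\leq M$ on $[v(u_p),+\infty)$, then $\int_{v(u_p)}^{v} Ric(V,V)\,dv' \leq M\,(v-v(u_p))$ grows at most linearly, contradicting the exponential lower bound $C\,e^{1.98|K_-|v}$ of \eqref{trappedestimate10} for $v$ large. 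Equivalently, one can argue as the paper does: by a pigeonhole over dyadic times $v_n=2^n v(u_p)$, \eqref{trappedestimate10} forces points $v_n'\in[v(u_p),v_n]$, $v_n'\to+\infty$, with $Ric(V,V)(u_p,v_n')\geq C\,v_n^{-1}e^{1.98|K_-|v_n}$, whence the $\limsup$ is infinite. (Note also that passing from ``unbounded on $[v(u_p),+\infty)$'' to ``$\limsup_{v\to+\infty}=+\infty$'' implicitly uses that $Ric(V,V)=\Omega^{-4}|\partial_v\phi|^2$ is locally bounded in the interior, which holds by smoothness but deserves a word.) With this one-line repair the proof is complete and coincides with the paper's.
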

\begin{proof} The proof follows immediately from \eqref{trappedestimate10}: by the pigeon-hole principle, there exists a dyadic sequence $v_n = 2^n \cdot v(u)$ such that $Ric(V,V)(u,v_n) \geq C \cdot v_n^{-1} \cdot  e^{1.98 |K_-| \cdot v_n}$ for all $u < u_{\CH}$; so in particular at $(u_p,+\infty) \in \CH$, hence the $\limsup$ is infinite, as claimed, and the contradiction follows.

\end{proof}

Proposition \ref{Ricciblowupprop} provides a proof of Theorem \ref{conditionnalSCCtheoremtwoended}.
\subsection{$C^2$-inextendibility across the Cauchy horizon in the one-ended case} \label{oneendedext1}

As we have seen, the two-ended case follows immediately from \eqref{trappedestimate10}, as we used directly the geometric setting of \cite{JonathanStab}. In the one-ended case, we have to re-prove some of the claims of \cite{JonathanStab}, but the approach is essentially similar and, as before, the key ingredient is the blow up of the Ricci curvature given by \eqref{trappedestimate10}. Before going to the proof, we recall a Lemma from Dafermos--Rendall \cite{DafermosRendall} concerning $C^2$ extensions, which appears also as Lemma 11.2 in \cite{JonathanStab}: \begin{lem}[\cite{DafermosRendall}, Lemma 11.2 in \cite{JonathanStab}] \label{DafermosRendallLemma}If $\tilde{M}$ is a $C^2$ extension, the standard rotations extend continuously to $\partial M$. $\partial M$ is the boundary of $M$ inside a $C^2$ extension $\tilde{M}$.\end{lem}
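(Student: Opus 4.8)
\textbf{Proof proposal for Lemma \ref{DafermosRendallLemma}.}

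The plan is to follow the argument of Dafermos--Rendall \cite{DafermosRendall} (reproduced as Lemma 11.2 in \cite{JonathanStab}): the statement is purely a consequence of the $SO(3)$ symmetry and the $C^2$ regularity of the extension $\tilde M$, so nothing specific to the Einstein--Maxwell--Klein--Gordon matter model is needed. First I would recall that the $SO(3)$-action on $(M,g)$ is by isometries, and since $\tilde M$ is assumed to be a $C^2$ Lorentzian manifold into which $M$ embeds isometrically, one wants to show that each rotation $R \in SO(3)$, viewed as a map $M \to M$, extends to a continuous map on $M \cup \partial M$. The key point is that an isometry of a $C^2$ Lorentzian manifold is determined by its $1$-jet at a single point, and conversely one can control an isometry near a boundary point by a limiting argument using geodesics.

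The main steps, in order: (i) Fix a point $p \in \partial M$ and a sequence $p_n \in M$ with $p_n \to p$ in $\tilde M$. For $R \in SO(3)$, consider $R(p_n) \in M$; since $M$ has compact group orbits (the spheres $\Pi^{-1}(q)$), and since $r$ is continuous on $\bar{\mathcal Q}^+$, the points $R(p_n)$ stay in a region of bounded area-radius, so after passing to a subsequence $R(p_n)$ converges to some point $q \in \overline{M} \subset \tilde M$. (ii) Show the limit $q$ is independent of the subsequence and of the approximating sequence $p_n$: here one uses that $R$ is a $g$-isometry, hence preserves the metric distance (or, more robustly in the Lorentzian setting, preserves geodesics and their affine parametrization), so $d_{\tilde g}$-Cauchy-ness of $(p_n)$ forces $d_{\tilde g}$-Cauchy-ness of $(R(p_n))$ once one has a local comparison between the intrinsic and ambient geometry near $p$; the $C^2$ regularity of $\tilde g$ is exactly what makes the exponential map a $C^1$-diffeomorphism on a neighborhood, giving the needed uniform control. (iii) Define $\bar R(p) := q$; the above shows $\bar R : M \cup \partial M \to \tilde M$ is well-defined, and a standard $\epsilon$-$\delta$ argument (again via $\exp$ in normal coordinates in $\tilde M$) shows it is continuous. (iv) Finally, note that the family $\{\bar R\}_{R \in SO(3)}$ inherits the group law by continuity, so we obtain a continuous $SO(3)$-action on $M \cup \partial M$, which restricts on $\partial M$ to the asserted continuous extension of the rotations.

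The main obstacle is step (ii): one must rule out that the orbit $SO(3)\cdot p_n$ ``degenerates'' or oscillates as $p_n \to p$, i.e. one needs a quantitative statement that an ambient-Cauchy sequence is carried to an ambient-Cauchy sequence by each rotation, despite the rotations being isometries only of the \emph{interior} metric $g$, not a priori of anything defined on $\partial M$. This is handled precisely as in \cite{DafermosRendall}: one works in a $\tilde g$-normal coordinate chart around $p$, uses that $M \cap (\text{chart})$ is dense near $p$, and transfers the $C^2$-control of $\tilde g$ (hence of $\tilde g$-geodesics) to control the $g$-isometry $R$ near $\partial M$ by continuity of the metric up to the boundary in this chart. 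Since this is verbatim the Dafermos--Rendall argument and is already invoked in \cite{JonathanStab}, I would simply cite \cite{DafermosRendall} and \cite[Lemma 11.2]{JonathanStab} for the details rather than reproduce the calculation.
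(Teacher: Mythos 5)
The paper itself offers no proof of this lemma: it is quoted verbatim as a black box from Dafermos--Rendall and Luk--Oh (Lemma 11.2), so simply citing those references would have matched the paper. The problem is that the sketch you do give is not the cited argument, and it also proves the wrong thing. As the lemma is used immediately afterwards (in the corollary on extending $\Pi$, via $r^2=\frac{1}{2}\sum_{i=1}^3 g(O_i,O_i)$ and the fact that $\{\Oo(p),\Ot(p),\Oth(p)\}$ spans a spacelike $2$-plane of $T_p\tilde{M}$), ``the standard rotations extend continuously'' means that the Killing \emph{vector fields} $O_1,O_2,O_3$ extend to continuous vector fields on $i(M)\cup\partial M$. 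Your proposal instead extends the rotation \emph{diffeomorphisms} $R\in SO(3)$ as continuous maps on $M\cup\partial M$; even if that were established, mere continuity of the maps does not give boundary values of the generating vector fields (you would need differentiability in the group direction at boundary points), so the statement actually needed downstream would not follow.

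Moreover, the map-extension scheme has concrete gaps. In step (i), there is no reason for $R(p_n)$ to have a convergent subsequence in $\tilde{M}$: $\tilde{M}$ is neither compact nor complete, and bounded area-radius does not confine the orbit spheres $\Pi^{-1}(\Pi(p_n))$ to a compact subset of $\tilde{M}$, so ``after passing to a subsequence $R(p_n)$ converges'' is unjustified. In step (ii), ``preserves the metric distance'' and ``$d_{\tilde g}$-Cauchy'' have no meaning for a Lorentzian metric; to speak of Cauchy sequences you must introduce an auxiliary Riemannian metric on $\tilde M$, and the rotations are not isometries of any such auxiliary metric, so the isometry property of $R$ buys you nothing there. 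The actual Dafermos--Rendall mechanism avoids all of this: a Killing field $X$ satisfies the second-order linear identity $\nabla_\alpha\nabla_\beta X_\gamma = R_{\gamma\beta\alpha}{}^{\delta}X_\delta$, so along geodesics of the $C^2$ extension (where the curvature is continuous, hence the coefficients are continuous) $X$ and $\nabla X$ obey a linear ODE with continuous coefficients, remain bounded, and extend continuously up to $\partial M$; well-definedness and continuity of the boundary values are then checked by comparing nearby geodesics. If you intend to defer to the literature, state the lemma in the vector-field form and cite it; if you intend to sketch the proof, it should be this ODE-along-geodesics argument rather than a Cauchy-sequence argument for the group action.
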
As a result, we obtain immediately the following corollary on the projection on the quotiented space-time: 

\begin{cor} \label{Piextension}
If $\tilde{M}$ is a $C^2$ extension, the map $\Pi \circ i^{-1}: i(M) \color{black} \rightarrow \mathcal{Q}^+$ extends as a continuous map 	$\tilde{\Pi}: i(M) \cup\check{(\partial M)}  \rightarrow \overline{\mathcal{Q}^+}$, where $\check{(\partial M)}$ is a subset of $\partial M$ of full measure with the property that for all $p\in \check{(\partial M)}$,  $\partial M$ is differentiable at $p$. 
\end{cor}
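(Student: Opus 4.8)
\textbf{Proof plan for Corollary \ref{Piextension}.}
The plan is to invoke Lemma \ref{DafermosRendallLemma} to transfer the continuity statement about the rotation fields on $\partial M$ into a continuity statement about the quotient map $\Pi$, following the argument of Luk--Oh in \cite{JonathanStab}. First I would recall that, by definition of spherical symmetry (section \ref{sphericallysymdatasection}), the $SO(3)$-action on $(M,g)$ is by isometries with spacelike orbits, and $\Pi: M \to \mathcal{Q}^+$ is the associated quotient map; the area-radius $r$ is the function whose value at $\Pi(p)$ records the area of the orbit through $p$. By Lemma \ref{DafermosRendallLemma}, in a $C^2$ extension $\tilde{M}$ the generating Killing fields of the rotation action extend \emph{continuously} to $\partial M$, the topological boundary of $i(M)$ in $\tilde{M}$. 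Since the extension $\tilde{g}$ is $C^2$, hence $C^1$, the pointwise span of these extended fields defines a continuous distribution of (at most two-dimensional) subspaces of $T\tilde{M}$ along $i(M)\cup\partial M$; on $i(M)$ this distribution is exactly the tangent space to the group orbits.

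Next I would identify the ``full-measure'' differentiability subset $\check{(\partial M)} \subset \partial M$: by Rademacher-type regularity of the achronal topological boundary (the boundary of a globally hyperbolic domain inside a manifold is a Lipschitz hypersurface, cf.\ the discussion in \cite{Kommemi} and the use of this fact in \cite{JonathanStab}), $\partial M$ is differentiable at almost every point, and we let $\check{(\partial M)}$ be the set of such points. At each $p \in \check{(\partial M)}$ the tangent hyperplane $T_p(\partial M)$ is well-defined, the continuously-extended rotation fields lie inside it, and the quotient construction can be carried out locally near $p$: one mods out by the (closure of the) local flow of these fields, which is a continuous operation because the fields themselves are continuous up to $p$. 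This produces a continuous extension of $\Pi\circ i^{-1}$ to a neighborhood (within $i(M)\cup\check{(\partial M)}$) of $p$, with values in $\overline{\mathcal{Q}^+} = \mathcal{Q}^+ \cup \mathcal{B}^+ \subset \RR^{1+1}$; patching these local constructions (they agree on overlaps with $\Pi\circ i^{-1}$ on $i(M)$) gives the desired continuous map $\tilde{\Pi}: i(M)\cup\check{(\partial M)} \to \overline{\mathcal{Q}^+}$.

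The main obstacle, I expect, is making precise the sense in which ``quotient by a continuous but only-continuous vector field distribution'' yields a continuous map into the \emph{closed} quotient $\overline{\mathcal{Q}^+}$ rather than merely $\mathcal{Q}^+$ — in particular ensuring that sequences $p_n \to p \in \check{(\partial M)}$ with $p_n \in i(M)$ have $\tilde{\Pi}(p_n)$ converging in $\RR^{1+1}$ to a \emph{boundary} point of $\mathcal{Q}^+$, using that $\overline{\mathcal{Q}^+}$ is a bounded subset of $\RR^{1+1}$ (section \ref{sphericallysymdatasection}) and that the conformal embedding is proper. This is handled exactly as in \cite{JanC0},\cite{JonathanStab}: one uses that $r$ and the null coordinate functions, being built from $\tilde{g}$ and its continuously-extended symmetry data, themselves extend continuously to $\check{(\partial M)}$, so the image point is read off as $(\,r, \text{null coordinates}\,)$ evaluated at the limit; boundedness of $\overline{\mathcal{Q}^+}$ then forces the limit to lie in $\overline{\mathcal{Q}^+}$. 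Since this corollary is a direct transcription of the corresponding step in \cite{JonathanStab} (their Lemma 11.2 and its corollary) and the matter model plays no role — only spherical symmetry and the a priori boundary structure of Theorem \ref{oneendedapriori} are used — I would keep the argument brief and refer to \cite{DafermosRendall} and \cite{JonathanStab} for the details.
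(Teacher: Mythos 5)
There is a genuine gap at exactly the point you flag as ``the main obstacle'' and then dispose of by assertion. The content of the corollary is the \emph{well-definedness} of the boundary value: if $p_n, q_n \in i(M)$ both converge to $p \in \check{(\partial M)}$ in $\tilde{M}$, one must show that $\Pi(p_n)$ and $\Pi(q_n)$ converge to the \emph{same} point of $\overline{\mathcal{Q}^+}$. Your plan of ``modding out by the local flow of the continuously-extended rotation fields'' does not address this: boundary points of $\overline{\mathcal{Q}^+}$ are ideal points of the quotient, not group orbits in $\tilde{M}$, so there is no quotient construction to perform at $p$; and your fallback claim that ``$r$ and the null coordinate functions \ldots themselves extend continuously to $\check{(\partial M)}$'' is circular, since continuous extension of the null coordinates $(u,v)$ to the boundary of $i(M)$ in $\tilde{M}$ \emph{is} the statement being proved (only $r$ extends softly, via $r^2=\frac{1}{2}\sum_i g(O_i,O_i)$ and Lemma \ref{DafermosRendallLemma}). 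In particular nothing in your argument excludes that one sequence projects toward a point of $\CH$ while another projects toward a point of $\mathcal{CH}_{\Gamma}$, or toward two points of $\CH$ with different $u$ — these are precisely the scenarios the proof must rule out, and the one-ended possibility $\bar{q}\in\mathcal{CH}_{\Gamma}$ is why this is not a verbatim transcription of Lemma 11.2 of \cite{JonathanStab}.

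The actual argument runs through several steps you omit. First, one shows $r(p)>0$: otherwise a timelike geodesic entering the $C^2$ extension through $p$ would see a bounded Kretschmann scalar, contradicting its blow-up at $r=0$ (Kommemi's soft estimates). Then, since $r(p)>0$, the extended rotations span a spacelike $2$-plane at $p$, and achronality of $\partial M$ together with the existence of $T_p(\partial M)$ (this is where $\check{(\partial M)}$ is used) yields a past-directed radial null direction $L \notin T_p(\partial M)$, hence a radial null geodesic $\gamma$ through $p$ with $\gamma(0,\epsilon)\subset i(M)$, lying on a single outgoing cone $\{u=u_0\}$. The $v$-coordinates of the two limits are identified by a compactness/properness argument: if they differed, then (using $r(\bar q)>0$) one would have $\bar q \in \mathcal{CH}_{\Gamma}$, making $J^-_{\overline{\mathcal{Q}^+}}(\bar p)\cap J^-_{\overline{\mathcal{Q}^+}}(\bar q)$ compact in $\mathcal{Q}^+$; properness of $\Pi$ (compact $\mathbb{S}^2$ fibers) would then trap $p$ in a compact subset of $i(M)$, contradicting $p\in\partial M$. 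Finally the $u$-coordinate is identified with $u_0$ by a topological argument comparing timelike futures/pasts $I^{\pm}(p_n)$ with a neighborhood of $\Pi\circ\gamma(0,\epsilon)$. Your proposal contains none of the Kretschmann step, the transverse null geodesic, the properness argument, or the $I^{\pm}$ argument, so as written it does not establish the corollary.
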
\begin{proof} 
Recall that $\partial M$ is the topological boundary of $i(M)$ in $\tilde{M}$ and $\overline{\mathcal{Q}^+}$ is the closure in $\RR^{2}$ of the bounded set $\mathcal{Q}^+$. By Lemma 11.1 in \cite{JonathanStab}, $\partial M$ is a locally lipschitz achronal hypersurface in $\tilde{M}$ thus by Rademacher theorem, it is differentiable almost every-where thus $T_p (\partial M)$ exists for almost every $p \in \partial M$ (an argument first given in Lemma 11.5 in \cite{JonathanStab}). We denote  $\check{(\partial M)} \subset \partial M$  the set of all such $p$: this is a set of full measure in $\partial M$.

We will thus construct the extension $\tilde{\Pi}(p)$ for $p\in \partial M$ such that $T_p(\partial M)$ is well-defined.

Consider two sequences $(p_n)_{n \in \mathbb{N}} \in i(M)$ and  $(q_n)_{n \in \mathbb{N}} \in i(M)$ which converge to $p$ in $\tilde{M}$. Since $\overline{\mathcal{Q}^+}$ is compact, we can assume (up to sequence extraction) that both $\Pi(p_n)$ and $\Pi(q_n)$ converge in  $\overline{\mathcal{Q}^+} \subset \RR^2$, with no loss of generality. We will denote $\bar{p}  \in \overline{\mathcal{Q}^+}$ and $\bar{q}  \in \overline{\mathcal{Q}^+}$ their respective limit. We want to prove that $\bar{p}=\bar{q}$. First, we establish some preliminaries.

 By Lemma \ref{DafermosRendallLemma} we know that the Killing rotations $\{ O_1, O_2, O_3\}$ (here $O_i$, vector fields in $i(M)$, are the push-forwards of the Killing rotations in $M$ by the extension map $i$) extend to continuous vector fields $\{ \Oo, \Ot, \Oth\}$ on $i(M) \cup \partial M$.

  In view of the formula $r^2= \frac{1}{2} \sum_{i=1}^3 g(O_i,O_i)$, it is also clear that the area-radius function $r$ extends continuously (abusing notation we will still call $r$ this extension) to $\partial M$ (this was already proven in Lemma 11.3 in \cite{JonathanStab}). 
  
  We claim that $r(p)>0$. Suppose not i.e.\ suppose that $r(p)=0$. Then, by the proof of Lemma 11.5 in \cite{JonathanStab}, there exists a future oriented time-like geodesic $\gamma:(-\epsilon,\epsilon) \rightarrow \tilde{M}$ such that $\gamma(0)=p$ for some $\epsilon>0$. Since $\tilde{M}$ is a $C^2$ manifold, the Kretschmann scalar $R_{\alpha \beta \mu \nu}R^{\alpha \beta \mu \nu}$ must be bounded on $\gamma$. But this contradicts the fact that  $R_{\alpha \beta \mu \nu}R^{\alpha \beta \mu \nu}$ blows up at $\{r=0\}$ (see the proof of Lemma 11.6 in \cite{JonathanStab} for details). Therefore we proved that $r(p)>0$.
  
  Since $r(p)>0$, then by Lemma 11.4 in \cite{JonathanStab}, $\{ \Oo(p), \Ot(p), \Oth(p)\}$ generate a two-dimensional space-like vector sub-space of $T_p \tilde{M}$ (with respect to the metric $\tilde{g}$), that we denote $\tilde{\mathcal{G}}$. We also denote $\tilde{\mathcal{G}}^{\perp}$ the orthogonal of $\tilde{\mathcal{G}}$ with respect to $\tilde{g}$. Since $\tilde{\mathcal{G}}^{\perp}$ is a time-like plane, it contains two non-proportional past directed vector fields that we denote $L$ and $\barL$. Recall that $T_p(\partial M)$ is well-defined: we claim that either $L \notin T_p(\partial M)$ or  $\barL \notin T_p(\partial M)$. Indeed, if both $L \in T_p(\partial M)$ and  $\barL \in T_p(\partial M)$, then $L+\barL \in T_p(\partial M)$ which contradicts the achronality of $\partial M$ (an argument we borrowed from \cite{JonathanStab}). Without loss of generality, assume that $L \notin T_p(\partial M)$.

  With these preliminaries being proven, denote $\bar{p}=(u_{\bar{p}},v_{\bar{p}}) \in \overline{\mathcal{Q}^+} \subset \RR^{2}$, $\bar{q}=(u_{\bar{q}},v_{\bar{q}}) \in  \overline{\mathcal{Q}^+}\subset \RR^{2}$ and assume by contradiction that $\bar{p} \neq \bar{q}$. Without loss of generality, assume that the space-time is one-ended, thus by Theorem \ref{oneendedapriori},  $\overline{\mathcal{Q}^+}$ is given by Penrose diagram of Figure \ref{Fig1} and that $\bar{p} \in \CH$.
  
   We first claim that $v_{\bar{p}}=v_{\bar{q}}$; suppose not: then this implies that $\bar{q} \notin \CH \cup \mathcal{S}_{i^+} $. Since $r(\bar{q})>0$, then by elimination it implies that $\bar{q} \in \mathcal{CH}_{\Gamma}$. Therefore $J^{-}_{\overline{\mathcal{Q}^+}}(\bar{p}) \cap J^{-}_{\overline{\mathcal{Q}^+}}(\bar{q})$ (the intersection of the causal past of $\bar{q}$ and the causal past of $\bar{p}$ in $\overline{\mathcal{Q}^+}$) is a compact subset of the open set $\mathcal{Q}^+$. Since $\mathbb{S}^2$ is a compact set, the projection $\Pi \circ i^{-1} : i(M) \rightarrow \mathcal{Q}^+$ is a proper map, therefore $(\Pi \circ i^{-1})^{-1}(J^{-}_{\overline{\mathcal{Q}^+}}(\bar{p}) \cap J^{-}_{\overline{\mathcal{Q}^+}}(\bar{q}))$ is a compact subset of $i(M) \subset \tilde{M}$. But $p  \in(\Pi \circ i^{-1})^{-1}(J^{-}_{\overline{\mathcal{Q}^+}}(\bar{p}) \cap J^{-}_{\overline{\mathcal{Q}^+}}(\bar{q}))$ (since it belongs to its closure in $\tilde{M}$, and that it is compact set in $i(M)$) and since $p\in \partial M$, any sequence converging to $p$ must leave every compact set in $i(M)$: we reach a contradiction. Thus we proved indeed that $v_{\bar{p}}=v_{\bar{q}}$.
   
   
  It remains to prove that  $u_{\bar{p}}=u_{\bar{q}}$. For this, define $\gamma: (-\epsilon,\epsilon) \rightarrow \tilde{M}$ the unique past directed geodesic such that $\gamma(0)=p$ and $\gamma'(0)=L$.  By the proof of Lemma 11.5 in \cite{JonathanStab}, and since $L\notin T_p( \partial M)$, we know that $\gamma(0,\epsilon) \subset i(M)$ and that $\gamma_{|(0,\epsilon)}$ is radial. Thus, there exists $u_0 \in \RR$ such that $\Pi \circ \gamma_{|(0,\epsilon)} \subset \{ (u_0,v), v\in \mathcal{V}^+(u_0) \} \subset \mathcal{Q}^+$, where $\mathcal{V}^+(u):=\{ v\in \RR/ (u,v) \in \mathcal{Q}^+\}$ is a bounded subset of $\RR$. Now there are three possibilities: $u_p < u_0$, or $u_p >u_0$ or $u_p=u_0$. If $u_p>u_0$, then there an open neighborhood $A$ of $\Pi \circ \gamma(0,\epsilon)$ in $\mathcal{Q}^+$ and $N \in \mathbb{N}$ such that $\bigcup_{n \geq N} I^+_{ \mathcal{Q}^+}(p_n) \cap A = \emptyset$ (here $I^+_{ \mathcal{Q}^+}(p_n)$ denotes the time-like future of $p_n$ in $\mathcal{Q}^+$ and is an open set in  $\mathcal{Q}^+$). Note that the pre-image $I:=(\Pi\circ i^{-1})^{-1}(\bigcup_{n \geq N} I^+_{ \mathcal{Q}^+}(p_n) )$ is an open (as $\Pi\circ i^{-1}$ is continuous on $i(M)$) subset of $i(M)$ and that $p \in \bar{I}$, denoting $\bar{I}$, the closure of $I$ in $i(M)$; note that $ \bar{I} \cap i(M) = I$. Since $p \in \overline{ \gamma(0,\epsilon)}$ as well, there exists a neighborhood $\tilde{V}$ of $p$ in $\tilde{M}$ such that for all open set $ \tilde{U} \subset \tilde{V}$, we have $\tilde{U} \cap \bar{I} \cap i(M)= \tilde{U} \cap I \neq \emptyset$ and $\tilde{U} \cap \overline{ \gamma(0,\epsilon)} \cap i(M)= \tilde{U} \cap  \gamma[0,\epsilon] \neq \emptyset$. Denote $U:= \tilde{U}  \cap i(M)$: this implies that $\Pi\circ i^{-1}(U) \cap \Pi \circ i^{-1}(I) \neq \emptyset$ and $\Pi\circ i^{-1}(U) \cap \Pi \circ i^{-1} \circ \gamma(0,\epsilon) \neq \emptyset$. Now choose $\tilde{U}$ small enough so that $\Pi \circ i^{-1} (U) \subset A$ (this is possible since $A$ is an open set containing $\gamma(0,\epsilon)$). Taking the intersection with $\Pi \circ i^{-1}(I) $, we obtain a contradiction since we have constructed $A$ such that $A \cap \Pi \circ i^{-1}(I) = \emptyset$. If $u_p<u_0$, a similar argument holds replacing the future $I^+(p_n)$ by the past $I^-(p_n) $. Therefore, we proved that $u_{\bar{p}}=u_{\bar{q}}=u_0$.
  
 Thus, we proved that $\bar{p}=\bar{q}$. For all $p \in \breve{(\partial M)}$, we define the extension of $\Pi$ as $\bar{\Pi}(p):=\bar{p}$. By construction, this map is continuous for the topology induced $i(M) \cup \breve{(\partial M)}$. While the proof was done in the one-ended case for the sake of exposition, note that we can repeat the same arguments in the two-ended case with no additional work.

\end{proof} 
 
In what follows, we will identify the manifolds $i(M)$ and $M$, since they are isometric, therefore we also identify $\Pi$ with $\Pi \circ i^{-1}$, thus we view $\tilde{\Pi}$ as an extension of $\Pi$. \color{black}Now we turn to the proof of the main proposition:

\begin{prop} \label{geomextendoneened}
	We consider $(M,g_{\mu \nu}, \phi,F)$ the maximal development of smooth, spherically symmetric and admissible \textbf{one-ended} initial data with a priori boundary $\mathcal{B^+}$ given by the decomposition of Theorem \ref{oneendedapriori}.
	
	Assume that $(M,g)$ is $C^2$-future-extendible across $\CH$ in the sense of Definition \ref{CHinext}; we call the extension $(\tilde{M},\tilde{g})$.
	
	Then, there exists $p \in \CH$ and a future-directed null geodesic $\gamma_1:(-\epsilon,\epsilon) \rightarrow \tilde{M}$ such that  \begin{enumerate}
		\item $\gamma_1(-\epsilon,0) \subset M$.
		\item $\gamma_{1}$ is radial on $(-\epsilon,0)$.
		\item  $\Pi \circ \gamma_{1}(-\epsilon,0) \subset \{ (u_p,v), v \geq v_0\}$ for $u_p<u_{\CH}$, such that $p=(u_p,+\infty)$ in $(u,v)$ coordinates and $v_0 \in \RR$.
		\item For all $-\epsilon < t <0$, we have $\frac{d}{dt}(\Pi \circ \gamma_1(t))=c  \cdot \Omega^{-2}(\Pi \circ \gamma_1(t)) \partial_v$ for some $c>0$.
		\item  \label{Riccibounded2} $t \rightarrow Ric(\dot{\gamma_1}(t),\dot{\gamma_1}(t))= c^2 \cdot \Omega^{-4} |\partial_v \phi|^2$ is bounded on $(-\epsilon,0)$.
	\end{enumerate} 
\end{prop}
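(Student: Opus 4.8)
\textbf{Proof strategy for Proposition \ref{geomextendoneened}.}

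The plan is to adapt the argument that Luk--Oh use in \cite{JonathanStab} to prove the two-ended analogue (Proposition \ref{geomextendJonathan} here), using Corollary \ref{Piextension} to transfer topological information from the extension $\tilde{M}$ down to the quotient $\overline{\mathcal{Q}^+}$. First I would invoke Definition \ref{CHinext}: since $(M,g)$ is assumed $C^2$-future-extendible across $\CH$, there exists $q \in \partial M$ at which $\partial M$ is differentiable, and a continuous curve $\gamma:(-\epsilon,\epsilon) \to \tilde{M}$ with $\gamma(0)=q$, $\gamma(-\epsilon,0)\subset M$, and $\overline{\Pi(\gamma(-\epsilon,0))}^{\mathcal{Q}^+}\cap\CH \neq \emptyset$. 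By shrinking $\epsilon$ I may assume $q \in \check{(\partial M)}$ (the full-measure subset where $\partial M$ is differentiable), so by Corollary \ref{Piextension} the map $\Pi$ extends continuously to $q$, and $\tilde{\Pi}(q) \in \CH$ because the closure condition forces the limit point to lie on $\CH$ and $r(\tilde\Pi(q))>0$ by the Kretschmann-blow-up argument recalled in the proof of Corollary \ref{Piextension}. Set $p := \tilde\Pi(q) = (u_p,+\infty)$ in $(u,v)$ coordinates with $u_p < u_{\CH}$.

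Next I would construct the radial null geodesic $\gamma_1$ reaching $p$ transversally. As in the proof of Corollary \ref{Piextension}, since $r(p)>0$, Lemma 11.4 of \cite{JonathanStab} gives that $\{\tilde O_1(p),\tilde O_2(p),\tilde O_3(p)\}$ span a spacelike 2-plane $\tilde{\mathcal G} \subset T_p\tilde M$; its $\tilde g$-orthogonal complement $\tilde{\mathcal G}^{\perp}$ is a timelike 2-plane containing two non-proportional past-directed null vectors $L,\barL$, and by achronality of $\partial M$ at least one of them, say $L$, is not tangent to $T_p(\partial M)$. Let $\gamma_1$ be the past-directed null geodesic with $\gamma_1(0)=p$, $\dot\gamma_1(0)=L$; then (arguing exactly as in Lemma 11.5 of \cite{JonathanStab}, using $L\notin T_p(\partial M)$) for small $t>0$ we have $\gamma_1(t)\in M$ and $\gamma_1|_{(0,\epsilon)}$ is radial, so $\Pi\circ\gamma_1|_{(0,\epsilon)}$ is contained in a single ingoing or outgoing null ray of $\mathcal{Q}^+$. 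Here I must rule out that this ray is the ingoing ray reaching $\CH$ at $p$ (rather than the outgoing ray $C_{u_p}$): this follows from the causal-geometry argument in the proof of Corollary \ref{Piextension} (the one excluding $u_p \neq u_0$) together with the fact that along an ingoing ray $\partial_u r = \nu < 0$ while $r$ is bounded below on $[u_p,u_p']\times[v_0,+\infty)$ by Lemma \ref{easylemma}, so such a ray cannot close off at $\CH$ consistently with the $\tilde\Pi$-continuity. Hence, after reparametrizing ($t\mapsto -t$) to make $\gamma_1$ future-directed on $(-\epsilon,0)$, we get $\Pi\circ\gamma_1(-\epsilon,0)\subset\{(u_p,v): v\geq v_0\}$ for some $v_0$, and since a future-directed radial null geodesic in these coordinates satisfies the geodesic equation $\frac{d}{dt}(\Pi\circ\gamma_1(t)) = c\,\Omega^{-2}(\Pi\circ\gamma_1(t))\,\partial_v$ for a constant $c>0$ (the standard affine-parametrization computation for the metric $-\frac{\Omega^2}{2}(du\otimes dv + dv\otimes du)$, as in \cite{JonathanStab}), claims (1)--(4) hold.

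Finally, claim (5): since $\tilde g$ is $C^2$ and $\gamma_1$ extends to $t=0$ with $\gamma_1(0)=p\in\tilde M$, the tensor $\mathrm{Ric}(\tilde g)$ is continuous along $\gamma_1$, hence $t\mapsto \mathrm{Ric}(\dot\gamma_1(t),\dot\gamma_1(t))$ is bounded on $(-\epsilon,0)$; in the double-null coordinates, using $\dot\gamma_1 = c\,\Omega^{-2}\partial_v$ and the spherically symmetric reduction of the Einstein equations (which gives $\mathrm{Ric}(\partial_v,\partial_v) = \frac{2}{r}(-\partial_v^2 r + (\partial_v\log\Omega^2)\partial_v r) = 2|\partial_v\phi|^2$ via the Raychaudhuri equation \eqref{RaychV} and $D_v\phi = \partial_v\phi$ in our gauge \eqref{electrogauge}), one computes $\mathrm{Ric}(\dot\gamma_1,\dot\gamma_1) = c^2\Omega^{-4}|\partial_v\phi|^2$. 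The main obstacle, as in \cite{JonathanStab}, is the careful topological bookkeeping transferring differentiability of $\partial M$ and the $\tilde\Pi$-continuity through to the conclusion that the geodesic $\gamma_1$ lands on the \emph{outgoing} cone $C_{u_p}$ transverse to $\CH$ at an interior point $u_p<u_{\CH}$ (so that the estimate \eqref{trappedestimate10} can subsequently be applied on $C_{u_p}$); the rest is routine once Corollary \ref{Piextension} is in hand. The contradiction with \eqref{trappedestimate10}, completing the $C^2$-inextendibility across $\CH$, will be drawn in the next proposition rather than here.
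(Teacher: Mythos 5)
Your proposal is correct and follows essentially the same route as the paper: invoke Definition \ref{CHinext} to get a differentiability point of $\partial M$ whose projection accumulates on $\CH$, use the continuity of $\tilde{\Pi}$ from Corollary \ref{Piextension} to identify the limit point $p_{\CH}=(u_p,+\infty)$, run the Luk--Oh Lemma 11.5-type construction (extended Killing rotations, achronality, $L\notin T(\partial M)$) to produce the radial null geodesic with projection $c\,\Omega^{-2}\partial_v$ along a constant-$u$ ray, and deduce boundedness of $Ric(\dot{\gamma_1},\dot{\gamma_1})$ from the $C^2$ regularity of $\tilde{M}$. The only cosmetic point is that your exclusion of the constant-$v$ (ingoing) alternative needs nothing beyond $\tilde{\Pi}$-continuity (a constant-$v$ ray cannot accumulate at a point with $v=+\infty$), exactly as in the paper, so the auxiliary appeal to $\nu<0$ and Lemma \ref{easylemma} is superfluous.
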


\begin{proof}  	We mostly follow the strategy of \cite{JonathanStab}, section 11, which we accommodate to our setting. \\ First, by Definition \ref{CHinext}, there exists $p \in \check{(\partial M)}$ (i.e.\ such that $T_p (\partial M)$ exists, see the notations of Corollary \ref{Piextension}) and a continuous curve $\gamma:(-\epsilon,\epsilon) \rightarrow \tilde{M}$ with $\gamma(0)=p$, $\gamma(-\epsilon,0) \subset M$ and $ \overline{\Pi(\gamma(-\epsilon,0))}^{\mathcal{Q}^+} \cap \CH \neq \emptyset.$ This implies that there exists $p_{\CH} \in \CH \subset \mathcal{B}^+$ and a sequence $t_n \rightarrow 0$ as $n \rightarrow +\infty$ such that $\Pi( \gamma(t_n)) \rightarrow p_{\CH}$ as $n \rightarrow +\infty$.  By the continuity of the map $\tilde{\Pi}$ from Corollary \ref{Piextension}, this implies that $\tilde{\Pi}(p)=p_{\CH}$.
	
	Now, using the achronality of $\partial M$ (Lemma 11.1 in \cite{JonathanStab}), the extension of the Killing rotations to $\partial M$ (Lemma 11.2 in \cite{JonathanStab})	
	and the fact that $T_p (\partial M)$ exists by definition, one can construct, using the same method \footnote{The proof of \cite{JonathanStab}, which does not use the equations, can be exactly reproduced in our setting.} as in \cite{JonathanStab}, a null geodesic $\gamma_1:(-\epsilon,\epsilon) \rightarrow \tilde{M}$ such that  \begin{enumerate}
		\item $\gamma_1(0)=p$.
		\item $\gamma_1(-\epsilon,0) \subset M$.
		\item $\gamma_{1}$ is radial on $(-\epsilon,0)$.
		\item  $\Pi \circ \gamma_{1}(-\epsilon,0) \subset \{ (u_0,v), v \geq v_0\}$ for some $u_0<u_{\CH}$ and $v_0 \in \RR$.
		\item For all $-\epsilon < t <0$, we have $\frac{d}{dt}(\Pi \circ \gamma_1(t))=c  \cdot \Omega^{-2}(\Pi \circ \gamma_1(t)) \partial_v$,
	\end{enumerate} the last point using the fact that  $\Omega^{-2} \partial_v$ is a radial geodesic vector field.

By continuity of $\tilde{\Pi}$ again, we obtain \color{black} $$ \lim_{ t\rightarrow 0} \Pi(\gamma_1(t))= \tilde{\Pi}(p)=p_{\CH}.$$ This also implies that $u_0=u_p$ i.e. that $\Pi \circ \gamma_{1}(-\epsilon,0) \subset \{ (u_p,v), v \geq v_0\}$. Since $\gamma_1$ is a geodesic and that $\tilde{M}$ is a $C^2$ manifold, we immediately obtain the boundedness of $t \rightarrow Ric(\dot{\gamma_1}(t),\dot{\gamma_1}(t))$ which concludes the proof.

\end{proof}

\begin{prop} \label{inextproponeended} Under the same assumptions as Proposition \ref{geomextendoneened}, we have the following blow up of the Ricci curvature: $$ \limsup_{v \rightarrow +\infty} Ric(\Omega^{-2} \partial_v, \Omega^{-2} \partial_v)(u_p,v)=+\infty.$$
	This contradicts statement \ref{Riccibounded2} of Proposition \ref{geomextendoneened} so by contradiction $(M,g)$ is $C^2$-future-inextendible across $\CH$.
\end{prop}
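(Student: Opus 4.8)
\textbf{Proof plan for Proposition \ref{inextproponeended}.}

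The statement to prove is the Ricci blow-up
$$ \limsup_{v \rightarrow +\infty} Ric(\Omega^{-2} \partial_v, \Omega^{-2} \partial_v)(u_p,v)=+\infty, $$
together with the observation that this contradicts statement \ref{Riccibounded2} of Proposition \ref{geomextendoneened}, whence $C^2$-future-inextendibility across $\CH$ follows. The plan is essentially identical to the proof of Proposition \ref{Ricciblowupprop} in the two-ended case: the only real content is the quantitative estimate \eqref{trappedestimate10}, which holds verbatim in the one-ended case (it is part of Theorem \ref{classification}, valid for $\CH$ in both the one-ended and two-ended settings). Recall that $V := \Omega^{-2}\partial_v$ is a radial null geodesic vector field transverse to $\CH$, and that by the spherically symmetric formulae of section \ref{geometricframework}, $Ric(V,V) = \Omega^{-4}|\partial_v\phi|^2$; in particular $Ric(V,V) \geq 0$.

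First I would fix $u_p < u_{\CH}$ as in Proposition \ref{geomextendoneened}; choosing any $u_1 < u_p < u_2 < u_{\CH}$, the hypotheses of Theorem \ref{classification} give a constant $C>0$ and a function $v \mapsto v(u)$ such that for all $u\in[u_1,u_2]$ and in particular for $u = u_p$,
$$ \int_{v(u_p)}^{v} Ric(V,V)(u_p,v')\,dv' \geq C \cdot e^{1.98|K_-|v} $$
for all $v$ large enough. Since $K_-(M,e) < 0$, the right-hand side tends to $+\infty$ as $v\to+\infty$. Next I would apply a pigeonhole (dyadic) argument exactly as in the proof of Proposition \ref{Ricciblowupprop}: split $[v(u_p),v]$ into dyadic intervals $[2^{n-1}v(u_p), 2^n v(u_p)]$ and note that, were $Ric(V,V)(u_p,\cdot)$ bounded by some $B$ on $[v(u_p),+\infty)$, the integral above would be at most $B\cdot v$, which grows only linearly in $v$ and is eventually dominated by $C\cdot e^{1.98|K_-|v}$ — a contradiction. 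Hence $\limsup_{v\to+\infty} Ric(V,V)(u_p,v) = +\infty$, which is the claimed blow-up. (Equivalently: extract a dyadic sequence $v_n = 2^n v(u_p)$ along which $Ric(V,V)(u_p,v_n) \gtrsim v_n^{-1} e^{1.98|K_-|v_n} \to +\infty$.)

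Finally I would conclude the contradiction: Proposition \ref{geomextendoneened} produces, under the hypothesis that $(M,g)$ is $C^2$-future-extendible across $\CH$, a radial null geodesic $\gamma_1$ of the $C^2$ extension whose tangent vector equals $c\,V$ along $\Pi\circ\gamma_1$ for some $c>0$, terminating at the point $p=(u_p,+\infty)\in\CH$, and along which $Ric(\dot\gamma_1,\dot\gamma_1) = c^2\,Ric(V,V)(u_p,\cdot)$ stays bounded as $t\to 0^-$ (statement \ref{Riccibounded2}) — simply because $\gamma_1$ is a geodesic in the $C^2$ manifold $\tilde M$. This contradicts the $\limsup = +\infty$ just established. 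Therefore no such extension exists, i.e.\ $(M,g)$ is $C^2$-future-inextendible across $\CH$ in the sense of Definition \ref{CHinext}, which proves Theorem \ref{CHinexttheorem}. There is no serious obstacle here: all the hard analytic work — the classification of $\CH$, the existence of the trapped neighborhood, and the exponential lower bound \eqref{trappedestimate10} — has already been carried out; the remaining step is the soft pigeonhole argument plus the invocation of the geometric extension lemma, both of which are routine once \eqref{trappedestimate10} is in hand.
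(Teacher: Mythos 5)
Your proposal is correct and follows essentially the same route as the paper: Proposition \ref{inextproponeended} is proved there by repeating verbatim the argument of Proposition \ref{Ricciblowupprop}, i.e.\ invoking the lower bound \eqref{trappedestimate10} (valid at $u_p$ in the one-ended case as well) and a pigeonhole/dyadic extraction to get $\limsup_{v\to+\infty} Ric(V,V)(u_p,v)=+\infty$, which contradicts statement \ref{Riccibounded2} of Proposition \ref{geomextendoneened}. Your additional remark that boundedness of $Ric(V,V)(u_p,\cdot)$ would force only linear growth of the integral, contradicting the exponential lower bound, is just an equivalent phrasing of the same pigeonhole step.
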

\begin{proof} We can repeat exactly the same proof that we used for Proposition \eqref{Ricciblowupprop}.
\end{proof} 

Proposition \ref{inextproponeended} provides a proof of Theorem \ref{CHinexttheorem}.
\subsection{$C^2$-inextendibility if $\mathcal{CH}_{\Gamma}=\emptyset$ in the one-ended case}
Now, we turn to the $C^2$-future-inextendibility of the space-time in the one-ended case, if we assume additionally that the ``Cauchy horizon emanating from the center'' $\mathcal{CH}_{\Gamma}$ is empty, an assumption which is conjectured to be generic (see Conjecture \ref{trappedsurfaceconj}, and related to Weak Cosmic Censorship, see section \ref{connected}. The proof is similar to that of section \ref{oneendedext1}.

\begin{prop} \label{oneendedCHgammaprop}
	We consider $(M,g_{\mu \nu}, \phi,F)$ the maximal development of smooth, spherically symmetric and admissible \textbf{one-ended} initial data with a priori boundary $\mathcal{B^+}$ given by the decomposition of Theorem \ref{oneendedapriori}.
	
	Assume that $\mathcal{CH}_{\Gamma}=\emptyset$ and moreover that $(M,g)$ is $C^2$-future-extendible.
	
	Then there exists $p \in \partial M$, with $r(p)=0$ and a future time-like geodesic $\gamma:(-\epsilon,\epsilon) \rightarrow \tilde{M}$ such that $\gamma(0)=p$ and $\gamma((-\epsilon,0)) \subset M$.
\end{prop}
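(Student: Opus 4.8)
\textbf{Proof proposal for Proposition \ref{oneendedCHgammaprop}.}

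The plan is to follow closely the geometric dichotomy argument of Luk--Oh \cite{JonathanStab}, Section 11, now adapted to the one-ended Penrose diagram of Theorem \ref{oneendedapriori} under the extra hypothesis $\mathcal{CH}_{\Gamma}=\emptyset$. Suppose $(M,g)$ is $C^2$-future-extendible in the sense of Definition \ref{generalinext}, with extension $(\tilde{M},\tilde{g})$ and isometric embedding $i:M\to\tilde{M}$, which we suppress. Since $i(M)$ is a proper subset of $\tilde{M}$, the topological boundary $\partial M$ of $i(M)$ in $\tilde{M}$ is nonempty; by Lemma 11.1 of \cite{JonathanStab} it is a locally Lipschitz achronal hypersurface, hence differentiable at almost every point, and by Lemma \ref{DafermosRendallLemma} the Killing rotations extend continuously to $\partial M$, so (as in Corollary \ref{Piextension}) the projection $\Pi$ extends to a continuous map $\tilde{\Pi}$ on $i(M)\cup\check{(\partial M)}$ with $r$ extending continuously to $\partial M$.

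The heart of the argument is to show that $\tilde{\Pi}$ must send some differentiability point $p\in\check{(\partial M)}$ either onto $\{r=0\}$ — which is the conclusion we want, arguing as in Lemma 11.5--11.6 of \cite{JonathanStab} to produce a future time-like geodesic $\gamma$ into the extension with $\gamma(0)=p$, $\gamma((-\epsilon,0))\subset M$ — or else onto a point of one of the null boundary components $\CH$, $\mathcal{S}_{i^+}$, $\mathcal{CH}_{\Gamma}$, $\mathcal{S}^1_{\Gamma}$, $\mathcal{S}^2_{\Gamma}$ (the only boundary pieces of $\overline{\mathcal{Q}^+}$ compatible with $r>0$ being $\CH$ and $\mathcal{CH}_{\Gamma}$). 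First I would rule out, by the argument already used in Proposition \ref{geomextendoneened}, that $\tilde{\Pi}(p)$ lands on $\CH$: such a $p$ would carry a radial null geodesic $\gamma_1$ of tangent $\Omega^{-2}\partial_v$ with $Ric(\dot\gamma_1,\dot\gamma_1)$ bounded, contradicting the curvature blow-up $\limsup_v Ric(\Omega^{-2}\partial_v,\Omega^{-2}\partial_v)(u_p,v)=+\infty$ which follows from \eqref{trappedestimate10}. The hypothesis $\mathcal{CH}_{\Gamma}=\emptyset$ removes the only remaining $r>0$ boundary component, so the image must lie in $\{r=0\}$, i.e. on $\mathcal{S}\cup\mathcal{S}_{i^+}\cup\mathcal{S}^1_{\Gamma}\cup\mathcal{S}^2_{\Gamma}\cup b_\Gamma$. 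By Lemma \ref{DafermosRendallLemma} and the formula $r^2=\tfrac12\sum_i g(O_i,O_i)$, this forces $r(p)=0$; then the Kommemi-type argument (Lemma 11.5 in \cite{JonathanStab}) produces the desired future time-like geodesic $\gamma:(-\epsilon,\epsilon)\to\tilde{M}$ with $\gamma(0)=p$ and $\gamma((-\epsilon,0))\subset M$.

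The main obstacle I anticipate is the same one that \cite{JonathanStab} confronts in the two-ended setting but which is genuinely more delicate here: showing that $\tilde{\Pi}(p)$ cannot ``escape'' to $i^+$, $\mathcal{I}^+$, or $i^0$, and that the causal-past intersection arguments used to pin down the image point go through with the one-ended boundary structure of Figure \ref{Fig1}. Concretely, once a sequence $p_n\to p$ in $\tilde M$ with $p\in\partial M$ is chosen, one must show every such sequence leaves every compact subset of $i(M)$, and then use properness of $\Pi\circ i^{-1}$ (compactness of $\mathbb{S}^2$) together with achronality of $\partial M$ to localize $\tilde{\Pi}(p)$, exactly as in the proof of Corollary \ref{Piextension}; the bookkeeping is routine but must be redone because the relevant causal diamonds now abut $\Gamma$ rather than a second asymptotic end. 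With $\mathcal{CH}_{\Gamma}=\emptyset$ assumed, no new phenomenon arises beyond this, and the conclusion follows; the subsequent Lemma \ref{lastlemma} (from \cite{JonathanStab}) then upgrades this to the full $C^2$-future-inextendibility asserted in Theorem \ref{conditionnalSCCtheoremoneended}.
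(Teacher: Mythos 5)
Your proposal is correct and follows essentially the same route as the paper: pick a differentiability point $p$ of $\partial M$, use the continuous extension of $r$ (and of $\Pi$), and in the case $r(p)\neq 0$ note that $\mathcal{CH}_{\Gamma}=\emptyset$ forces $\tilde{\Pi}(p)\in\CH$, which is excluded by the Ricci blow-up argument of Propositions \ref{geomextendoneened} and \ref{inextproponeended}, so that $r(p)=0$ and the Luk--Oh Lemma 11.5 construction yields the future time-like geodesic. The extra care you flag about the image escaping to $i^+$, $\mathcal{I}^+$ or $i^0$ and about redoing the causal bookkeeping near $\Gamma$ is handled in the paper at the level of Corollary \ref{Piextension} and Theorem \ref{oneendedapriori}, exactly as you anticipate.
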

\begin{proof}
	Since $\partial M$ is Liptschitz (Lemma 11.1 in \cite{JonathanStab}), by the Rademacher theorem it is almost everywhere differentiable, so one can find $p \in \partial M$ at which $\partial M$ is differentiable (c.f. \cite{JonathanStab}). Since $r$ extends continuously to $\partial M$ (Lemma 11.3 in \cite{JonathanStab}), we either have $r(p) \neq 0$ or $r(p)=0$. If $r(p) \neq 0$, then by Theorem \ref{oneendedapriori}, $p \in \CH$ since $\mathcal{CH}_{\Gamma}=\emptyset$. In this case, one obtains a contradiction using the same argument as in Proposition \ref{geomextendoneened} and Proposition \ref{inextproponeended}. So we can assume for now that $r(p)=0$. Then, we can repeat the argument of Lemma 11.5 of \cite{JonathanStab}, which yields the result.
\end{proof}

To obtain the result, one can use a Lemma from \cite{JonathanStab}, proven in the very same way: 
\begin{lem}[Lemma 11.6, \cite{JonathanStab}] \label{lastlemma}
	The existence of a time-like geodesic $ \gamma:(-\epsilon,\epsilon) \rightarrow \tilde{M}$ such that $\gamma(0)=p$, $r(p)=0$ and $\gamma((-\epsilon,0)) \subset M$ contradicts the fact that $\tilde{M}$ is a $C^2$ extension.
\end{lem}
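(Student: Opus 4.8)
The plan is to recycle the curvature-blow-up argument already invoked in the proof of Corollary \ref{Piextension}, namely Lemma 11.6 of \cite{JonathanStab} (whose underlying computation goes back to Kommemi \cite{Kommemi}), now applied to the future timelike geodesic $\gamma$ produced by Proposition \ref{oneendedCHgammaprop}. Assume for contradiction that such a $\gamma:(-\epsilon,\epsilon)\to\tilde M$ exists with $\gamma(0)=p$, $r(p)=0$ and $\gamma((-\epsilon,0))\subset M$. First I would record that the area-radius $r$ extends continuously from $M$ to the topological boundary $\partial M$ in $\tilde M$: this is a consequence of Lemma \ref{DafermosRendallLemma} (continuity of the Killing rotations up to $\partial M$) via the identity $r^2=\tfrac12\sum_{i=1}^3\tilde{g}(\tilde{O}_i,\tilde{O}_i)$, exactly as in Lemma 11.3 of \cite{JonathanStab}. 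Since $\gamma$ is continuous with $\gamma(0)=p$, this yields $r(\gamma(t))\to r(p)=0$ as $t\to 0^-$.

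The second step is the key one: to show that $R_{\alpha\beta\mu\nu}R^{\alpha\beta\mu\nu}(\gamma(t))\to+\infty$ as $t\to 0^-$. Since $r(p)=0$ and $p$ is a boundary point, Theorem \ref{oneendedapriori} identifies the projection of $p$ with a point of $b_\Gamma\cup\mathcal{S}^1_\Gamma\cup\mathcal{S}^2_\Gamma\cup\mathcal{S}\cup\mathcal{S}_{i^+}$ — all of which are $\{r=0\}$ terminal boundary components of the maximal development, rather than the regular centre $\Gamma$. For such boundary components, the spherically symmetric reduction of the curvature shows that the Weyl scalar and the orthonormal Riemann component $R_{2323}=\tfrac{2\rho}{r^3}=\tfrac{\mu}{r^2}$ diverge as $r\to 0$ — heuristically because one approaches these components through the (closure of the) trapped region, where $\mu\ge 1$, while $r\to 0$ — and, crucially, that no cancellation among these terms and the matter contributions keeps $R_{\alpha\beta\mu\nu}R^{\alpha\beta\mu\nu}$ bounded. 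This last point is precisely the content of Kommemi's computation in \cite{Kommemi}, carried out for the matter model \eqref{1}--\eqref{5}, and re-used as Lemma 11.6 of \cite{JonathanStab}; I would simply quote it.

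The third step closes the argument: since $\tilde{g}$ is a $C^2$ Lorentzian metric, its Christoffel symbols are $C^1$ and its Riemann tensor is a continuous tensor field on $\tilde M$, so $R_{\alpha\beta\mu\nu}R^{\alpha\beta\mu\nu}$ is a continuous function on $\tilde M$; moreover $\gamma$ is itself (at least) $C^2$, being a geodesic of a $C^2$ metric, so $t\mapsto R_{\alpha\beta\mu\nu}R^{\alpha\beta\mu\nu}(\gamma(t))$ is continuous, hence bounded on the compact interval $[-\tfrac{\epsilon}{2},0]$. This contradicts the blow-up established in the second step; hence the existence of $\gamma$ is incompatible with $\tilde g$ being a $C^2$ extension, which is the assertion of Lemma \ref{lastlemma}. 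The only genuine difficulty is the second step, i.e.\ excluding a conspiratorial cancellation in the Kretschmann scalar as $r\to 0$; since this is independent of the present paper's quantitative estimates and is already settled in \cite{Kommemi} for exactly the matter model at hand, the proof is essentially a citation followed by the soft $C^2$-regularity argument.
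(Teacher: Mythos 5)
Your proposal is correct and follows essentially the same route as the paper: the paper also treats this as a citation of Lemma 11.6 of \cite{JonathanStab}, resting on the soft fact (due to Kommemi \cite{Kommemi}, valid for the charged/massive model) that the Kretschmann scalar blows up at boundary points with $r=0$, which is then incompatible with the boundedness of curvature along a geodesic entering a $C^2$ extension. Your additional bookkeeping (continuous extension of $r$ via the Killing rotations, and continuity of the Kretschmann scalar along the $C^2$ geodesic) matches the soft-regularity argument the paper implicitly relies on.
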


As for Proposition \ref{geomextendJonathan}, this result does not use the specific structure of the uncharged massless field equations and is also valid in our context, as it only uses soft estimates (namely the blow-up of the Kretschmann scalar at boundary points $p$ where $r(p)=0$) which were already proven in \cite{Kommemi}.  Proposition \ref{oneendedCHgammaprop} and Lemma \ref{lastlemma} provide a proof of Theorem \ref{conditionnalSCCtheoremoneended}.

\appendix

\section{Construction of a Cauchy horizon of mixed type for the Einstein-null-dust model} \label{appendix}

In  this appendix, we construct an example of a Cauchy horizon of mixed type, following Definition \ref{mixeddef}. In the second part of our development, we prove the blow up of the Hawking mass for mixed and dynamical type Cauchy horizons in the Einstein--Maxwell-null-dust model, following Poisson and Israel \cite{PoissonIsrael} and \footnote{Note that Ori considers Cauchy horizons of mixed type in \cite{Ori}.} Ori \cite{Ori}, the first instances of the mass inflation scenario in the literature. Notice that their model is very elementary, as the dust clouds are simply transported linearly in the null directions, and only interact indirectly, via the metric: thus, the ``scattering theory'' is trivial. Additionally, such a model does not allow for one-ended regular solutions, unlike the charged/massive scalar field model.  In this appendix we sketch a mathematical proof of mass inflation, using the methods of \cite{Moi}; this is also an opportunity to illustrate the terminology of the present paper in the simpler setting of dust, where little analysis is required and monotonicity suffices. \color{black} The system \eqref{dust1}, \eqref{dust1.5}, \eqref{Maxwelldust}, \eqref{dust2}, \eqref{dust3}, \eqref{dust5}, \eqref{dust6} can be expressed in spherical symmetry as

\begin{equation} \label{RadiusDust}
\partial_{u}\partial_{v} r= \partial_u \lambda=\partial_v \nu= -\frac{\Omega^2}{2r} \cdot (\frac{\varpi}{r}-\frac{e^2}{r^2}) =\frac{\lambda \nu}{r}-\frac{\Omega^2}{4r} \cdot (1- \frac{e^2}{r^2})\end{equation} 	\begin{equation} \label{OmegaDust}
\partial_{u}\partial_{v} \log(\Omega^2)=  \frac{\Omega^2}{2r^2} \cdot ( \frac{2\varpi}{r}-\frac{3e^2}{r^2})= \frac{2 \nu \lambda}{r^{2}}+ \frac{ \Omega^{2}}{2r^{2}} \cdot (1- \frac{ 2 e^{2}}{r^{2}}),\end{equation} 	
\begin{equation}\label{massUdust}\partial_{u} \varpi =\frac {r^2}{2} \kappa^{-1}|  f_R|^{2}, \end{equation} \begin{equation} \label{massVdust}\partial_{v} \varpi =\frac {r^2}{2} \iota^{-1}|  f_L|^{2},\end{equation}  \begin{equation} \label{dustU}
\partial_v (r f_R)=0,
\end{equation}\begin{equation} \label{dustV}
\partial_u (r f_L)=0, 	\end{equation} 	\begin{equation}\label{RaychUdust}\partial_{u}(\frac {-4\nu}{\Omega^{2}})= \partial_{u}(\kappa^{-1})=\frac {4r}{\Omega^{2}}|  f_R|^{2}, \end{equation} 	\begin{equation} \label{RaychVdust}\partial_{v}(\frac {-4\lambda}{\Omega^{2}})= \partial_v(\iota^{-1})=\frac {4r}{\Omega^{2}}|f_L|^{2},\end{equation} \begin{equation}
\label{Fdust}
F_{\mu \nu}= \frac{e^2 \Omega^2}{ 2r^2} du \wedge dv.	\end{equation}

\begin{prop} \label{propappendix}  	We consider the maximal development of smooth, spherically symmetric and admissible (in the sense of Definition \ref{admissibilitydef}) two-ended initial data  $(M=\mathcal{Q}^+ \times_r \mathcal{S}^2,g_{\mu \nu}, \phi,F_{\mu \nu})$ satisfying the Einstein--Maxwell-null-dust system. Suppose that Assumption \ref{blackholehyp}, \ref{affinecomplete} and \ref{subexthyp} of Theorem \ref{previous} are satisfied, and we also work under the same $(U,v)$ gauge choice. We also choose $U_{max}  \in \RR$ such that \color{black}$ [0,U_{max}]\times \{v=+\infty\} \subset \CH$  in the Penrose diagram $\mathcal{Q}^+$. We consider data  $f_L^0$ on the event horizon $\mathcal{H}^+= \{0\} \times  [v_0,+\infty)$  and $f_R^0$ on an ingoing cone $\underline{C}_{v_0}= [0,U_{max}]\times \{v_0\}$; we assume that both $f_R^0$ and $f_L^0$ are smooth in the $(U,v)$ coordinate system. We also make assumptions analogous to Assumption \ref{fieldevent} and \ref{fieldUevent} of Theorem \ref{previous} i.e.\ for some $s>\frac{1}{2}$ and some $C>0$:	$$\sup_{v \geq v_0 }|f_L^0|_{|\mathcal{H}^+}(v) \cdot v^{s}  \leq  C \cdot ,$$	$$ \sup_{0 \leq U \leq U_{max}}|f_R^0| (U, v_0) \leq C. $$   \color{black} Then by continuity of $f^0_R$ on $[0,U_{max})$,     there are three possibilities: \begin{enumerate}[a]
		\item \label{dynamicaldust}  For all $0<U_s<U_{max}$, there exists $0 \leq U \leq U_s$ such that $f^0_R(U) \neq 0$. \color{black} We call this the dynamical case.
		\item \label{staticdust} For all $U\in [0,U_{max})$, $f^0_R(U) = 0$. We call this the static case.
		\item \label{mixeddust} There exists $U_T \in (0,U_{max})$ such that for all $U\in [0,U_{T}]$, $f^0_R(U) = 0$, and such that for every $\epsilon>0$, there exists $U\in (U_{T},U_{T}+\epsilon]$ such that $f^0_R(U) \neq 0$\color{black}. We call this the mixed case.
	\end{enumerate}
	Then we have, in the three different cases: \begin{enumerate}[a]
		\item In the dynamical case, $r$ extends to $\CH$ as function $r_{CH}$  which is strictly decreasing on $[0,U_{max})$\color{black}.
		\item In the static case, $r$ extends to $\CH$ as a constant $r_{-}>0$.
		\item In the mixed case, $r$ extends to $\CH$ as a function $r_{CH}$, which is constant on $[0,U_T]$, and strictly decreasing on $(U_T,U_{max} )$.
	\end{enumerate} If we additionally assume that $f_L^0(v)=C \cdot v^{-s}+o(v^{-s})$, for some $s>\frac{1}{2}$, where $v$ is defined by gauge \eqref{gauge2}, then we have \begin{enumerate}[a]
	\item In the dynamical case, the Hawking mass $\rho$ blows up on $\CH \cap [0,U_{max}] $.
	\item In the static case, the Hawking mass $\rho$ is constant 
	on $\CH \cap [0,U_{max}] $.
	\item  In the mixed case, the Hawking mass $\rho$ is constant on $\CH \cap [0,U_{T}] $ and blows up on $\CH \cap (U_T,U_{max}]$.
\end{enumerate}
\end{prop}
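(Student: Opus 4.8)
The plan is to exploit the fact that the Einstein--Maxwell-null-dust system has trivial transport dynamics, so the only analysis needed is integration along null cones, and to mirror the structure of the main paper (Theorem \ref{classificationtheorem} and Lemma \ref{massblowuplemma}) in this drastically simplified setting. First I would record the key monotonicity: from \eqref{dustU}, $rf_R$ is constant along outgoing cones $C_U$, so $f_R^0(U)=0$ implies $f_R\equiv 0$ on all of $C_U$, and likewise from \eqref{dustV} $rf_L$ is constant along ingoing cones. The Raychaudhuri equations \eqref{RaychUdust}, \eqref{RaychVdust} give that $\kappa^{-1}$ is non-decreasing in $u$ and $\iota^{-1}$ is non-decreasing in $v$, and crucially $\partial_u(\kappa^{-1})$ vanishes identically on any $C_U$ with $f_R^0(U)=0$. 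This immediately yields the trichotomy \ref{dynamicaldust}, \ref{staticdust}, \ref{mixeddust} as a statement about the continuous function $f_R^0$ on $[0,U_{max})$; these are literally the dust analogues of cases \ref{I}, \ref{II}, \ref{III} discussed in section \ref{dustweaknull}.

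Next I would establish the behaviour of $r$ on $\CH$. In the static region (wherever $f_R^0\equiv 0$ to the past), the argument of Lemma \ref{<us} simplifies enormously: since $\partial_u(\kappa^{-1})=0$ there, $\kappa^{-1}$ equals its event-horizon value $\equiv 1$ (by gauge \eqref{gauge2}), hence $-4\nu = \Omega^2$; feeding this and the decay of $f_L^0$ into \eqref{RadiusDust} and \eqref{OmegaDust}, one runs the same bootstrap as in Proposition \ref{LBprop} to show $\Omega^2 \lesssim e^{1.99 K_- v}$, $|\nu| \lesssim e^{1.98 K_- v}$, so $r$ extends continuously to the constant $r_-(M,e)$. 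This gives case \ref{staticdust} entirely, and the past part ($U\leq U_T$) of case \ref{mixeddust}. For the dynamical region, where $f_R^0$ does not vanish to the past of any $U$, one uses the Dafermos condition: since $rf_R$ is a nonzero constant on $C_U$, $\partial_u(\kappa^{-1}) = \frac{4r}{\Omega^2}|f_R|^2 = \frac{4|rf_R|^2}{r\Omega^2}$, and integrating together with $\Omega^2\to 0$ exponentially towards $\CH$ forces $\kappa^{-1}\to +\infty$, equivalently $\int^{+\infty}\kappa\,dv<+\infty$, i.e.\ the Dafermos condition \eqref{Dafermoscondition} holds. Then as in the proof of Corollary \ref{threetypes}, the limit $\nu_{\CH}(U) = \lim_{v\to\infty}\nu(U,v)$ exists and must be strictly negative, so $r_{CH}$ is strictly decreasing; monotonicity of $\kappa^{-1}$ in $u$ then propagates strict monotonicity of $r_{CH}$ to the future, giving the stated behaviour in cases \ref{dynamicaldust} and \ref{mixeddust}.

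For the mass statement I would use \eqref{massVdust}: $\partial_v\varpi = \frac{r^2}{2}\iota^{-1}|f_L|^2$. In the static case $\iota^{-1}=1$ (same reasoning as for $\kappa^{-1}$, now using $f_R\equiv 0$ everywhere) and a direct integration using $f_L^0(v) = Cv^{-s}+o(v^{-s})$ together with $rf_L$ constant along ingoing cones shows $\varpi$ converges to a finite constant, hence $\rho = \varpi - \frac{e^2}{2r}$ is constant on $\CH$. In the dynamical (and the future part of the mixed) case the blow-up follows from the dust analogue of Lemma \ref{massblowuplemma}: on a cone $C_U$ in the trapped region one has $\iota^{-1} = \frac{-4\lambda}{\Omega^2} \geq c e^{1.97|K_-|v}$ from integrating \eqref{RaychVdust} with the lower bound $\int_v^\infty |f_L|^2 \gtrsim v^{-p}$ (which here is exact, since $f_L^0 \sim Cv^{-s}$ gives $\int_v^\infty|f_L|^2 \sim \frac{C^2}{2s-1}v^{1-2s}$), so that $\partial_v\varpi \gtrsim e^{1.97|K_-|v}v^{-2s} \to \infty$ non-integrably, forcing $\varpi \to +\infty$ and hence $\rho\to+\infty$. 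The one genuine subtlety — the main obstacle — is the same as in the body of the paper: to deploy these exponential lower bounds I need $C_U\cap\{v\geq v_0\}$ to lie in the trapped region, so I must first run the trapped-neighbourhood argument (dust version of Lemma \ref{trappedlemma}: once $r$ is decreasing towards a finite limit, $2\rho > r$ eventually, so $\lambda<0$ for $v$ large), and in the mixed case I must also handle the transition at $U_T$ carefully, checking that the constant-$r$ behaviour for $U\leq U_T$ and the strictly-decreasing behaviour for $U>U_T$ match up continuously — but this is exactly analogous to Lemma \ref{localmassblowupmixed} and, given the triviality of the dust transport, requires no new ideas beyond a small continuity/openness argument at $U=U_T$.
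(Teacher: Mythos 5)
Your skeleton is reasonable, but two steps would fail as written. First, every quantitative conclusion you draw in the dynamical and mixed regions (the exponential growth of $\kappa^{-1}$, the trappedness of the cones, the growth of $-4\lambda/\Omega^2$, hence the blow-up of $\varpi$) rests on the assertion that ``$\Omega^2\to 0$ exponentially towards $\CH$'' \emph{uniformly for all} $U\in[0,U_{max}]$ --- but you only prove such decay in the static region, by re-running the bootstrap of Proposition \ref{LBprop} with $\kappa^{-1}\equiv 1$; where $f_R\not\equiv 0$ that bootstrap is not available and the estimates of \cite{Moi} only cover a neighbourhood of $i^+$. The paper closes exactly this gap with a monotonicity trick specific to dust: combining \eqref{RadiusDust} and \eqref{OmegaDust} gives $\partial_U\partial_v\log(r\Omega^2)=\frac{\Omega^2}{2r^2}\bigl(1-\frac{3e^2}{r^2}\bigr)<0$ (since $\nu\le 0$ forces $r<r_-<|e|$ for $v$ large), so the bound $\partial_v\log(r\Omega^2)\le 2K_-+Cv^{1-2s}<K_-<0$, known on the curve $\gamma$ from \cite{Moi}, propagates to \emph{every} $U\le U_{max}$, yielding $\Omega^2\lesssim e^{K_-v}$ globally and then the estimate $|r\nu(U,v)-r\nu_{\CH}(U)|\lesssim\Omega^2$ on which the whole case analysis hinges. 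Without this (or an equivalent) your argument does not get past the near-$i^+$ region. Relatedly, ``since $rf_R$ is a nonzero constant on $C_U$'' is not justified: the dynamical case allows $f_R^0(U)=0$ at the particular $U$; the correct statement is $\int_0^U(f_R^0)^2\,dU'>0$ by continuity, fed into the $U$-integrated Raychaudhuri bound $\kappa^{-1}(U,v)\gtrsim e^{|K_-|v}\int_{U_\gamma(v)}^U(f_R^0)^2$.

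Second, your mass argument attributes inflation to the wrong blue-shift factor. The quantity $\iota^{-1}=-4\lambda/\Omega^2$ is governed by \eqref{RaychVdust}, which is sourced by $f_L$, not $f_R$: it therefore blows up exponentially in \emph{all three} cases (this is precisely the transverse instability of Remark \ref{remarktransverse}), so your static-case claim ``$\iota^{-1}=1$ by the same reasoning as for $\kappa^{-1}$, using $f_R\equiv0$'' is false, and your dynamical-case computation $\partial_v\varpi\sim\iota^{-1}|f_L|^2$ would ``prove'' mass blow-up for Hiscock's static solutions as well --- it cannot be the discriminating mechanism. The factor that distinguishes the cases is $\kappa^{-1}=-4\nu/\Omega^2$, sourced by $f_R$ through \eqref{RaychUdust} (note that \eqref{massUdust}--\eqref{massVdust} as printed have $\kappa^{-1}$ and $\iota^{-1}$ interchanged relative to \eqref{massUEinstein}--\eqref{massVEinstein}; the correct $v$-equation is $\partial_v\varpi\propto\kappa^{-1}|f_L|^2$). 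The clean route, which is the paper's, avoids this altogether: in the static case $\partial_U\varpi\propto f_R^2\equiv0$, so $\varpi(U,v)=\varpi(0,v)\to M$ and $\rho$ is constant; in the dynamical/mixed case one gets blow-up near $i^+$ from $2\rho-r=r\kappa^{-1}(-\lambda)$ with $-\lambda\sim v^{-2s}$ (from \cite{Moi}) and $\kappa^{-1}\to\infty$, and then propagates it to all larger $U$ by the monotonicity $\partial_U\varpi\ge 0$, with no need for lower bounds on $\iota^{-1}$ or for your continuity/openness matching at $U_T$.
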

\begin{rmk}
	Note that $\CH \neq \emptyset$, applying (an easier version of) the argument of \cite{Moi}, which proves the stability of the Reissner--Nordstr\"{o}m Cauchy horizon for the Einstein--Maxwell--Klein--Gordon model in spherical symmetry.
\end{rmk}
\begin{proof} We first prove that the possibilities reduce to the three cases \ref{dynamicaldust}, \ref{staticdust}, \ref{mixeddust}. Define the set $E:=\{U\in[0,U_{max}),  f_R^0(U)\neq 0\}$. If $E=\emptyset$ then we are in case \ref{staticdust}. If $E \neq\emptyset$, define $U_T=\inf E$. There are two possibilities: either $U_T=0$ or $U_T>0$. If $U_T=0$ then we are in case  \ref{dynamicaldust} by definition of $E$. If $U_T>0$ then for all $0 \leq U < U_T$, $f_R^0(U)= 0$. By continuity of $f_R^0$ we have also $f_R^0(U_T)= 0$. Again by definition of $E$, we are therefore in case \ref{mixeddust}.

	\color{black}
	We work in the gauge $\kappa_{|\mathcal{H}^+} \equiv 1$ and we denote $\varphi_R(u)= r(u,v) f_R(u,v)$, $\varphi_L(v)= r(u,v) f_L(u,v)$ by \eqref{dustU}, \eqref{dustV}.
	
	By the stability estimates \footnote{Note that while the estimates of \cite{Moi} concern the scalar field case, they can be immediately transposed to the null dust case with no further work, as the Einstein equations are the same, except with fewer terms and moreover the propagation of both dust clouds $f_R$ and $f_L$ is trivial.} of \cite{Moi} (see section \ref{LB}), one can show that for some sub-extremal parameters $0<|e|<M$, $$ |\varpi(U_{\gamma}(v),v)-M|+ |r(U_{\gamma}(v),v)-r_-(M,e)|+ |\partial_v \log(\Omega^2)(U_{\gamma}(v),v)-2K_-(M,e)| + v \cdot |\partial_v r| (U_{\gamma}(v),v)\lesssim v^{1-2s}$$ holds on some space-like curve $\gamma$ terminating at $i^+$,  for $v$ large enough.  We will then establish estimates for all $U \in [0,U_{max}]$: combining \eqref{RadiusDust} and \eqref{OmegaDust}, we get $ \partial_U \partial_v \log(r \Omega^2) = \frac{ \Omega^{2}}{2r^{2}} \cdot (1- \frac{ 3 e^{2}}{r^{2}})$. In view of $|e| \color{black} > r_-(M,e)$ and the monotonicity of $r$ (which implies that $r(U,v)\leq r(U_{\gamma}(v),v) $ in our region of interest), we have  $ \partial_U \partial_v \log(r \Omega^2)(U,v) <0$ for all $U \in [0,U_{max}]$. Integrating in $U$ from $\gamma$,  assuming $v$ large enough (in particular $v\geq v_{\gamma}(U_s)$)\color{black}, we have, for all $U \in [0,U_{max}]$: $$ \partial_v\log(r\Omega^2)(U,v)  \leq \partial_v\log(r\Omega^2)(U_{\gamma}(v),v) \leq 2K_-(M,e) - C \cdot v^{1-2s}\color{black} \leq K_-(M,e)<0,$$ where the last inequality follows from $v$ being large enough. This proves that for all $U \in [0,U_{max}]$ \begin{equation} \label{omegadust}
	\Omega^2(U,v) \lesssim e^{K_-(M,e) \cdot v }.
	\end{equation}
	Using \eqref{omegadust} with $\partial_U \partial_v \log(r \Omega^2) = \frac{ \Omega^{2}}{2r^{2}} \cdot (1- \frac{ 3 e^{2}}{r^{2}})$ and the fact that $ C(M,e)^{-1} \lesssim |1- \frac{ 3 e^{2}}{r^{2}}| \lesssim C(M,e)$ by monotonicity of $r$ and the fact that $r$ is lower bounded, we get an improved estimate, still for all  $U \in [0,U_{max}]$: \begin{equation}  \label{omegadust2}
	|\partial_v \log(r \Omega^2)(U,v) -2K_-(M,e) | \lesssim v^{1-2s}, \hskip 5 mm  e^{3K_-(M,e) \cdot v } \lesssim   \Omega^2(U,v) \lesssim e^{K_-(M,e) \cdot v }.
	\end{equation}
	
	Now, we can write \eqref{RadiusDust}, since $\nu \leq 0$, as  $\partial_v( r|\nu|) =\Omega^2 \cdot (1- \frac{e^2}{r^2})$. Since $r$ is decreasing in $U$, $(1- \frac{e^2}{r^2(U,v)})<0$ for all  $U \in [0,U_{max}]$, as $r<r_-(M,e)<|e|$. Hence $v \rightarrow r\nu(U,v)$ is strictly decreasing for all fixed $U$, thus has a limit $r\nu_{\CH}(U) \in \RR_-$. Moreover, we establish, using \eqref{RadiusDust}, the following estimate for all  $U \in [0,U_{max}]$: \begin{equation} \label{kappaDustestimate}
	| r\nu(U,v) - r\nu_{\CH}(U) |\color{black} \lesssim \Omega^2(U,v) .
	\end{equation} Now, writing \eqref{RadiusDust} as $\partial_U(- r\lambda) =\Omega^2 \cdot (1- \frac{e^2}{r^2})$, we get decay for $\lambda$ so we can extend $r$ to $\CH$ into a function $r_{\CH} >0$. Therefore, for any fixed $U<U_{max}$, $v \rightarrow \nu(U,v)$ has a limit as $v\rightarrow +\infty$ that we consistently denote $\nu_{\CH}(U)$.  \color{black}
	
Since $r$ is lower bounded in this region, we have $\partial_U (\kappa^{-1})(U,v) \gtrsim C \cdot \varphi_R^2(U) \cdot e^{|K_-|(M,e) v}$ hence, integrating from $\gamma$: \begin{equation} \label{kappamaxdust}
	\kappa^{-1}(U,v)  \gtrsim  e^{|K_-|(M,e)  v}  \cdot \int_{U_{\gamma(v)}}^U \varphi_R^2(U')dU'=  e^{|K_-|(M,e)  v}  \cdot \int_{U_{\gamma(v)}}^U r^2(U',v_0) \cdot (f_R^0(U'))^2dU' \gtrsim e^{|K_-|(M,e)  v}  \cdot \int_{U_{\gamma(v)}}^U  (f_R^0(U'))^2dU'.
	\end{equation}
	Then, there are our three possibilities, starting with the easiest: \begin{enumerate}
		\item \textit{Dynamical case: then for all $U\in [0,U_{max})$, $\int_{0}^U(f^0_R)^2(U')dU' > 0$.} 
		
	Indeed by \eqref{dynamicaldust}, for all $U\in [0,U_{max})$, there exists $U'<U$ with $(f^0_R(U'))^2 > 0$, hence by continuity there exists $\epsilon>0$ such that  $(f^0_R(U'))^2 > 0$ on $[U',U'+\epsilon]$, which implies $\int_{0}^U(f^0_R)^2(U')dU' > 0$ .

		 In view of the limit $U_{\gamma}(v) \rightarrow 0$ as $v\rightarrow +\infty$ (recall that $\gamma$ terminates at $i^+$) \color{black} then $\lim_{v \rightarrow +\infty} \int_{U_{\gamma(v)}}^U  (f_R^0(U'))^2dU' >0$ for all $U\in [0,U_{max})$, hence $\kappa^{-1}(U,v) \rightarrow +\infty$. If $ \nu_{CH}(U)=0$ for some $0<U < U_{max}$, then \eqref{kappaDustestimate} is contradicted (after dividing both sides of \eqref{kappaDustestimate} by $r \Omega^2$)\color{black}. Hence $\nu_{\CH}(U)<0$ for all $U\in [0,U_{max}]$  thus $r_{\CH}$ is strictly decreasing on $[0,U_{max})$ as desired \color{black}. Moreover $ \kappa^{-1}(U,v) \sim \frac{4|\nu_l|(U)}{\Omega^2(U,v)}$ by \eqref{kappaDustestimate}.\color{black}
		
		If additionally we have a lower bound $f_L^0(v)=C \cdot v^{-s}+o(v^{-s})$ then by the stability estimates of \cite{Moi}, $-\lambda \sim C \cdot v^{-2s}$, at least for $U$ small enough and in the future of the curve $\gamma$. Since $2\rho-r= \kappa^{-1} \cdot (-\lambda)$, $\rho(U,v) \rightarrow +\infty$ as $ v \rightarrow+\infty$, for small $U$.
		A forciciori, $\varpi(U,v) \rightarrow +\infty$ for small $U$, and by the monotonicity of \eqref{massUdust}, $\varpi(U,v) \rightarrow +\infty$ for  all $U\in [0,U_{max})$. Hence, for  all $U\in [0,U_{max})$,  $\rho(U,v) \rightarrow +\infty$ as $ v \rightarrow+\infty$.
		\item \textit{Static case: for all $U\in [0,U_{max})$, $f^0_R(U) = 0$.}
		
		Then $\kappa(U,v) = 1$ as $\partial_U (\kappa^{-1})=0$. By \eqref{kappaDustestimate}, it means that $\nu_{\CH}(U)=0$ for all $U\in [0,U_{max})$ hence $r_{\CH}$ is constant and by \eqref{massUdust} $\varpi_{\CH}$ is constant. Their values are $ r_{\CH} \equiv  r_-(M,e)>0$ and $ \varpi_{\CH} \equiv M >0$ by the stability estimates of \cite{Moi}. This implies that $\rho$ is constant on $\CH$: $\rho(U,v) \rightarrow M-\frac{e^2}{r_-(M,e)}$ as $v \rightarrow +\infty$ for all $U\in [0,U_{max})$.
		
		\item \textit{Mixed case: $\int_{0}^{U_T}(f^0_R)^2(U')dU' = 0$ but $\int_{U_T}^{U}(f^0_R)^2(U')dU' > 0$ for all $U_T<U<U_{max}$.}
		
Indeed, the above statement follows from the same logic as for the dynamical case, see earlier discussion. 
		
		Then similarly, $r_{\CH}(U)= r_-(M,e)$,  $\varpi_{\CH}(U)= M$, and $\rho_{\CH}(U)= M-\frac{e^2}{r_-(M,e)}$ for all $U \in [0,U_T]$. Yet,  $\kappa^{-1}(U,v) \rightarrow +\infty$ for all $U \in (U_T,U_{max})$, as $\int_{0}^U  (f_R^0(U'))^2dU' >0$. Thus $\nu_{\CH}(U)<0$ for all $U \in (U_T,U_{max})$ and $r_{\CH}$ is strictly decreasing on $(U_T,U_{max})$. If additionally we have a lower bound $f_L^0(v)=C \cdot v^{-s}+o(v^{-s})$, then we have the blow up of $\rho$, as in the dynamical case.

	\end{enumerate} \end{proof} \begin{rmk}
	Notice that the proof was considerably easier for the uncharged dust than for a charged scalar field, as we used in this section some special monotonicity properties that are not exploitable in the charged case. In contrast, in the earlier sections, we proved estimates that were harder to obtain but also more robust as they do not rely on monotonicity.
\end{rmk}

\end{document}